\title{Enumerating Error Bounded Polytime Algorithms Through Arithmetical Theories \\
(Extended Version)
%\thanks{The first, second, third and fifth authors' work is supported by ERC 
%through the project DIAPASoN ERC COoG 818616, and by ANR through the project PPS ANR-19-C48-0014. 
%The first author's work is supported by the Helsinki Institute for Information Technology.
%The fourth author's work is supported by the FCT (Funda\c c\~ao para a Ci\^encia e a Tecnologia, I.~P.), through the projects UIDB/00297/2020 and UIDP/00297/2020 (Center for Mathematics and Applications).}
} 
\author{Melissa Antonelli\\
{\small Helsinki Institute for Information Technology, Finland}\\
{\small\texttt{melissa.antonelli@helsinki.fi}}
\\ \ \\
Ugo Dal Lago\\
{\small Bologna University, Italy,}\\
{\small Inria, Universit\'e C\^ote d'Azur, France  }\\
{\small \texttt{ugo.dallago@unibo.it}}
\\ \ \\
Davide Davoli\\
{\small Inria, Universit\'e C\^ote d'Azur, France}\\
{\small \texttt{davide.davoli@inria.fr}}
\\ \ \\
Isabel Oitavem\\
{\small Nova University Lisbon, Portugal}
\\
{\small \texttt{oitavem@fct.unl.pt}}
\\ \ \\
Paolo Pistone\\
{\small Bologna University, Italy}
\\
{\small \texttt{paolo.pistone2@unibo.it}}
}
\date{}
\newtheorem{theorem}{Theorem}[section]
\newtheorem{example}{Example}[section]
\newtheorem{remark}{Remark}[section]
\newtheorem{corollary}{Corollary}[section]
\theoremstyle{definition}
\newtheorem{defn}{Definition}
\newtheorem{notation}{Notation}
\newtheorem{prop}{Proposition}
\newtheorem{lemma}[theorem]{Lemma}
\newtheorem{proposition}[theorem]{Proposition}
\newcommand{\POR}{\mathcal{POR}}
\newcommand{\Stm}{\stm}
\newcommand{\query}{\emph{Q}}
\newcommand{\Sf}{\emph{S}}
\newcommand{\Cf}{\emph{C}}
\newcommand{\apply}{\mathit{step}}
\newcommand{\dectape}{\mathit{dectape}}
\newcommand{\sa}{\mathit{sa}}
\newcommand{\rfp}[1]{\langle#1\rangle}
\newcommand{\LANG}[1]{ \mathrm{Lang}(#1)}
\newcommand{\THE}{ \mathsf{T}}
\newcommand{\BOX}{\mathbf{C}}
\newcommand{\ovverline}[2]{{\underline{#1}}_{#2}}
\newcommand{\reaches}[2]{\triangleright^{#1}_{#2}}
\newcommand{\Qs}{\mathcal{Q}}
\newcommand{\RFP}{\PPT}
\newcommand{\PPT}{\mathbf{RFP}}
\newcommand{\SFP}{\mathbf{SFP}}
\newcommand{\SIFP}{\mathbf{SIFP}}
\newcommand{\RA}{{\mathbf{RA}}}
\newcommand{\LA}{{\mathbf{LA}}}
\newcommand{\SIFPRA}{\mathbf{SIFP_{\RA}}}
\newcommand{\SFPOD}{\mathbf{SFP_{\mathbf{OD}}}}
\newcommand{\SIFPLA}{\mathbf{SIFP_{\LA}}}
\newcommand{\lang}[1]{{\mathcal L ({#1})}}
\newcommand{\id}{\mathsf{Id}}
\newcommand{\stm}{\mathsf{Stm}}
\newcommand{\xp}{\mathsf{Exp}}
\newcommand{\fl}[1]{\mathtt{Flip(}#1\mathsf{)}}
\newcommand{\rb}{\mathtt{RandBit()}}
\newcommand{\while}[2]{\mathtt{while}(#1)\{#2\}}
\newcommand{\takes}{\leftarrow}
\newcommand{\store}{\Sigma}
\newcommand{\as}[2]{[#1 \leftarrow #2]}
\newcommand{\ssos}{\triangleright}
\newcommand{\sred}{\rightharpoonup}
\newcommand{\LL}[1]{{\mathfrak L_{#1}}}
\newcommand{\leadstola}{{\leadsto_\LA}}
\newcommand{\leadstora}{{\leadsto_\RA}}
\newcommand{\RS}{{R}\Sigma^b_1\text{-NIA}}
\newcommand{\RSE}{{R}\Sigma^b_1\text{-NIA} + \mathrm{Exp}}
\newcommand{\PA}{\mathsf{PA}}
\newcommand{\Lpw}{\mathcal{RL}}
\newcommand{\MEAS}{\mathsf{TwoThirds}}
\newcommand{\Flipless}{\mathsf{NoFlip}}
\newcommand{\Thresh}{\mathsf{Threshold}}
\newcommand{\IDE}{\mathsf{I\Delta}_{0}+\mathrm{Exp}}
\newcommand{\zzero}{{0}}
\newcommand{\oone}{{1}}
\newcommand{\bbool}{{b}}
\newcommand{\cconc}{{\conc}}
\newcommand{\ttimes}{{\times}}
\newcommand{\ssubseteq}{{\subseteq}}
\newcommand{\eepsilon}{{\epsilon}}
\newcommand{\model}[1]{\llbracket #1\rrbracket}
\newcommand{\conc}{\frown}
\newcommand{\bool}{\mathtt{b}}
\newcommand{\zero}{\mathtt{0}}
\newcommand{\one}{\mathtt{1}}
\newcommand{\Flip}{\mathtt{Flip}}
\newcommand{\Bool}{\mathbb{B}}
\newcommand{\Ss}{\mathbb{S}}
\newcommand{\Os}{\mathbb{O}}
\newcommand{\Nat}{\mathbb{N}}
\newcommand{\Dist}[1]{\mathbb{D}(#1)}
\newcommand{\realize}{\circledR}
\newcommand{\BPP}{\mathbf{BPP}}
\newcommand{\RP}{\mathbf{RP}}
\newcommand{\PP}{\mathbf{PP}}
\newcommand{\PTM}{\mathscr{M}}
\newcommand{\STM}{\mathscr{S}}
\newcommand{\POLY}{\mathsf{POLY}}
\newcommand{\midd}{\; \; \mbox{\Large{$\mid$}}\;\;}
\newcommand{\NP}{\mathbf{NP}}
\newcommand{\nruns}{{37m}}
\newcommand{\zeroT}{\mathsf{0}}
\newcommand{\oneT}{\mathsf{1}}
\newcommand{\arrowT}{\Rightarrow}
\newcommand{\ooverline}[1]{\overline{\overline{#1}}}
\newtheorem{cor}{Corollary}
\begin{document}

\maketitle

\begin{abstract}
	We consider a minimal extension of the language of arithmetic, 
	such that the bounded formulas provably total in a suitably-defined theory \emph{à la Buss} (expressed in this new language)
	precisely capture polytime \emph{random} functions. 
	Then, we provide two new characterizations of the semantic class $\mathbf{BPP}$ obtained by internalizing 
	the error-bound check \emph{within} a logical system:
	the first relies on measure-sensitive quantifiers, while the second is based on standard first-order quantification. 
	%via threshold existential quantification, 
	This leads us to introduce a family of effectively enumerable 
	subclasses of $\BPP$, called $\mathbf{BPP}_{\mathsf{T}}$ and consisting of languages captured by those probabilistic Turing machines whose underlying error can be proved bounded in $\mathsf T$. 
	As a paradigmatic example of this approach, 
	we establish that polynomial identity testing is in 
	$\mathbf{BPP}_{\mathsf T}$, where $\mathsf T=\IDE$ is a well-studied theory based on bounded induction. 
\end{abstract}

\section{Introduction}\label{sec:introduction}\label{section1}
Since the early days of computer science,
numerous and profound interactions with mathematical logic have emerged (think of the seminal works by 
Turing~\cite{Turing} and Church~\cite{Church}). 
Among the sub-fields of computer science that have benefited the most from this dialogue, we should 
certainly mention the theory of programming languages (e.g.~through the 
Curry-Howard correspondence \cite{Curry34,Howard,SorUrz}), the theory of 
databases (e.g.~through Codd's theorem~\cite{Codd}) and computational complexity 
(e.g.~through descriptive 
complexity~\cite{BellantoniCook,TextbookDescriptiveComplexity}). 
In particular, 
this last discipline deals with complexity 
classes~\cite{HartmanisStearns,Cobham,AroraBarak},  the nature of which still 
remains today, more than fifty years after the introduction of $\mathbf{P}$ and 
$\mathbf{NP}$~\cite{Cook,HartmanisStearns}, somewhat obscure.

The possibility of describing fundamental  classes within the 
language of mathematical logic offered a better understanding of 
their 
nature: since the seventies~\cite{Fagin,CookReckhow}, but 
especially from the eighties and 
nineties~\cite{Buss86,GirardScedrovScott,BellantoniCook,TextbookDescriptiveComplexity,ProofComplexity},
 the logical characterization of several crucial classes has made it possible 
to consider them from a new 
viewpoint, less dependent on concrete machine 
models and explicit resource bounds. 
Characterizing complexity classes by way of 
a simple enough proof-of-recursion theoretical system also means being able to 
\emph{enumerate} the problems belonging to them,
and thus to devise sound and 
complete languages for the class, from which type systems and static analysis 
methodologies can be derived~\cite{HofmannSurvey}.

Among the various classes of problems considered in computational complexity,
those defined on the basis of \emph{randomized} algorithms~\cite{Motwani} have %so far proved more 
appeared difficult to capture with the tools of logic. 
These include important and well-studied classes like $\mathbf{BPP}$ or $\mathbf{ZPP}$.
%whose most important feature is that of being
 %defined
The former, in particular, is often considered as \emph{the} class of feasible problems, 
and most complexity theorists conjecture that it actually coincides with $\mathbf{P}$.
One might thus expect it to be possible to obtain an enumeration of $\mathbf{BPP}$, along the lines of the many examples known for classes like $\mathbf{P}$, or even $\mathbf{PP}$ \cite{DalLagoKahleOitavem21,DalLagoKahleOitavem22}.
However, by simply looking at its definition, $\mathbf{BPP}$ looks pretty different from $\mathbf{P}$. 
Notably, the former, but not the latter, is an example of what is usually
%generally
called a \emph{semantic} class:
the definition of 
$\mathbf{BPP}$ relies on
algorithms which are both efficient and not \emph{too erratic}: once an input is fixed, 
one of the two possible outputs must clearly prevail over the other; in other words, 
there is some fixed probability $p$, bounded away from $\frac{1}{2}$, such that, on any input $x$, 
the machine outputs some value $b_{x}\in\{0,1\}$ with probability at least $p$.
The existence of an effective enumerable family of algorithms deciding \emph{all and only} the problems in $\BPP$ is still an open question.

% aforementioned randomized classes is 
%possible.
%%
%Indeed, the sparse contributions along these lines are either 
%themselves \emph{semantic}, 
%i.e.~do not capture the limitations on the probability of error within the logical system~\cite{Grohe,DalLagoParisenToldin} 
%(an interesting exception being \cite{Jerabek}), or deal with syntactic classes,  as $\mathbf{PP}$, 
%which are not classifiable as semantic~\cite{DalLagoKahleOitavem21,DalLagoKahleOitavem22}.

In this paper we make a step towards a 
logical understanding of semantic complexity classes, and in particular of the 
logical and proof-theoretic complexity involved in keeping error-bounds under control.
Our contributions can be divided in three parts. 
First, we generalize to the probabilistic setting the
path indicated by \emph{bounded arithmetic} \cite{Buss86, Ferreira90}, a well-known approach to capture polynomial time algorithms, by extending usual arithmetical languages with a distinguished unary predicate $\mathtt{Flip}(x)$, playing the role of a 
source of randomness. % 
%this way naturally capturing randomized algorithms.
%%
We define a bounded %arithmetical 
theory $\RS$ as the randomized analogue of Buss'  
$S^1_2$~\cite{Buss86} and Ferreira's $\Sigma^b_1$-NIA~\cite{Ferreira90b},
and show that the functions which can be proved total in $\RS$ are precisely  the polytime \emph{random} functions~\cite{Santos69}, i.e.~those functions from 
strings to \emph{distributions} of strings which can be computed by polytime 
probabilistic Turing machines (PTM, for short).
Then, we move towards proper randomized classes by considering ways to keep the probability of error under control \emph{from within} the logic.
We first consider \emph{measure quantifiers} \cite{Morgenstern, 
MichalewskiMio, ADLP21}, well-studied second-order quantifiers capable of 
measuring \emph{the extent} to which a formula is true; we then show that these quantifiers, when applied to bounded formulas,
can be encoded via \emph{standard} first-order quantification.
This way we obtain two characterizations of the problems in $\BPP$, yet  still semantic in nature:
the error-bound check is translated into conditions 
which are not based on \emph{provability} in some formal system, 
but  rather on the \emph{truth} of some formula in the standard model
of first-order arithmetic. 

While these results, which rely on semantic conditions, do not shed light on the enumeration problem for $\BPP$ directly, they set the conditions for a proof-theoretic investigation of this class: our last contribution is the introduction of a family of new \emph{syntactic} 
subclasses of $\BPP$, each called $\BPP_{\mathsf T}$, and consisting of those languages for which the 
error-bounding condition is not only true, but also \emph{provable} in some (non necessarily bounded) theory $\mathsf T$.
This reduces the enumeration problem to that of finding a recursively enumerable (r.e., for short)~arithmetical theory $\mathsf T$ such that $\BPP=\BPP_{\mathsf T}$.
To witness the difficulty of this problem, we show that 
 the error-bounding condition is \emph{$\Pi^{0}_{1}$-complete} and that  establishing that $\BPP$ \emph{cannot} be enumerated would be at least 
\emph{as hard} as refuting the $\BPP=\mathbf P$ conjecture. At the same time, we show that \emph{polynomial identity testing} (PIT), 
one of the few problems in $\mathbf{BPP}$ 
currently not known to be in $\mathbf{P}$ lies in $\BPP_{\THE}$, where $\THE=\IDE$ is a well-studied \cite{Pudlak1998} sub-theory of $\PA$, thus identifying an interesting and effectively enumerable subclass of $\BPP$.

%
%a result which suggests that the enumerability for $\BPP$ (or its non-enumerability) 
%
%
%While we consider
%unlikely that for some effectively enumerable theory $\mathsf T$, 
%$\BPP_{\mathsf T}=\mathbf{BPP}$, we show that full first-order Peano Arithmetic ($\mathsf{PA}$, for short) provides an interesting candidate theory, as $\BPP_{\mathsf{PA}}$ includes 
%\emph{polynomial identity testing} (PIT),  which is
%one of the few problems in $\mathbf{BPP}$ 
%currently not known to be in $\mathbf{P}$. 
%%
%This fact seems very promising, suggestingan avenue towards a new form 
%of \emph{reverse} computational complexity in the framework of first-order arithmetic. % $\mathbf{BPP}$, where  which are indeed of \emph{syntactic} nature, and which are arguably large enough to include non-trivial problems in $\BPP$.

The main technical contributions of this paper can thus be summarized as follows:
\begin{itemize}
\item We introduce the arithmetical theory $\RS$ and prove that the random functions which are $\Sigma^b_1$-representable in it are precisely those which can be computed in polynomial time.
The proof of the correspondence goes through the definition of a class of \emph{oracle recursive}
functions, called $\POR$,
which is shown equivalent to the class of probabilistic polytime random functions $\PPT$. 
The overall structure of the proof 
is described in Section~\ref{sec:FAtoRC},
while further details can be found
in Appendix~\ref{app:1} and~\ref{app:2}.
\item We exploit this result to obtain a new syntactic characterization of $\PP$ and, more interestingly, two 
semantic characterizations of $\mathbf{BPP}$, the first based  on measure quantifiers and the second relying on standard, first-order 
quantification. This is in Section~\ref{sec:TBPP}.
 %(seen as a class of proper characteristic functions between binary strings).
%
\item
Finally, we introduce a family of syntactic subclasses 
$\BPP_{\mathsf T}\subseteq\mathbf{BPP}$ of \emph{provable} $\BPP$-problems, relative to a theory $\mathsf T$. 
After showing that the property of being non-erratic is $\Pi^{0}_{1}$-complete, we establish that PIT is in $\BPP_{(\IDE)}$.%We believe this to be the most interesting and impactful of the results presented in the paper. 
We conclude by showing how our approach relates to existing works capturing $\BPP$ languages in bounded arithmetic \cite{Jerabek}.
All this can be found in 
Sections~\ref{sec:SSBPP} and~\ref{sec:Jerabek}.
\end{itemize}

\subparagraph*{Related Work}
While a recursion-theoretic characterization of the syntactic class $\mathbf{PP}$ can be found  in~\cite{DalLagoKahleOitavem21},
most existing characterizations of $\mathbf{BPP}$ are based on some  external, semantic condition~\cite{DalLagoParisenToldin,Scedrov}.
In particular,  
Eickmeyer and Grohe \cite{Grohe} provide a semantic characterization of $\BPP$ in a logic with fixed-point operators and a special counting quantifier, %coupled 
associated with a probabilistic semantics not too different from the quantitative interpretation we present in Section~\ref{sec:FAtoRC}.
On the other hand, \cite{Jerabek} and \cite{JerabekAPAL} 
uses bounded arithmetic to 
provide characterizations of (both syntactic and semantic) randomized classes, such as $\mathbf{ZPP}, \mathbf{RP}$ and $\mathbf{coRP}$, and also provides a semantic characterization of $\BPP$. An in-depth comparison is thus in order, and can be found in Section~\ref{sec:Jerabek}. 
Finally, \cite{Scedrov} defines a higher-order language for polytime oracle recursive functions based on an adaptation of Bellantoni-Cook's safe recursion.

\section{On the Enumeration of Complexity 
Classes}\label{sec:semantic}\label{section2}
Before delving into the technical details, it is worth spending a few %more 
words on 
the problem of enumerating complexity classes, and on 
the reasons why  it is 
more difficult for semantic classes than for syntactic ones.

First of all, it is worth observing that, although the distinction between syntactic and semantic classes appears in many popular textbooks (e.g. in~\cite{AroraBarak,Papadimitriou}), in the literature these
notions are not defined in a precise way. 
Roughly speaking, syntactic classes are those which can be defined via limitations on the \emph{amount of resources} (i.e.~units of either time or space) that the underlying algorithm is allowed to use. Typical examples are the class $\mathbf{P}$ of problems solvable in polynomial time and the class $\mathbf{PSPACE}$ of problems solvable in polynomial space.
Instead, the definition of a semantic class usually requires, beyond some resource condition, an additional condition, sometimes called a \emph{promise}, typically expressing that the underlying algorithm
returns the correct answer \emph{often enough}. A typical example here is the class $\mathbf{BPP}$ considered in this paper (cf.~Def.~\ref{def:bpp}), corresponding to problems solvable in polynomial time by probabilistic algorithms with some fixed error bound strictly smaller than $\frac{1}{2}$.
Sometimes the distinction between syntactic and semantic classes may be subtle. For instance, as we discuss in Section \ref{sec:TBPP}, the class $\mathbf{PP}$, whose definition also comprises a promise, is generally considered a syntactic class.

Notice that the sense of the terms ``syntactic'' and ``semantic'', as referred to complexity classes, is not 
clearly related to the sense that these terms have in mathematical logic. 
To a certain extent, the analysis that we develop in this paper with the tools of bounded arithmetic may help clarify this point. On the one hand, well-known results in bounded arithmetic (cf.~\cite{Buss86, Buss98}) provide a characterization of syntactic classes like $\mathbf{P}$ in terms of purely \emph{proof-theoretic} conditions (i.e.~provability in some weak fragment of Peano Arithmetic); on the other hand, we establish that, for a semantic class like $\mathbf{BPP}$, an arithmetical characterization can be obtained by employing \emph{both} proof-theoretic and \emph{model-theoretic} conditions (i.e.~truth in the standard model of Peano Arithmetic).

A natural question is whether such genuinely semantic (i.e.~model-theoretic) conditions can somehow be eliminated in favor of purely syntactic (i.e.~proof-theoretic) ones. In fact, this is a non-trivial problem, since, as proved in Section \ref{sec:SSBPP} (cf.~Proposition \ref{prop:pi01}), the promise underlying $\mathbf{BPP}$ is expressed by a \emph{$\Pi^{0}_{1}$-complete} arithmetical formula. 
One should of course recall, however, that the distinction between semantic and syntactic classes %, such as $\mathbf{P},\NP$, and $\mathbf{PSPACE}$, 
refers to \emph{how} a class is defined
and not to the underlying set of problems. It is thus of \emph{intensional} nature. In other words, even if $\mathbf{P}$ and $\mathbf{BPP}$ are defined in a different way, it could well be that someday we discover that $\mathbf{P}=\BPP$: in this case $\mathbf{BPP}$ would become a syntactic class, and, as we show in Section \ref{sec:SSBPP} (cf.~Proposition \ref{prop:p=bpp}), a purely proof-theoretic characterization of $\mathbf{BPP}$ would be available.

The problem of showing that a complexity class can be enumerated (i.e.~that one can devise a recursive enumerations of, say, Turing Machines solving all and only the problems in the class) provides a different, and useful, angle to look at the distinction between syntactic and semantic classes. 
Ordinary syntactic classes, such as $\mathbf{P}$, $\mathbf{PP}$, and 
$\mathbf{PSPACE}$, are quite simple to enumerate.  
%
%This is due to the fact that,
While verifying resource bounds for \emph{arbitrary} programs is 
very difficult, it is surprisingly easy to define an enumeration of resource 
bounded algorithms containing at least \emph{one} algorithm for any problem in 
one of the aforementioned classes. 
To clarify what we mean, suppose we want to 
characterize the class $\mathbf{P}$.
On the one hand, the class of \emph{all} algorithms working in polynomial time 
is recursion-theoretically very hard, actually 
$\Sigma^0_2$-complete~\cite{Hajek}. 
On the 
other hand, the class of those programs consisting of a $\mathtt{for}$ loop 
executed a polynomial number of times, whose body itself consists of 
conditionals and simple enough instructions manipulating string variables, is 
both trivial to enumerate and big enough to characterize $\mathbf{P}$, at least in 
an extensional sense: every problem in this class is decided by \emph{at 
least one} program in the class and every algorithm in this class works in polytime. Many characterizations of 
$\mathbf{P}$ (and of other syntactic classes), as those based on 
safe-recursion~\cite{BellantoniCook,Leivant}, light and soft linear 
logic~\cite{GirardLafont,Girard,Lafont}, and bounded arithmetic~\cite{Buss86},
can be seen as instances of the just described pattern, 
where the precise class of 
polytime \emph{programs} varies, while the underlying class of \emph{problems} remains unchanged.

But what about semantic classes? 
Being resource bounded is not sufficient for an algorithm to solve a problem in some semantic class, since there can well be algorithms getting it wrong too often. For instance, it may well be that some  probabilistic Turing Machine running in polynomial time does not solve \emph{any} problem in $\mathbf{BPP}$. 
For this reason, unfortunately,  
the  enumeration strategy sketched above does not seem to be readily applicable to semantic classes.
How can we isolate a simple enough subclass of algorithms – which are not only resource bounded, but also not too erratic – at the same time saturating the class?

We think that the results in this paper, concerning proof-theoretic and model-theoretic characterizations of probabilistic complexity classes, may provide new insights on the nature of 
this problem, without giving a definite answer. 
Indeed, observe that the existence of a purely proof-theoretic characterization of some complexity class $\mathcal C$ via some recursively enumerable theory $\mathsf T$ directly leads to providing an enumeration of $\mathcal C$ (by enumerating the theorems of $\mathsf T$). In this way, the problem of enumerating a semantic class $\mathcal C$ is directly related to the existence of some strong enough theory $\mathsf T$. 

In the following sections we do \emph{not} prove $\BPP$ 
to be effectively enumerable, 
which is still out of reach.
On the one hand we
show that proving the non-enumerability of $\BPP$ is \emph{as hard as} proving that $\mathbf P\neq \BPP$. 
On the other hand, we 
show that there exist subclasses of $\BPP$ which are large enough to include interesting problems in $\BPP$ and still ``syntactic enough'' to be effectively enumerable via some arithmetical theory. 

%We show that bounded arithmetic can be adapted to randomized computation, a 
%result which, although expected, is not so easy to prove, as explained in 
%Section~\ref{section4}.
%But we go beyond that, showing how reasoning about error bounds can be 
%internalized into the logic, giving rise to both precise, although semantic, 
%characterizations of $\mathbf{BPP}$, and a family of syntactic, but not 
%necessarily precise, characterizations of the same class. The latter is, we 
%believe, novel in its way of letting the whole power of an arithmetic theory 
%$T$ be used to capture the entropy of randomized computable functions.

\section{Bounded Arithmetic and Polytime Random Functions}\label{sec:FAtoRC}
%This is a section with the actual definition of our logic, and of representable functions. At most 2 pages.

%!TeX spellcheck = en-US
\label{sec:LogicAndFunctions}

%Our goal is to demonstrate that this approach can be extended so as to capture also probabilistic complexity classes.

%To this aim, in this section we introduce the two fundamental ingredients of our construction. First, a simple and somehow minimal variant $\Lpw$ of the usual language of first-order arithmetic with symbols for binary integers and the addition of a unary predicate symbol $\Flip(x)$, together with a bounded arithmetical theory $\RS$ defined over $\Lpw$.
%Second, a Cobham style function algebra $\POR$ of polytime random functions, where such functions access randomness by observing bits taken from an \emph{oracle}
%sampled from a variant of the usual Cantor space $\{0,1\}^{\Nat}$.
%The underlying idea, demonstrated in detail in the next section, is that by looking at the sampled oracles as ways of interpreting the unary predicate $\Flip(x)$, it becomes possible to capture the functions from $\POR$ using the formulas of $\Lpw$.
%

%the crucial definitions
%for our characterization of polytime, random functions,
%namely the function algebra $\POR$ of polytime oracle functions
%and the randomized bounded arithmetic $\RS$,
%expressed in a first-order probabilistic,
%word language $\Lpw$ and associated with a quantitative semantics.
%
%Before doing so, for clarity's sake, we define the notational conventions we are adopting in this paper.

In this section we discuss our first result, namely, the characterization of polytime random functions via bounded arithmetic.

\subsection{From Arithmetic to Randomized Computation, Subrecursively}

We introduce the two main ingredients on which our characterization of polytime random functions relies:
a randomized bounded theory of arithmetic $\RS$, 
and a Cobham-style function algebra for polytime oracle recursive functions, 
called $\POR$.

\subparagraph{Recursive Functions and Arithmetical Formulas}
The study of so-called {bounded theories of arithmetic}, i.e.~subsystems of $\PA$ in which only \emph{bounded quantifications} are admitted, initiated by Parikh and Buss, has led to characterize several complexity classes~\cite{Parikh, Cook75, Buss86, Buss98, Ferreira90, KrajicekPudlakTakeuti}.
At the core of these characterizations lies the well-known fact (dating back to G\"odel's \cite{Godel}) that recursive functions can be \emph{represented} in $\PA$ by means of $\Sigma^{0}_{1}$-formulas,
i.e.~formulas of the form $\exists x_{1}.\dots. \exists x_{n}.A$, 
where $A$ is a bounded formula.
For example, the formula
\[
A(x_{1},x_{2},y) := \exists x_{3}. x_{1}\times x_{2} = x_{3}  \land y = \mathit{succ}(x_{3})
\]
represents the function $f(x_{1},x_{2})=(x_{1} \times x_{2})+1$. 
Indeed, in $\PA$ one can prove that 
$\forall x_{1}.\forall x_{2}.\exists!y. A(x_{1},x_{2},y)$, namely that  $A$ expresses a \emph{functional} relation, and check that for all $n_{1},n_{2},m\in \Nat$,
$A(\overline{n_{1}},\overline{n_{2}}, \overline{m})$ holds (in the standard model $	\mathscr{N}$) precisely when
$m=f(n_{1}, n_{2})$.
Buss' intuition was then that, by considering theories \emph{weaker} than $\PA$, 
it becomes possible to capture functions computable within given resource bounds~\cite{Buss86,Buss98}.

In order to extend this approach to classes of \emph{random} computable functions, we rely on a simple correspondence between first-order predicates over  natural numbers and \emph{oracles} from the Cantor space $\{0,1\}^{\Nat}$, following~\cite{ADLP21}.
Indeed, suppose the aforementioned recursive function $f$ has now the ability to observe
(part of) an infinite sequence %$\omega$ 
of bits. 
%sampled from $\{0,1\}^{\Nat}$. 
For instance, $f$ might observe the first bit and return
$(x_{1}\times x_{2})+1$ if this is 0, and return $0$ otherwise. 
Our idea is that we can capture the call by $f$ to the oracle 
%$\omega$ 
by adding to the standard language of $\PA$ a new unary predicate $\Flip(x)$,
%(whose interpretation corresponds indeed to 
to be interpreted as a stream of (random) bits. %$\omega\in \{0,1\}^{\Nat}$).
Our function $f$ can then be represented by the following formula:
\[
B(x_{1},x_{2},y) :=  \big(\Flip(\overline{0})\land \exists x_{3}. x_{1}\times x_{2} = x_{3}  \land y = succ(x_{3})\big) \lor \big(\lnot \Flip(\overline{0})  \land y= \overline 0\big)
\]
As in the case above, it is possible to prove that $B(x_{1},x_{2},y)$ is functional, that is, that $\forall x_{1}.\forall x_{2}.\exists !y. B(x_{1},x_{2},y)$. 
However, since $B$ now contains the unary predicate symbol $\Flip(x)$, the actual numerical function that $B$ represents depends on the choice of an interpretation for $\Flip(x)$, i.e.~on the choice of an oracle for $f$.

In the rest of this section we develop this idea in detail, establishing a correspondence between polytime random functions and a class of \emph{oracle-recursive} functions which are provably total in a suitable bounded theory relying on the predicate $\Flip$.

% 
%which will be made precise in Section~\ref{sec:CPRF}.

%
%
%\subparagraph{A Randomized Arithmetical Bounded Theory.}\label{sub:rs}
%\begin{notation}
%\end{notation}

\subparagraph*{The Language $\Lpw$.}

We let $\Bool:=\{\zzero,\oone\}$, 
$\Ss:=\Bool^{*}$ indicate the set of finite words from $\Bool$,
and $\Os:= \Bool^{\Ss}$. %denote the set of infinite ones.
We introduce a language for first-order arithmetic 
incorporating the new predicate symbol $\Flip(x)$ and its interpretation in the standard model. Following~\cite{FerreiraOitavem}, we consider a first-order signature for natural numbers \emph{in binary notation}. 
Consistently, formulas will be interpreted over  $\Ss$ rather than $\Nat$.
Working with strings is not essential
and all results below could be spelled out in a language for 
natural numbers.
Indeed, bounded theories may be formulated in both
ways equivalently, e.g.~Ferreira's $\Sigma^b_1$-NIA
and Buss' $S^1_2$~\cite{FerreiraOitavem}.

%We also define a new randomized bounded arithmetic,
%called $\RS$.
%%
%Our theory is expressed in a probabilistic word language
%with equality $\Lpw$, the terms of which represent strings,
%see Section~\cite{subsub:lpw}.
%
%We also introduce a quantitative semantics,
%associating formulas to measurable sets,
%see Section~\ref{subsub:semantics}.
%

%
%This first-order theory relies on the following components:

%\subsubparagraph{The Language $\Lpw$.}\label{subsub:lpw}
%The language $\Lpw$ is the
%first-order word language
%with equality defined in~\cite{FerreiraOitavem},
%augmented with the special predicate symbol
%$\Flip(\cdot)$, as described below:

%%% defn
%%% Terms
\begin{defn}%[Language $\Lpw$]
  \label{def:lpwterms}
%Let $x,y,\dots$ denote variables.
The terms and formulas of $\Lpw$ are defined by the grammars below:
\begin{align*}
t,s& ::= x \midd \epsilon \midd \zero \midd \one
\midd t \conc s \midd t\times s \\
F, G& ::= \Flip(t) \midd t=s \midd t\subseteq s
\midd \neg F \midd F\wedge G \midd F\vee G
\midd
%F\rightarrow G \midd
 \exists x.F \midd
\forall x.F.
\end{align*}
\end{defn}
The function symbol $\conc$ stands for string concatenation, while $t\times u$ indicates the concatenation of $t$ with itself
a number of times corresponding to the length of $u$. 
The binary predicate $\subseteq$ stands for the 
initial substring relation.
%\footnote{In order to avoid misunderstanding,
%let us briefly sum up the different notions and symbols
%used for subword relations.
%We will use $\ssubseteq$ to express the relation between strings,
%i.e.~$x\ssubseteq y$ expresses that $x$ is an initial or prefix substring
%of $y$.
%As said, we use $\subseteq$ as a relation symbol in the language
%$\Lpw$.
%We will use $\preceq$ as an auxiliary symbol in the language of $\Lpw$;
%in particular $t\preceq s$ is syntactic sugar for $\one^t\subseteq
%\one^s$.
%We will use $\subseteq^*$ as an auxiliary symbol in the language
%$\Lpw$ to denote the subword relation in general (not necessarily initial subword).} 
As usual, we let $A\to B:= \lnot A \lor B$.

%%% Notation
%\begin{notation}

%\end{notation}

%
%%%% defn
%%%% Formul\ae{}
%\begin{defn}[Formul\ae{}]
%%Let $x,y,\dots$ denote variables and
%%$t,s,\dots$ terms.
%Formul\ae{} are defined by the grammar below:
%$$
%F, G ::= \Flip(t) \midd t=s \midd t\subseteq s
%\midd \neg F \midd F\wedge G \midd F\vee G
%\midd F\rightarrow G \midd \exists x.F \midd
%\forall x.F.
%$$
%\end{defn}
%\noindent
%As a syntactical facilitation,
%we define abbreviations which do not extend the
%expressive power of the language, yet %they
%enable us to write more concise %and clear
%formul\ae{}.

%% Notation
%\begin{notation}%[Exponentiation]~
  %\begin{itemize}
   % \item
%   As for $\cconc$, the symbol $\conc$ is usually omitted:
We adopt the following abbreviations: $ts$ for
$t\conc s$; 
 $\one^t$ for $\one\times t$;
    %\item Given two terms of $\Lpw$ $t,s$,the formula
   % \item
    $t\preceq s$ for $\one^t\subseteq \one^s$, i.e.~the length of $t$ is
   smaller than that of $s$; 
    %\item
    $t|_r=s$ for  $
      (\one^r\subseteq \one^t
      \wedge s\subseteq t \wedge \one^r=\one^s)
      \vee (\one^t\subseteq \one^r \wedge s=t) 
      $, i.e~$s$ is the \emph{truncation}
    of $t$ at the length of $r$.
    For each string $\sigma \in \Ss$, we let $\overline{\sigma}$ be the term of $\Lpw$ representing it (e.g.~$\overline{\epsilon}=\epsilon$, $\overline{\sigma\zzero}=\overline{\sigma}\zero$ and 
    $\overline{\sigma\oone}=\overline{\sigma}\one$).
    %\item If $t,r,s$ are strings, the abbreviation $t|_r=s$ denotes the following formula:
      
      %saying that $s$ is the \emph{truncation}
      %of $t$ at the length of $r$.
 % \end{itemize}
%\end{notation}
%
%
As for standard bounded arithmetics~\cite{Buss86,Ferreira88},
a defining feature of our theory is the focus on so-called \emph{bounded quantifications}.
In $\Lpw$, \emph{bounded quantifications} 
are of the forms
$\forall x. \one^x \subseteq \one^t \rightarrow F$ and
$\exists x.\one^x \subseteq \one^t \wedge F$,
abbreviated as $\forall x \preceq t.F$ and $\exists x\preceq t.F$.
Following~\cite{Ferreira88},
we adopt \emph{subword quantifications} as those quantifications of the forms
$\forall x.(\exists w\subseteq t.wx\subseteq t)\to F$ and
$\exists x.\exists w\subseteq t.wx \subseteq t \land F$, 
abbreviated as
$\forall x\subseteq^{*}t.F$ and $\exists x\subseteq^{*}t.F$.
An $\Lpw$-formula $F$ is said to be a \emph{$\Sigma^{b}_{1}$-formula} if it is of the form
$\exists x_{1}\preceq t_{1}.\dots. \exists x_{n}\preceq t_{n}.G$, where the only quantifications in $G$ are {subword} ones.
The distinction between bounded and subword quantifications is relevant for complexity reasons:
 if $\sigma\in\Ss$ is a string of length $k$,
the witness of a subword existentially quantified formula
$\exists y.y\subseteq^{*} \overline{\sigma} \land H$ is to be looked for among all possible sub-strings of 
$\sigma$, i.e.~within a space of size $\mathcal O(k^2)$,
while the witness of a bounded formula
$\exists y\preceq \overline{\sigma}.H$ is to be looked for among all possible strings of length $k$, 
i.e.~within a space of size $\mathcal O(2^{k})$.

%A formula $A$ is said to be a \emph{bounded $\Sigma$-formula} ($\Sigma^{b}_{1}$-formula) if $A$ is of the form $\exists x_{1}\preceq t_{1}.\dots. \exists x_{n}.\preceq t_{n}.A'$, where all quantifications in $A'$ are subword quantifications.

\subparagraph*{The Borel Semantics of $\Lpw$.}

%%% SUBSECTION
%%% Semantics for Formul\ae{} in L_PW
%%% sec:semantics
%\subsubparagraph{Semantics for Formul\ae{} in $\Lpw$}\label{subsub:semantics}

We introduce a \emph{quantitative} semantics for formulas
of $\Lpw$, inspired by~\cite{ADLP21}.
While the function symbols of $\Lpw$, as well as the predicate symbols ``$=$'' and ``$\subseteq$'', have a standard interpretation as relations over $\Ss$, the idea is that predicate symbol $\Flip$ may stand for \emph{an arbitrary} subset of $\Ss$,
that is, an arbitrarily chosen $\omega\in \Os$. For this reason, we take as the interpretation of a 
$\Lpw$-formula $F$ the set $\model{F}\subseteq \Os$ of \emph{all} possible interpretations of $\Flip$ satisfying $F$.
 Importantly, such sets $\model{F}$ can be proved to be  \emph{measurable}, a fact that will turn out essential in Section \ref{sec:TBPP}.
Indeed, the canonical first-order model of $\Lpw$ over $\Ss$ 
can be extended to a probability space $(\Os, \sigma(\mathscr{C}), \mu)$ defined in a standard way: here $\sigma(\mathscr{C})\subseteq \wp(\Os)$ is the Borel $\sigma$-algebra generated by \emph{cylinders}
$\mathsf C_{\sigma}^{b}=\{\omega \mid \omega(\sigma)={b}\}$, with ${b}\in\{0,1\}$, 
and $\mu$ is the \emph{unique} measure such that $\mu(\mathsf C_{\sigma}^{b})=\frac{1}{2}$ (see~\cite{Billingsley}). 
%and the Appendix).
%
While the interpretation of terms is standard,
the interpretation of formulas is defined below.
%\footnote{To be precise, these objects
%are a slight variation of {Billingsley}'s {cylinders}, which
%are subsets of $\{\zero, \one\}^\Nat$.}.
%by considering as the underlying sample space $\Os$.
%
%There are standard ways of building a well-defined $\sigma$-algebra and a probability space over $\Os$.
%
%
%Let $\mathscr{C}$ denote the set of all
%cylinders and $\sigma(\mathscr{C})$
%the smallest $\sigma$-algebra
%including $\mathscr{C}$ and which is Borel's.
%%
%There are natural ways to define a probability
%measure $\mu$
%on $\mathscr{C}$, for example by extending Billingsley's definition
%for sequences to the case of functions $\Ss\longrightarrow \Bool$,~\cite{Billingsley}.
%
%Due to measure-function, the canonical model
%can be generalized to
%$\mathscr{P}=(\Os, \cconc,\ttimes, \sigma(\mathscr{C}),
%\mu_\mathscr{C})$.

%%% defn
%%% Quantitative Semantics
\begin{defn}[Borel Semantics of $\Lpw$]\label{def:quantsem}
Given a term $t$, a formula
$F$ and an environment $\xi:\mathcal{G}\rightarrow
\Ss$, where $\mathcal{G}$
is the set of term variables,
the \emph{interpretation of F under $\xi$}
is the measurable set
$\llbracket F\rrbracket_\xi\in \sigma(\mathscr{C})$
%\in \sigma(\mathscr{C})$
inductively defined as follows:

\vskip-4mm
\hskip-11mm
\adjustbox{scale=0.9}{
%\begin{minipage}{\linewidth}
\begin{minipage}[t]{0.2\linewidth}
\begin{align*}
%\mathsf{C}_{X^1_{|\llbracket t\rrbracket_\xi|}} \\
\llbracket t=s\rrbracket_\xi &:= \begin{cases}
\Os \ \ &\text{if } \llbracket t\rrbracket_\xi = \llbracket s\rrbracket_\xi \\
\emptyset \ \ &\text{otherwise}
\end{cases} \\
\llbracket t\subseteq s\rrbracket_{\xi} &:=
\begin{cases}
\Os \ \ \ &\text{if } \llbracket t\rrbracket_\xi
\ssubseteq \llbracket s\rrbracket_\xi \\
\emptyset \ \ \ &\text{otherwise}
\end{cases}
\end{align*}
\end{minipage}
\begin{minipage}[t]{0.3\linewidth}
\begin{align*}
\llbracket \Flip(t)\rrbracket_{\xi} &:= \{\omega \ | \
\omega(\llbracket t\rrbracket_{\xi}) = \oone\} \\
\llbracket \neg G\rrbracket_\xi &:=
\Os - \llbracket G\rrbracket_\xi \\
\llbracket G\vee H\rrbracket_\xi &:= \llbracket G\rrbracket_\xi \cup \llbracket H\rrbracket_\xi \\
\llbracket G\wedge H\rrbracket_\xi &:=
\llbracket G\rrbracket_\xi \cap \llbracket H\rrbracket_\xi% \\
%%
%\llbracket G\rightarrow H\rrbracket_\xi &:=
%(\Os - \llbracket G\rrbracket_\xi)
%\cup \llbracket H\rrbracket_\xi \\
%
%\llbracket \exists x.G\rrbracket_\xi &:=
%\bigcup_{i\in\Ss} \llbracket G\rrbracket_{\xi\{x\leftarrow i\}} \\
%%
%\llbracket \forall x.G\rrbracket_\xi &:=
%\bigcap_{i\in \Ss}\llbracket G\rrbracket_{\xi\{x\leftarrow i\}}.
\end{align*}
\end{minipage}
\begin{minipage}[t]{0.23\linewidth}
\begin{align*}
%\llbracket \neg G\rrbracket_\xi &:=
%\Os - \llbracket G\rrbracket_\xi \\
%\llbracket G\vee H\rrbracket_\xi &:= \llbracket G\rrbracket_\xi \cup \llbracket H\rrbracket_\xi \\
%%
%\llbracket G\wedge H\rrbracket_\xi &:=
%\llbracket G\rrbracket_\xi \cap \llbracket H\rrbracket_\xi \\
%%
%\llbracket G\rightarrow H\rrbracket_\xi &:=
%(\Os - \llbracket G\rrbracket_\xi)
%\cup \llbracket H\rrbracket_\xi \\
%
\llbracket \exists x.G\rrbracket_\xi &:=
\bigcup_{i\in\Ss} \llbracket G\rrbracket_{\xi\{x\leftarrow i\}} \\
\llbracket \forall x.G\rrbracket_\xi &:=
\bigcap_{i\in \Ss}\llbracket G\rrbracket_{\xi\{x\leftarrow i\}}.
\end{align*}
\end{minipage}
%\end{minipage}
}

%where $|\llbracket t\rrbracket_\xi|$ denotes the length of $\llbracket t\rrbracket_\xi\in \Ss$.
\end{defn}
\noindent
 This semantics is well-defined as
the sets $\llbracket \Flip(t)\rrbracket_\xi$,
$\llbracket t=s\rrbracket_\xi$
and $\llbracket t\subseteq s\rrbracket_\xi$
are measurable
and measurability is preserved by all
the logical operators.

Observe that an interpretation of the language $\Lpw$, in the usual first-order sense, requires some $\xi$ as above as well as an interpretation $\omega$ for $\Flip(x)$. 
One can easily check by induction that, for any formula $F$ and interpretation $\xi$, $\omega\in \model{F}_{\xi}$ precisely when $F$ is satisfied in the first-order environment formed by $\xi$ \emph{and} $\omega$.

\subparagraph*{The Bounded Theory $\RS$.}
We now introduce a bounded theory in the language $\Lpw$, called $\RS$,
which can be seen as a probabilistic counterpart to Ferreira's $\Sigma^b_1$-NIA \cite{Ferreira90b}.
%
% Theory \RS
%\begin{defn}%[Theory $\RS$]
%  \label{def:S13}
%  The set of axioms of $\RS$ is strongly
%inspired by those in $\Sigma^b_1$-NIA,
%but expressed in $\Lpw$~\cite{Ferreira90},
%see~\ref{subsub:S13}.
The theory $\RS$ is defined by
axioms belonging to two classes:
\begin{itemize}
\itemsep0em
\item \emph{Basic axioms} (where $\bool\in \{\zero, \one\}$): \\
%\begin{enumerate}
%\itemsep0em
%\item

\vskip-7mm
\hskip-7mm
\begin{minipage}{0.17\textwidth}
\begin{align*}
x\epsilon& =x   \\
x(y\bool)&=(xy)\bool
\end{align*}
\end{minipage}
\begin{minipage}{0.2\textwidth}
\begin{align*}
x\times \epsilon&=\epsilon\\
x\times y\bool&=(x\times y)x
%x\times y\one& =(x\times y)x
\end{align*}
\end{minipage}
\begin{minipage}{0.2\textwidth}
\begin{align*}
x\subseteq \epsilon & \leftrightarrow x=\epsilon \\
x\subseteq y\bool &\leftrightarrow x\subseteq y \lor x= y\bool
%x\subseteq y \vee x=y\bbool
\end{align*}
\end{minipage}
\begin{minipage}{0.2\textwidth}
\begin{align*}
x\bool&=y\bool \rightarrow x=y \\
x\zero&\neq y\one \ \ \ \ 
x\bool\neq \epsilon
%x\bbool&\neq \epsilon
\end{align*}
\end{minipage}

\medskip

\item \emph{Axiom schema for induction on
notation}:
$
B(\epsilon) \wedge \forall x.\big(B(x)
\rightarrow B(x\zero)\wedge B(x\one)\big)
\rightarrow \forall x.B(x),
$
where $B$ is a $\Sigma^b_1$-formula
in $\Lpw$.
\end{itemize}
%And closed under the rules of standard first-order classical logic.
%\end{defn}
\noindent
The axiom schema for induction on notation adapts the usual induction schema of $\PA$ to the binary representation. As standard in bound arithmetic, restriction %of this schema 
to $\Sigma^{b}_{1}$-formulas, 
is essential to characterize algorithms with \emph{bounded} resources. 
Indeed, more general instances of this schema would lead to represent functions which are not polytime computable.

%
%\subparagraph{Notational preliminaries}\label{sec:notation}
%We start by introducing the notational conventions
%adopted in the rest of the paper.
%%
%The intended interpretation of our theory concerns strings
%rather than natural numbers.
%
%\begin{notation}[Strings' Operations]
%  We use $\eepsilon$ for the empty string.
%  %
%  Given two strings $x,y \in \Ss$,
%  we use $x\cconc y$ (usually abbreviated as $xy$) to denote
%  their concatenation,
%  $x \ssubseteq y$
%  to express that $x$ is an \emph{initial} or
%  \emph{prefix substring} of $y$,
%  and $x\bm\ttimes y$ for the
%  their binary product, i.e. the string
%  obtained by concatenating
%  $x$ with itself for $|y|$-times.
%\end{notation}
%
%
%\begin{defn}[Sets of Strings]
% % \begin{itemize}
%  %\item
%  Let $\Bool=\{\zzero,\oone\}$ be the set of bits,
%  $\Ss=\Bool^*$ that of binary strings of finite
%  length, and
%  %\item
%  $\Os=\Bool^\Ss$ the set of functions from $\Nat$ to $\Bool$.
%  %\end{itemize}
%  Elements of $\Os$ are denoted by
%  meta-variables $\omega',\omega'',\dots$.
%\end{defn}
%
%

% Formally:
%
%
% \begin{defn}[String length]
%   Let $|\cdot|: \Ss \longrightarrow \Nat$ denote the
%   length-map of a string,~i.e.
%   \begin{align*}
%     |\eepsilon| &:=0\\
%     |\sigma \bool| &:=1+ |\sigma|\\
% \end{align*}
% \end{defn}
%

\subparagraph{An Algebra of Polytime Oracle Recursive Functions}\label{sec:POR}

We now introduce a Cobham-style function algebra, called $\POR$, for polytime \emph{oracle} recursive functions, and show that it is captured by 
a class of bounded formulas provably representable
in the theory $\RS$. 
This algebra is inspired by Ferreira's 
{PTCA}~\cite{Ferreira88,Ferreira90b}. 
Yet, a fundamental difference is that the functions we define are of the form $f: \Ss^{k}\times \Os\to \Ss$, i.e.~they carry an additional argument
$\omega:\Ss \to \Bool$, to be interpreted as the underlying stream of random bits.
Furthermore, our class includes the basic  \emph{query} function,
%$Q(x,\omega)=\omega(x)$,
 which can be used to observe any bit from $\omega$.
%
%$\POR$ is the class of polytime oracle functions
%from (finite and infinite) strings to strings.
%

%, which is a Cobham-style function algebra, PTCA \cite{Ferreira90,Cobham1965}.
%

%% defn
%% The Class POR
%\begin{defn}
The \emph{class $\POR$} is the smallest class of
functions from
$\Ss^k \times \Os$ to $\Ss$, containing the \emph{empty} function $E(x,\omega)=\eepsilon$, the \emph{projection} functions $P^n_i(x_1,\dots, x_n,\omega)=x_i$, the \emph{word-successor}
function $\Sf_{b}(x,\omega)=x\bbool$, the \emph{conditional} function 
%\begin{align*}
$\Cf(\eepsilon, y, z_0,z_1,\omega) = y$ and % \\
$\Cf(x\bbool, y,z_0, z_1,\omega) = z_b$,
%\end{align*}
where $\bbool\in \Bool$ (corresponding to $b\in\{0,1\}$), 
the \emph{query} function $\query(x,\omega)=\omega(x)$,
and closed under the following schemata:
\begin{itemize}
\item \emph{Composition}, where $f$
is defined from $g,h_1,\dots, h_k$
as
$
f(\vec{x},\omega)=g(h_1(\vec{x},\omega), \dots,
h_k(\vec{x},\omega),\omega).
$
\item \emph{Bounded recursion on notation},
where $f$ is defined from
$g,h_0,h_1$ as
\begin{align*}
f(\vec{x},\eepsilon, \omega) &= g(\vec{x},\omega); \\
f(\vec{x}, y\zzero,\omega) &= h_0\big(\vec{x},y,f(\vec{x},y,\omega),\omega\big)|_{t(\vec{x},y)}; \\
f(\vec{x},y\oone,\omega) &= h_1\big(\vec{x},y,f(\vec{x},y,\omega),\omega\big)|_{t(\vec{x},y)},
\end{align*}
with $t$ obtained from
$\eepsilon,\zzero,\oone,\cconc,
\ttimes$ by explicit definition, i.e.~by applying
 $\cconc$ and $\ttimes$ on  constants
 $\eepsilon$, $\zzero$, $\oone$, and  variables $\vec x$ and $y$.
% \footnote{Notice that this language is identical to the one of $\Lpw$ terms, see~Definition~\ref{def:lpwterms}.}
\end{itemize}
%\end{defn}

We now show that  functions of $\POR$
are precisely those which are $\Sigma^{b}_{1}$-representable in $\RS$.
To do so, we slightly modify Buss' representability conditions
by adding a constraint relating the quantitative semantics
of formulas in $\Lpw$ and the additional functional parameter
$\omega$ of oracle recursive functions.

\begin{defn}\label{def:representability}
A function $f: \Ss^{k} \times \Os \to \Ss$ is \emph{$\Sigma^{b}_{1}$-representable in $\RS$} if there exists a $\Sigma^{b}_{1}$-formula $G(\vec x,y)$ of $\Lpw$ such that:
\begin{enumerate}
\item $\RS \vdash \forall \vec x.\exists! y.G(\vec x,y)$,
%\footnote{Here
%$\exists ! y.F(y)$ is shorthand for $(\exists y.F) \land (\forall y.\forall y'.F(y)\land F(y') \to y=y')$.}

%\item $\RS\vdash \forall \vec x.\forall y.\forall z. (G(\vec x,y)\land G(\vec x,z)\to y=z)$,
\item for all $\sigma_{1},\dots, \sigma_{k},\tau\in \Ss$ and $\omega\in \Os$,
$f(\sigma_{1},\dots, \sigma_{k},\omega)=\tau$ iff $\omega \in \model{G(\overline{\sigma_1}, \dots \overline{\sigma_k},
\overline \tau)}$.
\end{enumerate}
\end{defn}

Condition 1. of Definition~\ref{def:representability} does \emph{not} say that the unique value $y$ is obtained as a function of $\vec x$ \emph{only}. Indeed,
 the truth-value of a formula %$G$ from $\Lpw$ 
 depends both on the value of its first-order variables and on the value assigned to the random predicate $\Flip$. 
 Hence this condition says that $y$ is uniquely determined as a function \emph{both} of its first-order inputs and of an oracle from $\Os$, precisely as functions of $\POR$. %Indeed, the following correspondence holds.
%Then, the first part of our argument is expressed by the statement below:

\begin{theorem}\label{thm:RStoPOR}
For any $f: \Ss^{k} \times \Os \to \Ss$, $f$ is $\Sigma^{b}_{1}$-representable in $\RS$ iff $f\in \POR$.

\end{theorem}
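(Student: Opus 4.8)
The plan is to prove the two directions of the equivalence separately, following the classical pattern of Buss' theorem adapted to the oracle/randomized setting.

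\medskip

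\textbf{Direction $f \in \POR \Rightarrow f$ is $\Sigma^b_1$-representable in $\RS$.} This goes by induction on the derivation of $f$ in the function algebra $\POR$. For the base functions the representing $\Sigma^b_1$-formulas are immediate: e.g.\ $E$ is represented by $y = \epsilon$, $P^n_i$ by $y = x_i$, $\Sf_b$ by $y = x\bool$, the conditional $\Cf$ by a disjunction over the shape of the first argument, and crucially the \emph{query} function $\query$ by the formula $(\Flip(x) \wedge y = \one) \vee (\lnot\Flip(x) \wedge y = \epsilon)$ — this is exactly where the $\Flip$ predicate and the Borel semantics of Definition~\ref{def:quantsem} enter, so that condition~2 of Definition~\ref{def:representability} holds by the clause $\model{\Flip(t)}_\xi = \{\omega \mid \omega(\model{t}_\xi) = \oone\}$. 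For the inductive cases, composition is handled by the usual existential-quantifier gluing, taking care that the resulting formula stays $\Sigma^b_1$ (bounds on the intermediate witnesses come from the polynomial bounds available for the $h_i$, which must themselves be part of the induction hypothesis — so one actually proves a strengthened statement carrying an explicit length bound $|f(\vec x,\omega)| \le p(|\vec x|)$ provably in $\RS$). Bounded recursion on notation is the delicate case: one represents the graph of $f$ via a $\Sigma^b_1$-formula asserting the existence of a \emph{sequence coding} the computation history $f(\vec x, y'|_{y}, \omega)$ for all prefixes, truncated at $t(\vec x,y)$; the existence and uniqueness of this sequence, i.e.\ condition~1, is proved in $\RS$ by the $\Sigma^b_1$ induction-on-notation schema. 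One must check that the bounded quantifiers over the history sequence can be written as \emph{subword} quantifications (using the truncation-by-$t$ so the history has polynomial length), which is needed to stay in the $\Sigma^b_1$ class in Ferreira's sense.

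\medskip

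\textbf{Direction $f$ is $\Sigma^b_1$-representable in $\RS \Rightarrow f \in \POR$.} This is the witnessing direction. The natural route is a Buss-style witnessing theorem for $\RS$: from a proof in $\RS$ that $\forall \vec x.\exists! y. G(\vec x, y)$ with $G \in \Sigma^b_1$, extract a $\POR$ function that, given $\vec x$ and reading the oracle $\omega$, outputs the witness $y$. Concretely one either (i) proves a cut-elimination / free-cut-free sequent calculus presentation of $\RS$ and then a witnessing lemma by induction on the proof, where each inference rule corresponds to a $\POR$-construction (the $\Flip$ atoms being witnessed by calls to $\query$), or (ii) goes through the intermediate class of oracle recursive functions: first show $\POR = \PPT$ (the polytime random functions, as announced in the introduction), and separately that $\Sigma^b_1$-representability in $\RS$ entails computability by a polytime PTM — but since the cleanest self-contained argument keeps everything subrecursive, I would favor the direct witnessing proof. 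The key invariant is that a $\Sigma^b_1$ witness predicate is decidable by a $\POR$ function (sharply bounded / subword quantifiers can be searched in polynomial time with oracle access), and the induction schema is handled exactly as in Buss: a $\Sigma^b_1$-$\mathrm{PIND}$ inference is witnessed by a bounded recursion on notation, which is precisely the recursion schema of $\POR$.

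\medskip

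\textbf{Main obstacle.} The hard part will be the bounded-recursion case in the first direction combined with the induction-rule case in the witnessing direction — in both, one must be scrupulous that the coding of computation histories and witnesses stays within the \emph{subword-quantifier} fragment so that all formulas involved remain genuinely $\Sigma^b_1$ over the string signature $\Lpw$, and that the polynomial length bounds are themselves $\RS$-provable. A secondary subtlety specific to the randomized setting is bookkeeping the oracle argument $\omega$ uniformly: one must verify that the correspondence in condition~2 of Definition~\ref{def:representability} is preserved by every construction, i.e.\ that the measurable set $\model{G}$ tracks the oracle-dependence of the $\POR$ function exactly, which relies on the compositional, pointwise-in-$\omega$ nature of the Borel semantics. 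I expect the bulk of the work (and the appendices referenced as Appendix~\ref{app:1} and~\ref{app:2}) to be devoted to making these two points precise.
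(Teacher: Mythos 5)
Your overall structure is correct — both directions are handled as in the paper, and the statement is proved by the same kind of argument at heart. A small slip worth fixing: the query function $\query(x,\omega)=\omega(x)$ takes values in $\Bool=\{0,1\}$, so the representing formula should be $(\Flip(x)\wedge y=\one)\vee(\lnot\Flip(x)\wedge y=\zero)$, not $y=\epsilon$ in the second disjunct; this matters for Condition~2 of Definition~\ref{def:representability} to hold literally.

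On the easy direction ($\POR\Rightarrow$ representable) the paper does not carry an explicit length bound as part of the induction hypothesis; it instead invokes a version of Parikh's theorem for $\RS$ (that if $\RS\vdash\forall\vec x.\exists y.G$ with $G$ bounded then $\RS\vdash\forall\vec x.\exists y\preceq t.G$ for some term $t$) as a stand-alone meta-theorem, and uses it to ``bound'' the existentials before gluing in the composition and recursion cases. Your approach (strengthening the IH to carry the bound) also works and avoids proving Parikh separately; the paper's is slightly more modular. Your handling of bounded recursion via a sequence coding the history, with the key worry being that the quantifiers remain subword, matches the paper exactly — the paper uses formulas $F_{lh}$, $F_{eval}$ (a $\beta$-function-style decoder) borrowed from Ferreira.

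On the hard direction (representable $\Rightarrow\POR$) you propose a genuinely different route from the paper's. You favor the classical Buss/Ferreira witnessing argument: a free-cut-free sequent calculus for $\RS$, then a witnessing lemma by induction on proofs, with $\Sigma^b_1$-PIND witnessed by bounded recursion on notation and $\Flip$-atoms handled by $\query$-calls. The paper instead adapts Cook–Urquhart's argument for IPV$^\omega$: it introduces an equational theory $\POR^\lambda$ (a simply typed $\lambda$-calculus with constants for $\POR$-functions including a primitive $\Flipcoin$), an intuitionistic first-order theory $I\POR^\lambda$ over it proving all theorems of intuitionistic $\RS$, a realizability interpretation extracting $\POR^\lambda$-terms as witnesses, and finally a $\forall\NP$-conservativity argument (via a Kleene-slash / Friedman-translation-style relation $x\Vdash F$ and a Markov-principle step) to transfer the result from intuitionistic to classical $\RS$. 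Both routes are sound. Your cut-elimination route is more direct and avoids the detour through intuitionistic logic, but requires you to actually establish free-cut elimination for the sequent calculus of $\RS$ with the $\Flip$ predicate and check that every inference step has a $\POR$-witnessing function; the paper's realizability route buys a reusable $\lambda$-calculus interface (useful elsewhere in the paper's development) at the cost of the $I\POR^\lambda\to\RS$ conservativity bookkeeping. You correctly rejected your option~(ii): routing through $\POR=\PPT$ would not by itself give you a subrecursive witness extraction from an $\RS$-proof — the $\POR\leftrightarrow\PPT$ correspondence (Lemma~\ref{lemma:Fundamental}) is used for Theorem~\ref{thm:RStoPPT}, not for this direction of Theorem~\ref{thm:RStoPOR}.
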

\begin{proof}[Proof sketch]
$(\Leftarrow)$
The desired $\Sigma^{b}_{1}$-formula is constructed by induction on the
structure of oracle recursive functions.
Observe that the formula $\forall \vec x.\exists! y.G(\vec x,y)$ occurring in Condition 1.~of Definition~\ref{def:representability} is \emph{not} $\Sigma^{b}_{1}$,
since it is universally quantified while the existential quantifier is not bounded.
Hence, in order to apply the inductive steps (corresponding to functions defined by composition and bounded recursion on notation),
we need to adapt Parikh's theorem~\cite{Parikh}
(which holds for $S^{1}_{2}$ and $\Sigma^b_1$-NIA) to $\RS$, to state that if $\RS \vdash \forall \vec x.\exists y.G(\vec x,y)$, where $G(\vec x,y)$ is a
$\Sigma^{b}_{1}$-formula, then we can find a term $t$ such that
$\RS\vdash \forall \vec{x}. \exists y\preceq t.G(\vec x,y)$.
$(\Rightarrow)$
The proof consists in adapting 
Cook and Urquhart's argument for system IPV$^{\omega}$~\cite{CookUrquhart}, and
this goes through a \emph{realizability interpretation} of the
intuitionistic version of $\RS$, called $I\RS$.
Further details can be found in the Appendix~\ref{app:1}. 
\end{proof}

% In the next section we will show that this class of functions can be captured using the bounded theory $\RS$.
%

\subsection{Characterizing Polytime Random Functions}\label{sec:CPRF}
Theorem \ref{thm:RStoPOR} shows that it is possible to characterize $\POR$ by means of a system of bounded arithmetic. 
Yet, this is not enough to deal with classes, like $\BPP$ or $\RP$,
which are defined in terms of functions computed by PTMs.
%
%As we observed before, 
Observe that there is a crucial difference in the way in 
which probabilistic machines
and oracle recursive functions access randomness,
so our next goal is to fill the gap,
by relating these classes of functions.

Let $\Dist{\Ss}$ indicate the set of \emph{distributions over $\Ss$}, that is, those functions $\lambda: \Ss \to [0,1]$ such that $\sum_{\sigma\in \Ss }\lambda(\sigma)= 1$.
By a \emph{random function} we mean a function of the form
$f: \Ss^k \to \Dist{\Ss}$. Observe that any (polytime) PTM $\PTM$ computes a random function $f_{\PTM}$, where, for every
$\sigma_{1},\dots, \sigma_{k},\tau\in \Ss$,
$f_{\PTM}(\sigma_{1},\dots, \sigma_{k})(\tau)$ coincides with the probability that $\PTM(\sigma_{1}\sharp\dots \sharp \sigma_{k})\Downarrow \tau$.
However, a random function needs not be computed by a PTM in general. 
We define the following class of \emph{polytime random functions}:
%, i.e.~those random functions computed by polytime PTMs.
\begin{defn}[Class $\PPT$]
The class $\PPT$ is made of all random functions $f: \Ss^{k}\to \Dist{\Ss}$ such that $f=f_{\PTM}$, for some PTM $\PTM$ running in polynomial time.
%, $f=f_{\PTM}$ every
%$\sigma_{1},\dots, \sigma_{k},\tau\in \Ss$,
%$f(\sigma_{1},\dots, \sigma_{k})(\tau)$ coincides with the probability that $\PTM(\sigma_{1}\sharp\dots \sharp \sigma_{k})\Downarrow \tau$.
%
%
\end{defn}
%\noindent
%In other words, the function computed by a TM associates each possible output
%with the probability that a run of the machine
%will actually produce it.
%
\noindent
Functions of
$\PPT$ are closed under \emph{monadic composition} $\diamond$, where
$(g\diamond f)(\sigma)(\tau)=\sum_{\rho\in \Ss }g(\rho)(\tau)\cdot f(\sigma)(\rho)$ (one can check $f_{\PTM'}\diamond f_{\PTM}= f_{\PTM'\circ \PTM}$, where $\circ$ indicates PTM composition).%\footnote{For further details, compare~\cite{ADLP21}.}

Since functions of $\PPT$ have a different shape from 
those of $\POR$,
we must adapt the notion of $\Sigma^{b}_{1}$-representability for them,  relying on the fact that any closed $\Lpw$-formula $F$ generates a \emph{measurable} set $\model{F}\subseteq \Bool^{\Nat}$.

\begin{defn}\label{def:rfprepresentable}
A function $f: \Ss^{k} \to \Dist{\Ss}$ is \emph{$\Sigma^{b}_{1}$-representable in $\RS$} if there exists a $\Sigma^{b}_{1}$-formula $G(\vec x,y)$ of $\Lpw$ such that:
\begin{enumerate}
\item $\RS \vdash \forall \vec x.\exists! y.G(\vec x,y)$,
%\item $\RS\vdash \forall \vec x.\forall y.\forall z. (G(\vec x,y)\land G(\vec x,z)\to y=z)$,
\item for all $\sigma_{1},\dots, \sigma_{k},\tau\in \Ss$,
$f(\sigma_{1},\dots, \sigma_{k},\tau)= \mu\big( \model{G(\overline{\sigma_1}, \dots,
\overline{\sigma}_k, \overline \tau)}\big)$.
\end{enumerate}
\end{defn}

Notice that any $\Sigma^{b}_{1}$-formula $G(\vec x, y)$ satisfying Condition 1.~from Definition~\ref{def:rfprepresentable} actually defines a random function $\rfp{G}:\Ss\to \Dist{\Ss}$ given by $\rfp{G}(\vec \sigma)(\tau)=
\mu(\model{G(\overline{\vec \sigma}, \overline\tau)})$, where $\rfp{G}$ is $\Sigma^{b}_{1}$-represented by $G$.
Moreover, if $G$ represents some $f\in \PPT$, then $f=\rfp{G}$.
In analogy with Theorem \ref{thm:RStoPOR}, we can now prove the following result:

\begin{theorem}\label{thm:RStoPPT}
For any $f: \Ss^{k}  \to \Dist{\Ss}$, $f$ is $\Sigma^{b}_{1}$-representable in $\RS$ iff $f\in \PPT$.

\end{theorem}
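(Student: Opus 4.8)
The plan is to reduce Theorem~\ref{thm:RStoPPT} to Theorem~\ref{thm:RStoPOR} by relating the two notions of function, $\POR$-functions $f:\Ss^k\times\Os\to\Ss$ and polytime random functions in $\PPT$. The bridge is the observation that, given $f\in\POR$, the map $\sigma\mapsto\big(\tau\mapsto \mu\{\omega\mid f(\sigma,\omega)=\tau\}\big)$ is a well-defined random function, since $\{\omega\mid f(\vec\sigma,\omega)=\tau\}$ is measurable (it equals $\model{G(\overline{\vec\sigma},\overline\tau)}$ for the representing $G$ given by Theorem~\ref{thm:RStoPOR}, and by Condition~2 of Definition~\ref{def:representability} these sets partition $\Os$ as $\tau$ varies, so the measures sum to~$1$). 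Conversely one wants every $f\in\PPT$ to arise this way from some $\POR$-function. Granting this correspondence, the theorem follows: if $f:\Ss^k\to\Dist\Ss$ is $\Sigma^b_1$-representable in $\RS$ via $G$, then by Theorem~\ref{thm:RStoPOR} the associated oracle function $g$ (with $g(\vec\sigma,\omega)=\tau$ iff $\omega\in\model{G(\overline{\vec\sigma},\overline\tau)}$, which is total and single-valued by Condition~1) lies in $\POR$, and $f=\rfp{G}$ is exactly the random function induced by $g$; the correspondence then places $f$ in $\PPT$. The other direction is symmetric: an $f\in\PPT$ comes from a $\POR$-function $g$, which by Theorem~\ref{thm:RStoPOR} is represented by some $\Sigma^b_1$-formula $G$, and one checks $\mu(\model{G(\overline{\vec\sigma},\overline\tau)})=f(\vec\sigma)(\tau)$, i.e.\ $G$ represents $f$ in the sense of Definition~\ref{def:rfprepresentable}.

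So the real content is the equivalence, up to the induced-distribution map, between $\POR$ and $\PPT$. The steps I would carry out are: (i) from a polytime PTM $\PTM$, extract a $\POR$-function. The key point is that a polytime PTM on input of size $n$ tosses at most $p(n)$ coins for some polynomial $p$; one simulates $\PTM$ deterministically, reading its $i$-th coin flip by calling $\query$ on (an encoding of) a string of length $\mathcal O(\log i)$ — more precisely, the $\Os$-argument $\omega:\Ss\to\Bool$ is consulted at successive addresses, say $\overline{1}^{\,1},\overline{1}^{\,2},\dots$, so that distinct flips read independent bits and the uniform measure on $\Os$ pushes forward to the fair-coin distribution on runs. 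The polytime step-by-step simulation of a Turing machine is a standard (if tedious) construction in a Cobham algebra with bounded recursion on notation, here $\POR$; (ii) conversely, from $f\in\POR$ build a polytime PTM $\PTM$ that, whenever $f$'s execution queries $\omega(x)$, tosses a fresh fair coin and memoizes the answer keyed by $x$ (so repeated queries to the same address return the same bit). Since $f$ is polytime it makes polynomially many distinct queries, each to an address of polynomial size, so the memo table is polynomial and $\PTM$ runs in polynomial time; one then verifies $f_{\PTM}(\vec\sigma)(\tau)=\mu\{\omega\mid f(\vec\sigma,\omega)=\tau\}$ by a coupling/measure argument identifying the uniform measure on the relevant finite set of query-addresses with the coin-flip distribution of $\PTM$.

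I expect the main obstacle to be the subtle mismatch in how randomness is \emph{addressed}: a PTM consumes a linear tape of fresh coin flips, whereas a $\POR$-function queries an oracle $\omega\in\Os$ at arbitrary string addresses and may query the same address twice, getting the same bit both times. Making the two views agree requires fixing, for each $\POR$-function, a uniform scheme assigning oracle addresses to ``coin positions'' such that (a) the addresses actually queried during any polytime run are distinct and of polynomially bounded size — so that the pushforward of $\mu$ along the run is genuinely a product of fair coins — and (b) the simulation of this addressing scheme is itself definable in $\POR$, i.e.\ implementable with bounded recursion on notation. Handling (a) rigorously means tracking, along the recursion-on-notation unfolding, that query addresses stay ``fresh'' — this is really where the polynomial bound on the number of queries and a careful choice of a pairing of the recursion-tree position with the query count is needed; I would push the bookkeeping into the appendix. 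The measure-theoretic identities ($\mu$ of a cylinder intersection equals $2^{-k}$, the runs of a given length partition $\Os$ modulo null sets, etc.) are routine given the definition of $(\Os,\sigma(\mathscr C),\mu)$ recalled before Definition~\ref{def:quantsem}, and the polytime simulation of Turing machines in a Cobham algebra is classical, so those parts I would only sketch.
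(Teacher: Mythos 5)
Your proposal is correct and follows essentially the same route as the paper: Theorem~\ref{thm:RStoPPT} is reduced to Theorem~\ref{thm:RStoPOR} via exactly the correspondence you describe (this is Lemma~\ref{lemma:Fundamental}), with the $\PPT\Rightarrow\POR$ direction handled by extracting a polynomially bounded prefix of the random source in a distribution-preserving way, and the $\POR\Rightarrow\PPT$ direction handled by linearizing the on-demand, possibly repeated oracle queries — which is precisely your memoization table, formalized in the paper through a chain of intermediate machine and imperative-language models (STMs and the $\SIFP$ variants).
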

\noindent
Thanks to 
Theorem \ref{thm:RStoPOR}, the proof of the result above 
simply consists in showing that $\POR$ and $\PPT$ can be related as stated below.

\begin{lemma}\label{lemma:Fundamental}
%\begin{itemize}
%\item
For all functions $f: \Ss^{k}\times \Os \to \Ss$ in $\POR$ there exists $g:\Ss^{k}\to \Dist{\Ss}$ in $\PPT$ such that for all $\sigma_{1},\dots, \sigma_{k},\tau\in \Ss$,
$\mu(\{\omega \mid f(\vec \sigma,\omega)=\tau\})=g(\sigma_{1},\dots, \sigma_{k},\tau)$, and conversely.
%\item For all functions $f: \Ss^{k} \to \Dist{\Ss}$ in $\PPT$ there exists $g:\Ss^{k}\times \Os\to \Ss$ in $\POR$  such that for all $\sigma_{1},\dots, \sigma_{k},m\in \Ss$,
%$f(\sigma_{1},\dots, \sigma_{k},m)=\mu(\{\omega \mid g(\vec n,\omega)=m\})$.
%
%\end{itemize}
\end{lemma}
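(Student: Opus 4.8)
\textbf{Proof plan for Lemma~\ref{lemma:Fundamental}.}
The plan is to prove the two directions by mutual simulation, matching the way a $\POR$ function reads bits of $\omega$ with the way a PTM consumes fair coin flips, and then transporting measures along this correspondence. The key technical point is that the measure $\mu$ on $\Os$ is the \emph{uniform} product measure: each cylinder $\mathsf C^b_\sigma$ has measure $\tfrac12$, so a finite, non-adaptive or adaptive choice of $k$ oracle positions induces exactly the uniform distribution on $\Bool^k$, which is precisely what $k$ fair coin tosses of a PTM produce.

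\textbf{From $\POR$ to $\PPT$.}
First I would establish, by induction on the Cobham-style derivation of $f\in\POR$, a \emph{polynomial query bound}: there is a polynomial $q$ such that for every $\vec\sigma$, the value $f(\vec\sigma,\omega)$ depends only on the restriction of $\omega$ to a set of positions $Q(\vec\sigma)\subseteq\Ss$ with $|Q(\vec\sigma)|\le q(|\vec\sigma|)$, and moreover $Q(\vec\sigma)$ together with the value of $f$ is itself computable in polynomial time from $\vec\sigma$ and the finitely many observed bits. The base cases are immediate ($E,P^n_i,\Sf_b,\Cf$ make no queries; $\query(x,\omega)$ reads exactly position $x$), composition sums the bounds of the components, and bounded recursion on notation iterates a polynomial number of times a step whose query set grows polynomially — here the length bound $t(\vec x,y)$ supplied by the recursion schema is exactly what keeps everything polynomial, just as in Ferreira's PTCA. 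Given this, I would build a PTM $\PTM$ that, on input $\sigma_1\sharp\dots\sharp\sigma_k$, simulates the computation of $f$, and whenever $f$ issues a query $\query(\rho,\omega)$ to a position $\rho$ \emph{not yet seen}, $\PTM$ tosses a fresh fair coin and records the answer (re-using the recorded answer if $\rho$ was already queried). Since at most polynomially many distinct positions are queried, $\PTM$ runs in polynomial time; and because $\mu$ makes the bits at distinct positions i.i.d.\ uniform, the distribution of $\PTM$'s output on $\tau$ equals $\mu(\{\omega\mid f(\vec\sigma,\omega)=\tau\})$. Setting $g=f_{\PTM}$ gives the claim, with $g\in\PPT$.

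\textbf{From $\PPT$ to $\POR$.}
Conversely, given $g=f_{\PTM}\in\PPT$ for a PTM $\PTM$ running in time $p(n)$, the coin tosses performed by $\PTM$ on an input of size $n$ number at most $p(n)$; I would reserve, for an input of size $n$, the oracle positions $\overline{0},\overline{1},\dots,\overline{p(n)-1}$ (a polynomially long, explicitly $\Lpw$-definable initial segment of $\Ss$) as the source for the $i$-th coin toss. Then I define a $\POR$ function $f(\vec\sigma,\omega)$ that uses $\query$ to fetch these bits one at a time and simulates the transition function of $\PTM$ step by step; the simulation is a bounded recursion on notation driven by a clock of length $p(|\vec\sigma|)$, with each step a polytime string manipulation, so it lies in $\POR$ by the standard Cobham-style coding of Turing-machine configurations (this is where I invoke that $\POR$ extends a Cobham algebra for $\mathbf{FP}$). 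By construction the map $\omega\mapsto(\omega(\overline0),\dots,\omega(\overline{p(n)-1}))$ pushes $\mu$ forward to the uniform distribution on $\Bool^{p(n)}$, i.e.\ to the distribution of $\PTM$'s coin-toss sequence, so $\mu(\{\omega\mid f(\vec\sigma,\omega)=\tau\})$ equals the probability that $\PTM(\vec\sigma)\Downarrow\tau$, which is $g(\vec\sigma)(\tau)$.

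\textbf{Main obstacle.}
The routine-looking but genuinely delicate step is the \emph{polynomial query bound} in the first direction, together with the bookkeeping needed to keep it compatible with measure-theoretic independence: one must check that bounded recursion on notation cannot cause the set of queried positions to blow up (it cannot, because the recursion depth is the length of an argument and the truncation term $t$ caps the output length, so a straightforward induction gives a polynomial bound), and one must be careful that the PTM's ``fresh coin per unseen position'' bookkeeping really reproduces $\mu$ and not some other distribution — this is exactly where the hypothesis $\mu(\mathsf C^b_\sigma)=\tfrac12$ and the independence of distinct cylinders is used. Everything else (the explicit Turing-machine simulation inside $\POR$, the pushforward-of-measure computations) is standard and I would only sketch it.
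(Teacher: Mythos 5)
Your proposal is correct in substance, but it reaches the result by a noticeably more direct route than the paper's. The paper does not build a PTM from a $\POR$ function in one step: it interposes the class $\SFP$ of stream Turing machines (a deterministic TM that consumes one fresh bit of an infinite tape per step), shows $\PPT\Leftrightarrow\SFP$, and then bridges $\POR\Leftrightarrow\SFP$ via a chain of imperative languages ($\SIFPRA$, which allows random, repeatable access to the oracle; $\SIFPLA$, which only allows linear \texttt{RandBit()} access; and an ``on-demand'' variant $\SFPOD$ of stream machines). Each link in that chain isolates exactly one mismatch between the two oracle-access disciplines: non-linear vs.\ linear, and on-demand vs.\ one-bit-per-step. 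Your ``lazy coin-flip with a cache of seen positions'' idea collapses that whole chain into a single PTM construction; the independence and uniformity of the product measure $\mu$, which you correctly identify as the crux, is exactly what makes the paper's $\SIFPRA\to\SIFPLA$ step measure-preserving. What the paper's decomposition buys is rigor at each stage and reusable intermediate models; what yours buys is brevity, at the cost of sweeping the cache-table bookkeeping (a polynomially-sized associative table, polytime lookups) and the ``ignore the unused coin tosses'' step under the rug.

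One small imprecision to fix before this becomes a proof: the ``polynomial query bound'' as you state it — a set $Q(\vec\sigma)$ depending only on $\vec\sigma$ such that $f(\vec\sigma,\omega)$ depends only on $\omega\restriction Q(\vec\sigma)$ — is not literally true, since the positions a $\POR$ function interrogates are adaptive (later queries can depend on earlier answers, and hence on $\omega$). What the induction actually gives, and what the cache-and-flip simulation actually needs, is the weaker statement that for every $\vec\sigma$ and every $\omega$, the computation of $f(\vec\sigma,\omega)$ queries at most $q(|\vec\sigma|)$ distinct positions, each of length at most $q(|\vec\sigma|)$, and the result is determined by those finitely many answers. You clearly understand this — your ``fresh coin per unseen position'' bookkeeping is designed exactly for the adaptive case — but the formal statement of the query bound should be rephrased to reflect it. Your $\PPT\Rightarrow\POR$ direction is essentially the paper's $\SFP\Rightarrow\POR$ argument: encode the machine as a $\POR$ function of an explicit string of coin bits and compose with a prefix-extracting $\POR$ function; the paper does the measure-transport via an explicit bijection $\sim_{\mathit{dy}}$ between $\Bool^{\Nat}$ and $\Os$, which is the formal version of your pushforward remark.
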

\begin{proof}[Proof sketch]
The first step of our proof consists in replacing the class $\PPT$ by an intermediate class $\SFP$ corresponding to functions computed by polytime \emph{stream Turing machines} (STM, for short). 
These are defined as deterministic TM with one extra read-only tape: at the beginning of the computation the extra tape is sampled from $\Bool^{\Nat}$, and at each computation step the machine reads one new bit from this tape. 
Then we show that for any function $f: \Ss^{k}\to \Dist{\Ss}$ computed by some polytime PTM there is a function $g:\Ss^{k}\times \Bool^{\Nat}\to \Ss$ computed by a polytime STM such that for all $\sigma_{1},\dots, \sigma_{k},\tau\in\Ss$,
and $\eta\in\Bool^\Nat$,
 $f(\sigma_{1},\dots, \sigma_{k},\tau)=\mu(\{\eta\mid g(\sigma_{1},\dots, \sigma_{k},\eta)=\tau\})$, and conversely.
To conclude, we prove the correspondence between the classes $\POR$ and $\SFP$:
%The proof is rather convoluted, as it rests on a list of language simulations. 
\begin{description}

\item[($\SFP\Rightarrow\POR$)] 
The encoding relies on the remark that, given an input $x\in \Ss$ and an extra-tape $\eta\in \Bool^{\Nat}$, an STM $\STM$ running in polynomial time can only access a \emph{finite} portion of $\eta$, bounded by some polynomial $p(|x|)$. This way the behavior of $\STM$ is encoded by a $\POR$-function $h(x,y)$, where the second string $y$ corresponds to $\eta_{p(|x|)}$, and we can define $f^{\sharp}(x,\omega)=h(x, e(x,\omega))$, where  $e:\Ss\times \Os\to \Ss$ is a function of $\POR$ which mimics the prefix extractor $\eta\mapsto\eta_{p(|x|)}$, in the sense that its outputs have the same distributions of all possible $\eta$'s prefixes (yet over $\Os$ rather than $\Bool^{\Nat}$).

\item[($\POR\Rightarrow\SFP$)] 
Here we must consider that these two models not only invoke oracles of different shape,
but also that functions of $\POR$ can manipulate such oracles in a much more liberal way
than STMs. 
Notably, the STM accesses oracle bits in a \emph{linear} way: each bit is used exactly once and cannot be re-invoked. Moreover, at each step of computation the STM queries a new oracle bit, while 
functions of $\POR$ can access the oracle, so to say, \emph{on demand}.
The argument rests then on a chain of simulations, making use of a class of imperative languages inspired by Winskell's IMP \cite{Winskell}, each one taking care of one specific oracle access policy: first non-linear and
on-demand (as for $\POR$), then linear but still on-demand, and finally linear and not on-demand (as for STMs).
\end{description}
%For further details, see the Appendix~\ref{app:2}.
\end{proof}

\section{Semantic Characterizations of $\BPP$}\label{sec:TBPP}

We now turn our attention to randomized complexity classes. 
This requires us to consider how random functions (and thus PTMs) may correspond to languages, i.e.~subsets of $\Ss$.
The language computed by a random function can naturally be defined via a majority rule:
\begin{defn}
Let $f: \Ss\to \Dist{\Ss}$ be a random function. The language $\LANG{f}\subseteq\Ss$ is defined by 
$\sigma \in \LANG{f}$ iff $f(\sigma)(\epsilon)>\frac{1}{2}$.
%
%$\chi_{\LANG{f}}(\sigma)=1$ iff $f(\sigma)(1)> \frac{1}{2}$
%and $\chi_{\LANG{f}}(\sigma)=0$ iff $f(\sigma)(0)\geq \frac{1}{2}$
%
\end{defn}

It is instructive to first take a look at the case of the class $\PP$, recalled below:

\begin{defn}[$\PP$]\label{def:pp}
Given a language $L\subseteq \Ss$, $L\in\PP$ iff there is a polynomial time PTM 
$\PTM$ such that for any $\sigma \in \Ss$,
$\mathrm{Pr}[\PTM(\sigma)=\chi_L(\sigma)] > \frac{1}{2}$, 
where, %for any language $L\subseteq \Ss$, 
$\chi_{L}:\Ss \to \{0,1\}$ is the characteristic function of $L$.
\end{defn}
At first glance, $\PP$ might be considered a semantic class, since its definition comprises \emph{both} a resource condition and a promise. However, $\PP$ is generally considered a syntactic class, due to the fact that, when trying to capture the machines solving languages in $\PP$, the promise condition can actually be eliminated. Indeed, \emph{any} PTM $\PTM$ running in polynomial time recognizes \emph{some} language in $\PP$, namely the language 
$L=\LANG{f}$, where $f$ is the polytime random function computed by $\PTM$.
Furthermore, the class $\PP$ \emph{can} be enumerated (see e.g.~\cite{DalLagoKahleOitavem21}).

Using Theorem \ref{thm:RStoPPT}, the remarks above readily lead to a proof-theoretic characterization of $\PP$ via $\RS$.
\begin{proposition}[Syntactic Characterization of $\PP$]\label{prop:pp}
For any language $L\subseteq \Ss$, $L\in \PP$ iff there is a $\Sigma^{b}_{1}$-formula $G( x, y)$ such  that: %the following conditions hold:
\begin{enumerate}
\item $\RS\vdash \forall x.\exists !y.G(x,y)$,
\item $L= \LANG{\rfp{G}}$.
\end{enumerate}
\end{proposition}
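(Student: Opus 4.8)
The plan is to derive Proposition~\ref{prop:pp} as a direct consequence of Theorem~\ref{thm:RStoPPT} together with the observation, already made in the text, that $\PP$ imposes no genuine promise on the underlying machine. I would prove the two directions separately.

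\medskip

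\noindent\textbf{$(\Rightarrow)$} Suppose $L\in\PP$, witnessed by a polytime PTM $\PTM$ with $\mathrm{Pr}[\PTM(\sigma)=\chi_L(\sigma)]>\frac12$ for all $\sigma$. First I would normalize $\PTM$ so that its only possible outputs are $\epsilon$ (read as ``$1$'', i.e.~``accept'') and, say, $\zero$ (read as ``$0$'', i.e.~``reject''); this is harmless and keeps the output alphabet in line with the convention $\sigma\in\LANG f \iff f(\sigma)(\epsilon)>\frac12$ fixed in Definition~\ref{def:lang}. Let $f=f_{\PTM}\in\PPT$ be the polytime random function computed by $\PTM$. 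By Theorem~\ref{thm:RStoPPT}, $f$ is $\Sigma^b_1$-representable in $\RS$, so there is a $\Sigma^b_1$-formula $G(x,y)$ with $\RS\vdash\forall x.\exists!y.G(x,y)$ and $f(\sigma)(\tau)=\mu(\model{G(\overline\sigma,\overline\tau)})$ for all $\sigma,\tau$. This is exactly Condition~1. For Condition~2, note that for every $\sigma$, by the definition of $\LANG{\cdot}$ and the fact that $\rfp G = f$ (recorded in the paragraph after Definition~\ref{def:rfprepresentable}, since $G$ represents a function in $\PPT$), we have $\sigma\in\LANG{\rfp G}\iff \rfp G(\sigma)(\epsilon)>\frac12 \iff f(\sigma)(\epsilon)>\frac12 \iff \mathrm{Pr}[\PTM(\sigma)=\epsilon]>\frac12$. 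Since the output alphabet of $\PTM$ is $\{\epsilon,\zero\}$ and $\PTM$ satisfies the $\PP$-promise, $\mathrm{Pr}[\PTM(\sigma)=\epsilon]>\frac12$ holds precisely when $\chi_L(\sigma)=1$, i.e.~when $\sigma\in L$. Hence $L=\LANG{\rfp G}$.

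\medskip

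\noindent\textbf{$(\Leftarrow)$} Conversely, suppose $G(x,y)$ is a $\Sigma^b_1$-formula with $\RS\vdash\forall x.\exists!y.G(x,y)$ and $L=\LANG{\rfp G}$. By the remark following Definition~\ref{def:rfprepresentable}, $G$ defines a random function $\rfp G:\Ss\to\Dist\Ss$, and by Theorem~\ref{thm:RStoPPT} (the ``only if'' direction applied to $\rfp G$, which is $\Sigma^b_1$-represented by $G$ via Conditions 1 and 2 of Definition~\ref{def:rfprepresentable} with $f:=\rfp G$) we get $\rfp G\in\PPT$; thus $\rfp G=f_{\PTM}$ for some polytime PTM $\PTM$. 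It remains to check that this $\PTM$ (possibly after the same output normalization as above, so that any output $\neq\epsilon$ is treated as ``reject'') witnesses $L\in\PP$: for every $\sigma$, $\mathrm{Pr}[\PTM(\sigma)=\epsilon]=f_{\PTM}(\sigma)(\epsilon)=\rfp G(\sigma)(\epsilon)$, which is $>\frac12$ exactly when $\sigma\in\LANG{\rfp G}=L$; and since the two output classes are complementary, $\mathrm{Pr}[\PTM(\sigma)=\zero]=1-\mathrm{Pr}[\PTM(\sigma)=\epsilon]>\frac12$ exactly when $\sigma\notin L$. In either case $\mathrm{Pr}[\PTM(\sigma)=\chi_L(\sigma)]>\frac12$, so $L\in\PP$.

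\medskip

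\noindent I do not expect a serious obstacle here, since all the heavy lifting is in Theorem~\ref{thm:RStoPPT}; the only point requiring minor care is the bookkeeping between the string-valued outputs of random functions and the $\{0,1\}$-valued characteristic function in Definition~\ref{def:pp}, which is handled by the output-normalization step and the convention that the ``accepting'' output is $\epsilon$. One should also note that the ``if'' direction silently uses the observation, emphasized right before the proposition, that \emph{any} polytime PTM recognizes \emph{some} language in $\PP$ — here the potential erraticity of $\PTM$ is a non-issue precisely because $\PP$'s promise is automatically met once outputs are reduced to a two-valued, complementary partition.
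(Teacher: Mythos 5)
Your proof takes the same route the paper intends: the paper offers no detailed argument, simply deriving the proposition from Theorem~\ref{thm:RStoPPT} together with the remark that every polytime PTM recognizes \emph{some} language in $\PP$, and your two directions with output normalization are exactly that argument spelled out. The forward direction is correct as written.

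There is, however, one step in the $(\Leftarrow)$ direction that does not hold as stated: you claim that $\mathrm{Pr}[\PTM(\sigma)=\zero]=1-\mathrm{Pr}[\PTM(\sigma)=\epsilon]>\frac12$ \emph{exactly when} $\sigma\notin L$. By the definition of $\LANG{\rfp{G}}$, $\sigma\notin L$ only gives $\rfp{G}(\sigma)(\epsilon)\leq\frac12$, hence a rejection probability $\geq\frac12$; if $\rfp{G}(\sigma)(\epsilon)=\frac12$ exactly, the strict inequality $\mathrm{Pr}[\PTM(\sigma)=\chi_L(\sigma)]>\frac12$ required by Definition~\ref{def:pp} fails. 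This is the standard asymmetry between the ``$>\frac12$ for acceptance'' threshold in $\LANG{\cdot}$ and the two-sided strict promise in Definition~\ref{def:pp}, and it is repaired by the usual perturbation: since the machine runs in time $p(n)$, all its acceptance probabilities are integer multiples of $2^{-p(n)}$, so one composes $\PTM$ with $p(n)+1$ extra coin tosses that force rejection with probability $2^{-p(n)-1}$; this pushes every acceptance probability that was $\leq\frac12$ strictly below $\frac12$ while keeping every acceptance probability that was $\geq\frac12+2^{-p(n)}$ strictly above it. With this (entirely standard) tie-breaking step inserted, your argument is complete; note that the paper silently relies on the same fix in its remark that any polytime PTM recognizes some language in $\PP$.
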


The characterization above provides an enumeration of $\PP$ (by enumerating the pairs made of a formula $G$ and 
a proof in $\RS$ of Condition 1). However, while a majority rule is enough to capture the problems in $\PP$, the definition of a semantic class like $\BPP$ requires a different condition. 

\begin{defn}[$\BPP$]\label{def:bpp}
Given a language $L\subseteq \Ss$, $L\in\BPP$ iff there is a polynomial time PTM 
$\PTM$ such that for any $\sigma \in \Ss$,
$\mathrm{Pr}[\PTM(\sigma)=\chi_L(\sigma)] \ge \frac{2}{3}$.
%(where, %for any language $L\subseteq \Ss$, 
%$\chi_{L}:\Ss \to \{0,1\}$ is the characteristic function of $L$).
\end{defn}
\noindent
The class $\BPP$ can be captured by ``non-erratic'' probabilistic algorithms, i.e.~such that, for a fixed input, one possible output is definitely more likely than the others.

\begin{defn}
A random function $f: \Ss \to \Dist{\Ss}$ is \emph{non-erratic} if for all $\sigma\in \Ss$, $f(\sigma)(\tau)\geq \frac{2}{3}$ holds for some value $\tau\in \Ss$.
\end{defn}

\begin{lemma}\label{lemma:nonerratic}
For any language $L\subseteq \Ss$, $L\in\BPP$ iff $L=\LANG{f}$, for some non-erratic random function 
$f\in \PPT$.
\end{lemma}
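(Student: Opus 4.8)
The plan is to prove the two directions separately, using Definition~\ref{def:bpp} and the definition of non-erratic random functions, together with the standard probability-amplification argument for $\BPP$ to handle the quantitative gap between $\frac{2}{3}$ and $\frac{1}{2}$.

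For the easy direction $(\Leftarrow)$, suppose $L = \LANG{f}$ for some non-erratic $f \in \PPT$. By definition of $\PPT$ there is a polytime PTM $\PTM$ with $f = f_{\PTM}$. I would first observe that, since $f$ is non-erratic, for each $\sigma$ there is a \emph{unique} $\tau = \tau_\sigma$ with $f(\sigma)(\tau) \geq \frac{2}{3}$ (uniqueness because two distinct strings cannot both have probability $\geq \frac{2}{3}$ under a distribution). The subtle point is that $\LANG{f}$ is defined via the test $f(\sigma)(\epsilon) > \frac12$, which records only whether the dominant output is $\epsilon$ or not; so I would build from $\PTM$ a new polytime PTM $\PTM'$ that runs $\PTM$ and then outputs $\epsilon$ if $\PTM$'s output is $\epsilon$ and (say) $\zero$ otherwise, i.e.\ it collapses the output to a bit recording membership. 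Then $\mathrm{Pr}[\PTM'(\sigma) = \chi_L(\sigma)] = f(\sigma)(\tau_\sigma) \geq \frac23$ when $\tau_\sigma = \epsilon$ (so $\sigma \in L$), and $= \sum_{\tau \neq \epsilon} f(\sigma)(\tau) \geq f(\sigma)(\tau_\sigma) \geq \frac23$ when $\tau_\sigma \neq \epsilon$ (so $\sigma \notin L$); hence $L \in \BPP$.

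For $(\Rightarrow)$, suppose $L \in \BPP$, witnessed by a polytime PTM $\PTM$ with $\mathrm{Pr}[\PTM(\sigma) = \chi_L(\sigma)] \geq \frac23$ for all $\sigma$. Here the issue is that the computed random function $f_{\PTM}$ outputs values in $\{0,1\}$ coding the answer, and we want instead a random function whose language under $\LANG{\cdot}$ is $L$ and which is itself non-erratic. I would first massage $\PTM$ into $\PTM'$ which outputs $\epsilon$ on acceptance and $\zero$ on rejection; then $f_{\PTM'}(\sigma)(\epsilon) \geq \frac23 > \frac12$ when $\sigma \in L$ and $f_{\PTM'}(\sigma)(\epsilon) \leq \frac13 < \frac12$ when $\sigma \notin L$, so $\LANG{f_{\PTM'}} = L$; moreover $f_{\PTM'}$ is supported on $\{\epsilon, \zero\}$ and in both cases one of these two values has probability $\geq \frac23$, so $f_{\PTM'}$ is already non-erratic. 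Since $\PTM'$ is still polytime, $f_{\PTM'} \in \PPT$, which closes this direction. (Strictly speaking the $\frac23$ in Definition~\ref{def:bpp} already matches the threshold in the definition of non-erratic, so no amplification is needed; I would remark, though, that had $\BPP$ been defined with an arbitrary constant error $<\frac12$ or even with error $\frac12 - \frac{1}{\mathrm{poly}}$, the standard Chernoff-bound majority-vote amplification over polynomially many independent runs — expressible as a polytime PTM, hence in $\PPT$ — would bring the success probability above $\frac23$, so the equivalence is robust.)

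I do not expect a serious obstacle here: the statement is essentially a bookkeeping lemma translating between the ``characteristic-function output'' view of $\BPP$ in Definition~\ref{def:bpp} and the ``$\LANG{\cdot}$ of a random function'' view used in the rest of the paper. The only point requiring a little care is keeping track of the encoding of answers as strings (acceptance $\mapsto \epsilon$, rejection $\mapsto$ some fixed nonempty string) so that the majority test $f(\sigma)(\epsilon) > \frac12$ and the non-erratic condition $f(\sigma)(\tau) \geq \frac23$ line up correctly, and noting that all the transformations on PTMs involved preserve the polynomial time bound so that membership in $\PPT$ is maintained throughout.
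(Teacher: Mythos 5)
Your proof is correct and takes essentially the same route as the paper's two-line argument: both directions just post-compose the given machine or function with a relabelling of outputs (the paper phrases this as monadic composition with $k(\epsilon)=1$, $k(\sigma\neq\epsilon)=0$ and $h(1)=\epsilon$, $h(\sigma\neq 1)=0$), and you correctly isolate the key point that non-erraticity together with $f(\sigma)(\epsilon)>\frac{1}{2}$ forces the dominant output to be $\epsilon$ with probability at least $\frac{2}{3}$. The only cosmetic slip is that in the $(\Leftarrow)$ direction your $\PTM'$ should emit the values $1$ and $0$ of $\chi_{L}$ rather than $\epsilon$ and $\zero$; the amplification remark is unneeded here, as you yourself observe.
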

\begin{proof}
For any non-erratic $\PPT$-function $f$, let $\PTM$ be the PTM computing ${k\diamond f}$, where $k(\epsilon)=1$ and $k(\sigma\neq \epsilon)=0$; then $\PTM$ computes $\chi_{\LANG{f}}$ with error $\leq \frac{1}{3}$. 
Conversely, if $L\in \BPP$, 
let $\PTM$ be a PTM accepting $L$ with error $\leq \frac{1}{3}$; then
$L=\LANG{h\diamond f_{\PTM}}$, where $h(1)=\epsilon$ and $h(\sigma\neq 1)=0$.
\end{proof}
\noindent
Lemma \ref{lemma:nonerratic} suggests that, in order to characterize $\BPP$ in the spirit of Proposition \ref{prop:pp}, a new condition has to be added, corresponding to the fact that  $G$ represents a non-erratic random function.
%A natural question is whether a theory like $\RS$ or some of its extensions can be used to capture not only the resource bounds but also this new probabilistic condition.
%We will show that this is actually the case. %through a series of steps.
In the rest of this section we discuss two approaches to measure error bounds for probabilistic algorithms, leading to two different characterizations of $\BPP$: first via measure quantifiers~\cite{ADLP21},
then by purely arithmetical means.
While both such methods ultimately consist
in showing that the \emph{truth} of a formula
in the standard model of $\RS$, they also suggest a more proof-theoretic approach, that we explore in Section~\ref{sec:SSBPP}.
%
%Finally we introduce a family of syntactic classes $\BPP_T\subseteq \BPP$, containing those languages the belonging of which to $\BPP$ is provable in an expressive-enough theory $T$.

%%%% SUBSECTION

\subparagraph{$\BPP$ via Measure Quantifiers.}

%For any polytime random function $f\in \PPT$, 
%
%
%Let us first recall the standard definition of the class $\BPP$.
%%: as discussed below, checking if a laone has to look for a probabilistic algorithm
%%satisfying \emph{both} a polynomial resource bound and a uniform
%%error bound.
%%
%
%%%% DEFINITION BPP
%\begin{defn}[$\BPP$]
%Let language $L\subseteq \Ss$ and
% $f_L$ be its characteristic function, 
% then
%$L\in\BPP$ when there is a poly-time PTM 
%$\TM$ such that for any $\sigma \in L$,
%$Pr[\TM(\sigma)=f_L(\sigma)] \ge \frac{2}{3}$.
%\end{defn}
%%
%%
%Given Theorem~\ref{thm:RStoPPT}, a natural question is now whether a theory $RS^1_2$
%or some of its extension can be used also to measure
%error bounds for probabilistic algorithms.
%%
%.

As we have seen, any $\Lpw$-formula $F$ is associated with a measurable
set $\model{F}\subseteq \Os$.
So, a natural idea, already explored in \cite{ADLP21}, consists in enriching $\Lpw$ with \emph{measure-quantifiers} \cite{Morgenstern, MichalewskiMio}, that is, second-order quantifiers of the form $\BOX^q F$,
where $q\in[0,1]\cap \mathbb{Q}$,
intuitively expressing that the measure of $\model F$ is greater than
(or equal to)
$q$.
%\footnote{These quantifiers have been studied since~\cite{Morgenstern} and, more recently, have been
%considered in the context of probabilistic programs~\cite{ADLP21,MichalewskiMio}.}
%
%
%
Then, 
let $\Lpw^{\textsf{MQ}}$ be the extension of $\Lpw$
with measure-quantified formulas $\BOX^{t/s}F$,
where $t,s$ are terms.
The Borel semantics of $\Lpw$ naturally extends to 
$\Lpw^{\textsf{MQ}}$ letting
$
\model{\BOX^{t/s}F}_{\xi} =\Os$ when $|\model{s}_{\xi}| > 0 $ and $ \mu(\model{F}_\xi)
\ge \frac{|\model{t}_\xi|}{|\model{s}_\xi|}$ both hold, and  
$\model{\BOX^{t/s}F}_{\xi} =\emptyset$ otherwise.
To improve readability, for all $n,m\in \mathbb{N}$,
we abbreviate $\BOX^{\one^n/\one^m} F$ 
 as $\BOX^{n/m}F$. %
%
%
%Since $\BPP \subseteq \PH$,
%for any language $L\in \BPP$,
%the corresponding characteristic function $f_L :\Ss \to \Bool$
%is represented by some of $\PA$ $H_L(x,y)$.
%%
%Actually, by suitably 
%adapting results in~\cite{Buss86,Goldreich}
%to our framework,
%$H_L(x,y)$ can be taken to be a 
%$\Sigma^b_3$-formula of $\Lpw$.\footnote{That is a formula
%of the form $F(\vec{x}) = \exists y\preceq t(\vec{x}).
%\forall z\preceq u(\vec{x},y).F'(\vec{x},y,z)$,
%with $F'(\vec{x},y,z)$ a $\Sigma^b_1$-formula.}
%

Measure quantifiers can now be used to express that the formula representing a random function is non-erratic, as shown below.

\begin{theorem}[First Semantic Characterization of $\BPP$]\label{theorem:char1}
%For any language $L\subseteq \Ss$,
%$L\in \BPP$ when there is a $\Sigma^b_1$-formula
%$G(x,y)$ such that the following conditions hold:
%\begin{enumerate}
%\itemsep0em
%\item $RS^1_2 \vdash \forall x\exists !y.G(x,y)$
%\item for any $\sigma \in \Ss, b\in\Bool$, 
%$\vDash \BOX^{2/3}.G(\overline{\sigma},\overline{b})
%\leftrightarrow H_L(\overline{\sigma},\overline{b})$.
%\end{enumerate}
For any language $L\subseteq \Ss$,
$L\in \BPP$ iff there is a $\Sigma^b_1$-formula
$G(x,y)$ such that:  %the following conditions hold:
\begin{enumerate}
\itemsep0em
\item $\RS \vdash \forall x\exists !y.G(x,y)$,
\item $\vDash \forall x.\exists y.\BOX^{2/3}G(x, y)$,
\item $L=\LANG{\rfp{G}}$.
\end{enumerate}
\end{theorem}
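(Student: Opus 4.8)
The plan is to derive Theorem~\ref{theorem:char1} from Lemma~\ref{lemma:nonerratic} together with Theorem~\ref{thm:RStoPPT}, showing that the measure-quantified formula in Condition~2 is exactly the arithmetical rendering of ``$\rfp{G}$ is non-erratic''. For the forward direction, suppose $L\in\BPP$. By Lemma~\ref{lemma:nonerratic}, $L=\LANG{f}$ for some non-erratic $f\in\PPT$. By Theorem~\ref{thm:RStoPPT}, $f$ is $\Sigma^b_1$-representable in $\RS$ by some $\Sigma^b_1$-formula $G(x,y)$, so Condition~1 holds and $f=\rfp{G}$, giving Condition~3. For Condition~2, observe that for each $\sigma$ there is $\tau$ with $f(\sigma)(\tau)\geq\frac23$, i.e.\ $\mu(\model{G(\overline\sigma,\overline\tau)})\geq\frac23$; unwinding the Borel semantics of the measure quantifier (with $t=\one^2$, $s=\one^3$, so $|\model{t}|=2$, $|\model{s}|=3$), this is precisely $\model{\BOX^{2/3}G(\overline\sigma,\overline\tau)}=\Os$, hence $\model{\exists y.\BOX^{2/3}G(\overline\sigma,y)}=\Os$ for every $\sigma$, which is what $\vDash\forall x.\exists y.\BOX^{2/3}G(x,y)$ asserts. (Here I use that the truth of a closed $\Lpw^{\mathsf{MQ}}$-formula in the standard model is the statement $\model{\cdot}=\Os$, since the measure quantifier collapses the dependence on $\omega$.)

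For the converse, suppose $G(x,y)$ satisfies Conditions 1--3. By Condition~1, $G$ defines a random function $\rfp{G}:\Ss\to\Dist{\Ss}$, and moreover, since Condition~1 is exactly Condition~1 of Definition~\ref{def:rfprepresentable}, Theorem~\ref{thm:RStoPPT} (or rather the remark following Definition~\ref{def:rfprepresentable}) yields that $\rfp{G}\in\PPT$: the formula $G$ $\Sigma^b_1$-represents the $\PPT$-function $\sigma\mapsto(\tau\mapsto\mu(\model{G(\overline\sigma,\overline\tau)}))$, which is $\rfp{G}$ itself. From Condition~2, for every $\sigma\in\Ss$ we have $\model{\exists y.\BOX^{2/3}G(\overline\sigma,y)}=\Os$, hence there exists $\tau\in\Ss$ with $\model{\BOX^{2/3}G(\overline\sigma,\overline\tau)}=\Os$, i.e.\ $\mu(\model{G(\overline\sigma,\overline\tau)})\geq\frac23$, i.e.\ $\rfp{G}(\sigma)(\tau)\geq\frac23$. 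So $\rfp{G}$ is a non-erratic $\PPT$-function, and by Condition~3, $L=\LANG{\rfp{G}}$. Lemma~\ref{lemma:nonerratic} then gives $L\in\BPP$.

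The only genuinely delicate points are bookkeeping ones: first, one must check that when $f$ is non-erratic the witnessing $\tau$ with $f(\sigma)(\tau)\geq\frac23$ is the \emph{same} object over which the existential $\exists y$ in Condition~2 ranges — this is immediate since both range over $\Ss$ and $\overline{(\cdot)}$ is a bijection between $\Ss$ and closed terms-in-normal-form — and second, one must be careful that ``$\vDash$'' here means truth in the standard ($\Ss$-indexed) model with the canonical interpretation of $\Flip$'s measure, so that a measure-quantified sentence is true iff its Borel interpretation equals $\Os$ (equivalently, has positive measure, since measure-quantifier interpretations are always $\Os$ or $\emptyset$). I expect the main obstacle, such as it is, to be phrasing the semantic clause for $\BOX^{t/s}$ correctly with $q=\frac23$ and confirming that $\model{\one^2}=2$ and $\model{\one^3}=3$ under the abbreviation conventions, so that $\BOX^{2/3}$ indeed means ``measure $\geq\frac23$''; once that is pinned down, the theorem is a direct repackaging of Lemma~\ref{lemma:nonerratic} through the representability theorem.
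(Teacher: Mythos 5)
Your proof is correct and follows essentially the same route as the paper's: the forward direction combines Lemma~\ref{lemma:nonerratic} with Theorem~\ref{thm:RStoPPT} to obtain a representing $\Sigma^b_1$-formula for a non-erratic $\PPT$-function and then reads off Condition~2 from the Borel semantics of $\BOX^{2/3}$, and the converse reverses this using the fact that Condition~1 alone makes $\rfp{G}$ a $\Sigma^b_1$-representable (hence $\PPT$) random function. Your additional bookkeeping on the interpretation of $\BOX^{\one^2/\one^3}$ is sound and merely makes explicit what the paper's proof sketch leaves implicit.
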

\begin{proof}
Let $L\in\BPP$ and $g:\Ss\to \mathbb{D}(\Ss)$
be a function of $\RFP$ 
computing $L$ with uniform error-bound
(which, thanks to Lemma \ref{lemma:nonerratic}, we can suppose to be non-erratic).
By Theorem~\ref{thm:RStoPPT},
there is a $\Sigma^b_1$-formula $G(x,y)$ such that $g= \rfp{G}$.
So, for all $\sigma\in \Ss$, 
$ \mu(\model{G(\overline \sigma, \overline \tau)})=g(\sigma)(\tau)\geq \frac{2}{3}$ holds for some $\tau\in \Ss$, which shows that Condition 2.~holds.
%%
%%for any $\sigma\in\Ss$ and $b\in\Bool$, $f\mu(\model{G(\overline{\sigma},
%%\overline{b})}) \ge \frac{2}{3}$ when $g(\sigma)(b)\ge \frac{2}{3}$.
%%
%Then, for any $\sigma$, if $f_L(\sigma)=0$,
%$\model{H_L(\overline{\sigma},\zero)}=\Os$
%and $g(\sigma)(b)\ge \frac{2}{3}$.
%%Therefore $\mu(\model{\overline{\sigma},\zero})\ge 2/3$.
%So, we conclude $\mu(\model{G(\overline{\sigma},\zero)
%\leftrightarrow H_L(\overline{\sigma},\zero)} \ge \frac{2}{3}$.
%If $f_L(\sigma)=1$, we argue in a similar way.
Conversely, if Conditions 1.-3.~hold, then $\rfp{G}$ computes $L$ with the desired error bound, so $L\in \BPP$.
%
%there is a function
%$g\in\RFP$ such that for any $\sigma\in\Ss$ and $b\in\Bool$,
%$g(\sigma)(b)=\mu(\model{G(\overline{\sigma},\overline{b})})$.
%%
%If $f_L(\sigma)=0$, then $\model{H_L(\overline{\sigma},\zero)}=\Os$
%so, by 2. $\model{G(\overline{\sigma},\zero)}\ge \frac{2}{3}$.
%By 1. we conclude $\mu(\model{G(\overline{\sigma},\overline{b})})$.
%%
%If $L(\sigma)=1$, the proof is analogous.
%We conclude  $L\in \BPP$.
\end{proof}

\subparagraph{Arithmetizing Measure Quantifiers.}
Theorem~\ref{theorem:char1} relies on the tight correspondence between arithmetic and probabilistic computation;
yet, Condition 2.~involves formulas which are not in the language of first-order arithmetic. 
Lemma~\ref{lemma:RLexp} below shows that measure quantification over
bounded formulas of $\Lpw$ can be expressed 
arithmetically.

\begin{lemma}[De-Randomization of Bounded Formulas]\label{lemma:RLexp}
For any $\Sigma^{b}_{1}$-formula $F(\vec{x})$ of $\Lpw$,
there exists a  $\Flip$-free $\Pi^{0}_{1}$-formula $\mathsf{TwoThirds}[F](\vec{x})$ such that for any $\vec{\sigma}\in\Ss$,
$
\vDash \mathsf{TwoThirds}[F](\overline{\vec{\sigma}})$ holds iff $\mu(\model{F(\overline{\vec{\sigma}})}) \ge \frac{2}{3}$.
\end{lemma}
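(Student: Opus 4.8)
The goal is to translate the statement $\mu(\model{F(\overline{\vec\sigma})}) \ge \frac{2}{3}$, which quantifies over the uncountable space $\Os$, into a $\Flip$-free arithmetical formula that is moreover $\Pi^0_1$. The key observation is that, since $F$ is a $\Sigma^b_1$-formula, its truth under a given environment $\xi$ and oracle $\omega$ depends only on finitely many values $\omega(\rho)$, for $\rho$ ranging over a set of strings whose lengths are polynomially bounded in $|\vec\sigma|$; more precisely, the quantifiers in $F$ are bounded or subword, so after fixing $\vec\sigma$ one can compute a bound $N = N(\vec\sigma)$ (say in unary, $\one^N$) such that only the bits $\omega(\rho)$ with $|\rho| \le N$ are relevant. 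Thus $\mu(\model{F(\overline{\vec\sigma})})$ equals the (finite, rational) fraction of assignments $a \in \Bool^{\{\rho : |\rho|\le N\}}$ making $F$ true, divided by the total number $2^{M}$ of such assignments, where $M$ is the number of strings of length $\le N$.

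\textbf{Key steps.} First, I would make precise the notion of a \emph{finite partial oracle} up to length $N$ and prove a ``locality'' lemma: for a $\Sigma^b_1$-formula $F(\vec x)$ there is a term $t_F$ such that, writing $N = |t_F(\vec\sigma)|$, whether $\omega \in \model{F(\overline{\vec\sigma})}$ depends only on $\omega\restriction\{\rho : |\rho| \le N\}$. This is a routine induction on the structure of $F$, using that bounded and subword quantifiers only explore polynomially-sized search spaces and that $\Flip(t)$ is only ever applied to terms whose value is so bounded; it is essentially the same observation already invoked informally in the excerpt when discussing the $\SFP \Rightarrow \POR$ encoding (the STM reads only $p(|x|)$ many bits). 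Second, I would express ``the number of length-$\le N$ partial assignments satisfying $F$, call it $k$, satisfies $3k \ge 2\cdot 2^M$'' as a $\Flip$-free arithmetical statement. Here one encodes a partial assignment $a$ as a binary string of length $M$ (indexed canonically by the length-$\le N$ strings in lexicographic order), defines a $\Flip$-free formula $\mathrm{Sat}_F(a,\vec x)$ asserting that $F(\vec x)$ holds when $\Flip(\rho)$ is read off from the bit of $a$ at position $\rho$ — this is a bounded formula, by the locality lemma — and then $\mathsf{TwoThirds}[F](\vec x)$ counts: it asserts $\forall$ candidate threshold values, the count of satisfying $a$'s is at least $\tfrac{2}{3}2^M$. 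Third, I must exhibit this counting statement in $\Pi^0_1$ form. The natural phrasing ``there exist $\ge \frac{2}{3}2^M$ many satisfying assignments'' is $\Sigma^0_1$-looking, so instead I would phrase the \emph{complement}-style bound: $\mathsf{TwoThirds}[F](\vec x) := \forall z\, \big( z \text{ codes a set of } a\text{'s with } \mathrm{Sat}_F \text{ false and } |z| > \tfrac{1}{3}2^M \to \bot\big)$, i.e. ``every set of refuting assignments has size $\le \frac{1}{3}2^M$'', which is $\Pi^0_1$ since the inner matrix, involving exponentiation $2^M$ and a bounded check over the code $z$, is arithmetical and the only unbounded quantifier is the leading $\forall z$. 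Equivalently and more cleanly: ``for all $z$, if $z$ enumerates distinct refuting assignments then the length of $z$'s enumeration is $\le \tfrac{1}{3}2^M$.''

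\textbf{Correctness and the main obstacle.} Given the locality lemma, the equivalence $\vDash \mathsf{TwoThirds}[F](\overline{\vec\sigma}) \iff \mu(\model{F(\overline{\vec\sigma})}) \ge \frac{2}{3}$ follows because $\mu$ assigns each of the $2^M$ length-$\le N$ partial assignments equal weight $2^{-M}$ (cylinders have measure $\tfrac{1}{2}$ and are independent), so the measure of $\model{F(\overline{\vec\sigma})}$ is exactly $k\cdot 2^{-M}$ with $k$ the satisfying count, and $k 2^{-M} \ge \tfrac23 \iff 2^M - k \le \tfrac13 2^M$. I expect the main obstacle to be twofold and essentially bookkeeping-heavy rather than conceptually deep: (i) setting up a clean canonical bijection between length-$\le N$ strings $\rho$ and bit-positions of a length-$M$ code, so that $\mathrm{Sat}_F(a,\vec x)$ is genuinely a bounded (hence arithmetical, $\Flip$-free) formula — this requires care because $M$ is exponential in $N$, so $a$ itself is an exponentially long string and quantifying ``$\exists a$ of length $M$'' is \emph{not} a bounded quantifier, which is precisely why the resulting formula lands in $\Pi^0_1$ and not lower; and (ii) verifying that the counting can be done with a single unbounded universal quantifier, i.e. keeping the complexity at $\Pi^0_1$ rather than accidentally climbing to $\Pi^0_2$ — this is handled by phrasing everything as an upper bound on the number of refuting assignments, as above. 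Since the paper elsewhere works over strings with binary numerals and has $\mathrm{Exp}$ available in the relevant theories, the exponential term $2^M = \power{M}$ and the associated length arithmetic are unproblematic to formalize.
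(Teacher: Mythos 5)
Your proposal matches the paper's proof in all essentials: the same locality lemma producing a term $t_F$ that bounds the oracle queries relevant to a $\Sigma^b_1$-formula, the same encoding of the finitely many relevant bits of $\omega$ as a single exponentially long string, and the same reduction of the measure bound $\mu(\model{F})\ge\frac{2}{3}$ to counting satisfying codes against $\frac{2}{3}$ of their (doubly exponential) total number. The only cosmetic difference is how the $\Pi^0_1$ form is secured: the paper writes the count as a bounded formula in the language extended with the function symbol $\mathsf 2^{x}$ and then eliminates that symbol via its $\Delta^0_0$ graph under leading universal quantifiers, whereas you quantify universally over codes of sets of refuting assignments and recover the needed exponential values by bounded search relative to the quantified code --- both yield equivalent $\Pi^0_1$ formulas.
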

\begin{proof}
%Let the bijection $\mathrm{dy}: \mathbb N \to \Ss$ be given by $\mathrm{dy}(n)$ being the binary representation of $n$ stripped of its right-most bit. For any $n\in \mathbb N$, let $\mathrm{dy}_{n}^{*}: \Bb^{\Ss}\to \Ss$ be the map $\mathrm{dy}_{n}^{*}(\omega)\omega(\mathrm{dy}(0))\dots \omega(\mathrm{dy}(n-1))$.
%
%We will \item a $\Flip$-less formula $F^{*}(\vec x,y)$ and a term $t_{F}(\vec x)$ such that, for any $\omega\in \Bb^{\Ss}$, $\omega\in \model{F(\overline{\vec{\sigma}})}$ iff
%$\vDash F(\overline{\vec{\sigma}}, \overline{\mathrm{dy}^{*}_{2^{|t_{f}(\vec\sigma)}|})$.
%\item 
%and $h(\sigma b)=h(\sigma)h(\sigma)$.
First, observe that for any bounded $\Lpw$-formula 
$F(\vec{x})$, strings $\vec \sigma$ and $\omega\in\Os$,
to check whether $\omega\in\model{F(\overline{\vec{\sigma}})}$ 
only a \emph{finite} portions of bits of $\omega$ has to be observed.
More precisely, we can construct a $\Lpw$-term $t_{F}(\vec{x})$ such that for any $\vec{\sigma}\in \Ss$ and $\omega,\omega'\in\Os$,
if $\omega$ and $\omega'$ agree on all strings shorter than $t_{F}(\vec{\sigma})$,
then $\omega \in \model{F(\sigma)}$ iff $\omega'\in \model{F(\sigma)}$.
Now, all finitely many relevant bits $\omega(\tau)$, for $|\tau|\leq t_{F}(\vec{\sigma})$ can be encoded as a \emph{unique} string $w$ of length $\leq 2^{|t_{F}(\vec{\sigma})|}$ where the bit $w_{i}$ corresponds to the value $\omega(\tau)$, where $\tau$ is obtained by stripping the right-most bit from the binary representation of $i$.
%given by $h(\eepsilon)=\oone$
We obtain in this way a $\Flip$-free formula $F^{*}(\vec x, y)$ such that measuring $\model{F(\sigma)}$ corresponds to 
\emph{counting} the strings $y$ of 
length $\leq 2^{|t_{F}(\vec{\sigma})|}$ making $F^{*}(\vec x, y)$ true, i.e.~to showing
\begin{align}\label{eq:star}
\Big\vert \left \{ \tau\preceq  \mathsf 2^{|t_{F}(\vec \sigma)|} \ \Big \vert \   F^{*}(\vec \sigma,\tau)\right\}\Big \vert \geq 
\frac{2}{3}\cdot N 
\tag{$\star$}
\end{align}
where $\mathsf 2^{\eepsilon}=\oone$ and $\mathsf 2^{\sigma b}=\mathsf 2^{\sigma}\mathsf 2^{\sigma}$ is an exponential function on strings and 
$N=2^{\left(2^{|t_{F}(\sigma)|}\right)}$ is the total amount of the strings to be counted.
\eqref{eq:star} can be encoded in a standard way yielding a bounded formula $F^{\sharp}(\vec x)$ in the language of arithmetic \emph{extended} with the function symbol $\mathsf 2^{x}$. Finally, the function symbol $\mathsf 2^{x}$ can be eliminated using a $\Delta^{0}_{0}$-formula $\mathrm{exp}(x,y)$ defining the exponential function (see \cite{Gaifman1982}), yielding a $\Flip$-free $\Pi^{0}_{1}$-formula of $\Lpw$ of the form $\forall z_{1}.\dots. \forall z_{k}. \mathrm{exp}(t_{1},z_{1})\land \dots \land \mathrm{exp}(t_{k},z_{k}) \to F^{\sharp}(\vec x, z_{1},\dots, z_{k})$.% (more details in the appendix).
\end{proof}
\begin{remark}\label{remark:beyond}
It is important to observe at this point that the elimination of $\Flip$ via counting 
takes us \emph{beyond} the usual machinery of bounded arithmetic, since we employ some operation which is not polytime. 
This is indeed not surprising, since the counting problems associated with polytime problems (generating the class ${\sharp \mathsf P}$) are not even known to belong to the polynomial hierarchy $\mathsf{PH}$ (while, by Toda's theorem, we know that $ \mathsf{PH}\subseteq \mathsf P^{\sharp \mathsf P}$).
\end{remark}

\noindent
Theorem~\ref{theorem:char1} and Lemma~\ref{lemma:RLexp}
together yield a purely arithmetical characterization of $\BPP$.
Let $\mathsf{NotErratic}[G]$ indicate the arithmetical formula $\forall x.\exists y\preceq 0. \mathsf{TwoThirds}[G](x,y)$.

\begin{theorem}[Second Semantic Characterization of $\BPP$]\label{def:bpp2}
\label{thm:characterization2}
For any language $L\subseteq \Ss$, $L\in\BPP$
when there is a $\Sigma^b_1$-formula $G(x,y)$ such that:
\begin{enumerate}
\itemsep0em

\item $\RS\vdash \forall x.\exists ! y.G(x,y)$,
\item $\vDash \mathsf{NotErratic}[G]$,
%\forall x.\exists y\preceq \zero.\mathsf{TwoThirds}[G](x,y)$,
\item $L=\LANG{\rfp{G}}$.
\end{enumerate}
\end{theorem}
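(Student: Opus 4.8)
The plan is to derive Theorem~\ref{thm:characterization2} directly from Theorem~\ref{theorem:char1} and Lemma~\ref{lemma:RLexp}, since the two share Conditions~1 and~3 verbatim and differ only in how the non-erratic promise is expressed. So the entire argument reduces to showing that Condition~2 of Theorem~\ref{theorem:char1}, namely $\vDash \forall x.\exists y.\BOX^{2/3}G(x,y)$, is equivalent to Condition~2 here, namely $\vDash \mathsf{NotErratic}[G]$, \emph{under the standing assumption that Condition~1 holds}. The assumption matters: Condition~1 guarantees that for every $\sigma$ there is a \emph{unique} $\tau$ with $\omega\in\model{G(\overline\sigma,\overline\tau)}$ for each oracle $\omega$, hence the values $\mu(\model{G(\overline\sigma,\overline\tau)})$ form a genuine probability distribution $\rfp{G}(\sigma)$ over $\tau$; this is what makes ``some $\tau$ has measure $\ge\frac23$'' the correct formalization of non-erraticity.

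First I would unfold the definitions. By the Borel semantics of $\Lpw^{\mathsf{MQ}}$, $\vDash \BOX^{2/3}G(\overline\sigma,\overline\tau)$ holds precisely when $\mu(\model{G(\overline\sigma,\overline\tau)})\ge \frac23$. Therefore $\vDash \forall x.\exists y.\BOX^{2/3}G(x,y)$ holds iff for every $\sigma\in\Ss$ there exists $\tau\in\Ss$ with $\mu(\model{G(\overline\sigma,\overline\tau)})\ge\frac23$ — which is exactly the statement that $\rfp{G}$ is non-erratic. On the other side, $\mathsf{NotErratic}[G]$ was defined as $\forall x.\exists y\preceq 0.\mathsf{TwoThirds}[G](x,y)$; note the bounded quantifier $\exists y\preceq 0$ ranges only over strings of length at most zero, i.e.\ $y=\epsilon$, so this formula just asserts $\forall x.\mathsf{TwoThirds}[G](x,\epsilon)$ — the second argument is a dummy and the relevant instance of Lemma~\ref{lemma:RLexp} is applied to the $\Sigma^b_1$-formula $G(x,y)$ viewed with $y$ as the counted/measured variable. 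Applying Lemma~\ref{lemma:RLexp} (with the free variable being $x$, and the existential witness $y$ treated as in Theorem~\ref{theorem:char1}), $\vDash\mathsf{TwoThirds}[G](\overline\sigma,\cdot)$ holds iff $\mu(\model{\exists y.G(\overline\sigma,y)})\ge\frac23$ — but I will instead want the per-$\tau$ version: the point is that $\mathsf{TwoThirds}[F]$ is built for any $\Sigma^b_1$ formula $F(\vec x)$, and here one takes $F(x,y):=G(x,y)$ so that $\mathsf{TwoThirds}[G](\overline\sigma,\overline\tau)$ is true iff $\mu(\model{G(\overline\sigma,\overline\tau)})\ge\frac23$. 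The care needed is simply to match the arity/variable conventions between the statement of Lemma~\ref{lemma:RLexp}, which phrases things for a formula $F(\vec x)$ with all variables free, and the present usage where we want to quantify existentially over one witness; I expect this to be bookkeeping, handled by taking $\vec x=(x,y)$ in the lemma and then prefixing $\exists y$ outside.

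Putting the two unfoldings side by side: given Condition~1, Condition~2 of Theorem~\ref{theorem:char1} says ``for all $\sigma$, $\exists\tau$ with $\mu(\model{G(\overline\sigma,\overline\tau)})\ge\frac23$'', and Condition~2 of Theorem~\ref{thm:characterization2}, after unfolding $\mathsf{NotErratic}$ and applying Lemma~\ref{lemma:RLexp} pointwise, says exactly the same thing. Hence the two characterizations have logically equivalent hypotheses, and Theorem~\ref{thm:characterization2} follows from Theorem~\ref{theorem:char1}. I would then remark that the existential witness $y$ in $\mathsf{NotErratic}[G]$ corresponds to the string $\tau$ realizing the $\frac23$ bound, which is why it appears inside the arithmetical formula even though, after de-randomization, the measure condition itself is $\Flip$-free and $\Pi^0_1$; the overall formula $\mathsf{NotErratic}[G]$ is then $\Pi^0_2$ in prenex form (a $\forall x\exists y$ prefix over a $\Pi^0_1$ matrix), consistent with the promise being genuinely model-theoretic rather than decidable.

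The main obstacle — really the only subtle point — is getting the variable conventions in Lemma~\ref{lemma:RLexp} to line up with the $\exists y$ witness in Theorem~\ref{theorem:char1}: the lemma de-randomizes a closed-up bounded formula into a $\Pi^0_1$ one, and one must be sure that instantiating it at $F(x,y)=G(x,y)$ and \emph{then} prefixing the first-order $\exists y$ yields precisely $\mathsf{NotErratic}[G]$ as defined, with the vacuous bound $\preceq 0$ on $y$ being an artifact of the definition's phrasing (the witness $\tau$ is supplied separately, not searched for among short strings). Once that alignment is made explicit, the proof is a two-line appeal: $\text{Cond.}~2_{\ref{theorem:char1}} \iff \text{Cond.}~2_{\ref{thm:characterization2}}$ by Lemma~\ref{lemma:RLexp}, and the theorem is immediate from Theorem~\ref{theorem:char1}.
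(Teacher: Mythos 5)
Your proposal is correct and takes essentially the same route as the paper, which offers no separate proof but obtains the theorem exactly as you do: combine Theorem~\ref{theorem:char1} with Lemma~\ref{lemma:RLexp} applied to $F(\vec x)=G(x,y)$ with $\vec x=(x,y)$, so that $\mathsf{TwoThirds}[G](\overline\sigma,\overline\tau)$ holds iff $\mu(\model{G(\overline\sigma,\overline\tau)})\ge\frac{2}{3}$, making the two versions of Condition~2 interchangeable. Two small inaccuracies in your side remarks are worth correcting: the bound $y\preceq\zero$ restricts $y$ to strings of length at most one rather than to $y=\epsilon$ alone (harmless, since the non-erratic function supplied by Lemma~\ref{lemma:nonerratic} outputs only $\epsilon$ or $\zero$, so the witness satisfies the bound); and precisely because that existential is bounded, $\mathsf{NotErratic}[G]$ is $\Pi^0_1$, not $\Pi^0_2$ --- a point the paper relies on in Proposition~\ref{prop:pi01}.
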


\section{Provably $\BPP$ Problems}\label{sec:SSBPP}
% !TEX root = /Users/paolopistone/Documents/GitHub/RandomizedBoundedArithmetic2/Papers/CSL2024/main.tex
The characterization provided by Theorem \ref{thm:characterization2} is still 
semantic in nature, as it provides no way to effectively enumerate $\BPP$: 
 the crucial Condition 2~is not checked within a formal 
system, but over the standard model of $\Lpw$. Yet, since the  
condition is now expressed in purely arithmetical terms, it makes sense to 
consider \emph{syntactic} variants of Condition 2, where the model-theoretic check  
is replaced by provability in some sufficiently expressive theory.

We will work in extensions of $\RSE$, 
where $\mathrm{Exp}=\forall x.\exists y. \mathrm{exp}(x,y)$ is the formula expressing the totality of the exponential function (which is used in the de-randomization of Lemma \ref{lemma:RLexp}).
 This naturally leads to the 
following definition:

\begin{defn}[Class $\BPP_{\THE}$]\label{def:bppt}
\label{defn:BPPT}
	Let $\THE\supseteq \RSE$ be a theory in the language $\Lpw$.
	The \emph{class $\BPP$ relative to $\THE$},
	denoted $\BPP_{\THE}$, contains all languages $L\subseteq 
	\Ss$ such that
	for some $\Sigma^{b}_{1}$-formula $G(x,y)$ the following hold:
	\begin{enumerate}
		\item $\RS\vdash \forall x.\exists! y. G(x,y)$,
		\item $ \THE \vdash \mathsf{NotErratic}[G]$,
		% \forall x.\exists y\preceq \zero.\MEAS[ G](x,y)$,

		%T \vdash \forall x.\forall y.\MEAS[ G(x,y)\leftrightarrow 
		%H_{L}(x,y)]$;
		\item $L=\LANG{\rfp{G}}$.

	\end{enumerate}
	
\end{defn}
\noindent
Whenever $\THE$ is sound (i.e.~$\THE\vdash F$ implies that $F$ is true in the standard model), it is clear that $\BPP_{\THE}\subseteq \BPP$. 
%Conversely, the existence of a recursively enumerable theory $\THE$ such that also 
%$\BPP_{\mathsf T}\supseteq \BPP$ holds seems unlikely, as it would lead to an enumeration of the algorithms computing languages in $\BPP$.
However, a crucial difference between the \emph{syntactic} class $\BPP_{\THE}$ and the semantic class $\BPP$ is that, 
when $\THE$ is recursively enumerable, $\BPP_{\THE}$ can be 
\emph{enumerated} (by enumerating the proofs of Condition 1. and 2. in $\THE$).
Hence, the  enumerability problem for $\BPP$ translates 
into the question whether one can find a sound r.e.~theory $\THE$ such that $\BPP_{\THE}=\BPP$.
Let us first observe that the relevance of this problem is tightly related to the question $\BPP=\mathbf P$:
\begin{proposition}\label{prop:p=bpp}
If $\BPP=\mathbf P$, then there exists a r.e.~theory $\THE$ such that 
$\BPP=\BPP_{\THE}$.
\end{proposition}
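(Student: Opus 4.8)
The plan is to show that, assuming $\BPP=\mathbf{P}$, every language in $\BPP$ is already decided by a PTM whose non-erraticity is \emph{provable} in a fixed, recursively enumerable theory. The key observation is that under $\BPP=\mathbf{P}$ we can pick, for each $L\in\BPP$, a witness PTM that is \emph{trivially} non-erratic: namely one that ignores its random tape. Concretely, since $L\in\mathbf{P}$, there is a polytime deterministic TM $\mathscr{D}$ deciding $L$; view $\mathscr{D}$ as a PTM $\mathscr{M}$ that simply never queries $\Flip$. Then for every input $\sigma$, $\mathscr{M}(\sigma)$ outputs $\chi_L(\sigma)$ with probability exactly $1\ge\frac{2}{3}$, so $\mathscr{M}$ is non-erratic. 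The random function $f_{\mathscr{M}}\in\PPT$ computed by $\mathscr{M}$ is a point-mass distribution for every input.

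Next I would pass to the arithmetical side via Theorem~\ref{thm:RStoPPT}: there is a $\Sigma^b_1$-formula $G(x,y)$ that $\Sigma^b_1$-represents $f_{\mathscr{M}}$ in $\RS$, so Condition~1 and Condition~3 of Definition~\ref{defn:BPPT} are met. The heart of the argument is Condition~2: we need a single r.e.~theory $\THE\supseteq\RSE$ that proves $\mathsf{NotErratic}[G]$ for every such $G$ arising from a $\Flip$-free witness. Because $\mathscr{M}$ does not consult $\Flip$, the formula $G$ can be taken $\Flip$-free (or provably equivalent in $\RS$ to a $\Flip$-free formula defining the graph of a polytime function), and then $\mu(\model{G(\overline\sigma,\overline\tau)})$ is either $0$ or $1$ — in fact it equals $1$ exactly for the unique correct output $\tau=\chi_L(\sigma)$. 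Tracing through the construction of $\mathsf{TwoThirds}[F]$ in Lemma~\ref{lemma:RLexp}, when $F$ is $\Flip$-free the counting in \eqref{eq:star} degenerates: either \emph{all} $N$ candidate strings satisfy $F^*$ (when $F$ holds) or \emph{none} do, so $\mathsf{TwoThirds}[F](\vec\sigma)$ reduces, provably over $\RSE$, to $F$ itself together with the totality-of-$\mathrm{exp}$ facts already available in $\RSE$. Hence $\mathsf{NotErratic}[G]$ becomes provably equivalent over $\RSE$ to $\forall x.\exists! y.G(x,y)$ conjoined with a bookkeeping statement that the represented function is total and single-valued — which is exactly what $\RS$ proves as Condition~1. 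Therefore taking $\THE=\RSE$ (or any fixed r.e.~extension that proves, uniformly in $G$, the equivalence $\mathsf{NotErratic}[G]\leftrightarrow(\forall x.\exists! y.G(x,y))$ for $\Flip$-free $G$) suffices, and $\THE$ is r.e.

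Putting this together: every $L\in\BPP=\mathbf{P}$ has a $\Flip$-free $\Sigma^b_1$-witness $G$ satisfying Conditions~1--3 of $\BPP_{\THE}$, so $\BPP\subseteq\BPP_{\THE}$; the reverse inclusion $\BPP_{\THE}\subseteq\BPP$ holds because $\THE$ is sound, as noted after Definition~\ref{defn:BPPT}. Hence $\BPP=\BPP_{\THE}$ with $\THE$ r.e., as required. The main obstacle I anticipate is making precise that $\mathsf{NotErratic}[G]$ is provable — not merely true — in a fixed theory: one must check that the de-randomization construction of Lemma~\ref{lemma:RLexp}, specialized to $\Flip$-free $G$, is carried out by provably correct arithmetical manipulations over $\RSE$, so that the equivalence between $\mathsf{TwoThirds}[G]$ and the trivial ``the function is total'' statement is a theorem of $\RSE$ uniformly in $G$. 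This is a matter of inspecting the (essentially syntactic) encoding steps in that lemma and verifying each is formalizable with only the totality of $\mathrm{exp}$, which $\RSE$ supplies; I do not expect genuine obstructions, but it is the step that requires care rather than a one-line appeal.
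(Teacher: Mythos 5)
Your proposal is correct and follows essentially the same route as the paper: take a deterministic polytime decider for $L$ (using $\BPP=\mathbf P$), view it as a trivially non-erratic PTM whose representing formula $G$ is $\Flip$-free, and observe that $\mathsf{NotErratic}[G]$ then becomes provable in $\RSE$, which serves as the required r.e.\ theory. Your additional care about why the de-randomization of Lemma~\ref{lemma:RLexp} degenerates for $\Flip$-free formulas only elaborates the step the paper compresses into one sentence.
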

\begin{proof}
If $\BPP=\mathbf P$, and $L\in \BPP$, then there is a
polytime deterministic TM $\mu$ accepting it. $\mu$ yields then a PTM $\mu^{*}$ in a trivial way. Since the corresponding formula $G$ of $\Lpw$ does not contain $\Flip$, $\mathsf{NotErratic}[G]$ can be proved in e.g.~$\RSE$. 
\end{proof}

The counter-positive of the result above is even more interesting, as it says that establishing that \emph{no} r.e.~theory $\THE$ is such that $\BPP_{\THE}=\BPP$ is \emph{at least as hard as} establishing that $\BPP \neq \mathbf{P}$. 
Yet, without knowing whether $\BPP=\mathbf P$, how hard may it be to find a theory $\THE$ such that $\BPP=\BPP_{\THE}$?

Observe that, when $G$ is $\Sigma^{b}_{1}$, $\mathsf{NotErratic}[G]$ is expressed by a $\Pi^{0}_{1}$-formula: as $\mathsf{TwoThirds}[G](\vec{x})$ is of the form
$\forall \vec z. \wedge_{i} \mathrm{exp}(t_{i},z_{i}) \to F^{\sharp}(\vec x, \vec z)$, the condition is expressed by the $\Pi^{0}_{1}$-formula
$\forall x.\forall \vec z. \wedge_{i} \mathrm{exp}(t_{i},z_{i}) \to \exists y\preceq \zero.F^{\sharp}(\vec x, \vec z)$. Hence, if we us fix some recursive enumeration $(\PTM_{n})_{n\in \mathbb N}$ of polytime PTM as well as a recursive coding $\sharp \PTM$ of such machines as natural numbers, the fact that $\PTM$ is non-erratic is expressed by some $\Pi^{0}_{1}$-formula $\varphi_{\mathsf{NotErratic}}(\sharp\PTM)$. The $\Pi^{0}_{1}$-set $\mathsf{NotErratic}= \{ e \mid \varphi_{\mathsf{NotErratic}}(e)\}
$ indicates then the sets of codes corresponding to non-erratic machines.

The possibility of finding a theory strong enough to prove \emph{all} positive instances of Condition 2 is then ruled out by the following result.

\begin{proposition}\label{prop:pi01}
$\mathsf{NotErratic}$ is $\Pi^{0}_{1}$-complete.
\end{proposition}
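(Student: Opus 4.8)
The plan is to show membership in $\Pi^0_1$ first and then establish hardness by a reduction from the complement of the halting problem. Membership is essentially already argued in the paragraph preceding the statement: fixing a recursive enumeration $(\PTM_n)_n$ of polytime PTM and a recursive coding, the predicate ``$\PTM_e$ is non-erratic'' unfolds, via Lemma~\ref{lemma:RLexp} and the shape of $\mathsf{NotErratic}[\cdot]$, into a formula of the form $\forall x.\forall \vec z.\, \bigwedge_i \mathrm{exp}(t_i,z_i)\to \exists y\preceq \zero.\, F^{\sharp}(e,\vec x,\vec z)$, which is $\Pi^0_1$ once the bounded matrix is recognized as $\Delta^0_0$; so $\mathsf{NotErratic}\in\Pi^0_1$. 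The work is therefore entirely in the hardness direction.

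For hardness I would reduce from $\overline{\mathrm{HALT}} = \{\, \langle M\rangle \mid M \text{ does not halt on empty input}\,\}$, which is $\Pi^0_1$-complete. Given a (deterministic) Turing machine $M$, I build, uniformly and computably, a polytime PTM $\PTM_M$ as follows. On input $\sigma$ of length $n$, the machine $\PTM_M$ first simulates $M$ on empty input for $n$ steps (a polytime, in fact linear-in-$n$, operation). If $M$ has \emph{not} halted within $n$ steps, $\PTM_M$ ignores its randomness entirely and outputs $\epsilon$ (equivalently, accepts); this is a perfectly non-erratic behaviour on that input. If $M$ \emph{has} halted within $n$ steps, then $\PTM_M$ flips a single fair coin and outputs $\epsilon$ or $\zero$ accordingly, so that $f_{\PTM_M}(\sigma)(\epsilon)=f_{\PTM_M}(\sigma)(\zero)=\tfrac12$ — maximally erratic on that input. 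Now if $M$ never halts, then for every input $\sigma$ the first branch is taken, so $\PTM_M$ is non-erratic, i.e.\ $\langle M\rangle\mapsto \sharp\PTM_M$ lands in $\mathsf{NotErratic}$. Conversely, if $M$ halts after exactly $k$ steps, then on every input $\sigma$ with $|\sigma|\ge k$ the second branch is taken and $\PTM_M$ outputs each of $\epsilon,\zero$ with probability $\tfrac12 < \tfrac23$, so no output has probability $\ge\tfrac23$ and $\PTM_M$ is erratic, i.e.\ $\sharp\PTM_M\notin\mathsf{NotErratic}$. Hence $\langle M\rangle\in\overline{\mathrm{HALT}}$ iff $\sharp\PTM_M\in\mathsf{NotErratic}$, and the map is clearly recursive, giving a many-one reduction $\overline{\mathrm{HALT}}\le_m \mathsf{NotErratic}$.

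A few points need care but are routine. First, $\PTM_M$ must genuinely run in polynomial time \emph{as a fixed machine independent of whether $M$ halts}: the $n$-step bounded simulation of $M$ is always truncated at $n$ steps regardless, so the total running time of $\PTM_M$ on inputs of length $n$ is polynomial in $n$ in all cases; thus $\sharp\PTM_M$ really is (the code of) a member of the enumeration $(\PTM_n)_n$ of polytime PTM, and the reduction targets the right domain. Second, one should double-check the majority/output convention so that ``outputs $\epsilon$ with probability $\ge\tfrac23$'' matches the $\mathsf{NotErratic}$ predicate of Definition~\ref{defn:BPPT}: by the discussion there, $\mathsf{NotErratic}[G]$ holds iff the random function $\rfp{G}$ puts mass $\ge\tfrac23$ on some output, which is exactly what the non-halting branch guarantees and what the halting branch violates. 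Third, strictly speaking $\mathsf{NotErratic}$ was defined as a set of codes $e$ such that a $\Pi^0_1$-formula $\varphi_{\mathsf{NotErratic}}(e)$ holds; so one should note that $e\in\mathsf{NotErratic}$ iff $\PTM_e$ is non-erratic, which is immediate from the construction of $\varphi_{\mathsf{NotErratic}}$ out of $\mathsf{NotErratic}[G]$ via Lemma~\ref{lemma:RLexp} together with soundness of the standard model.

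\textbf{Main obstacle.} I expect no deep obstacle; the only genuinely delicate point is the careful bookkeeping to ensure $\PTM_M$ is uniformly polytime \emph{and} that its code lies in the chosen enumeration, so that the reduction is into $\mathsf{NotErratic}$ and not merely into some superset — i.e.\ making the informal ``simulate $M$ for $|\sigma|$ steps, then branch'' construction precise at the level of the fixed coding of polytime PTM.
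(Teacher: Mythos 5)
Your proof is correct and follows essentially the same strategy as the paper's: a many-one reduction from a $\Pi^{0}_{1}$-complete halting-type problem, via a polytime PTM gadget that behaves deterministically on inputs where a truncated simulation has one outcome and flips a fair coin (hence is maximally erratic) where it has the other. The only substantive difference is the source problem: the paper reduces from $\mathsf{HALT}_{n^{2}}$ (codes of machines halting within $n^{2}$ steps on every input), importing its $\Pi^{0}_{1}$-completeness from Gajser, and accordingly makes the gadget erratic when the time-bounded simulation does \emph{not} terminate; you reduce from the complement of the ordinary halting problem on empty input, using the input length as the step budget and inverting the polarity. Your variant is marginally more self-contained, since it needs only the classical $\Pi^{0}_{1}$-completeness of $\overline{\mathrm{HALT}}$ rather than the time-bounded version, and your explicit treatment of membership in $\Pi^{0}_{1}$ and of the uniformity/polytime bookkeeping is a welcome addition that the paper leaves implicit.
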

\begin{proof}
We reduce to $\mathsf{NotErratic}$ the $\Pi^{0}_{1}$-complete problem 
$\mathsf{HALT}_{n^{2}}$ consisting of codes of TM halting in time at most $n^{2}$ (see \cite{Gajser2016}). With any TM $\mu$ associate a polytime PTM $\mu^{*}$ that,  on input $x$, yields $\mathrm{TRUE}$ with prob.~$\frac{1}{2}$, and otherwise simulates $\mu(x)$ on $|x|^{2}$ steps, yielding $\mathrm{TRUE}$ if the computation of $\mu(x)$ terminated, and $\mathrm{FALSE}$ otherwise. Then it is easily seen that $\mu\in 
\mathsf{HALT}_{n^{2}}$ iff $\mu^{*}\in \mathsf{NotErratic}$.
\end{proof}
\begin{corollary}
Being (the code of) a PTM solving some $\BPP$-problem is $\Sigma^{0}_{2}$-complete.
\end{corollary}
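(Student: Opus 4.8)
The plan is to fix a standard recursive enumeration $(\PTM_e)_{e\in\mathbb N}$ of \emph{all} probabilistic Turing machines (with no polynomial clock built in) and to analyse the set $S$ of codes $e$ such that $\PTM_e$ decides some $L\in\BPP$ with error $\le\frac13$ in polynomial time. The first step is to rewrite membership in $S$ as the conjunction of two conditions on $\PTM_e$: \emph{running in polynomial time} and \emph{being non-erratic}. The direction ``$\Rightarrow$'' is immediate, since deciding $L$ with error $\le\frac13$ forces, on every input $\sigma$, the string $\chi_L(\sigma)$ to be output with probability $\ge\frac23$, so $\PTM_e$ is non-erratic (and polytime by hypothesis). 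For ``$\Leftarrow$'', if $\PTM_e$ is polytime then $f_{\PTM_e}\in\PPT$, and if it is moreover non-erratic then by the argument of Lemma~\ref{lemma:nonerratic} the machine computing $k\diamond f_{\PTM_e}$ --- whose code is computable from $e$ --- decides $\LANG{f_{\PTM_e}}\in\BPP$ with error $\le\frac13$, so $\PTM_e$ itself witnesses that the language it computes belongs to $\BPP$.

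Next, for the \emph{upper bound} $S\in\Sigma^0_2$, I would write $e\in S$ as
\[
\exists k.\,\forall x.\,\big(\PTM_e(x)\text{ halts within }|x|^k+k\text{ steps on every branch}\ \wedge\ \exists\tau.\ \mathrm{Pr}[\PTM_e(x)\Downarrow\tau\text{ within }|x|^k+k\text{ steps}]\ge\tfrac23\big).
\]
Given $e,k,x$ the matrix is decidable --- one simulates $\PTM_e(x)$ for $|x|^k+k$ steps along each of the finitely many random branches and inspects the resulting output distribution --- so the displayed formula has the shape $\exists\,\forall\,\Delta^0_0$, i.e.~it is $\Sigma^0_2$.

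For \emph{$\Sigma^0_2$-hardness} I would reduce from the $\Sigma^0_2$-complete set $\mathsf{POLYTIME}=\{e\mid M_e\text{ runs in polynomial time}\}$~\cite{Hajek}. Given the code of a deterministic TM $M$, let $M^{\ast}$ be the PTM which on input $x$ ignores its random tape, simulates $M(x)$ step by step, and outputs ``accept'' if and when $M(x)$ halts. The code of $M^{\ast}$ is computable from that of $M$; $M^{\ast}$ runs in polynomial time iff $M$ does; and, being deterministic, $M^{\ast}$ is non-erratic iff it is total, iff $M$ is total. Since a polytime machine is in particular total, $M$ polytime implies $M^{\ast}$ polytime and non-erratic, and then $M^{\ast}$ decides the (trivial) language it computes, which lies in $\mathbf P\subseteq\BPP$; hence $\sharp M^{\ast}\in S$. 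Conversely, $\sharp M^{\ast}\in S$ forces $M^{\ast}$, and therefore $M$, to be polytime. Thus $M\mapsto M^{\ast}$ is a many-one reduction of $\mathsf{POLYTIME}$ to $S$, completing the argument.

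The only delicate point --- more a matter of bookkeeping than a genuine obstacle --- is the first step, i.e.~checking that ``solving some $\BPP$-problem'' really does decompose as ``polytime $\wedge$ non-erratic'': one has to be mildly careful with the output-encoding conventions (the role of $\epsilon$ versus $1$, and the post-composition with $k$ and $h$ as in the proof of Lemma~\ref{lemma:nonerratic}), but no new idea is required. It is worth stressing that the polytime conjunct is what makes the class genuinely $\Sigma^0_2$: by Proposition~\ref{prop:pi01}, non-erraticity on its own (over clocked machines) is merely $\Pi^0_1$-complete, and it is precisely the unbounded search for a polynomial clock that lifts $S$ to the $\Sigma^0_2$ level.
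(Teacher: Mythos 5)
Your proof is correct and rests on the same decomposition as the paper's: ``solving a $\BPP$-problem'' is equivalent to ``polytime $\wedge$ non-erratic'', the $\Sigma^0_2$-hardness is inherited from the $\Sigma^0_2$-completeness of the polytime predicate \cite{Hajek}, and the upper bound is $\Sigma^0_2$ --- you simply spell out what the paper's three-line argument leaves implicit (the two directions of the equivalence, the explicit $\exists\,\forall\,\Delta^0_0$ membership formula, and the reduction $M\mapsto M^{\ast}$). The only cosmetic difference is that you verify the upper bound directly from the displayed formula rather than by invoking Proposition~\ref{prop:pi01} to absorb the $\Pi^0_1$ non-erraticity conjunct into the $\Sigma^0_2$ polytime condition; both are fine, and your closing remark correctly identifies the unbounded search for a polynomial clock as the source of the jump from $\Pi^0_1$ to $\Sigma^0_2$.
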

\begin{proof}
As we say, for a PTM, solving some $\BPP$-problem is equivalent to being polytime and non-erratic. 
Being the code of a polytime (P)TM is a $\Sigma^{0}_{2}$-complete property \cite{Hajek}.
By Proposition~\ref{prop:pi01}, checking non-erraticity does not increase the logical complexity.
\end{proof}

Proposition \ref{prop:pi01} implies that for any consistent theory $\THE$ one can always find some non-erratic polytime PTM whose non-erraticity is \emph{not provable} in $\THE$.
Indeed, since $\mathsf{NotErratic}$ is $\Pi^{0}_{1}$-complete, we can reduce to it the $\Pi^{0}_{1}$-set of codes of \emph{consistent} r.e.~theories. Hence, if $\THE$ is some consistent theory such that for \emph{any} code $e\in \mathsf{NotErratic}$, $\THE$ proves
$\varphi_{\mathsf{NotErratic}}(e)$, then $\THE$ can prove \emph{all} $\Pi^{0}_{1}$-statement expressing the consistency of some consistent r.e.~theory, and thus, in particular, the one expressing its own consistency, contradicting (Rosser's variant of) G\"odel's second incompleteness theorem. 

Observe that this result suggests that the enumerability problem \emph{might} be very difficult, but it
 \emph{does not} provide a negative answer to it.
  Indeed, recall that what we are interested in is not an enumeration of \emph{all} 
  non-erratic polytime PTM,
  but an enumeration containing \emph{at least one} machine for each problem in $\BPP$. 
In other words, the question remains open whether, for any non-erratic polytime PTM, % solving some problem in $\BPP$, 
it is possible to find a machine solving the same problem but whose non-erratic behavior can be proved in some fixed theory $\THE$.
While we do not know the answer to this question, we can still show that a relatively weak arithmetical theory is capable of 
proving the non-erraticity of a machine solving one of the (very few) problems in $\BPP$ which are currently not known to be in $\mathbf P$.

%
%The result above has several interesting consequences.
%
%
%In the rest of this section, we investigate how far we can go in this 
%direction, namely whether there is any sound theory $\THE$ such that the 
%corresponding class $\BPP_{\THE}$ looks more similar to $\BPP$ than to, say, 
%$\mathbf{P}$.
%A fundamental observation is that, while the restriction to bounded theories is crucial to capture polytime algorithms by a totality condition (i.e.~Condition 1.), there is no reason to restrict ourselves to such theories to prove 
%probabilistic algorithms to be non-erratic (i.e.~Condition 2.). 
%%
%For this reason in the following we take full $\PA$ as our candidate theory $\THE$, and show that 
%polynomial identity testing (PIT) is provably $\BPP$ in this theory. 

\section{Polynomial Zero Testing is Provably $\BPP$}

In this section we establish that PIT is in 
$\BPP_{(\IDE)}$. We recall that $\IDE$ is the fragment of Peano Arithmetics with induction restricted to \emph{bounded formulas}, together with the totality of the exponential function. 
\begin{remark}
While $\IDE$ is a theory in the usual language of $\PA$, here we work in a language for binary strings. 
Indeed, what we here call $\IDE$ is actually the corresponding theory $\mathsf{\Delta}_{0}\text{-NIA}+ \mathrm{Exp}$, 
formulated for the language $\Lpw$ \emph{without} $\Flip$, and defined as $\Sigma^b_1\text{-NIA} + \mathrm{Exp}$ with induction extended to \emph{all} bounded formulas, plus the axiom $\mathrm{Exp}$. 
Based on~\cite{FerreiraOitavem} $\mathsf{\Delta}_{0}\text{-NIA}$ corresponds to Buss' theory $S_{2}$, which, in turn, is known to correspond to $\mathsf{I\Delta}_{0}+\Omega_{1}$, indeed a sub-theory of $\IDE$. 
% that is
%
% a theory formulated in the 
%language $\Lpw$, that is, as obtained from $\RSE$ by extending induction to all bounded formulas. Yet, via a suitable encoding of binary sequences of natural numbers, 
%sub-language $\Lpw_{\not\zero}$ of $\Lpw$, where the symbol $\zero$ is omitted, each natural number $n$ is represented by the term $\one^n$, and
%the successor is interpreted as $\cdot \cconc \one$. 
%Each formula $A$ of $\Lpw$ is translated into a formula $A_{\not\zero}$ of $\Lpw_{\not\zero}$ via the assignment $\epsilon\mapsto \epsilon, \zero \mapsto \one, \one\mapsto \one\one, \zero\one\mapsto \one\one\one,\dots$. It can be easily checked that 
%$\Gamma\vdash_{\RSE}A$ implies $\Gamma_{\not\zero}\vdash_{\IDE}A_{\not\zero}$.
%{\color{red}REMARK TO CORRECT}
\end{remark}
% $\PA$ as a theory expressed in the language $\Lpw$, 

The PIT
problem asks to decide the identity of the polynomial computed by two arithmetical circuits. These are basically DAGs whose nodes can be labeled so as to denote an input, an output, the constants $0, 1$ or an arithmetic operation. These structures can easily be encoded, e.g. using lists, as terms of $\Lpw$.   

\begin{defn}[cf.~\cite{AroraBarak}]
  The problem PIT asks to decide 
  whether two arithmetical circuits $p, q$ encoded as lists of nodes describe
  the same polynomial, i.e. $\mathbb Z \models p=q$. 
\end{defn}

Usually, PIT is reduced to another problem: the so-called Polynomial Zero Testing (PZT) problem, which asks to decide whether a polynomial computing a circuit over $\mathbb Z$ is zero, i.e. to check whether $\mathbb Z \models p=0$. Indeed, $\mathbb Z \models p=q$ if and only if $\mathbb Z \models p-q =0$.
%
%\paragraph{PIT is in $\BPP_{\mathsf{PA}}$}
%%
%In this section, we will give on outline of the proof that PIT is in 
%$\BPP_{\mathsf{PA}}$, taking $\PA$ as the sub-theory of our arithmetic
%where each natural number $n$ is represented by the term $\oone^n$, and
%the successor is interpreted as $\cdot \cconc \oone$.
%
Our proof of the fact that the language PZT is in $\BPP_{(\IDE)}$ is structured as follows:
\begin{itemize}
\item We identify a $\Sigma^b_1$-formula $G(x,y)$ of $\Lpw$ 
characterizing the polytime algorithm $\texttt{PZT}$ from~\cite{AroraBarak}, and we turn it into a $\Flip$-free formula $G^{*}(x,y,z)$ as in Lemma \ref{lemma:RLexp}, where the variable $z$ stands for the source of randomness;

\item We identify  a $\Flip$-free $\Delta^{0}_{0}$-formula $H(x, y)$ which represents the na\"ive deterministic algorithm for PZT. 
\item 
%We consider the de-randomized $G$ into a $\Flip$-less formula $G^*$ using the procedure described in Section \ref{sec:TBPP}, and
We show that $\IDE$ proves a statement showing that the formulas 
$G^{*}$ and $H$ are equivalent in at least $\frac{2}{3}$ of all (finitely many) relevant values of $z$. In other words, we establish 
$\IDE\vdash \forall x.\forall y.\MEAS[ G(x,y)\leftrightarrow H_{}(x,y)]$. 
\end{itemize}
From the last step, since the totality of $H$ is provable in $\IDE$, we can deduce 
$\IDE \vdash \forall x.\exists y. \MEAS[G](x,y)$, as  required in Definition \ref{defn:BPPT}.

Each of the aforementioned steps will be described in one of the forthcoming paragraphs, although the details are discussed in the Appendix~\ref{sec:appPIT}.
% \TODO{Here I would just present the three steps we identified in Bologna, as 
% from (the beginning of) Davide's notes. In the three paragraphs below, instead, 
% I would go a little bit more into the details. The first two should not take 
% more than half a page each. The third one deserves more.}
%
\subparagraph*{The Randomized Algorithm.}
\label{par:ABalgo}

Our algorithm for PZT takes an input $x$, which encodes a circuit $p$ of size $m$ on the variables $v_1, \ldots, v_n$, it draws $r_1, \ldots, r_n$ uniformly at random from $\{0, \ldots, 2^{m+3}-1\}$ and $k$ from $\{1, \ldots, 2^{2m}\}$, then it computes the value of $p(r_1, \ldots, r_n) \bmod k$, so to ensure that during the evaluation no overflow can take place. This is done linearly
many times in $|x|$ (we call this value $s$), as to ensure that, if the polynomial is not identically zero, the probability to evaluate $p$ on values witnessing this property  at least once grows over $\frac 2 3$.
%\footnote{A suitable value of $s$ is shown in the Appendix~\ref{sec:appABalgo}.} 
%
Finally,  
if all the evaluations returned $0$ as output
the input is accepted; otherwise, it is rejected. 

The procedure 
described above is correct only when the size of the input circuit $x$ is 
greater than some constant $\varrho$. If this is not the case, our algorithm 
queries a table $T$ storing all the pairs $(x_i, \chi_{\text{PZT}}(x_i))$ for 
$|x_i| < \varrho$, to obtain $\chi_{\text{PZT}}(x_i)$. The table $T$ can be 
pre-computed, having just a constant number of entries.
%
% Given a proper encoding of polynomial circuits on binary strings, --- which, in turn, can be based upon an encoding of lists --- the algorithm $AB(x, y)$ can be described as follows:
%
%
This algorithm, which we call \texttt{PZT}, is inspired by~\cite{AroraBarak} and described in detail in Appendix~\ref{sec:appABalgo}.

As the input circuit is evaluated modulo some $k\in \mathbb Z$, the algorithm works in time polynomial with respect to $|x|$. Therefore, as a consequence of Theorem \ref{thm:RStoPOR} and Lemma \ref{lemma:Fundamental}, there is a $\Sigma^b_1$-formula $G(x,y)$ of $\Lpw$ that represents it. 
Appendix~\ref{sec:appABalgo} also contains a lower-level description of this formula $G$. 
\subparagraph*{The Underlying Language.}
We show that there is a predicate $H$ of $\Lpw$ such that $H(x, \epsilon)$ holds if and only if $x$ is the encoding of a circuit in PZT; otherwise, $H(x, \zzero)$ holds. 
This predicate realizes the function $h$ described by the following algorithm:
\begin{enumerate}
\item Take in input $x$, and check whether it is a polynomial circuit with one output; if it is not, reject it. Otherwise:
\item Compute the polynomial term $p$ represented by $x$, and reduce it to a normal form $\overline p$.
\item Check whether all the coefficients of the terms are null. If this is true, output $\epsilon$, otherwise output $\oone$ and terminate.
\end{enumerate}
For reasonable encodings of polynomial circuits and expressions, $h$ is elementary recursive and therefore there is a predicate $H$ which characterizes it, and $\IDE$ proves the totality of $h$. Moreover,
we have 
 $h =\chi_\text{PZT}$,
 as for every polynomial $p$ with coefficients in $\mathbb Z$, $\mathbb Z \models \forall \vec x. p(\vec x) = 0$ iff all the monomials in the normal form of $p$ have zero as coefficient.
%  indeed:
%
%\begin{remark}
%  \label{lemma:piteq}
%  For every polynomial $p$ with coefficients in $\mathbb Z$, it holds that $\mathbb Z \models \forall \vec x. p(\vec x) = 0$ if and only if all the monomials in the normal form of $p$ have zero as coefficient.
%\end{remark}
%
%
\subparagraph*{Proving the Error Bound.}
We now show that the formula $G$ is not-erratic and that it decides $\LANG{\rfp{G}}$. 
With the notations $G^{*}$ and $t_{G}$ from the the proof of Lemma \ref{lemma:RLexp}, 
this can be reduced to proving in $\IDE$ the following two claims:
%that:
%%
%\begin{equation}
%  \label{eq:claim7.2}
%  \IDE\vdash \forall x.\forall y.\MEAS[ G(x,y)\leftrightarrow H(x,y)],
%  \tag{$*$}
%\end{equation}
%\noindent
%% 
%As in the proof of Lemma \ref{lemma:RLexp}, we assume to have an encoding of finite sets --- and set-theoretic predicates, such as belonging, size, etc. ---. 
%%This allows us to reduce bounded threshold-existential quantifiers to statements on the size of finite sets.
%In this way, Claim \eqref{eq:claim7.2} is equivalent to the following one: 
%
%\[
%\PA\vdash \forall x.\forall y. \Big \vert \left \{z \preceq \mathsf 2^{t(x))} \ \Big \vert \ G^{*}(x, y, z) \leftrightarrow H(x,y))\right\}\Big \vert \ge \frac{2}{3}\cdot \mathsf 2^{|\mathsf 2^{t_{G}(x)}|},
%\]
%\noindent
%where $G^{*}$ and $t_{G}$ are as in the proof of Lemma \ref{lemma:RLexp}.
%), and $t_{G}$ is a polynomially-sized term depending on $x$ only. 
%Further details on these objects can be found in Appendix~\ref{sec:apperrorbound}. 
\begin{align}
  \label{eq:claim5.1}\tag{$\dag$}
    &\vdash\forall z. |z|=t_{G}(x) \land G^{*}(x, \zzero , z) \rightarrow H(x,\zzero ),\\
      \label{eq:claim5.2}
 &\vdash \forall x. \Big \vert \left \{z \preceq \mathsf 2^{t_{G}(x))} \ \Big \vert \ G^{*}(x, \epsilon, z) \to H(x,\epsilon))\right\}\Big \vert \ge \frac{2}{3}\cdot \mathsf 2^{|\mathsf 2^{t_{G}(x)}|}. \tag{$\ddag$}
\end{align}
\noindent
\eqref{eq:claim5.1} states that whenever the randomized algorithm rejects an input, then so does the deterministic one, while \eqref{eq:claim5.2}, which is reminiscent of \eqref{eq:star}, 
%\begin{equation}
%  \label{eq:claim5.2}
%  \PA\vdash \forall x.\forall y. \Big \vert \left \{z \preceq \mathsf 2^{t(x))} \ \Big \vert \ G^{*}(x, y, z) \to H(x,y))\right\}\Big \vert \ge \frac{2}{3}\cdot \mathsf 2^{|\mathsf 2^{t_{G}(x)}|},
%%  \PA \vdash\forall x.\forall y.3\cdot|\{z \in \Bool^{t(x))} \mid G^{*}(x, \epsilon, z) \rightarrow H(x,\epsilon))\}|\ge  2^{t(x)+1}
%\end{equation}
states that in at least $\frac 2 3$ of all possible cases, if the randomized algorithm accepts the circuit, the deterministic one accepts it too.
Jointly, \eqref{eq:claim5.1} and \eqref{eq:claim5.2} imply that the equivalence 
$G^{*}(x,y,z)\leftrightarrow H(x,y)$ holds in at least $2/3$ of all possible cases.

While Claim \eqref{eq:claim5.1} is a consequence of the compatibility of the $\bmod\ k$ function with addition and multiplication, which are easily proved in $\IDE$, the proof of Claim \eqref{eq:claim5.2} is more articulated and relies on the Schwartz-Zippel Lemma, providing a lower bound to the probability of evaluating the polynomial on values witnessing that it is not identically zero, and the Prime Number Theorem (whose provability in $\IDE$ is known~\cite{CornarosDimitracopoulos}) which bounds the probability to choose a \emph{bad} value for $k$, i.e.~one of those values causing \texttt{PZT} to return the wrong value. Detailed arguments are provided in the Appendix.
%%
%Conditions \eqref{eq:claim5.1} and \eqref{eq:claim5.2} entail Conditions (2), (3) of Definition \ref{defn:BPPT} for $\mathsf T=\PA$, and thus 
%$
%\text{PZT}\in \BPP_\PA
%$.

\subparagraph*{Closure under Polytime Reduction}

Only assessing that a problem belongs to $\BPP_\THE$ does not tell us anything about other languages of this class; for this reason, we are interested in showing that $\BPP_\THE$ is closed under polytime reduction. 
This allows us to start from $\text{PZT}\in \BPP_{(\IDE)}$ to conclude that all problems which can be reduced to PZT in polynomial time belong to this class, and in particular  that $\text{PIT} \in \BPP_{(\IDE)}$. This is assessed by the following proposition, proved in the Appendix:

\begin{prop}
  \label{prop:bpppaclosure}
  For any theory $\THE\supseteq \RSE$, language $L \in \BPP_\THE$ and language $M\subseteq \Ss$, if there is a polytime reduction from $M$ to $L$, then $M\in \BPP_\THE$.
\end{prop}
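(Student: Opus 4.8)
The plan is to prove Proposition~\ref{prop:bpppaclosure} by taking the $\Sigma^b_1$-formula witnessing $L \in \BPP_\THE$ and composing it, at the level of formulas, with an arithmetical formula representing the polytime reduction. Concretely, let $G(x,y)$ be the $\Sigma^b_1$-formula such that $\RS \vdash \forall x.\exists! y.G(x,y)$, $\THE \vdash \mathsf{NotErratic}[G]$, and $L = \LANG{\rfp G}$. Let $r : \Ss \to \Ss$ be the polytime reduction from $M$ to $L$, so that $\sigma \in M$ iff $r(\sigma) \in L$. Since $r$ is deterministic polytime, it is in $\POR$ (as a function ignoring its oracle argument), hence by Theorem~\ref{thm:RStoPOR} there is a $\Sigma^b_1$-formula $R(x,y)$, $\Flip$-free, with $\RS \vdash \forall x.\exists! y.R(x,y)$ and $R$ representing $r$. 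The natural candidate witness for $M$ is the composed formula
\[
G'(x,y) := \exists w \preceq t.\, \big(R(x,w) \wedge G(w,y)\big),
\]
where $t$ is a term bounding $|r(x)|$ polynomially in $|x|$ (available by Parikh's theorem, adapted to $\RS$ as in the proof of Theorem~\ref{thm:RStoPOR}); one checks $G'$ is again $\Sigma^b_1$ after prenexing.

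The first thing to verify is Condition~1: $\RS \vdash \forall x.\exists! y.G'(x,y)$. This follows routinely from $\RS \vdash \forall x.\exists! y.R(x,y)$ and $\RS \vdash \forall w.\exists! y.G(w,y)$, instantiating the latter at $w = r(x)$; uniqueness of $w$ (since $R$ is functional and $\Flip$-free) plus uniqueness of $y$ given $w$ give uniqueness of $y$. Next, Condition~3: $M = \LANG{\rfp{G'}}$. Here the key point is that since $R$ is $\Flip$-free, $\model{R(\overline\sigma,\overline\rho)}$ is either all of $\Os$ or empty, so for every $\tau$ we get $\mu(\model{G'(\overline\sigma,\overline\tau)}) = \mu(\model{G(\overline{r(\sigma)},\overline\tau)}) = \rfp{G}(r(\sigma))(\tau)$; hence $\rfp{G'}(\sigma) = \rfp{G}(r(\sigma))$, and $\sigma \in \LANG{\rfp{G'}}$ iff $r(\sigma) \in \LANG{\rfp G} = L$ iff $\sigma \in M$.

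The step that needs the most care is Condition~2: $\THE \vdash \mathsf{NotErratic}[G']$. Unfolding, $\mathsf{NotErratic}[G']$ is $\forall x.\exists y \preceq 0.\,\mathsf{TwoThirds}[G'](x,y)$, and by Lemma~\ref{lemma:RLexp} $\mathsf{TwoThirds}[G'](x,y)$ is the $\Pi^0_1$-formula obtained by the de-randomization construction applied to the $\Sigma^b_1$-formula $G'$. I would not try to relate $\mathsf{TwoThirds}[G']$ and $\mathsf{TwoThirds}[G]$ syntactically term-by-term — the bounding terms $t_{G'}$ and $t_G$ differ. Instead the cleanest route is: reason inside $\THE$; fix $x$; let $w$ be the unique string with $R(x,w)$ (its existence and the bound $|w| \preceq t$ are provable already in $\RS \subseteq \THE$, and since $R$ is a $\Delta^0_0$ or at worst $\Sigma^b_1$ formula faithfully representing the polytime function $r$, $\THE$ proves $R(x,w)$ determines $w$); then argue that $\mathsf{TwoThirds}[G'](x,0)$ is $\THE$-provably equivalent to $\mathsf{TwoThirds}[G](w,0)$. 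This equivalence is the technical heart: one shows that the counting condition $(\star)$ for $G'$ at $x$ reduces, provably in $\THE$ (which has $\mathrm{Exp}$ and enough bounded induction to manipulate the encodings of finite oracle-prefixes), to the counting condition $(\star)$ for $G$ at $w$, because the extra relevant bits introduced by $R$ contribute a deterministic, $\Flip$-free prefix that does not affect the ratio. Then Condition~2 for $G$ instantiated at $w$, namely $\THE \vdash \mathsf{TwoThirds}[G](w,0)$ (a consequence of $\THE \vdash \mathsf{NotErratic}[G]$ and the provable totality-and-functionality that forces the witnessing $\tau$ to be $\epsilon$ on the branch recognized by $\LANG{\cdot}$), transfers to give $\THE \vdash \mathsf{TwoThirds}[G'](x,0)$. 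I expect the main obstacle to be exactly this provable transfer of the de-randomized counting statement across the formula composition: making sure that the bookkeeping of which oracle bits are "relevant" for $G'$ versus for $G$ is carried out inside $\THE$ rather than just in the standard model, and in particular that $\THE \supseteq \RSE$ is genuinely strong enough to verify the bijection between the two finite bit-prefix encodings. Everything else — Conditions~1 and~3, and the specialization to $\text{PIT} \le_p \text{PZT}$ — is then immediate.
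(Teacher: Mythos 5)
Your proposal is correct and follows essentially the same route as the paper's own proof: compose a $\Flip$-free $\Sigma^b_1$-formula representing the polytime reduction with the witness formula for $L$, verify functionality and $\LANG{\cdot}$ directly, and pull the deterministic part outside the de-randomized threshold condition so that the provable non-erraticity of $G$ transfers to the composite. You are in fact more explicit than the paper about the one delicate point — the $\THE$-provable transfer of the counting condition across the composition — which the paper dispatches in a single line.
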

%\noindent
%A detailed proof of this result is given in  Appendix~\ref{sec:appbpppaclosure}. This has as Corollary the main claim of this section: 

\begin{cor}
  PIT is in $\BPP_{(\IDE)}$.
\end{cor}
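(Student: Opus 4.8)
The plan is to derive this corollary immediately from the two preceding results: Proposition~\ref{prop:bpppaclosure}, which states that $\BPP_{\THE}$ is closed under polynomial-time many-one reductions for any $\THE\supseteq\RSE$, and the fact (established in the body of this section) that $\mathrm{PZT}\in\BPP_{(\IDE)}$. Since $\IDE$ (in the guise of $\mathsf{\Delta}_{0}\text{-NIA}+\mathrm{Exp}$ over $\Lpw$) contains $\RSE$, Proposition~\ref{prop:bpppaclosure} applies with $\THE=\IDE$. So all that remains is to exhibit a polytime reduction from $\mathrm{PIT}$ to $\mathrm{PZT}$.

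The reduction is the standard one, already hinted at in the text right after the definition of PZT: given an instance of PIT, namely a pair of arithmetical circuits $p,q$ (encoded as lists of nodes, hence as terms of $\Lpw$), one builds in polynomial time a single arithmetical circuit computing $p-q$ — concretely, by adding one subtraction gate taking the output gates of $p$ and $q$ as its two inputs, and renaming the variables of $q$ apart from those of $p$ so that the free variables are shared appropriately. This is a purely syntactic manipulation of the encoding and is plainly computable in time polynomial in $|p|+|q|$ (indeed in linear time for reasonable list encodings). By the elementary algebraic identity $\mathbb{Z}\models p=q$ iff $\mathbb{Z}\models p-q=0$, the circuit $p-q$ lies in PZT iff $(p,q)$ lies in PIT, so this map is a correct many-one reduction. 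Invoking Proposition~\ref{prop:bpppaclosure} with $L=\mathrm{PZT}$, $M=\mathrm{PIT}$, $\THE=\IDE$ then yields $\mathrm{PIT}\in\BPP_{(\IDE)}$.

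I do not expect any real obstacle here: the corollary is essentially a bookkeeping consequence of machinery already in place. The only point that deserves a sentence of care is checking that the reduction is genuinely a \emph{polytime} reduction in the sense required by Proposition~\ref{prop:bpppaclosure} — i.e.\ that the circuit-to-circuit transformation $(p,q)\mapsto(p-q)$ respects the chosen $\Lpw$-encodings and can be implemented by a polytime function — but given the remark in the excerpt that such DAGs ``can easily be encoded, e.g.\ using lists, as terms of $\Lpw$'' and that the transformation only prepends/splices a constant number of nodes, this is routine. Thus the proof reduces to: (i) observe $\IDE\supseteq\RSE$; (ii) recall $\mathrm{PZT}\in\BPP_{(\IDE)}$ from the preceding paragraphs; (iii) exhibit the linear-time reduction $\mathrm{PIT}\leq_{p}\mathrm{PZT}$ via $p-q$; (iv) apply Proposition~\ref{prop:bpppaclosure}.
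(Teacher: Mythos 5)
Your proof is correct and follows exactly the route the paper intends: observe $\IDE\supseteq\RSE$, recall $\mathrm{PZT}\in\BPP_{(\IDE)}$, exhibit the polytime reduction $(p,q)\mapsto p-q$, and apply Proposition~\ref{prop:bpppaclosure}. One small wording caution: when building the circuit for $p-q$ the input variables of $p$ and $q$ must be \emph{identified}, not renamed apart (only the internal node labels need disambiguating to avoid DAG-encoding clashes); your earlier ``renaming the variables of $q$ apart from those of $p$'' reads the wrong way, though the follow-up clause ``so that the free variables are shared appropriately'' makes clear you have the right construction in mind.
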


%%% Local Variables:
%%% mode: latex
%%% TeX-master: "main.tex"
%%% End:

\section{On Je\v{r}\'abek's Characterization of $\BPP$}
\label{sec:Jerabek}

As mentioned in Section~\ref{section1}, a semantic characterization of $\BPP$ based on bounded arithmetic was already provided by Je\v{r}\'abek in \cite{Jerabek}. 
This approach relies on checking, against the standard model, the truth of a formula which, rather than expressing that some machine is non-erratic, expresses what can be seen as a second totality condition (beyond the formula expressing the totality of the algorithm).
Hence, also within this approach we think it makes sense to investigate which problems can be proved to be in $\BPP$ within some given theory.

In this section, we relate the two approaches by showing that the problems in $\BPP_{\THE}$ are provably definable $\BPP$ problems, in the sense of \cite{Jerabek}, \emph{within some suitable extension} of the bounded theory PV$_{1}$\cite{CookUrquhart}. 

%Notice that, since we consider theories $\mathsf T\supseteq \RSE$, 

%i.e.~the axiom schema
%\begin{align}
%\forall x.\forall y. x >0 \to  \exists v < x(|y|+1) . \forall u < x|y|. f(u) \neq v
%\tag{dWPHP($f$)}
%\end{align}
%for all function $f$ provably total in PV.

A PTM is represented in this setting by two provably total functions $(A,r)$, where the machine accepts on input $x$ with probability less than $ p/q$ when $\mathrm{Pr}_{w< r(x)}(A(x,w)) \leq p/q$. 
Je\v{r}\'abek focuses on the theory PV$_{1}$, extended with an axiom schema dWPHP (PV$_{1}$)called the \emph{dual weak pigeonhole principle} (cf.~\cite[pp. 962ff.]{Jerabek}) for PV$_{1}$ (i.e.~the axiom stating that for every PV$_{1}$-definable function $f$, $f$ is not a surjection from $x$ to $x^{2}$). The reason is that this theory is capable of proving \emph{approximate} counting formulas of the form $\mathrm{Pr}_{w< r(x)}(A(x,w)) \preceq_{0} p/q$, where ``$\preceq_{0}$'' is a relation equivalent to ``$\leq$'' up to some polynomially small error (recall that, in order to establish \emph{exact} counting results, we were forced to use non-polytime operations, cf.~Remark \ref{remark:beyond}). 
% Such formulas, expressing that the set $\{w< r(x)\mid \ \vDash A(x,w) \}$ has cardinality (approximately) smaller than $p/q\cdot r(x)$, 
% yield a form of approximate counting.
%, provided the \emph{exact} numerical relation ``$\leq$'' is replaced by an \emph{approximate} one ``$\preceq_{0}$'', which is equivalent to ``$\leq$'' up to some polynomially small error, and  
%such that PV$_{1}$+dWPHP may estimate 
%
%
The representation of $\BPP$ problems hinges on the definition, for any probabilistic algorithm $(A,r)$, of $ L_{A,r}^{+}(x)  := \mathrm{Pr}_{w< r(x)}(\lnot A(x,w))\leq 1/3$ and $ L_{A,r}^{-}(x) :=\mathrm{Pr}_{w< r(x)}(A(x,w))\leq 1/3$.
 Checking if the algorithm $(A,r)$ solves some problem in $\BPP$ reduces then to checking the ``totality'' formula $\vDash\forall x.L_{A,r}^{+}(x) \lor L_{A,r}^{-}(x)$.
%
% This leads to the following definition:
%\begin{definition}[cf.~\cite{Jerabek}]
% An algorithm $(A,r)$ \emph{defines a $\BPP$ language} when $\vDash\forall x.L_{A,r}^{+}(x) \lor L_{A,r}^{-}(x)$.
%\end{definition}
 
Now, first observe that, modulo an encoding of strings via numbers, everything which is provable in $\RS$ \emph{without} the predicate $\Flip$ can be proved in the theory $S^{1}_{2}(PV)$ \cite{CookUrquhart}, which extends both PV$_{1}$ and Buss' $S^{1}_{2}$.
Moreover, by arguing as in the proof of Lemma \ref{lemma:nonerratic}, in our characterization of $\BPP$ we can w.l.o.g.~suppose that the formula $G$ satisfies
$\mathrm{EpsZero}[G]:=\forall x.\forall y. G(x,y) \to y=\epsilon \lor y=\zzero$. Under this assumption, the de-randomization procedure described in the proof of Lemma \ref{lemma:RLexp} turns $G$ into a pair $(A,r)$, where $A=G^{*}$ is $\Flip$-free and $r(x)=t_{G}(x)$, and the languages $L_{A,r}^{+}(x)$ and $L_{A,r}^{-}(x)$ correspond then to the formulas $
L_{G}^{+}(x):=\MEAS[G(-,\epsilon)](x)$, and 
$L_{G}^{-}(x):=\MEAS[G(-, \zzero)](x)$.
%Notice that, contrarily to \cite{Jerabek}, $t_{G}(x)$ needs not be a polynomial term, since it may contain $\mathsf 2^{x}$. Yet this makes little difference since we work in extensions of $\RSE$, i.e.~in theories which capture \emph{more} than polytime  computation.

%
%Notice that, in presence of the axiom $\mathsf{Exp}$, it becomes possible to replace approximate counting by exact one (i.e.~$\preceq_{0}$ by $\leq$).

%}
%
%
%

Now, since from $\mathsf T\vdash \forall x.\exists y.\MEAS[G](x,y)$ and 
$\mathrm{EpsZero}[G]$ one can deduce  
$\mathsf T\vdash\forall x.L_{G}^{+}(x) \lor L_{G}^{-}(x)$, we arrive at the following:

\begin{proposition}
Let $L$ be a language with $L=\LANG{\rfp{G}}$. If $L\in \BPP_{\THE}$, then $\forall x. L_{G}^{+}(x)\lor L_{G}^{-}(x)$ is provable in some recursively enumerable extension of PV$_{1}$. Conversely, if 
 PV$_{1}$+dWPHP(PV$_{1}$) $\vdash\forall x. L_{G}^{+}(x)\lor L_{G}^{-}(x)$, then $L\in \BPP_{\RSE}$.
\end{proposition}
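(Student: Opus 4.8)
The plan is to unfold both directions from the definitions, relying on the de-randomization procedure of Lemma~\ref{lemma:RLexp} and on the standard facts about $\mathrm{PV}_1$, $S^1_2$, $S^1_2(\mathrm{PV})$ and the dual weak pigeonhole principle. For the forward direction, suppose $L\in\BPP_{\THE}$, witnessed by a $\Sigma^b_1$-formula $G(x,y)$. By the reduction in Lemma~\ref{lemma:nonerratic} (applied as in the discussion preceding the statement) we may assume $\mathrm{EpsZero}[G]$, i.e.\ $\THE\vdash\forall x\forall y.\,G(x,y)\to y=\epsilon\lor y=\zzero$; note this is a mild strengthening of Condition~1 that costs nothing, since one can always post-compose $G$ with the map $k$ of Lemma~\ref{lemma:nonerratic}, staying inside $\Sigma^b_1$ and still provable in $\RS$. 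Since $\THE\supseteq\RSE\vdash\forall x.\exists y.\,G(x,y)$ together with $\mathrm{EpsZero}[G]$ gives $\THE\vdash\forall x.\,G(x,\epsilon)\lor G(x,\zzero)$, and the hypothesis $L\in\BPP_{\THE}$ gives $\THE\vdash\mathsf{NotErratic}[G]$, i.e.\ $\THE\vdash\forall x\exists y\preceq 0.\,\MEAS[G](x,y)$, which (again using $\mathrm{EpsZero}[G]$, so the only witnesses $y\preceq 0$ are $\epsilon$ and $\zzero$) yields $\THE\vdash\forall x.\,\MEAS[G(-,\epsilon)](x)\lor\MEAS[G(-,\zzero)](x)$, that is $\THE\vdash\forall x.\,L_G^+(x)\lor L_G^-(x)$. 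It remains only to observe that $\THE$ (or rather $\THE$ together with the finitely many axioms of $\mathrm{PV}_1$ needed to even state $L_G^\pm$, and modulo the string/number encoding) is a recursively enumerable extension of $\mathrm{PV}_1$: this follows from the remark in the excerpt that everything provable in $\RS$ without $\Flip$ is provable in $S^1_2(\mathrm{PV})$, which extends $\mathrm{PV}_1$, so $\THE+S^1_2(\mathrm{PV})$ is the required r.e.\ extension. This gives the first half.

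For the converse, suppose $\mathrm{PV}_1+\mathrm{dWPHP}(\mathrm{PV}_1)\vdash\forall x.\,L_G^+(x)\lor L_G^-(x)$, where $L=\LANG{\rfp{G}}$ and $G$ is a $\Sigma^b_1$-formula satisfying Conditions~1 and~3 of Definition~\ref{defn:BPPT} (with $\mathrm{EpsZero}[G]$). I would first check that $\RSE$ proves $\mathrm{dWPHP}(\mathrm{PV}_1)$ — this is the step where I expect to have to be careful. The point is that $\mathrm{dWPHP}(\mathrm{PV}_1)$, the statement that no $\mathrm{PV}_1$-function surjects $x$ onto $x^2$, is itself provable in $S^1_2+\mathrm{Exp}$ (in fact much less than $\mathrm{Exp}$ is needed: a counting argument formalizable in $S^2_2$, or even in $\mathrm{PV}_2$, suffices, and certainly $\IDE\supseteq S_2+\mathrm{Exp}$ proves it), and $\RSE$ extended conservatively over its $\Flip$-free part contains $S^1_2+\mathrm{Exp}$. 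Hence $\RSE$ proves every theorem of $\mathrm{PV}_1+\mathrm{dWPHP}(\mathrm{PV}_1)$, so in particular $\RSE\vdash\forall x.\,L_G^+(x)\lor L_G^-(x)$. Now I would translate this back: $L_G^+(x)$ is $\MEAS[G(-,\epsilon)](x)$ and $L_G^-(x)$ is $\MEAS[G(-,\zzero)](x)$, and the disjunction $\forall x.\,\MEAS[G(-,\epsilon)](x)\lor\MEAS[G(-,\zzero)](x)$ is, under $\mathrm{EpsZero}[G]$, exactly $\forall x\exists y\preceq 0.\,\MEAS[G](x,y)$, i.e.\ $\mathsf{NotErratic}[G]$. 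Thus $\RSE\vdash\mathsf{NotErratic}[G]$, which with Conditions~1 and~3 already assumed is precisely the statement $L\in\BPP_{\RSE}$.

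\textbf{Main obstacle.} The routine parts are the logical manipulations around $\mathrm{EpsZero}[G]$ and the unwinding of $\MEAS$, $L_G^\pm$ and $\mathsf{NotErratic}$ into one another; these just require bookkeeping about which bounded quantifiers and which occurrences of the exponential-graph formula $\mathrm{exp}$ appear. The genuinely delicate point is the interface between the two settings: (i) matching Je\v{r}\'abek's approximate-counting predicate $\mathrm{Pr}_{w<r(x)}(\cdot)\le p/q$ (with its ``$\preceq_0$'' up to polynomially small error) against our $\MEAS[\cdot]$, which is \emph{exact} counting paid for with a non-polytime operation (cf.\ Remark~\ref{remark:beyond}) — one must be sure that on the $\{0,1\}$-valued, totality-respecting formulas at hand the exact and approximate statements coincide well enough that the implications go through in both directions; and (ii) verifying the two proof-strength inclusions, namely that $\RSE$ (via $S^1_2(\mathrm{PV})$, $S_2$, $\mathrm{Exp}$) proves $\mathrm{dWPHP}(\mathrm{PV}_1)$, and that $\THE$ suitably extends $\mathrm{PV}_1$. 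I would isolate (i) as a small lemma to the effect that, for a $\Flip$-free $G^*$ with $\THE\vdash\mathrm{EpsZero}$ and totality, $\THE\vdash\mathsf{NotErratic}[G]$ iff $\THE\vdash\forall x.\,L_G^+(x)\lor L_G^-(x)$, and then the Proposition follows formally.
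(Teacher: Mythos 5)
Your proof follows essentially the same route as the paper: the forward direction via $\mathrm{EpsZero}[G]$ and the identification of $L_G^{\pm}$ with $\MEAS[G(-,\epsilon)]$ and $\MEAS[G(-,\zzero)]$, and the converse via the strength of $\RSE$ (the paper phrases the key fact as "approximate counting can be replaced by exact counting in $\RSE$", which is exactly the delicate point (i) you isolate). Your additional explicit step that $\RSE$ proves $\mathrm{dWPHP}(\mathrm{PV}_1)$ is a harmless and correct elaboration of what the paper leaves implicit.
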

The second statement above relies on the fact that approximate counting can be replaced by exact counting in $\RSE$ (i.e.~``$\preceq_{0}$'' can be replaced by ``$\leq$'').

%
% principle is equivalent to the existence of a family of Boolean functions with exponential circuit complexity \cite{JerabekAPAL}.

%\begin{proof}
%
%\end{proof}
% and thus that the associated $\BPP$-algorithm is definable, in the sense of $\cite{Jerabek}.

%\section{longversion}
%\input{BPP/BPPShort}
%I would reserve at least 2 pages to that.

%\section{Related Work}\label{sectionrl}
%\input{relatedwork}

\section{Conclusion}\label{section6}
The logical characterization of randomized complexity classes, in particular those having a semantic nature, is a great challenge. This paper contributes to the understanding of this problem by showing not only how resource bounded randomized computation can be captured within the language of arithmetic, but also that the latter offers convenient tools to control error bounds, the essential ingredient in the definition of classes like $\mathbf{BPP}$ and $\mathbf{ZPP}$.

We believe that the main contribution of this work is a first example of a sort of \emph{reverse} computational complexity for probabilistic algorithms. 
As we discussed in Section \ref{sec:SSBPP}, while the restriction to bounded theories is crucial in order to capture polytime algorithms via a totality condition, it is not necessary  to prove error bounds for probabilistic (even polynomial time) algorithms.
 In particular, the (difficult) challenge of enumerating $\BPP$ translates into the challenge of proving $\BPP=\BPP_{\THE}$ for some strong enough r.e.~theory $\THE$. 
 So, it is worth exploring how much can be proved within expressive arithmetical theories.
For this reason we focused here on a well-known problem, PIT, which is known to be in $\BPP$,
but not in $\mathbf{P}$, showing that 
%can be proved to be in $\BPP$ in full $\PA$, and we showed that PIT is among them. 
%Actually, we conjecture that 
the whole argument for $\mathrm{PIT}\in \BPP$ can be formalized in a fragment of $\PA$, namely $\IDE$.
% of $\mathsf{PA}$, with induction restricted to $\Delta_{0}$-formulas plus the totality of the exponential function.

\subparagraph*{Future Work}

The authors see this work as a starting point for a long-term study on the logical nature of semantic classes. From this point of view, many ides for further work naturally arise.

An exciting direction is the study of the expressiveness of the
 new syntactic classes 
$\mathbf{BPP}_{\THE}$, that is, an investigation on the kinds of error bounds which can be proved in the arithmetical theories lying \emph{between} standard bounded theories like $S^{1}_{2}$ or PV and $\mathsf{PA}$, but also in theories which are \emph{more expressive} than $\mathsf{PA}$ (like e.g.~second-order theories).
Surely, classes of the form $\BPP_{\THE}$ could be analyzed also as for the existence of complete problems and hierarchy theorems for them, since such results do not hold for $\BPP$ itself \cite{Fortnow2001}.

Our approach to $\BPP$ suggests that extensions to other complexity classes of randomized algorithms like $\mathbf{ZPP}, \mathbf{RP}$ and $\mathbf{coRP}$ could make sense. Notice that this requires to deal not only with beyond error-bounds, but also with either average class complexity or with failure in decision procedures.

%
%
%. When $T$ is sufficiently expressive, $\mathbf{BPP}_T$ should include well-known $\BPP$ problems, like polynomial identity testing, for which no exact polytime algorithm is known. Similarly, it would be interesting to determine for which theories $T$ the classes $\mathbf{BPP}_T$ and $\mathbf{BPP}$ coincide.
%
Finally, given the tight connections between bounded arithmetics and proof complexity,
another natural direction is the study of applications of our work to randomized variations on the theme,
for example recent investigations on \emph{random resolution refutations}
\cite{Jerabek, Buss2014, Pudlak}, i.e.~resolution systems where proofs may make errors but are correct most of the time.
%
%These problems, intriguing as they are, are anyway left to future work.

\bibliography{main.bib}

\appendix

%\section{Proofs from Section}

%% SEC
\section{Proofs from Section~\ref{sec:POR}}\label{app:1}
% !TEX root = main.tex

\newcommand{\termO}{\textsf{t}}
\newcommand{\termT}{\textsf{u}}
\newcommand{\termF}{\textsf{v}}
%\newcommand{\arrowT}{\Rightarrow}

%\newcommand{\zeroT}{\textsf{0}}
%\newcommand{\oneT}{\textsf{1}}

%\section{Proofs from Section~\ref{sec:FAtoRC}}\label{app:1}

$\mathbf{Theorem~\ref{thm:RStoPOR}.\Leftarrow.}$
As anticipated, in order to apply inductive steps
(namely, composition and bounded recursion on notation)
we need to adapt Parikh's theorem~\cite{Parikh} to
$\RS$.\footnote{The theorem is usually presented in
the context of Buss' bounded theories, as stating that given 
a bounded formula $F$ in $\mathcal{L}_{\Nat}$
such that $S^1_2\vdash \forall \vec{x}.\exists y.F$,
then there is a term $t(\vec{x})$ such that also
$S^1_2 \vdash \forall \vec{x}.\exists y\leq t(\vec{x}).F(\vec{x},y)$~\cite{Buss86,Buss98}.
Furthermore, due to~\cite{FerreiraOitavem},
Buss' syntactic proof can be adapted to $\Sigma^b_1$-NIA
in a natural way.
The same result holds for $\RS$,
as  not containing specific rules concerning $\Flip(\cdot)$.}

\begin{prop}[``Parikh''~\cite{Parikh}]\label{prop:Parikh}
Let $F(\vec{x},y)$ be a bounded $\Lpw$-formula 
such that $\RS \vdash \forall \vec{x}.\exists y.F(\vec{x},y)$.
Then, there is a term $t$ such that,
$
\RS \vdash \forall \vec{x}. \exists y\preceq t(\vec{x}). F(\vec{x},y).
$
\end{prop}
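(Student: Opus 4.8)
\textbf{Proof proposal for Proposition~\ref{prop:Parikh} (``Parikh'' for $\RS$).}

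The plan is to adapt Buss' original syntactic proof of Parikh's theorem, which proceeds by induction on the structure of a cut-free (or free-cut-free) derivation of $\forall \vec x.\exists y.F(\vec x,y)$ in the sequent calculus presentation of the bounded theory. The first thing I would do is fix a sequent calculus formulation of $\RS$ in which the non-logical axioms (the basic axioms and the $\Sigma^b_1$-NIA schema for induction on notation) appear as initial sequents, and observe that Gentzen's cut-elimination argument yields the \emph{free-cut-elimination} property: every $\RS$-derivation can be transformed into one in which all cut formulas are either subformulas of the non-logical axioms or subformulas of the endsequent. Since the basic axioms are quantifier-free (hence trivially bounded) and the induction axioms are $\Sigma^b_1$, all cut formulas in a free-cut-free derivation of a sequent built from bounded formulas are themselves bounded. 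The new predicate $\Flip(\cdot)$ plays no special role here: it behaves exactly like any other atomic predicate symbol, it is never the principal formula of a non-logical axiom, and cut-elimination is insensitive to its presence — this is the content of the footnote remark in the excerpt.

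Next, with a free-cut-free derivation in hand, I would run the standard bounding argument: by induction on the derivation, one shows that for every sequent $\Gamma \Rightarrow \Delta$ appearing in it, where all formulas are bounded except possibly some unbounded existential quantifiers $\exists y$ in $\Delta$ (and dually unbounded universals in $\Gamma$) that are ``traced'' from the endsequent, there is a term $t$ (built from the free variables of the sequent) such that the same sequent with each such $\exists y.G$ replaced by $\exists y \preceq t.G$ is still $\RS$-derivable. The crucial cases are: (i) the $\exists$-right rule introducing the quantifier $\exists y$ — here the witness term provided by the rule gives the bound, possibly after combining it with bounds coming from the premises using $\conc$ and $\times$ and the provable monotonicity of $\preceq$; (ii) the contraction and cut rules, where two bounding terms $t_1,t_2$ must be merged into a single term dominating both, which is available because $\RS$ proves that for any $s,t$ there is a term (e.g.\ a suitable product/concatenation) of length at least $\max(|s|,|t|)$ and that $\preceq$ is monotone under it; (iii) the induction rule, where one must check that the bounding term for the conclusion can be obtained from the bounding term for the inductive step, iterated along the induction — this works precisely because the induction formula is $\Sigma^b_1$ and polynomial (string-length) bounds compose, which is exactly why the $\Sigma^b_1$ restriction is imposed. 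For $\RS$ one verifies that all the arithmetical facts about $\preceq$, $\conc$ and $\times$ used in these steps are provable from the basic axioms, which is routine.

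I expect the main obstacle — though it is more bookkeeping than genuine mathematical difficulty — to be the induction-rule case, i.e.\ propagating the length bound through an application of $\Sigma^b_1$-NIA. One must argue that if the inductive step provably produces, from a witness bounded by $t(\vec x, z)$ for $B(z)$, a witness bounded by some $t'(\vec x, z)$ for $B(z\zero)$ and $B(z\one)$, then iterating along a string $w$ yields a witness for $B(w)$ bounded by a term depending only on $\vec x$ and $w$; this requires that the bounding terms be closed under the relevant composition and that their iteration stays polynomial in $|w|$, which is guaranteed by the explicit-definition shape of the bounding terms permitted in $\Lpw$ (built only from $\epsilon,\zero,\one,\conc,\times$). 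Once this is checked, applying the bounding lemma to the endsequent $\,\Rightarrow \forall \vec x.\exists y.F(\vec x,y)$ — after stripping the outer universal quantifier and introducing fresh eigenvariables $\vec x$ — yields a term $t(\vec x)$ with $\RS \vdash \exists y \preceq t(\vec x).F(\vec x,y)$, and re-quantifying gives $\RS \vdash \forall \vec x.\exists y \preceq t(\vec x).F(\vec x,y)$, as desired. The only $\RS$-specific verification throughout is that free-cut-elimination is unaffected by $\Flip$ and that the basic axioms suffice for the elementary properties of $\preceq$; both are straightforward.
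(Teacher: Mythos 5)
Your overall route — free-cut-elimination in a sequent calculus for $\RS$, followed by a bounding lemma proved by induction on a free-cut-free derivation, together with the remark that $\Flip$ behaves like any other atomic predicate and so does not interfere — is exactly what the paper has in mind via its footnote citing Buss' syntactic proof and its Ferreira--Oitavem adaptation; the paper does not spell out the argument, so your proposal is the natural elaboration and takes the same approach.

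One case, though, is described incorrectly in a way that would sink the proof if taken literally: the induction rule. You say the inductive step yields, from a witness for $B(z)$ bounded by $t(\vec x,z)$, a witness for $B(z\zero)$ or $B(z\one)$ bounded by $t'(\vec x,z)$, and that ``iterating along a string $w$'' gives a bound for $B(w)$, adding that ``their iteration stays polynomial in $|w|$''. This cannot be the mechanism: iterating even a single application of $\times$ (say $|s|\mapsto|s|^2$) a number of times proportional to $|w|$ yields a bound doubly exponential in $|w|$, which is not expressible by any $\Lpw$-term. The actual point of the $\Sigma^b_1$ restriction is precisely that $B$ contributes nothing to the bounding lemma: its existentials are already bounded by terms occurring literally inside $B$, so there is no ``witness for $B(z)$'' that needs external tracking and no iteration to perform. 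The quantifiers the lemma must bound are the unbounded existentials traced from the endsequent, sitting in the side formula $\Delta$ of the induction rule; their bound in the conclusion is obtained by substituting the induction target for the eigenvariable in the bound coming from the premise, together with a provable monotonicity of $\preceq$ — a substitution, not a compounding. Relatedly, you initially place the $\Sigma^b_1$-NIA schema ``as initial sequents''; since that schema contains unbounded $\forall$'s, doing so would admit unbounded cut formulas and block the argument. The standard repair, which you later implicitly invoke by speaking of ``the induction rule'', is to present induction as a rule (PIND-style) rather than as an axiom sequent, so that free-cut-elimination indeed leaves only bounded cut formulas.
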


\begin{proof}[Proof for Theorem~\ref{thm:RStoPOR}($\Leftarrow$)]
The proof is by induction on the structure of functions in $\POR$.

\emph{Base Case.}
Each basic function is $\Sigma^b_1$-representable in $\RS$.
\begin{itemize}
\itemsep0em
\item The empty function $f=E$ is $\Sigma^b_1$-represented in $\RS$ by
the formula:
$$
F_E(x,y) : x=x \wedge y=\epsilon.
$$
\begin{enumerate}
\itemsep0em

\item Existence is proved considering $y=\epsilon$.
For the reflexivity of identity
both $\RS \vdash x=x$ and $\RS\vdash \epsilon
= \epsilon$ hold.
So, by rules for conjunctions, we obtain
$\RS\vdash x=x \wedge \epsilon=\epsilon$,
and conclude
$
\RS \vdash \forall x.\exists y.(x=x \wedge y=\epsilon).
$
Uniqueness is proved assuming $\RS \vdash x=x
\wedge z=\epsilon$.
By rules for conjunction,
in particular $\RS \vdash z=\epsilon$,
and since $\RS \vdash y=\epsilon$,
by the transitivity of identity,
we conclude
$
\RS \vdash y=z.
$

\item Assume $E(\sigma,\omega^*)=\tau$.
If $\tau=\eepsilon$, then
%\begin{align*}
$\model{\overline{\sigma} = \overline{\sigma}
\wedge \overline{\tau}=\epsilon}$
=
$\model{\overline{\sigma}=\overline{\sigma}}
\cap \model{\overline{\tau}=\epsilon}$
= $\Os \cap \Os$
= $\Os$.
%\end{align*}
So, for any $\omega^*,\omega^*\in\model{\overline{\sigma}
= \overline{\sigma}\wedge \overline{\tau}=\epsilon}$,
as clearly $\omega^*\in\Os$.
If $\tau\neq \eepsilon$, then
%\begin{align*}
$\model{\overline{\sigma} =\overline{\sigma}
\wedge \overline{\tau}=\epsilon}$
=
$\model{\overline{\sigma} =\overline{\sigma}}
\cap \model{\overline{\tau}=\epsilon}$
= $\Os \cap \emptyset$
= $\emptyset.$
%\end{align*}
So, for any $\omega^*,\omega^* \not\in\model{\overline{\sigma}
= \overline{\sigma} \vee \overline{\tau}=\epsilon}$,
as clearly $\omega^* \not\in\emptyset$.

\end{enumerate}

\item Functions $f=P^n_i$, 
$f=S_b$ and $f=C$
are $\Sigma^b_1$-represented in $\RS$ by respectively
the formulas:
\begin{align*}
F_{P^n_i}(x_1,\dots, x_n,y) &: \bigwedge_{j\in J} (x_j=x_j) \wedge
y=x_i, \\
F_{S_b}(x,y) &: y=x\bool, \\
F_C(x,v,z_0,z_1,y) &: (x=\epsilon \wedge y=v) \vee 
\exists x' \preceq x.(x=x'\zero \wedge y=z_0) \\
& \ \ \ \ \ \ \ \ \ \ \
\ \ \ \ \ \ \ \ \ \ \  \ \vee \exists x'\preceq x. (x=x'\one \wedge y=z_1).
\end{align*}
where $1\leq i\leq n$, $J=\{1,\dots, n\}\setminus \{i\}$,
%
%
%
%\item $f=S_b$ is $\Sigma^b_1$-represented in $\RS$
%by the formula:
and $\bool\in\{\zero,\one\}$ corresponding to (resp.) $b\in\{0,1\}$.
Proofs are omitted as straightforward.
%
%
%\item $f=C$ is $\Sigma^b_1$-represented in $\RS$ by
%the formula:

\item $f=Q$ is $\Sigma^b_1$-represented in $\RS$ 
by the formula:
$$
F_Q(x,y) : (\Flip(x) \wedge y=\one) \vee
(\neg \Flip(x) \wedge y=\zero).
$$
Observe that, in this case, the proof crucially relies on the fact
that oracle functions invoke \emph{exactly one} oracle:
\begin{enumerate}

\itemsep0em
\item Existence is proved by cases.\footnote{For the formal proof, see~\cite{RBA}.}
Since our underlying logic is classical, 
$\RS \vdash \Flip(x) \vee \neg \Flip(x)$ holds.
Then, if $\RS \vdash \Flip(x)$,
let $y=\one$.
By the reflexivity of identity,  
$\RS \vdash (\Flip(x) \wedge \one = \one)
\vee (\neg \Flip(x) \wedge \zero = \one)$.
So, by rules for disjunction,
$\RS \vdash (\Flip(x) \wedge \one = \one)
\vee (\neg \Flip(x) \wedge \zero=\one)$
and we conclude:
$$
\RS \vdash \exists y.\big((\Flip(x) \wedge y=\one) \vee
(\neg \Flip(x) \wedge y=\zero)\big).
$$
If $\RS \vdash \neg \Flip(x)$,
let $y=\zero$. 
By the reflexivity of identity $\RS \vdash \zero=\zero$ holds.
Thus, by the rules for conjunction,
$\RS \vdash \neg \Flip(x) \wedge \zero=\zero$
and for disjunction,
we conclude $\RS\vdash (\Flip(x) \wedge \zero=\one)
\vee (\neg \Flip(x) \wedge \zero=\zero)$
and so,
$$
\RS \vdash \exists y.\big((\Flip(x) \wedge y=\one)
\vee (\neg \Flip(x) \wedge y=\zero)\big).
$$
Uniqueness is established relying on the transitivity
of identity.

\item Finally, it is shown that for every $\sigma,\tau\in\Ss$
and $\omega^* \in \Os$,
$Q(\sigma,\omega^*) = \tau$ when 
$\omega^* \in \model{F_Q(\overline{\sigma},\overline{\tau})}$.
Assume $Q(\sigma,\omega^*)=\oone$,
which is $\omega^*(\sigma)=\oone$,
\begin{align*}
\model{F_Q(\overline{\sigma},\overline{\tau})}
&= 
\model{\Flip(\overline{\sigma}) \wedge \overline{\tau} = \one}
\cup
\model{\neg \Flip(\overline{\sigma})\wedge
\overline{\tau}=\zero} \\
&= (\model{\Flip(\overline{\sigma})}
\cap \model{\one = \one})
\cup
(\model{\neg \Flip(\overline{\sigma})} \cap
\model{\one=\zero}) \\
&= (\model{\Flip(\overline{\sigma})}
\cap \Os) \cup (\model{\neg \Flip(\overline{\sigma})}\cap \emptyset) \\
&=
\model{\Flip(\overline{\sigma})} \\
&= \{\omega \ | \ \omega(\sigma)=\one\}.
\end{align*}
Clearly, $\omega^*\in\model{(\Flip(\overline{\sigma}) \wedge
\overline{\tau}=\one) \vee (\neg \Flip(\overline{\sigma})
\wedge \overline{\tau}=\zero)}$.
The case $Q(\sigma,\omega^*)=\zzero$
and the opposite direction are proved in a similar way.
\end{enumerate}

\end{itemize}

\emph{Inductive Case.}
If $f$ is defined by composition or bounded recursion from
$\Sigma^b_1$-representable functions,
then $f$ is $\Sigma^b_1$-representable in $\RS$:

\begin{itemize}
\itemsep0em

\item \emph{Composition.}
Assume that $f$ is defined by composition from functions
$g,h_1,\dots, h_k$ so that
$
f(\vec{x},\omega)= g(h_1(\vec{x},\omega), \dots,
h_k(\vec{x},\omega),\omega)
$
and that $g,h_1,\dots, h_k$ are represented in $\RS$
by the $\Sigma^b_1$-formulas $F_g,F_{h_1},\dots,
F_{h_k}$, respectively.
By Proposition~\ref{prop:Parikh}, there exist suitable terms
$t_g,t_{h_1},\dots, t_{h_k}$ such that (the existential part of) Condition 1. 
can be strengthened to $\RS \vdash \forall\vec{x}.
\exists y\preceq t_i.F_i(\vec{x},y)$ for each $i\in\{g,h_1,\dots, h_k\}$.
We conclude that $f(\vec{x},\omega)$ is $\Sigma^b_1$-represented
in $\RS$ by the following formula:
\begin{align*}
F_f(x,y) : \exists z_1\preceq t_{h_1}(\vec{x}).
\dots \exists z_k \preceq t_{h_k}(\vec{x}).
(F_{h_1}(\vec{x},z_1) &\wedge  \dots
F_{h_k}(\vec{x},z_k) \\
&\wedge F_{g}(z_1,\dots, z_k,y)).
\end{align*}
Indeed, by IH, $F_g,F_{h_1},\dots, F_{h_k}$ are $\Sigma^b_1$-formulas.
Then, also $F_f$ is in $\Sigma^b_1$.
Conditions 1.-2. are proved to hold by slightly modifying standard proofs.

\item \emph{Bounded Recursion.} Assume that $f$
is defined by bounded recursion from $g,h_0,h_1$, and $t$,
so that:
\begin{align*}
f(\vec{x},\eepsilon,\omega) &= g(\vec{x},\omega)  \\
f(\vec{x}, y\bool, \omega) &= h_i(\vec{x},y,f(\vec{x},y,\omega),
\omega)|_{t(\vec{x},y)},
\end{align*}
where $i\in\{0,1\}$ and $\bbool=\zzero$ when $i=0$
and $\bbool = \oone$ when $i=1$.
Let $g,h_0,h_1$ be represented in $\RS$ by, respectively,
the $\Sigma^b_1$-formulas $F_g,F_{h_0},$ and $F_{h_1}$.
Moreover, by Proposition~\ref{prop:Parikh}, there exist
suitable terms 
$t_g,t_{h_0},$ and $t_{h_1}$ such that
the existential part of condition 1.
can be strengthened to its 
``bounded version''.
Then, it can be proved that $f(\vec{x},y)$ is $\Sigma^b_1$-represented
in $\RS$ by the formula below:
\begin{align*}
F_f(x,y) : \ &\exists v\preceq t_g(\vec{x})t_f(\vec{x})(y\times
t(\vec{x},y) t(\vec{x},y)\one\one).(F_{lh}(v,\one\times y\one) \\
&\wedge \exists z\preceq t_g(\vec{x}).
(F_{eval}(v,\epsilon,z) \wedge F_g(\vec{x},z)) \\
&\wedge \forall u\subset y.\exists z.(\tilde{z} \preceq t(\vec{x},
y))(F_{eval}(v,\one\times u,z) \wedge
F_{eval}(v,\one\times u\times,\tilde{z}) \\
&\wedge (u\zero \subseteq y\rightarrow
\exists z_0\preceq t_{h_0}(\vec{x},u,z).
(F_{h_0}(\vec{x},u,z,z_0) \wedge z_{0}|_{t(\vec{x},u)}=\tilde{z})) \\
&\wedge (u\one \subseteq y \rightarrow
\exists z_1 \preceq t_{h_1}(\vec{x},u,z).(F_{h_1}
(\vec{x},u,z,z_1) \wedge
z_1|_{t(\vec{x},u)} = \tilde{z})))),
\end{align*}
where $F_{lh}$ and $F_{eval}$ are $\Sigma^b_1$-formulas
defined as in~\cite{Ferreira88}.
Intuitively, $F_{lh}(x,y)$ states that the number of $\one$s
in the encoding of $x$ is $yy$,
while $F_{eval}(x,y,z)$ is a ``decoding''
formula (strongly resembling G\"odel's $\beta$-formula),
expressing that the ``bit'' encoded in $x$ as its $y$-th
bit is $z$.
Moreover $x\subset y$ is an abbreviation for $x\subseteq y
\wedge x\neq y$.
Then, this formula $F_f$ satisfies all the requirements to
$\Sigma^b_1$-represent in $\RS$ the function $f$,
obtained by bounded recursion form $g,h_0,$ and $h_1$.
In particular, Condition 1. concerning existence and
uniqueness, have already been proved to hold by  Ferreira~\cite{Ferreira88}.
Furthermore, $F_f$ expresses that, given the desired encoding sequence
$v$: (i.) the $\eepsilon$-th bit of $v$ is (the encoding of)
$z'$ such that $F_g(\vec{x},z')$ holds,
where (for IH) $F_g$ is the $\Sigma^b_1$-formula
representing the function $g$,
and (ii.) given that for each $u\subset y$,
$z$ denotes the ``bit'', encoded in $v$ at position
$\one \times u\one$,
then if $u\bool\subseteq y$
(that is, if we are considering the initial substring of $y$
the last bit of which correspond to $\bool$),
then there is a $z_b$ such that $F_{h_b}(\vec{x},y,z,z_b)$,
where $F_{h_b}$
$\Sigma^b_1$-represents the function $f_{h_b}$
and the truncation of $z_b$ at $t(\vec{x},u)$
is precisely $\tilde{z}$, with $b=0$ when $\bool=\zero$
and $b=1$ when $\bool=\one$.
\end{itemize}
\end{proof}

%%% =>
$\mathbf{Theorem~\ref{thm:RStoPOR}.(\Rightarrow).}$
The proof is obtained by adapting that by Cook and Urquhart 
for $IPV^\omega$~\cite{CookUrquhart},
and is structured as follows:
\begin{enumerate}
\itemsep0em
\item We define $\POR^\lambda$
a basic equational theory for a simply typed $\lambda$-calculus
endowed with primitives corresponding to functions of $\POR$. 

\item We introduce a first-order \emph{intuitionistic}
theory $I\POR^\lambda$, which extends $\POR^\lambda$
with the usual predicate calculus as well as an $\NP$-induction
schema.
It is shown that $I\POR^\lambda$ is strong enough to prove
all theorems of $I\RS$.

\item We develop a realizability interpretation of $I\POR^\lambda$
(inside itself), showing that for any derivation of
$\forall x.\exists y.F(x,y)$ (where $F$ is a $\Sigma^b_0$-formula)
one can extract a $\lambda$-term $\termO$ of $\POR^\lambda$,
such that $\forall x.F(x,\termO x)$
is provable in $I\POR^\lambda$.
From this we deduce that every function which is $\Sigma^b_1$-representable in $I\RS$ is in $\POR$.

\item We extend this result to classical $\RS$
showing that any $\Sigma^b_1$-formula provable in
$I\POR^\lambda$ + Excluded Middle (EM, for short)
is already provable in $I\POR^\lambda$.
\end{enumerate}

\paragraph*{The System $\POR^\lambda$.}
We define an equational theory for a simply typed $\lambda$-calculus
augmented with primitives for functions of $\POR$.
Actually, these do not exactly correspond to the ones of $\POR$,
although the resulting function algebra is proved equivalent.

\begin{defn}
\emph{Types of $\POR^\lambda$} are defined by the grammar
below:
$$
\sigma := s \midd \sigma \arrowT \sigma.
$$
\end{defn}

\begin{defn}\label{df:termsPORl}
\emph{Terms of $\POR^\lambda$} are standard, simply
typed $\lambda$-terms plus the constants:
\begin{align*}
\zeroT,\oneT,\epsilon &: s \\
\circ, \textsf{Trunc} &: s \arrowT s \arrowT s \\
\textsf{Tail}, \textsf{Flipcoin} &: s \arrowT s \\
\textsf{Cond} &: s \arrowT s \arrowT s \arrowT s \arrowT s \\
\textsf{Red} &: s \arrowT (s \arrowT s \arrowT s) 
\arrowT (s \arrowT s \arrowT s)
\arrowT (s \arrowT s)
\arrowT s 
\arrowT s.
\end{align*}
\end{defn}
\noindent
Intuitively, $\textsf{Tail}(x)$
computes the string obtained by deleting the first
digit of $x$;
$\textsf{Trunc}(x,y)$ computes the string obtained
by truncating $x$ at the length of $y$;
$\textsf{Cond}(x,y,z,w)$ computes the function
that yields $y$ when $x=\eepsilon$,
$z$ when $x=x'\zzero$,
and $w$ when $x=x'\oone$;
$\textsf{Flipcoin}(x)$ indicates a random $\zzero/\oone$
generator;
$\textsf{Rec}$ is the operator for bounded recursion on notation.

\begin{notation}
We abbreviate $x\circ y$ as $xy$
and being $\textsf{T}$ any constant $\textsf{Tail},
\textsf{Trunc},\textsf{Cond},\textsf{Flipcoin},\textsf{Rec}$
of arity $n$,
we indicate $\textsf{T}\termT_1,\dots, \termT_n$
as $\textsf{T}(\termT_1,\dots, \termT_n)$.
\end{notation}

We also introduce the following abbreviations for composed
functions:
\begin{itemize}
\itemsep0em
\item $\textsf{B}(x):= \textsf{Cond}(x,\epsilon,\zeroT,\oneT)$
denotes the function computing the last digit of $x$.

\item $\textsf{BNeg}(x):= \textsf{Cond}(x,\epsilon,\zeroT,\oneT)$
denotes the function computing the Boolean negation
of $\textsf{B}(x)$.

\item $\textsf{BOr}(x,y):= \textsf{Cond}(\textsf{B}(x),
\textsf{B}(y), \textsf{B}(y),\oneT)$
denotes the function that coerces $x$ and $y$
to Booleans and then performs the OR operation.

\item $\textsf{BAnd}(x,y) := \textsf{Cond}(\textsf{B}(x),
\epsilon,\zeroT,\textsf{B}(y))$ denotes the function
that coerces $x$ and $y$ to Booleans and then
performs the AND operation.

\item $\textsf{Eps}(x) := \textsf{Cond}(x,\oneT,\zeroT,\zeroT)$
denotes the characteristic function of ``$x=\zero$''.

\item $\textsf{Bool}(x):= \textsf{BAnd}(\textsf{Eps}(\textsf{Tail}(x)),
\textsf{BNeg}(\textsf{Eps}(x)))$
denotes the characteristic function of ``$x=\zero \vee x=\one$''.

\item $\textsf{Zero}(x):= \textsf{Cond}(\textsf{Bool}(x),\zeroT,
\textsf{Cond}(x,\zeroT,\zeroT,\oneT),\zeroT)$ denotes the characteristic
function of predicate ``$x=\zero$''.

\item $\textsf{Conc}(x,y)$ denotes the concatenation
function defined as:
$$
\textsf{Conc}(x,\epsilon) := x \ \ \ 
\ \ \
\ \ \ 
\textsf{Conc}(x,y\textsf{b}) := \textsf{Conc}(x,y)\textsf{b},
$$
with $\textsf{b}\in\{\zeroT,\oneT\}$.

\item $\textsf{Eq}(x,y)$ denotes the characteristic
function of ``$x=y$''
and defined by double recursion by the equations below:
\begin{align*}
\textsf{Eq}(\epsilon,\epsilon) &:= \oneT 
\ \ \ 
\ \ \ 
\textsf{Eq}(\epsilon, y\textsf{b}) := \zeroT \\
\textsf{Eq}(x\textsf{b},\epsilon)
= \textsf{Eq}(x\zeroT,y\oneT)
= \textsf{Eq}(x\oneT,y\zeroT)
&:= \zeroT 
\ \ \
\ 
\textsf{Eq}(x\textsf{b},y\textsf{b})
:= \textsf{Eq}(x,y),
\end{align*}
with $\textsf{b}\in\{\zeroT,\oneT\}$.

\item $\textsf{Times}(x,y)$ denotes the function for
self-concatenation, $x,y\mapsto x\ttimes y$
and is defined by the equations below:
$$
\textsf{Times}(x,\epsilon) := \epsilon 
\ \ \ 
\ \ \ 
\ \ \ 
\textsf{Times}(x,y\textsf{b}) :=
\textsf{Conc}(\textsf{Times}(x,y),x),
$$
with $\textsf{b}\in\{\zeroT,\oneT\}$.

\item $\textsf{Sub}(x,y)$ denotes the
initial substring function,
$x,y \mapsto S(x,y)$, and is defined by bounded
recursion as follows:
$$
\textsf{Sub}(x,\epsilon) := \textsf{Eps}(x)
\ \ \ 
\ \ \ 
\ \ \
\textsf{Sub}(x,y\textsf{b}) := \textsf{BOr}(\textsf{Sub}(x,y),
\textsf{Eq}(x,y\textsf{b})),
$$
with $\textsf{b}\in\{\zeroT,\oneT\}$.
\end{itemize}

\begin{defn}
\emph{Formulas of $\POR^\lambda$}
are equations $\termO=\termT$,
where $\termO$ and $\termT$ are terms of type $s$.
\end{defn}

\begin{defn}[Theory $\POR^\lambda$]
Axioms of $\POR^\lambda$ are the following ones:
\begin{itemize}

\item Defining equations for the constants of $\POR^\lambda$:
\begin{align*}
\epsilon x = x\epsilon &= x \ \ \ 
\ \ \ \ \ \
\ \ \
\ \ \
\ \ \ \ \
x(y\textsf{b}) = (xy)\textsf{b} \\
\textsf{Tail}(\epsilon) &= \epsilon 
\ \ \
\ \ \
\ \ \
\ \ \ \
\ \ \  \ \ 
\textsf{Tail}(x\textsf{b}) = x  \\
\textsf{Trunc}(x,\epsilon) =
\textsf{Trunc}(\epsilon,x) &= \epsilon 
\ \ \ \ \ \
\ \ \ \ \ 
\textsf{Trunc}(x\textsf{b}, y\textsf{b}) = \textsf{Trunc}(x,y)\textsf{b} \\
\textsf{Cond}(\epsilon, y,z,w) &= y 
\ \ \ \ \ \ \ 
\textsf{Cond}(x\zeroT, y, z,w) = z 
\ \ \ \ \ \ \ 
\textsf{Cond}(x\oneT, y,z,w) = w \\
\textsf{Bool}(\textsf{Flipcoin}(x)) &= \oneT \\
\textsf{Rec}(x,h_0,h_1,k,\epsilon) &= x 
\ \ \ \ \ 
\textsf{Rec}(x,h_0,h_1,k,y\textsf{b}) =
\textsf{Trunc}(h_b y(\textsf{Rec}(x,h_0,h_1,k,y)),ky),
\end{align*}
where $\textsf{b}\in\{\zeroT,\oneT\}$
and $b\in\{0,1\}$ (correspondingly).

\item The $(\beta)$- and $(\nu)$-axioms:
\begin{align*}
\textsf{C}[(\lambda x.\termO)\termT] &= 
\textsf{C}[\termO\{\termT/x\}] 
\tag{$\beta$} \\
\textsf{C}[\lambda x.\termO x] &= \textsf{C}[\termO].
\tag{$\nu$} \\
\end{align*}
where $\textsf{C}[\cdot]$ indicates a context with a unique
occurrence of the hole $[ \ ]$,
so that $\textsf{C}[\termO]$ denotes the variable capturing 
replacement of $[ \ ]$ by $\termO$ in $\textsf{C}[ \ ]$.
\end{itemize}
The inference rules of $\POR^\lambda$
are the following ones:
\begin{align*}
\termO = \termT &\vdash
\termO = \termT \tag{$\textsf{R1}$} \\
\termO = \termT,
\termT = \termF
&\vdash 
\termO = \termF
\tag{$\textsf{R2}$} \\
\termO = \termT &\vdash
\termF\{\termO/x\} =
\termF\{\termT/x\}
\tag{$\textsf{R3}$} \\
\termO=\termT &\vdash
\termO\{\termF/x\} =
\termT\{\termF/x\}.
\tag{$\textsf{R4}$}
\end{align*}
\end{defn}
\noindent
As predictable,
$\vdash_{\POR^\lambda}\termO = \termT$
expresses that the equation $\termO=\termT$
is deducible using instances of the axioms above 
plus inference rules $(\textsf{R1})-(\textsf{R4})$.
Similarly, given any set $T$ of equations,
$T\vdash_{\POR^\lambda} \termO = \termT$
expresses that the equation $\termO=\termT$
is deducible using instances of the quoted axioms
and rules together with equations from $T$.

%\paragraph{Relating $\POR$ and $\POR^\lambda$.}
For any string $\sigma\in\Ss$,
let $\ooverline{\sigma}:s$ denote the term of 
$\POR^\lambda$ corresponding to it, that is:
$$
\ooverline{\eepsilon} = \epsilon \ \ \ 
\ \ \ 
\ \ \
\ \ \
\ooverline{\sigma \zzero} = \ooverline{\sigma}
\zeroT \ \ \ 
\ \ \ 
\ \ \ 
\ \ \ 
\ooverline{\sigma \oone} = \ooverline{\sigma}
\oneT.
$$
For any $\omega\in\Os$,
let $T_\omega$ be the set of all equations of the form
$\textsf{Flipcoin}(\ooverline{\sigma})
= \ooverline{\omega(\sigma)}$.

\begin{defn}[Provable Representability]
Let $f:\Os \times \Ss^j \to \Ss$.
A term $\termO:s\arrowT \dots \arrowT s$
of $\POR^\lambda$ \emph{provably represents f}
when for all strings $\sigma_1,\dots, \sigma_j,\sigma\in \Ss$
and $\omega\in\Os$,
$$
f(\sigma_1,\dots, \sigma_j,\omega) \ \ \
\text{ iff } \ \ \ 
T_\omega \vdash_{\POR^\lambda} \termO
\ooverline{\sigma_1}\dots \ooverline{\sigma_j}
= \ooverline{\sigma}.
$$
\end{defn}

\begin{example}\label{ex:flip}
The term $\textsf{Flipcoin} : s \arrowT s$
provably represents the query function 
$Q(x,\omega)=\omega(x)$ of $\POR$,
since for any $\sigma \in \Ss$
and $\omega \in \Os$,
$$
\textsf{Flipcoin}(\ooverline{\sigma}) =
\ooverline{\omega(\sigma)}
\vdash_{\POR^\lambda}
\textsf{Flipcoin}(\ooverline{\sigma})
=\ooverline{Q(\sigma,\omega)}.
$$
\end{example}
We consider some of the terms described
above
and show them to provably represent
the intended functions.
Let $Tail(\sigma,\omega)$
indicate the string obtained by chopping
the first digit of $\sigma$,
and $Trunc(\sigma_1,\sigma_2,\omega)
= \sigma_1|_{\sigma_2}$.

\begin{lemma}
Terms $\textsf{Tail}, \textsf{Trunc}$
and $\textsf{Cond}$
provably represent the functions $Tail$,
$Trunc$ and $C$, respectively.
\end{lemma}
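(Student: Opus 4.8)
The plan is to prove the three statements uniformly, following the scheme for provable representability fixed above. A useful preliminary observation is that $Tail$, $Trunc$ and $C$ do not depend on their oracle argument, and the constants $\textsf{Tail},\textsf{Trunc},\textsf{Cond}$ contain no occurrence of $\textsf{Flipcoin}$; hence, for each fixed $\omega$, the axioms in $T_\omega$ concern a constant foreign to the terms at hand and may be carried along harmlessly, so it is enough to reason in $\POR^\lambda$ proper. Concretely, for $\textsf{Cond}$ one must show $C(\sigma_1,\sigma_2,\sigma_3,\sigma_4,\omega)=\tau$ iff $\vdash_{\POR^\lambda}\textsf{Cond}(\ooverline{\sigma_1},\ooverline{\sigma_2},\ooverline{\sigma_3},\ooverline{\sigma_4})=\ooverline{\tau}$, and analogously, with the obvious arities, for $\textsf{Tail}$ and $\textsf{Trunc}$.

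For the left-to-right direction I would prove by a straightforward induction that $\vdash_{\POR^\lambda}\textsf{Cond}(\ooverline{\sigma_1},\ooverline{\sigma_2},\ooverline{\sigma_3},\ooverline{\sigma_4})=\ooverline{C(\sigma_1,\sigma_2,\sigma_3,\sigma_4,\omega)}$, and likewise for the other two constants. For $\textsf{Cond}$ this is a case split on $\sigma_1$: when $\sigma_1=\epsilon$ the defining equation $\textsf{Cond}(\epsilon,y,z,w)=y$ (together with $\ooverline{\epsilon}=\epsilon$ and rule $\textsf{R3}$) delivers $\ooverline{\sigma_2}$, and when $\sigma_1=\rho\mathsf{b}$ we have $\ooverline{\sigma_1}=\ooverline{\rho}\mathsf{b}$ and the matching clause yields $\ooverline{\sigma_3}$ or $\ooverline{\sigma_4}$ as required. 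For $\textsf{Tail}$ the shape is identical ($\sigma=\epsilon$ versus $\sigma=\rho\mathsf{b}$, using $\textsf{Tail}(\epsilon)=\epsilon$ and $\textsf{Tail}(x\mathsf{b})=x$). For $\textsf{Trunc}$ I would induct on the length of the second argument, using the base clauses $\textsf{Trunc}(x,\epsilon)=\textsf{Trunc}(\epsilon,x)=\epsilon$ and stripping the trailing bit from both arguments at each step via $\textsf{Trunc}$'s recursive clause, each step bringing $\textsf{Trunc}(\ooverline{\sigma_1},\ooverline{\sigma_2})$ one notch closer to $\ooverline{\sigma_1|_{\sigma_2}}$. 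Everything here is routine equational bookkeeping with rules $\textsf{R1}$--$\textsf{R4}$.

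The right-to-left direction is where the only genuinely non-mechanical ingredient enters: one needs that $\POR^\lambda+T_\omega$ is consistent, and in fact that distinct numerals are never provably equal. I would obtain this by exhibiting the evident term model $\mathcal{M}_\omega$ --- interpret the base type $s$ by $\Ss$, each constant by the function it is meant to denote, and $\textsf{Flipcoin}$ by $\sigma\mapsto\omega(\sigma)$ --- and checking that every defining equation of $\POR^\lambda$, every equation of $T_\omega$, and the axioms $\beta,\nu$ are valid in $\mathcal{M}_\omega$; since $\ooverline{\tau}$ denotes $\tau$ there, soundness of equational logic yields the claim. (Alternatively one can orient the defining equations together with $\beta,\nu$ into a confluent, terminating rewrite system whose normal forms at type $s$ are exactly the numerals.) Granting this, if $T_\omega\vdash_{\POR^\lambda}\textsf{Cond}(\ooverline{\sigma_1},\ooverline{\sigma_2},\ooverline{\sigma_3},\ooverline{\sigma_4})=\ooverline{\tau}$ then, evaluating both sides in $\mathcal{M}_\omega$, $C(\sigma_1,\sigma_2,\sigma_3,\sigma_4,\omega)=\tau$, which closes the equivalence; the cases of $\textsf{Tail}$ and $\textsf{Trunc}$ are identical. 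I expect the construction and verification of $\mathcal{M}_\omega$ --- carried out once, in enough generality to serve all the representability lemmas of this section --- to be the bulk of the (still essentially standard) work, after which the present lemma is immediate.
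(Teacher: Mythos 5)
The paper states this lemma in the appendix without supplying a proof (it passes directly to Theorem~\ref{thm:provRepr}), so there is no proof of record to compare against. Your proposal supplies a correct and essentially canonical argument. For the forward direction (from $f(\vec\sigma,\omega)=\tau$ to $T_\omega\vdash_{\POR^\lambda}\termO\,\ooverline{\vec\sigma}=\ooverline\tau$), unwinding the defining equations by case analysis or induction on the relevant argument is exactly right. You also correctly isolate the genuinely non-mechanical content, namely the reverse implication, which requires that distinct numerals are not provably equal in $\POR^\lambda+T_\omega$, and your two suggested routes — a set-theoretic model $\mathcal{M}_\omega$ validating the axioms and $T_\omega$, or a confluent terminating rewrite presentation — are both standard and sound. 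The paper's own approach to the harder representability theorem (Theorem~\ref{thm:provRepr}.2) appeals to strong normalization of the simply typed $\lambda$-calculus, which is morally the rewrite-system variant of your argument, so your proposal is in the same spirit.

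One caveat worth flagging, though it is arguably a presentational imprecision in the paper rather than a flaw in your argument: the defining equations for $\textsf{Trunc}$ as printed only provide a recursive clause $\textsf{Trunc}(x\textsf{b},y\textsf{b})=\textsf{Trunc}(x,y)\textsf{b}$ in which \emph{the same} bit $\textsf{b}$ occurs on both arguments. Read literally, there is no clause covering $\textsf{Trunc}(\ooverline{\sigma_1\zzero},\ooverline{\sigma_2\oone})$, so your induction that strips trailing bits from both arguments would get stuck in the mismatched case, even though the intended function $Trunc(\sigma_1,\sigma_2,\omega)=\sigma_1|_{\sigma_2}$ depends only on the \emph{length} of $\sigma_2$. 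You should either read the clause as ranging over two independent metavariables, or add the missing symmetric clause; once that is done your proof goes through as written. Similarly, the informal gloss ``chopping the first digit'' for $Tail$ conflicts with the axiom $\textsf{Tail}(x\textsf{b})=x$, which removes the \emph{last} digit; you correctly follow the axioms rather than the gloss, which is the right call.
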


\begin{theorem}\label{thm:provRepr}
\begin{enumerate}
\itemsep0em
\item Any function $f\in\POR$
is provably represented by a term $\termO\in\POR^\lambda$.

\item For any term $\termO \in \POR^\lambda$,
there is a function $f\in\POR$
such that $f$ is provably repesented by $\termO$.
\end{enumerate}
\end{theorem}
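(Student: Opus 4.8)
The plan is to prove the two directions of Theorem~\ref{thm:provRepr} essentially in parallel with the definition of $\POR^{\lambda}$, using a mutual induction that keeps the equational theory honest with respect to the oracle parameter. For part~1, I would proceed by induction on the construction of $f \in \POR$. The base functions are immediate: $E$ is represented by $\lambda x.\epsilon$, $P^{n}_{i}$ by the appropriate projection $\lambda x_{1}.\dots.\lambda x_{n}.x_{i}$, $\Sf_{b}$ by $\lambda x. x\,\bool$ (using the $\circ$/successor axioms), $\Cf$ by $\textsf{Cond}$ via its defining equations, and — the one case needing care about the oracle — the query function $Q$ by $\textsf{Flipcoin}$, exactly as in Example~\ref{ex:flip}: the point is that $T_{\omega}$ contains precisely the equation $\textsf{Flipcoin}(\ooverline{\sigma}) = \ooverline{\omega(\sigma)}$, so provable representability matches the value $Q(\sigma,\omega)=\omega(\sigma)$ on the nose. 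For the inductive step on composition, if $g, h_{1},\dots,h_{k}$ are provably represented by $\tTerm_{g}, \tTerm_{h_{1}},\dots,\tTerm_{h_{k}}$, then $\lambda \vec{x}.\,\tTerm_{g}(\tTerm_{h_{1}}\vec{x})\cdots(\tTerm_{h_{k}}\vec{x})$ represents $f$, using $(\beta)$ to reduce the application and rules $(\textsf{R2})$--$(\textsf{R4})$ to chain the equations $T_{\omega}$-provably. For bounded recursion on notation, $f$ is represented by a term built from $\textsf{Rec}$ together with a term $\bar{t}$ provably representing the bounding polynomial $t$ (itself obtainable from $\textsf{Conc}$ and $\textsf{Times}$, shown representable in the listed abbreviations); here the defining equation $\textsf{Rec}(x,h_{0},h_{1},k,y\textsf{b}) = \textsf{Trunc}(h_{b}\,y\,(\textsf{Rec}(x,h_{0},h_{1},k,y)),ky)$ matches clause-by-clause the recursion scheme of $\POR$, and one checks by a side induction on the length of the recursion argument $\tau$ that $T_{\omega} \vdash_{\POR^{\lambda}} \tTerm_{f}\,\ooverline{\vec{\sigma}}\,\ooverline{\tau} = \ooverline{f(\vec{\sigma},\tau,\omega)}$, the truncation coinciding with the $|_{t(\vec{x},y)}$ operation.

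For part~2, I would argue by induction on the type-$s$ term $\tTerm \in \POR^{\lambda}$. Since $\POR$ is not closed under the formation of higher-type objects, the right statement to prove by induction is the stronger, type-indexed claim: for every term $\tTerm : s \arrowT \dots \arrowT s \arrowT s$ with free variables among $x_{1},\dots,x_{j}$ of type $s$, there is a $\POR$-function computing, on any substitution of string constants for the variables and any $\omega$, the unique normal form of $\tTerm$. This reduces to showing that $\POR^{\lambda}$ is (weakly) normalizing and that closed type-$s$ normal forms are string constants $\ooverline{\sigma}$, together with the observation that each constant of $\POR^{\lambda}$ denotes a $\POR$-function once its higher-type arguments are themselves given as $\POR$-functions; $\textsf{Rec}$ is handled by unfolding it into bounded recursion on notation, with the $\textsf{Trunc}$ in its defining equation supplying exactly the length bound that $\POR$'s scheme requires. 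The $\beta\nu$-axioms are sound for this interpretation since they are valid equations between the denoted normal forms.

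The main obstacle I anticipate is the $\textsf{Rec}$ case of part~2: one must verify that the bound $ky$ appearing in the $\textsf{Trunc}$ is genuinely a term \emph{explicitly} definable from $\epsilon,\zeroT,\oneT,\circ,\textsf{Times}$ applied to the variables, so that the resulting recursion falls inside the bounded-recursion schema of $\POR$ rather than requiring an unbounded one — this is where the precise syntactic shape of the $k$ argument matters, and why $\POR^{\lambda}$ restricts $\textsf{Rec}$'s fourth argument to terms of type $s$ rather than arbitrary functionals. A secondary subtlety is the handling of oracle queries under the provable-representability relation: one has to check that adding the (consistent, functional) set $T_{\omega}$ of ground equations never collapses distinct string constants, i.e.\ that $T_{\omega} \vdash_{\POR^{\lambda}} \ooverline{\sigma} = \ooverline{\tau}$ implies $\sigma = \tau$, which follows from confluence of the rewrite system obtained by orienting all defining equations (including those of $T_{\omega}$) left-to-right. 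Once these points are secured, the two halves of the theorem combine to give the stated equivalence between $\POR$ and the type-$s$ fragment of $\POR^{\lambda}$, which is what the rest of the $(\Rightarrow)$ direction of Theorem~\ref{thm:RStoPOR} relies on.
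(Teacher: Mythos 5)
Your approach matches the paper's proof essentially step for step: part~1 by induction on the construction of $f \in \POR$, with the query function represented by $\Flipcoin$ relative to $T_{\omega}$ and bounded recursion handled by $\Rec$ together with a side induction on the length of the recursion argument; part~2 via normalization of the simply typed $\lambda$-calculus followed by a case analysis of closed type-$s$ normal forms, unfolding $\Rec$ into $\POR$'s bounded-recursion scheme. One point in your closing remarks is off, however. The fourth argument of $\Rec$ has type $s \arrowT s$, \emph{not} $s$ — it is an arbitrary $\lambda$-term, so it can compute any polytime-size-bounded function, and is certainly not restricted to be an explicit definition from $\epsilon,\zeroT,\oneT,\circ,\Times$. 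Consequently the resolution you gesture at (``this is why $\POR^{\lambda}$ restricts the fourth argument to terms of type $s$'') does not hold: there genuinely is a mismatch between the $\Trunc(\cdot, ky)$ appearing in $\Rec$'s unfolding and $\POR$'s requirement that the outer bound be explicitly definable. The standard repair, which the paper's terse ``exhaustive inspection of normal forms'' leaves implicit, is to fold the truncation at $k(\vec{x},y)$ into the step functions (replace $h_{b}$ by $h_{b}(\vec{x},y,z,\omega)|_{k(\vec{x},y,\omega)}$, legitimate by closure under composition) and then take as outer bound an \emph{explicit} term $t(\vec{x},y)$ bounding the length of $k$'s output, which exists because the inductively obtained $\POR$ interpretation of $k$ has polynomially bounded output length. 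With that correction your argument and the paper's coincide.
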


\begin{proof}[Proof Sketch]
1. The proof is by induction on the structure of $f\in\POR$.

\emph{Base Case.}
Each base function is provably represented.
Let us consider two examples:
\begin{itemize}
\item $f=E$ is provably represented by
$\lambda x.\epsilon$.
For any string $\sigma\in\Ss$, 
$\ooverline{E(\sigma,\omega)}=\ooverline{\eepsilon}
= \epsilon$ and
$\vdash_{\POR^\lambda} (\lambda x.\epsilon)
\ooverline{\sigma}=\epsilon$ is an instance of
$(\beta)$-axiom.
We conclude, $\vdash_{\POR^\lambda}(\lambda x.\epsilon)
\ooverline{\sigma}=\ooverline{E(\sigma,\omega)}$.

\item $f=Q$ is provably represented by the term 
$\textsf{Flipcoin}$, as observed in Example~\ref{ex:flip} above.
\end{itemize}

\emph{Inductive Case.}
Each function defined by composition or bounded
recursion from provably represented functions
is provably represented as well.
We consider bounded recursion.
Let $f$ be defined as:
\begin{align*}
f(\sigma_1,\dots, \sigma_n,\eepsilon,\omega) &=
g(\sigma_1,\dots, \sigma_n,\omega) \\
f(\sigma_1,\dots, \sigma_n,\sigma\bbool,\omega)
&=
h_b(\sigma_1,\dots, \sigma_n,\sigma, f(\sigma_1,
\dots, \sigma_n,\sigma,\omega), \omega) |_{k(\sigma_1,\dots,
\sigma_n,\sigma)}. \\
\end{align*}
By IH, $g,h_0,h_1$ and $k$ are provably represented
by the corresponding terms 
$\textsf{t}_{g}, \textsf{t}_{h_0},
\textsf{t}_{h_1}, \textsf{t}_k$.
So, for any $\sigma_1,\dots, \sigma_{n+2},\sigma \in \Ss$
and $\omega \in \Os$:
\begin{align*}
T_\omega &\vdash_{\POR^\lambda}
\termO_{\textsf{g}} \ooverline{\sigma_1} \dots
\ooverline{\sigma_n} = \ooverline{g(\sigma_1,\dots,
\sigma_n,\omega)}
\tag{$\textsf{t}_g$} \\
T_\omega &\vdash_{\POR^\lambda} \textsf{t}_{h_0}
\ooverline{\sigma_1} \dots \ooverline{\sigma_{n+2}}
= \ooverline{h_0 (\sigma_1,\dots, \sigma_{n+2},\omega)}
\tag{$\textsf{t}_{h_0}$} \\
T_{\omega} &\vdash_{\POR^\lambda}
\text{t}_{h_1} \ooverline{\sigma_1}\dots
\ooverline{\sigma_{n+2}}
= \overline{h_1(\sigma_1,\dots, \sigma_{n+2},\omega} 
\tag{$\textsf{t}_{h_1}$}  \\
T_\omega &\vdash_{\POR^\lambda}
\textsf{t}_k \ooverline{\sigma_1} \dots 
\ooverline{\sigma_n} = \ooverline{k(\sigma_1,\dots,
\sigma_n,\omega)}.
\tag{$\textsf{t}_k$}
\end{align*}
We prove by induction on $\sigma$,
that $T_\omega \vdash_{\POR^\lambda} \termO_f
\ooverline{\sigma_1}\dots
\ooverline{\sigma_n \sigma}
=\ooverline{f(\sigma_1,\dots, \sigma_n,\omega)}$,
where 
$\termO_f = \lambda x_1\dots \lambda x_n
\lambda x.\textsf{Rec}(\termO_g x_1
\dots x_n, \termO_{h_0} x_1\dots x_n,
\termO_{h_1} x_1\dots x_n,
\termO_{k} x_1\dots x_n,x)$.
Then,
\begin{itemize}
\itemsep0em

\item if $\sigma=\eepsilon$,
then $f(\sigma_1,\dots, \sigma_n,\sigma,\omega)
= g(\sigma_1,\dots, \sigma_n,\omega)$.
Using the $(\beta)$-axiom
we deduce,
$\vdash_{\POR^\lambda} \termO_f\ooverline{\sigma_1}
\dots \ooverline{\sigma_1} \dots
\ooverline{\sigma_n\sigma}
= \textsf{Rec}(\termO_g\ooverline{\sigma_1}\dots
\ooverline{\sigma_n},
\termO_{h_0} \ooverline{\sigma_1} \dots
\ooverline{\sigma_n},
\termO_{h_1}\ooverline{\sigma_1}\dots
\ooverline{\sigma_n},
\termO_{k}\ooverline{\sigma_1}\dots
\ooverline{\sigma_n},\ooverline{\sigma})$
and
using the axiom
$\textsf{Rec}(\termO_g x_1 \ddots x_n,
\termO_{h_0} x_1\dots x_n,
\termO_{h_1} x_1\dots x_n,
\termO_{k} x_1\dots x_n,\epsilon=
\termO_g x_1\dots x_n$,
we obtain
$\vdash_{\POR^\lambda}
\termO_f \ooverline{\sigma_1} \dots
\ooverline{\sigma_n\sigma}=
\termO_g\ooverline{\sigma_1} \dots \ooverline{\sigma_n}$,
by ($\textsf{R2}$) and
($\textsf{R3}$).
We conclude using ($\textsf{t}_g$)
together with ($\textsf{R2}$).

\item $\sigma=\sigma_m\zzero$,
then $f(\sigma_1,\dots,\sigma_n,\sigma,\omega)=
h_0(\sigma_1,\dots, \sigma_n,\sigma_m,f(
\sigma_1,\dots, \sigma_n,\sigma,\omega),\omega)|_{k(\sigma_1,\dots, \sigma_n,\sigma_m)}$.
By IH,
suppose $T_\omega\vdash_{\POR^\lambda}
\termO_f\ooverline{\sigma_1}\dots
\ooverline{\sigma_n\sigma_m}
= \ooverline{f(\sigma_1,\dots, \sigma_n,\sigma',\omega)}$.
Thus, using the $(\beta)$-axiom $\termO_f\ooverline{\sigma_1}
\dots \ooverline{\sigma_n\sigma}=
\textsf{Rec}(\termO_g \ooverline{\sigma} \dots
\ooverline{\sigma_n},
\termO_{h_0} \ooverline{\sigma_1} \dots
\ooverline{\sigma_n},
\termO_{h_1}\ooverline{\sigma_1}\dots
\ooverline{\sigma_n},
\termO_{k} \ooverline{\sigma_1}\dots
\ooverline{\sigma_n}, \ooverline{\sigma}$
the axiom $\textsf{Rec}(g,$ $h_0,h_1,k,x\zeroT)$ =
$\textsf{Trunc}(\termO_{h_0}x(\textsf{Rec}(g,h_0,h_1,k,\zeroT)),
kx)$ and IH we deduce,
$
\vdash_{\POR^\lambda} \termO_f \ooverline{\sigma_1}
\dots \ooverline{\sigma_n\sigma}
= \textsf{Trunc} (\termO_{h_0} \ooverline{\sigma_1}
\dots \ooverline{\sigma_n \sigma_m}
\ooverline{f(\sigma_1,\dots, \sigma_n,\sigma_m,\omega)},
\termO_k\ooverline{\sigma_1}\dots \ooverline{\sigma_n})
$,
by ($\textsf{R2}$) and ($\textsf{R3}$).
Using ($\termO_{h_0}$) and
($\termO_k$) we conclude using
($\textsf{R3}$) and ($\textsf{R2}$):
$
\vdash_{\POR^\lambda} \termO \ooverline{\sigma_1}\dots
\ooverline{\sigma_n\sigma}$
= 
$\ooverline{h_0(\sigma_1,\dots, \sigma_n, \sigma_m,}$
$\ooverline{f(\sigma_1,\dots, \sigma_,\sigma_m,\omega)|_{k(\sigma_1,\dots
\sigma_n,\sigma_m)}}.
$

\item the case $\sigma=\sigma_m\oone$ is proved in a 
similar way.
\end{itemize}
2. It is a consequence of the normalization property
for the simply typed $\lambda$-calculus:
a $\beta$-normal term $\termO: s \arrowT \dots
\arrowT s$ cannot contain variables of higher types.
By exhaustively inspecting  possible normal forms,
representability is checked.
\end{proof}

\begin{cor}
For any function $f:\Ss^j \times \Os \to \Ss$,
$f\in \POR$ when $f$ is provably represented
by some term $\termO:s\arrowT \dots \arrowT
s \in \POR^\lambda$.
\end{cor}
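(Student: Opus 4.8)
The plan is to read the statement off Theorem~\ref{thm:provRepr}. The converse implication --- that every $f\in\POR$ is provably represented by some term of $\POR^\lambda$ --- is exactly part~1 of that theorem, so I would only need to treat the direction asserted in the corollary: that provable representability by a $\POR^\lambda$-term forces membership in $\POR$. So suppose $\termO : s\arrowT\dots\arrowT s$ provably represents $f$. By part~2 of Theorem~\ref{thm:provRepr}, there is some $g\in\POR$ that is also provably represented by the \emph{same} term $\termO$. The remaining task is purely extensional: to check that $f=g$.

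To see this, I would note that, by the ``iff'' in the definition of provable representability, the graph of any function provably represented by $\termO$ is forced to equal the fixed relation
\[
R_{\termO}=\{\,(\vec\sigma,\omega,\sigma)\ :\ T_\omega \vdash_{\POR^\lambda} \termO\,\ooverline{\sigma_1}\cdots\ooverline{\sigma_j}=\ooverline{\sigma}\,\}.
\]
Hence both $f$ and $g$ have graph $R_{\termO}$, so $f=g$ and therefore $f\in\POR$. The only point that deserves an explicit line is that $R_{\termO}$ is indeed the graph of a total, single-valued function, i.e.\ that for every $\vec\sigma,\omega$ there is exactly one $\sigma$ with $T_\omega\vdash_{\POR^\lambda}\termO\,\ooverline{\sigma_1}\cdots\ooverline{\sigma_j}=\ooverline{\sigma}$; but this is already built into the conclusion of Theorem~\ref{thm:provRepr}.2, whose very statement presupposes that such a $g$ exists, so nothing further is required.

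I do not expect any genuine obstacle at this level: the corollary is just a bookkeeping consequence of Theorem~\ref{thm:provRepr}, combining its two halves into a single equivalence between $\POR$ and the class of functions provably represented in $\POR^\lambda$. Whatever delicacy there is lives inside the proof of part~2 of that theorem --- where one uses normalization of the simply typed $\lambda$-calculus to argue that a $\beta$-normal term of type $s\arrowT\dots\arrowT s$ contains no higher-type subterms, and then recovers a $\POR$-function by a finite inspection of the possible normal forms --- and that argument has already been carried out and can be invoked as given.
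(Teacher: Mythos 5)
Your argument is correct and is exactly the intended derivation: the paper states this corollary without proof as an immediate consequence of Theorem~\ref{thm:provRepr}.2, and your observation that the ``iff'' in the definition of provable representability forces any two functions represented by the same term $\termO$ to have the same graph is the right way to make the extensional identification $f=g$ explicit. Nothing is missing.
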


\emph{The Theory $I\POR^\lambda$.}
We introduce a first-order \emph{intuitionistic}
theory $I\POR^\lambda$,
which extends $\POR^\lambda$ with basic predicate calculus
and a restricted induction principle.
We also define $I\RS$ as a variant of $\RS$
having the intuitionistic predicate calculus as its logical basis.
All theorems of $\POR^\lambda$ and $I\RS$
are provable in $I\POR^\lambda$.
In fact, $I\POR^\lambda$ can be seen as an extension
of $\POR^\lambda$ and provides a language to
associate derivations in $I\RS$ with poly-time
computable functions, corresponding to terms
of $I\POR^\lambda$.

The language of $I\POR^\lambda$ extends that of 
$\POR^\lambda$
with (a translation for) all expressions of $\RS$.
In particular, the grammar for terms of $I\POR^\lambda$
is precisely the same as that of Definition~\ref{df:termsPORl},
while that for formulas is defined below.

\begin{defn}
\emph{Formulas of $I\POR^\lambda$}
are defined as follows:
i. all equations of $\POR^\lambda$ $\termO=\termT$,
are formulas of $I\POR^\lambda$;
ii. for any (possibly open) $\POR^\lambda$-term 
$\termO,\termT$, $\termO\subseteq \termT$ and
$\Flip(\termO)$ are formulas of $I\POR^\lambda$;
iii. formulas of $I\POR^\lambda$ are closed under
$\wedge,\vee,\rightarrow,\forall,\exists$.
\end{defn}
We adopt the standard conventions:
$\bot:= \zeroT=\oneT$ and
$\neg F:= F\rightarrow \bot$.
The notions of $\Sigma^b_0$- and $\Sigma^b_1$-formula
of $I\POR^\lambda$ are precisely as those for $\RS$.

\begin{remark}
Any formula of $\RS$ can be seen as a formula
of $I\POR^\lambda$, where
each occurrence of $\zero$ is replaced by $\zeroT$,
of $\one$ by $\oneT$,
of $\conc$ by $\circ$ (usually omitted),
of $\times$ by $\textsf{Times}$.
In the following, we assume that any formula of
$\RS$ is a formula of $I\POR^\lambda$,
modulo the substitutions defined above.
\end{remark}

\begin{defn}
The axioms of $I\POR^\lambda$ include
standard rules of the intuitionistic first-order predicate
calculus, usual rules for the equality symbol,
plus the following axioms:
1. all axioms of $\POR^\lambda$,
2. $x\subseteq y \leftrightarrow \textsf{Sub}(x,y)=\oneT$,
3. $x=\epsilon \vee x=\textsf{Tail}(x)\zeroT \vee
x=\textsf{Tail}(x)\oneT$,
4. $\zeroT=\oneT \rightarrow x=\epsilon$,
5. $\textsf{Cond}(x,y,z,w) = w' \leftrightarrow
(x=\epsilon \wedge w'=y) \vee
(x=\textsf{Tail}(x) \zeroT \wedge w'=z) \vee
(x=\textsf{Tail}(x)\oneT \wedge w'=w)$,
6. $\Flip(x) \leftrightarrow \textsf{Flipcoin}(x)=\oneT$,
7. any formula of the form:
$$
(F(\epsilon) \wedge \forall x.(F(x) \rightarrow F(x\zero))
\wedge \forall x.(F(x) \rightarrow F(x\one)))\rightarrow
\forall y.F(y),
$$
where $F$ is of the form
$\exists z \preceq \termO.\termT=\termF$,
with $\termO$ containing only first-order open variables. 
\end{defn}

\begin{notation}
We refer to a formula of the form
$\exists z\preceq \termO.\termT = \termF$,
with $\termO$ containing only first-order open variables,
as an \emph{$\NP$-predicate}.
\end{notation}

Now that $I\POR^\lambda$ has been introduced we show
that all theorems of both $\POR^\lambda$ and 
the intuitionistic version of $\RS$ are derived in it.
First, Proposition~\ref{prop:PORtoIPOR}
is established inspecting all rules of $\POR^\lambda$.

\begin{prop}\label{prop:PORtoIPOR}
Any theorem of $\POR^\lambda$ is a theorem of 
$I\POR^\lambda$.
\end{prop}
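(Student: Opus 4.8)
The plan is to show that each axiom and each inference rule of $\POR^\lambda$ is available in $I\POR^\lambda$, so that any $\POR^\lambda$-derivation can be replayed verbatim (or with only trivial embellishments) inside $I\POR^\lambda$. Since $I\POR^\lambda$ is defined precisely so as to contain all axioms of $\POR^\lambda$ as its first group of axioms (axiom group~1 in the definition), the axioms require nothing to check. The real content is therefore in the \emph{inference rules}: $\POR^\lambda$ is an equational calculus with rules $(\textsf{R1})$--$(\textsf{R4})$ expressing, respectively, that an assumed equation may be used, transitivity of $=$, and the two congruence/substitution rules (replacement of a variable by equal terms inside a term context, and instantiation of a provable equation). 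I would proceed rule by rule.

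First I would observe that $I\POR^\lambda$ is built on top of the standard intuitionistic first-order predicate calculus \emph{together with the usual rules for the equality symbol}; in particular it has reflexivity, symmetry, transitivity of $=$, and the Leibniz/congruence schema for substituting equals into arbitrary terms and formulas. From these, $(\textsf{R1})$ is just the use of a hypothesis; $(\textsf{R2})$ is transitivity; $(\textsf{R3})$ — from $\termO=\termT$ infer $\termF\{\termO/x\}=\termF\{\termT/x\}$ — is an instance of the congruence rule for equality applied to the term $\termF$; and $(\textsf{R4})$ — from $\termO=\termT$ infer $\termO\{\termF/x\}=\termT\{\termF/x\}$ — is the rule of substitution for free variables, which is admissible in first-order logic (a derivation remains a derivation after a uniform capture-avoiding substitution of a term for a free variable, using here that $x$ ranges over type $s$ and $\termF:s$). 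Thus each single-step inference of $\POR^\lambda$ is reproduced by a bounded number of steps in $I\POR^\lambda$.

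It then follows by a straightforward induction on the length of a $\POR^\lambda$-derivation that if $\vdash_{\POR^\lambda}\termO=\termT$ then $\vdash_{I\POR^\lambda}\termO=\termT$: the base cases are the axioms (already present) and the inductive step appeals to the rule-by-rule simulation just described. I do not expect any genuine obstacle here; the only point requiring a little care is purely bureaucratic, namely making sure that the notion of ``term context'' $\textsf{C}[\cdot]$ used in the $(\beta)$- and $(\nu)$-axioms of $\POR^\lambda$ is interpreted identically in $I\POR^\lambda$ (so that those axioms really are among axiom group~1) and that the equality-congruence schema of $I\POR^\lambda$ is stated for \emph{all} terms and formulas, not merely for atomic ones — which is the standard formulation. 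Granting this, the proposition is immediate, and the same bookkeeping will be reused when we next show (in the companion result) that all theorems of the intuitionistic theory $I\RS$ are likewise provable in $I\POR^\lambda$.
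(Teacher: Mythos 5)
Your proposal is correct and follows exactly the route the paper takes: the paper dispatches this proposition with the single remark that it is ``established inspecting all rules of $\POR^\lambda$'', which is precisely your rule-by-rule check that the axioms of $\POR^\lambda$ are included in $I\POR^\lambda$ by definition and that $(\textsf{R1})$--$(\textsf{R4})$ are subsumed by the intuitionistic predicate calculus with the usual equality rules. Your write-up simply makes explicit the induction on derivations that the paper leaves implicit.
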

\noindent
Then, we show that every theorem of $I\RS$
is derivable in $I\POR^\lambda$.
To do so, we prove a few properties concerning
$I\POR^\lambda$.
In particular, its recursion schema differs from
that of $I\RS$ as dealing with formulas of the
form $\exists y\preceq \termO.\termT=\termF$
and not with all the $\Sigma^b_1$-ones.
The two schemas are related by Proposition~\ref{prop:IPORprop}
proved by induction on the structure of formulas.

\begin{prop}\label{prop:IPORprop}
For any $\Sigma^b_0$-formula $F(x_1,\dots, x_n)$
in $\Lpw$, there is a term $\termO_F(x_1,\dots, x_n)$
of $\POR^\lambda$ such that:
1. $\vdash_{I\POR^\lambda} F \leftrightarrow \termO_F=\zeroT$,
2. $\vdash_{I\POR^\lambda} \termO_F=\zeroT \vee
\termO_F =\oneT$.
\end{prop}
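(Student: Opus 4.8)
The plan is to argue by induction on the structure of the $\Sigma^{b}_{0}$-formula $F$, at each step exhibiting $\termO_F$ and sketching the $I\POR^\lambda$-derivations of the two clauses. Throughout I identify an $\Lpw$-term $t$ with its $\POR^\lambda$-translation $\termO_t$ ($\zero\mapsto\zeroT$, $\one\mapsto\oneT$, $\conc\mapsto\circ$, $\times\mapsto\Times$, as in the Remark above), so that the atomic $\Lpw$-formulas $t=s$, $t\subseteq s$, $\Flip(t)$ become, in $I\POR^\lambda$, the equation $\termO_t=\termO_s$ and --- by axioms 2 and 6 --- the equations $\Sub(\termO_t,\termO_s)=\oneT$ and $\Flipcoin(\termO_t)=\oneT$. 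I build every characteristic term through the uniform pattern $\Cond(c,\oneT,\oneT,\zeroT)$, which turns an ``$\oneT$-codes-truth'' test $c$ into a term equal to $\zeroT$ when $c=\oneT$ and to $\oneT$ when $c=\zeroT$; thanks to the defining equations of $\Cond$ and to axiom 3, clause (2) of the statement is then automatic at every step, and $\termO_F$ retains exactly the (first-order) free variables of $F$.

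Two auxiliary facts come first, each proved by the $\NP$-induction of axiom 7 (legitimately, since equations are trivially of the permitted shape $\exists z\preceq\termO.\,\termT=\termF$, with a dummy bound): (i) $\vdash_{I\POR^\lambda}\Eq(x,y)=\oneT\leftrightarrow x=y$ and $\vdash_{I\POR^\lambda}\Eq(x,y)=\zeroT\vee\Eq(x,y)=\oneT$, by a double induction on $x$ and $y$ using axioms 3--5; (ii) $\vdash_{I\POR^\lambda}\Sub(x,y)=\zeroT\vee\Sub(x,y)=\oneT$, by induction on $y$. The atomic cases are then immediate with $\termO_{t=s}:=\Cond(\Eq(\termO_t,\termO_s),\oneT,\oneT,\zeroT)$, $\termO_{t\subseteq s}:=\Cond(\Sub(\termO_t,\termO_s),\oneT,\oneT,\zeroT)$ (axiom 2), $\termO_{\Flip(t)}:=\Cond(\Flipcoin(\termO_t),\oneT,\oneT,\zeroT)$ (axiom 6 and the defining axiom $\BoolT(\Flipcoin(x))=\oneT$). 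For the propositional connectives, since $\termO_G$ (and $\termO_H$) are provably in $\{\zeroT,\oneT\}$, set $\termO_{\lnot G}:=\Cond(\termO_G,\oneT,\oneT,\zeroT)$, $\termO_{G\wedge H}:=\Cond(\termO_G,\oneT,\termO_H,\oneT)$, $\termO_{G\vee H}:=\Cond(\termO_G,\zeroT,\zeroT,\termO_H)$, and read off clause (1) by case analysis on $\termO_G$ through axiom 3 and the $\Cond$-equations.

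The one delicate case is the subword quantifier $F\equiv\exists x\subseteq^{*}t.\,G(x,\vec y)$; the universal one is handled as $\lnot\exists x\subseteq^{*}t.\,\lnot G$ via the $\lnot$- and $\exists$-cases. Since every factor of $t$ is a suffix of some prefix of $t$, I would define $\termO_F$ by a \emph{double} bounded recursion: an outer $\Rec$ over $t$ running through the prefixes of $t$ and, for each prefix, an inner $\Rec$ using $\Tail$ to peel bits off the front and visit its suffixes, folding with $\BOr$ the values of the (translated-and-converted) test $\termO_G(x,\vec y)$ over all factors $x$ --- the constant truncation bound in $\Rec$ is innocuous here since the accumulator stays Boolean. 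Packaged with $\lambda$-abstraction and the $\Rec$ constant this is again a term of $\POR^\lambda$, mirroring the treatment of sharply bounded quantifiers in Ferreira's PTCA~\cite{Ferreira88}; clause (2) follows once more from $\BOr$ being provably Boolean-valued. Clause (1) is proved by an $\NP$-induction on $t$ matching the outer recursion, whose step calls a further $\NP$-induction showing that the inner fold computes the disjunction over suffixes; after replacing $G$ by $\termO_G=\zeroT$ using the induction hypothesis for $G$ and unfolding the abbreviation defining $\subseteq^{*}$ (with axiom 2), the formulas fed to the schema collapse to the permitted form $\exists z\preceq\termO.\,\termT=\termF$.

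I expect this subword-quantifier case to be the main obstacle --- not the choice of term, which is routine, but certifying its correctness \emph{inside} $I\POR^\lambda$, whose induction is confined to $\NP$-predicates: one must phrase every induction formula so that the object-level ``$\exists x\subseteq^{*}(\cdot)$'' and ``prefix/suffix of $(\cdot)$'' become bounded existentials over a bare equation. A secondary, purely bookkeeping point is to keep the ``$\zeroT$ codes truth'' convention uniform across the abbreviated Boolean operations (whose informal descriptions are not consistent about this), which is why above I re-express every characteristic term directly via $\Cond$ rather than chaining the pre-defined $\BOr,\BAnd,\BNeg$.
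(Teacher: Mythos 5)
Your structural induction matches the approach the paper gestures at (it gives no detail beyond ``proved by induction on the structure of formulas''), and the $\Cond$-based construction of characteristic terms, the conversion from the helpers' $\oneT$-codes-truth to the proposition's $\zeroT$-codes-truth, and the double bounded recursion to enumerate factors of $t$ are all reasonable choices of implementation.

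One step deserves a flag, however. You handle $\forall x\subseteq^{*}t.G$ by taking its characteristic term to be that of $\lnot\exists x\subseteq^{*}t.\lnot G$, ``via the $\lnot$- and $\exists$-cases.'' But $I\POR^\lambda$ is an \emph{intuitionistic} theory, and $\forall x.(A(x)\to G(x))$ is not intuitionistically equivalent in general to $\forall x.\lnot(A(x)\wedge\lnot G(x))$. The direction you need, $\lnot(A\wedge\lnot G)\to(A\to G)$, requires the derivability of $G\vee\lnot G$. This \emph{is} available — from the induction hypothesis for $G$, clause~(2) of the statement gives $\termO_G=\zeroT\vee\termO_G=\oneT$, and since $\bot$ is $\zeroT=\oneT$ this combined with clause~(1) yields $\vdash_{I\POR^\lambda}G\vee\lnot G$ — but it is exactly the subtle point, and you should either say so explicitly or, more cleanly, handle the universal subword quantifier directly with a $\BAnd$-fold symmetric to your $\BOr$-fold and avoid the detour. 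A secondary, related caveat: your auxiliary facts about $\Eq$ and $\Sub$ are stated as a biconditional and a disjunction, which are not literally of the shape $\exists z\preceq\termO.\termT=\termF$ that axiom~7 demands; they have to be recast as genuine equations before $\NP$-induction applies (e.g.\ proving $\BoolT(\Eq(x,y))=\oneT$ rather than the bare disjunction, and obtaining $x=y\to\Eq(x,y)=\oneT$ from an $\NP$-induction on $\Eq(x,x)=\oneT$, with the converse handled by case analysis via axiom~3).
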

\noindent
This leads us to the following corollary and
to Theorem~\ref{theorem:IRStoIPOR},
realting $I\POR^\lambda$ and $I\RS$.

\begin{cor}
i. For any $\Sigma^b_0$-formula $F$,
$\vdash_{I\POR^\lambda} F\vee \neg F$;
ii. For any closed $\Sigma^b_0$-formula of $\Lpw$
$F$ and $\omega \in \Os$,
either $T_\omega \vdash_{I\POR^\lambda} F$
or $T_\omega \vdash_{I\POR^\lambda} \neg F$.
\end{cor}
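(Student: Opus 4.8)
The plan is to read off both items directly from Proposition~\ref{prop:IPORprop}, which associates to every $\Sigma^b_0$-formula $F(\vec x)$ of $\Lpw$ a term $\termO_{F}(\vec x)$ of $\POR^\lambda$ such that $\vdash_{I\POR^\lambda} F \leftrightarrow \termO_{F}=\zeroT$ and $\vdash_{I\POR^\lambda} \termO_{F}=\zeroT \vee \termO_{F}=\oneT$.

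For item (i), fix $F$ and the associated $\termO_{F}$, and argue inside $I\POR^\lambda$ by cases on the provable disjunction $\termO_{F}=\zeroT \vee \termO_{F}=\oneT$. If $\termO_{F}=\zeroT$, then $F$ follows from the equivalence $F \leftrightarrow \termO_{F}=\zeroT$, hence $F\vee\neg F$. If $\termO_{F}=\oneT$, assume moreover $F$; then $\termO_{F}=\zeroT$, so by transitivity of equality $\zeroT=\oneT$, i.e.\ $\bot$; thus $\neg F$, and again $F\vee\neg F$. By $\vee$-elimination we get $\vdash_{I\POR^\lambda} F\vee\neg F$ (as an open formula, or, if preferred, after universal closure).

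For item (ii), let $F$ be closed, so $\termO_{F}$ is a \emph{closed} term of type $s$; note it may still mention $\textsf{Flipcoin}$, hence need not be a ground numeral. By the second item of Theorem~\ref{thm:provRepr} (which rests on normalisation of the simply typed $\lambda$-calculus), $\termO_{F}$ provably represents some $f\in\POR$ of the form $f:\Os\to\Ss$; setting $\sigma:=f(\omega)$ we obtain $T_\omega\vdash_{\POR^\lambda}\termO_{F}=\ooverline{\sigma}$, hence $T_\omega\vdash_{I\POR^\lambda}\termO_{F}=\ooverline{\sigma}$ by Proposition~\ref{prop:PORtoIPOR}. Now $\termO_{F}=\zeroT\vee\termO_{F}=\oneT$ is a theorem of $I\POR^\lambda$, and the standard first-order model of $\Lpw$ with $\Flip$ interpreted by $\omega$ validates $I\POR^\lambda$ (its defining equations, the $\beta$- and $\nu$-axioms, and the $\NP$-induction schema all hold there) while interpreting $\termO_{F}$ as $\sigma$; therefore $\sigma\in\{0,1\}$. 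If $\sigma=0$, then $\ooverline{\sigma}=\epsilon\zeroT=\zeroT$ is provable, so $T_\omega\vdash_{I\POR^\lambda}\termO_{F}=\zeroT$, whence $T_\omega\vdash_{I\POR^\lambda}F$. If $\sigma=1$, then $T_\omega\vdash_{I\POR^\lambda}\termO_{F}=\oneT$; combined with $\termO_{F}=\zeroT$ this gives $\zeroT=\oneT=\bot$, so $T_\omega\vdash_{I\POR^\lambda}\neg(\termO_{F}=\zeroT)$, i.e.\ $T_\omega\vdash_{I\POR^\lambda}\neg F$.

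I do not expect a deep obstacle: both items are bookkeeping on top of Proposition~\ref{prop:IPORprop}. The only point requiring some care is (ii), where one must pass from the purely \emph{syntactic} equivalence $F\leftrightarrow\termO_{F}=\zeroT$ to an actual decision of $F$ relative to $\omega$; this needs the representation theorem to turn $\termO_{F}$ (modulo $T_\omega$) into a concrete string, together with a short soundness remark ensuring that string has length one. An alternative, avoiding Theorem~\ref{thm:provRepr}, would be a direct induction on the closed $\Sigma^b_0$-formula $F$, evaluating every subterm and atomic subformula under $T_\omega$; but the route through Proposition~\ref{prop:IPORprop} is shorter.
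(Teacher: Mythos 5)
Your argument is correct and follows exactly the route the paper intends: the corollary is stated in the paper without an explicit proof, as an immediate consequence of Proposition~\ref{prop:IPORprop}, and your case analysis on $\termO_F=\zeroT\vee\termO_F=\oneT$ for (i), together with the evaluation of the closed term $\termO_F$ under $T_\omega$ via Theorem~\ref{thm:provRepr}.2 for (ii), is a faithful and correct fleshing-out of that. The only remark worth adding is that the soundness detour in (ii) is optional: even if the basic axioms forced $\sigma\notin\{\zzero,\oone\}$ to yield $T_\omega\vdash\bot$, the disjunction would hold trivially, so nothing is lost either way.
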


\begin{theorem}\label{theorem:IRStoIPOR}
Any theorem of $I\RS$ is a theorem of $I\POR^\lambda$.
\end{theorem}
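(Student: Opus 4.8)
The plan is to turn any derivation in $I\RS$ into one in $I\POR^\lambda$ by keeping the logical skeleton fixed and re-proving the leaves. After translating $\Lpw$-formulas into $I\POR^\lambda$-formulas as in the Remark above (rewriting $\zero,\one,\conc,\times$ as $\zeroT,\oneT,\circ,\textsf{Times}$ at the term level and leaving the connectives, the quantifiers, and the atoms $=$, $\subseteq$, $\Flip$ untouched), the translation is a homomorphism on formula structure; since both theories are built over intuitionistic first-order logic with the usual equality rules, it then suffices to prove in $I\POR^\lambda$ the translation of each non-logical axiom of $I\RS$, namely of the basic axioms of $\RS$ and of every instance of the $\Sigma^b_1$-induction schema.

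First I would dispatch the basic axioms. The equations $x\epsilon=x$ and $x(y\bool)=(xy)\bool$ are literally defining equations of $\POR^\lambda$, hence axioms of $I\POR^\lambda$; the two axioms for $\times$ translate into equations provable from the recursion equations of the $\POR^\lambda$-terms $\textsf{Times}$ and $\textsf{Conc}$ (available in $I\POR^\lambda$ since it extends $\POR^\lambda$, cf.\ Proposition~\ref{prop:PORtoIPOR}); the two axioms for $\subseteq$ follow from the bridging axiom $x\subseteq y\leftrightarrow\textsf{Sub}(x,y)=\oneT$ together with the recursive behaviour of $\textsf{Sub},\textsf{Eps},\textsf{BOr},\textsf{Eq}$; and the injectivity and disjointness axioms $x\bool=y\bool\to x=y$, $x\zero\neq y\one$, $x\bool\neq\epsilon$ are obtained by applying respectively $\textsf{Tail}$, $\textsf{B}$, $\textsf{Eps}$ to both sides of the assumed equality and invoking their defining equations together with axiom~4 ($\zeroT=\oneT\to x=\epsilon$), recalling that $\bot$ abbreviates $\zeroT=\oneT$. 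This part is essentially mechanical once the elementary equational facts about the abbreviated terms are recorded.

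The real work is the induction schema: in $I\RS$ it ranges over all $\Sigma^b_1$-formulas, whereas $I\POR^\lambda$ postulates induction (axiom~7) only for $\NP$-predicates, i.e.\ formulas $\exists z\preceq\termO.\,\termT=\termF$ whose bound $\termO$ is built from first-order open variables only. I would therefore show that, provably in $I\POR^\lambda$ and uniformly in all free parameters, every $\Sigma^b_1$-formula $B$ is equivalent to such an $\NP$-predicate. By Proposition~\ref{prop:IPORprop}, the $\Sigma^b_0$-matrix $G$ of $B$ is $I\POR^\lambda$-provably equivalent to an equation $\termO_G=\zeroT$ for a $\POR^\lambda$-term $\termO_G$ — this is exactly where the polynomial size of the subword-quantifier search space is needed, so that $\termO_G$ stays inside $\POR^\lambda$. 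Writing $B$ as $\exists x_1\preceq s_1\dots\exists x_n\preceq s_n.\,G$, I would then contract the whole block of bounded existentials into a single one, coding the tuple $\langle x_1,\dots,x_n\rangle$ by a $\POR^\lambda$-definable pairing function, so that $B$ becomes an $\NP$-predicate $\exists z\preceq s.\,\termT=\termF$, where $s$ is built from the length bounds $s_i$ and the quantifier-free equation $\termT=\termF$ decodes $z$, checks each bound $x_i\preceq s_i$, and evaluates $\termO_G$. The facts needed to justify this rewriting inside $I\POR^\lambda$ — the decode-of-encode identities, a length bound on $\langle x_1,\dots,x_n\rangle$, and the resulting bi-implication with $B$ — are themselves provable there. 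Applying axiom~7 to the resulting $\NP$-predicate and transporting the conclusion back along the equivalence yields the induction axiom for $B$; together with the basic axioms this gives the theorem.

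The hard part will be precisely this reduction of $\Sigma^b_1$-induction to $\NP$-predicate induction: one must make sure that the contracted bound $s$ contains only first-order open variables, as axiom~7 demands; that all the coding and decoding properties of the pairing function are themselves provable in $I\POR^\lambda$, which in turn may call on its own restricted induction; and that every equivalence used holds uniformly in the parameters of the induction formula. Checking the basic axioms and transferring the logical and equality apparatus is, by comparison, routine.
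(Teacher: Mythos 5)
Your proposal is correct and follows essentially the same route as the paper: dispatch the basic axioms via the defining equations of $\POR^\lambda$ and the bridging axioms, and reduce $\Sigma^b_1$-induction to the $\NP$-induction schema by applying Proposition~\ref{prop:IPORprop} to the $\Sigma^b_0$-matrix. The only difference is that you make explicit the contraction of the block of bounded existentials into a single one via a provably-correct pairing function, a step the paper's sketch leaves implicit but which is indeed needed since axiom~7 is stated for a single bounded existential.
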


\begin{proof}
First, observe that, as a consequence of Proposition~\ref{prop:IPORprop},
for any $\Sigma^b-1$-formula $F=\exists x_1\preceq t_1.
\dots \exists x_n\preceq t_n.G$ in $\Lpw$,
$
\vdash_{I\POR^\lambda} F \leftrightarrow \exists x_1\preceq
\termO_1.\dots \exists x_n\preceq \termO_n.\termO_G=\zeroT,
$
any instance of the $\Sigma^b_1$-recursion schema
of $I\RS$ is derivable in $I\POR^\lambda$
from the $\NP$-induction schema.
Then, we conclude noticing that basic
axioms of $I\RS$
are provable in $I\POR^\lambda$.
\end{proof}

\begin{cor}\label{cor:Sigma0b}
For any closed $\Sigma^b_0$-formula $F$
and $\omega\in \Os$, either $T_\omega \vdash_{I\POR^\lambda}
F$ or $T_\omega \vdash_{I\POR^\lambda}\neg F$.
\end{cor}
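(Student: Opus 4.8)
The plan is to reduce the claim to evaluating, modulo $T_\omega$, a single closed term of type $s$, and then to read off provability of $F$ or of $\neg F$ from the outcome. First I would invoke Proposition~\ref{prop:IPORprop}, whose proof extends without difficulty to $\Flip$-atoms — axiom~6 rewrites $\Flip(t)$ into the equation $\Flipcoin(t)=\oneT$ — and hence to all closed $\Sigma^b_0$-formulas of $I\POR^\lambda$, the bounded and subword quantifiers being absorbed into $\Rec$ exactly as in the $\Lpw$ case. This yields, for the given closed $F$, a closed term $\termO_F:s$ of $\POR^\lambda$ with $\vdash_{I\POR^\lambda}F\leftrightarrow\termO_F=\zeroT$ and $\vdash_{I\POR^\lambda}\termO_F=\zeroT\vee\termO_F=\oneT$.

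The key lemma I would then isolate is: for every closed term $\termO:s$ of $\POR^\lambda$ and every $\omega\in\Os$ there is a string $\tau$ with $T_\omega\vdash_{\POR^\lambda}\termO=\ooverline{\tau}$. I would prove it by orienting the defining equations of the constants left to right, adding $\beta$-reduction together with the rewrites $\Flipcoin(\ooverline{\sigma})\to\ooverline{\omega(\sigma)}$ induced by $T_\omega$, and checking that the resulting rewrite relation is terminating on closed terms of type $s$: $\beta$-reduction terminates by strong normalization of the simply typed $\lambda$-calculus (already used in the proof of Theorem~\ref{thm:provRepr}), each unfolding of $\Rec$ strictly shortens its string argument, and the remaining rewrites decrease term size. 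A closed term of type $s$ that is in normal form must, by inspection of the term grammar, be a numeral $\ooverline{\tau}$ (it is neither an abstraction, nor a higher-type variable, nor a redex), which gives the lemma; confluence, if one also wants $\tau$ to be unique, follows from local confluence plus termination.

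To pin down the value and conclude, interpret the type $s$ by $\Ss$, each constant by its intended operation on strings, and $\Flipcoin$ by $\omega$. This is a (classical) model of $\POR^\lambda\cup T_\omega$, hence by soundness it satisfies $\termO_F=\zeroT\vee\termO_F=\oneT$; since it also satisfies $\termO_F=\ooverline{\tau}$ for the $\tau$ of the lemma, the numeral $\ooverline{\tau}$ must be $\zeroT$ or $\oneT$, i.e.\ $T_\omega\vdash_{\POR^\lambda}\termO_F=\zeroT$ or $T_\omega\vdash_{\POR^\lambda}\termO_F=\oneT$. In the first case, Proposition~\ref{prop:PORtoIPOR} together with $\vdash_{I\POR^\lambda}F\leftrightarrow\termO_F=\zeroT$ gives $T_\omega\vdash_{I\POR^\lambda}F$. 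In the second case, adjoining $F$ forces $\termO_F=\zeroT$, hence $\zeroT=\oneT$, i.e.\ $\bot$; therefore $T_\omega\vdash_{I\POR^\lambda}\neg F$. This is exactly the asserted dichotomy.

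The step I expect to be the main obstacle is the evaluation lemma: making the termination argument go through uniformly in the presence of the bounded-recursion constant $\Rec$ and of the (semantically undetermined) constant $\Flipcoin$, and turning ``reduces to a numeral'' into ``provably equal to that numeral in $\POR^\lambda\cup T_\omega$''. For the narrower statement restricted to closed $\Sigma^b_0$-formulas \emph{of $\Lpw$}, this work is essentially already done — it is part~(ii) of the corollary stated just before Theorem~\ref{theorem:IRStoIPOR} — so in that case Corollary~\ref{cor:Sigma0b} follows at once, using Theorem~\ref{theorem:IRStoIPOR} to make available the $\Lpw$-translations of $I\RS$-theorems; the argument above is what is needed to extend the dichotomy to \emph{all} closed $\Sigma^b_0$-formulas of $I\POR^\lambda$, including those built from arbitrary $\POR^\lambda$-terms.
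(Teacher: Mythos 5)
Your proof is correct and follows the route the paper leaves implicit: Corollary~\ref{cor:Sigma0b} is essentially a restatement of part~(ii) of the unnumbered corollary following Proposition~\ref{prop:IPORprop}, for which the paper supplies no argument, so your write-up is exactly the missing justification. Your decomposition --- reduce $F$ to the closed Boolean term $\termO_F$ via Proposition~\ref{prop:IPORprop}, show that $T_\omega$ decides the equation $\termO_F=\zeroT$, and read off $F$ or $\neg F$ --- is the intended one, and the final step (deriving $\neg F$ from $T_\omega\vdash\termO_F=\oneT$ via $\vdash F\to\termO_F=\zeroT$ and $\bot:=\zeroT=\oneT$) is handled correctly. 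The one place where you do more work than necessary is the evaluation lemma: that every closed $\POR^\lambda$-term of type $s$ is provably equal, modulo $T_\omega$, to a numeral $\ooverline{\tau}$ is a special case of Theorem~\ref{thm:provRepr}.2 --- for a closed term of type $s$, provable representability of some $f\in\POR$ says precisely $T_\omega\vdash_{\POR^\lambda}\termO=\ooverline{f(\omega)}$ --- so you could cite that instead of building the rewrite system. Your termination sketch also has a loose end the citation would avoid: the reassociation rule $x(y\boolT)\to(xy)\boolT$ does not decrease term size, so ``the remaining rewrites decrease term size'' needs a more careful well-founded order. Finally, your closing remark correctly identifies the scope question (the $\Lpw$ case is immediate from the earlier corollary; arbitrary $\Sigma^b_0$-formulas of $I\POR^\lambda$ require extending Proposition~\ref{prop:IPORprop}), a point the paper glosses over.
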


Due to Corollary~\ref{cor:Sigma0b}
we establish the following Lemma~\ref{lemma:IPOR}.

\begin{lemma}\label{lemma:IPOR}
For any closed $\Sigma^b_0$-formula $F$
and $\omega \in \Os$,
either $T_\omega\vdash_{I\POR^\lambda} F$ 
iff $\omega\in\model{F}$.
\end{lemma}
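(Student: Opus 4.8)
The statement to prove is: for any closed $\Sigma^b_0$-formula $F$ and any $\omega\in\Os$, we have $T_\omega\vdash_{I\POR^\lambda} F$ iff $\omega\in\model{F}$. The plan is to argue by induction on the structure of $F$, using Corollary~\ref{cor:Sigma0b} to handle the ``hard'' direction via the completeness it provides. First I would fix $\omega$ and set up the two directions. The forward direction ($T_\omega\vdash_{I\POR^\lambda} F \Rightarrow \omega\in\model F$) is essentially soundness: I would observe that the standard first-order model over $\Ss$ with $\Flip$ interpreted as $\omega$ (equivalently, the point $\omega$ together with the canonical interpretation of all $\POR^\lambda$-constants as the corresponding $\POR$-functions evaluated at $\omega$) validates every axiom of $I\POR^\lambda$ and every equation in $T_\omega$ — the defining equations of the constants hold by Theorem~\ref{thm:provRepr}, the induction schema holds because it is a genuine (bounded) induction over $\Ss$, and the equations $\textsf{Flipcoin}(\ooverline\sigma)=\ooverline{\omega(\sigma)}$ hold by definition of $\query$ and the interpretation of $\Flip$. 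Hence anything provable from $T_\omega$ is true at $\omega$, i.e.\ $\omega\in\model F$.

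For the converse ($\omega\in\model F\Rightarrow T_\omega\vdash_{I\POR^\lambda} F$), the cleanest route is \emph{not} a direct structural induction but rather a one-line deduction from Corollary~\ref{cor:Sigma0b} together with the forward direction just established. Indeed, by Corollary~\ref{cor:Sigma0b}, for the closed $\Sigma^b_0$-formula $F$ and the given $\omega$, either $T_\omega\vdash_{I\POR^\lambda} F$ or $T_\omega\vdash_{I\POR^\lambda}\lnot F$. In the second case, the (already proved) forward direction applied to $\lnot F$ — which is again $\Sigma^b_0$, being $F\to(\zeroT=\oneT)$ — gives $\omega\in\model{\lnot F}=\Os\setminus\model F$, contradicting $\omega\in\model F$. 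Hence the first case must hold, i.e.\ $T_\omega\vdash_{I\POR^\lambda} F$. Thus the only genuine work is the soundness (forward) direction, and even that is routine once one checks the axioms against the intended model.

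The step I expect to require the most care is verifying that the intended ``model at $\omega$'' really does satisfy all of $I\POR^\lambda$: one must check that the interpretations of the constants $\textsf{Tail},\textsf{Trunc},\textsf{Cond},\textsf{Flipcoin},\textsf{Rec}$ as $\POR$-functions (per Theorem~\ref{thm:provRepr}) satisfy the defining equations, that the $(\beta)$- and $(\nu)$-axioms hold in the full set-theoretic function hierarchy over the base set $\Ss$, and — the only mildly delicate point — that the restricted $\NP$-induction schema is valid, which holds simply because ordinary induction on notation over $\Ss$ is valid in the standard model. The soundness of the equations in $T_\omega$ is immediate from the interpretation of $\Flip$ as (the characteristic function of) $\omega$ and the fact that $\query(x,\omega)=\omega(x)$. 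With these checks in hand, the forward direction follows by a straightforward induction on the length of derivations, and the lemma is complete.
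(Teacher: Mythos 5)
Your proposal is correct and follows essentially the same route as the paper: the forward direction is soundness of $I\POR^\lambda$ plus $T_\omega$ with respect to the model determined by $\omega$ (proved by induction on derivations), and the converse is obtained by combining Corollary~\ref{cor:Sigma0b} with that soundness direction to rule out $T_\omega\vdash_{I\POR^\lambda}\neg F$. Your additional remarks on verifying the axioms in the intended model only flesh out what the paper leaves implicit.
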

\begin{proof}
($\Rightarrow$) By induction on the structure of rules for 
$I\POR$.
($\Leftarrow$) For Corollary~\ref{cor:Sigma0b}, either
$T_\omega \vdash_{I\POR^\lambda}F$ or
$T_\omega \vdash_{I\POR^\lambda} \neg F$.
Hence, if $\omega\in\model{F}$, then
it cannot be $T_{\omega} \vdash_{I\POR^\lambda} \neg F$
(by soundness).
We conclude $T_\omega \vdash_{I\POR^\lambda} F$.
\end{proof}

\emph{Realizability.}
We introduce realizability internal to $I\POR^\lambda$.
As a corollary, we obtain that from any derivation in $I\RS$
– actually, in $I\POR^\lambda$ – of a formula
in the form $\forall x.\exists y.F(x,y)$,
one can extract a functional term of $\POR^\lambda$
$\textsf{f}:s\arrowT s$, such that 
$\vdash_{I\POR^\lambda}\forall x.F(x,\textsf{f}x)$.
This allows us to conclude that if $f$ is $\Sigma^b_1$-representable
in $I\RS$,
then $f\in\POR$.

\begin{notation}
Let $\mathbf{x},\mathbf{y}$ denote finite sequences
of term variables, (resp.) $x_1,\dots, x_n$ and $y_1,\dots,
y_k$ and $\mathbf{x}(\mathbf{y})$ be an abbreviation
for $y_1(\mathbf{x}), \dots, y_k(\mathbf{x})$.
Let $\Lambda$ be a shorthand for the empty sequence
and $y(\Lambda):= y$.
\end{notation}

\begin{defn}
\emph{Formulas $x \ \realize \ F$}
are defined by induction as follows:
\begin{align*}
\Lambda \ \realize \ F &:= F \\
\mathbf{x,y} \ \realize \ (G\wedge H) &:= (\mathbf{x} \ \realize \ 
G) \wedge
(\mathbf{y} \ \realize \ H) \\
z,\mathbf{x}, \mathbf{y} \ \realize \ (G \vee H)
&:=
(z=\zeroT \wedge \mathbf{x} \ \realize \ G)
\vee 
(z\neq \zeroT \wedge \mathbf{y} \ \realize \
H) \\
\mathbf{y} \ \realize \ 
(G \rightarrow H) &:= 
\forall \mathbf{x}.
(\mathbf{x} \ \realize \ G \rightarrow
\mathbf{y}(\mathbf{x}) \ \realize \ H) \wedge
(G\rightarrow H) \\
z,\mathbf{x} \ \realize \ \exists y.G &:=
\mathbf{x} \ \realize \ G\{z/y\} \\
\mathbf{x} \ \realize \ \forall y.G &:=
\forall y.(\mathbf{x}(y) \ \realize \ G),
\end{align*}
where no variable in $\mathbf{x}$ is free in
$F$. Given terms $\mathbf{t}=\termO_1,\dots, \termO_n$
we let $\mathbf{t} \ \realize \ F := (\mathbf{x} \ \realize \ F)
\{\mathbf{t}/\mathbf{x}\}$.
\end{defn}
\noindent
We relate the derivability of these new formulas with that of
formulas of $I\POR^\lambda$.
Proofs below are by induction (resp.) on the structure
of $I\POR^\lambda$-formulas and 
on the height of derivations.

\begin{theorem}[Soundness]\label{thm:Rsound}
If $\vdash_{I\POR^\lambda}\mathbf{t} \ \realize \ F$,
then $\vdash_{I\POR^\lambda} F$.
\end{theorem}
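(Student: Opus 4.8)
The plan is to prove the Soundness Theorem by induction on the height of the derivation of $\mathbf{t} \realize F$ in $I\POR^\lambda$. The key observation making this manageable is built into the very definition of the realizability formulas: for every logical connective, the clause defining $\mathbf{x} \realize (G \star H)$ (or $\mathbf{x} \realize Qy.G$) has been set up so that it \emph{entails} the underlying formula $G \star H$ itself, modulo the trivial base case $\Lambda \realize F := F$. So the core of the argument is a preliminary lemma: for every formula $F$ of $I\POR^\lambda$ and every sequence of terms $\mathbf{t}$ of the right length and types, $\vdash_{I\POR^\lambda} (\mathbf{t} \realize F) \rightarrow F$. Once this lemma is in hand, the theorem is immediate — from $\vdash_{I\POR^\lambda} \mathbf{t}\realize F$ and $\vdash_{I\POR^\lambda}(\mathbf{t}\realize F)\rightarrow F$ we conclude $\vdash_{I\POR^\lambda} F$ by modus ponens, which is available since $I\POR^\lambda$ contains intuitionistic predicate calculus.

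The preliminary lemma is proved by induction on the structure of $F$. For an atomic $F$ (an equation, a $\subseteq$-atom, or $\Flip(\cdot)$), the realizing sequence is empty, $\Lambda\realize F = F$, and the implication $F\rightarrow F$ is trivial. For $F = G\wedge H$: from $\mathbf{x},\mathbf{y}\realize(G\wedge H) := (\mathbf{x}\realize G)\wedge(\mathbf{y}\realize H)$ we get each conjunct, apply the induction hypothesis to each to derive $G$ and $H$, hence $G\wedge H$. For $F = G\vee H$: a realizer has the shape $z,\mathbf{x},\mathbf{y}$ and realizes $(z=\zeroT\wedge\mathbf{x}\realize G)\vee(z\neq\zeroT\wedge\mathbf{y}\realize H)$; by $\vee$-elimination, in the first case the induction hypothesis on $G$ gives $G$ hence $G\vee H$, and symmetrically in the second; note no case-analysis on whether $z=\zeroT$ is needed beyond the disjunction already present. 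For $F = G\rightarrow H$: here the crucial point is that the realizability clause $\mathbf{y}\realize(G\rightarrow H)$ was \emph{defined} as the conjunction $\forall\mathbf{x}.(\mathbf{x}\realize G\rightarrow\mathbf{y}(\mathbf{x})\realize H)\wedge(G\rightarrow H)$, so the second conjunct literally hands us $G\rightarrow H$ — no induction hypothesis is even required in this case. For $F = \exists y.G$: a realizer $z,\mathbf{x}$ satisfies $\mathbf{x}\realize G\{z/y\}$; the induction hypothesis applied to the formula $G\{z/y\}$ gives $G\{z/y\}$, whence $\exists y.G$ by $\exists$-introduction with witness $z$. For $F = \forall y.G$: a realizer $\mathbf{x}$ satisfies $\forall y.(\mathbf{x}(y)\realize G)$; fix a fresh $y$, instantiate to get $\mathbf{x}(y)\realize G$, apply the induction hypothesis to obtain $G$, and generalize to $\forall y.G$.

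The only genuine subtlety — and the step I expect to be the main obstacle — is bookkeeping about the \emph{types and arities} of the term sequences $\mathbf{t}$: the statement as worded presupposes that $\mathbf{t}\realize F$ is even well-formed, i.e.\ that the length of $\mathbf{t}$ matches the number of realizer variables prescribed for $F$ and that each $t_i$ has the simple type dictated by the recursive definition (in particular, realizers for implications and for universally quantified formulas are of higher type, being applied to the realizers of the antecedent resp.\ to the quantified variable via the $\mathbf{x}(\mathbf{y})$ notation). One must either build this type discipline explicitly into an auxiliary definition and check it is respected at every clause, or — more cheaply — simply remark that $\mathbf{t}\realize F$ is a bona fide formula of $I\POR^\lambda$ by hypothesis, so the substitution $\{\mathbf{t}/\mathbf{x}\}$ in $(\mathbf{x}\realize F)\{\mathbf{t}/\mathbf{x}\}$ is type-correct, and the induction above never inspects the internal structure of the $t_i$'s: it only uses projections, applications, and the definitional equalities of the realizability clauses, all of which are sanctioned by intuitionistic predicate calculus with equality together with the $(\beta)$-axiom of $\POR^\lambda$. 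Apart from this, the proof is a routine connective-by-connective verification, and no properties of $\POR^\lambda$-terms beyond $(\beta)$-conversion and the equational axioms are needed.
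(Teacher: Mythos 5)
Your proposal is correct and takes essentially the same route as the paper, which proves soundness by a (very tersely indicated) induction on the structure of $I\POR^\lambda$-formulas; your auxiliary lemma $\vdash_{I\POR^\lambda}(\mathbf{t} \ \realize \ F)\rightarrow F$ is just a clean packaging of that induction. You also correctly identify the one point of substance, namely that the extra conjunct $(G\rightarrow H)$ built into the realizability clause for implication is precisely what lets the argument go through without entangling soundness with the existence of realizers for $G$.
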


\begin{notation}
Given $\Gamma=F_1,\dots, F_n$,
let $\mathbf{x} \ \realize \ \Gamma$ be a shorthand 
for $\mathbf{x}_1 \ \realize F_1, \dots, \mathbf{x}_n
\ \realize \ F_n$.
\end{notation}

\begin{theorem}[Completeness]\label{thm:Rcomp}
If $\vdash_{I\POR^\lambda} F$,
then $\mathbf{t}$ such that $\vdash_{I\POR^\lambda}
\mathbf{t} \ \realize \ F$.
\end{theorem}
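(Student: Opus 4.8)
The plan is to proceed by induction on the height of the derivation of $F$ in $I\POR^\lambda$, exhibiting in each case a tuple $\mathbf t$ of $\POR^\lambda$-terms together with an $I\POR^\lambda$-proof of $\mathbf t \realize F$ (from which $\vdash_{I\POR^\lambda} F$ is automatically recovered via Soundness, Theorem~\ref{thm:Rsound}). For the rules of the intuitionistic predicate calculus this mirrors the standard modified-realizability interpretation: $\wedge$-introduction pairs the realizers supplied by the induction hypothesis; $\rightarrow$-introduction forms a $\lambda$-abstraction, and the extra conjunct $G\to H$ appearing in the clause for $\mathbf y\realize(G\to H)$ is discharged by applying Soundness to the induction hypothesis for the subderivation; modus ponens is application; $\forall$-introduction abstracts the eigenvariable; $\exists$-introduction prepends the witnessing term to the realizer of the instance; and $\lor$-introduction/elimination use the tag component together with the constant $\textsf{Cond}$ to perform the case distinction. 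The equality rules and the $(\beta)$-, $(\nu)$-axioms, as well as the quantifier-free axioms 1.--6.~of $I\POR^\lambda$, carry no computational content and are realized by the empty sequence $\Lambda$; where such an axiom (e.g.~2.~or 5.)~contains bounded quantifiers or disjunctions it is still a $\Sigma^b_0$-formula, so by Proposition~\ref{prop:IPORprop} there is a $\POR^\lambda$-term deciding it, from which a realizer is read off mechanically.

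The one genuinely delicate case is the $\NP$-induction schema (axiom 7), where $F(x)\equiv \exists z\preceq \termO.\,\termT=\termF$ with $\termO$ containing only first-order open variables. A realizer of the premise $F(\epsilon)\wedge \forall x.(F(x)\to F(x\zero))\wedge\forall x.(F(x)\to F(x\one))$ provides a term $a$ witnessing $F(\epsilon)$ together with terms $g_{0},g_{1}:s\arrowT s\arrowT s$ such that, provably in $I\POR^\lambda$, $g_{b}\,x\,w$ witnesses $F(x\bool)$ (for $\bool\in\{\zero,\one\}$) whenever $w$ witnesses $F(x)$. The required realizer of $\forall y.F(y)$ is then obtained by iterating $g_{0},g_{1}$ along the binary digits of $y$: this is a bounded recursion on notation, hence is captured by the constant $\textsf{Rec}$ of $\POR^\lambda$, with the explicit bound term $\termO$ of the $\NP$-predicate serving as the truncation argument so that the recursion stays within the algebra. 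One then verifies, inside $I\POR^\lambda$, that the resulting term realizes $F(y)$ for \emph{every} $y$; since this assertion is itself (equivalent to) an $\NP$-predicate in $y$, the $\NP$-induction schema available in $I\POR^\lambda$ suffices, the inductive step being the case analysis $y=\epsilon$, $y=\textsf{Tail}(y)\zeroT$, $y=\textsf{Tail}(y)\oneT$ using axiom 3 and the defining equations of $\textsf{Rec}$.

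I expect this last case to be the main obstacle, for two reasons. First, one must check carefully that the iterating term genuinely lies in $\POR^\lambda$: that the bound coming from the $\NP$-predicate is an admissible bounding term and that the truncations performed by $\textsf{Rec}$ do not destroy the witnessing property (this is exactly where the restriction of axiom 7 to $\NP$-predicates, rather than arbitrary $\Sigma^b_1$-formulas, is used). Second, the internal verification of the realizer must appeal only to the induction principle actually present in $I\POR^\lambda$, avoiding any use of unbounded or higher-complexity induction. Once the induction is complete, reading off the $\forall/\exists$ clauses of the realizability relation from a derivation of a formula $\forall x.\exists y.F(x,y)$ yields, as an immediate corollary, a term $\textsf f:s\arrowT s$ with $\vdash_{I\POR^\lambda}\forall x.F(x,\textsf f x)$, which together with Theorem~\ref{thm:provRepr} gives the extraction statement invoked in the proof of Theorem~\ref{thm:RStoPOR}$(\Rightarrow)$.
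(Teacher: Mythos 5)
Your proposal is correct and follows essentially the same approach as the paper: induction on the derivation in $I\POR^\lambda$, exhibiting a tuple of $\POR^\lambda$-terms realizing the conclusion (with realizers parametrized by realizers of open assumptions), together with Soundness to discharge the extra conjunct in the implication clause. The paper's proof only spells out the $\lor R_1$ case and leaves the rest to the reader; your detailed treatment of the $\NP$-induction schema via $\textsf{Rec}$ --- using the bound of the $\NP$-predicate as the truncation argument, and verifying the resulting realizer by appealing to Proposition~\ref{prop:IPORprop} to recast the verification condition as an $\NP$-predicate so that $\NP$-induction suffices --- simply fills in the one genuinely delicate case that the paper leaves implicit, in a manner consistent with the Cook--Urquhart argument it adapts.
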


\begin{proof}
We prove that if $\Gamma \vdash_{I\POR^\lambda}F$,
there exist terms $\mathbf{t}$ such that
$\mathbf{x} \ \realize \ \Gamma \vdash_{I\POR^\lambda}
\mathbf{t x}_1 \dots, \mathbf{x}_n \ \realize \ F$.
The proof is by induction on the derivation of 
$\Gamma \vdash_{I\POR^\lambda} F$.
Let us consider just one example:
\begin{prooftree}
\AxiomC{$\vdots$}
\noLine
\UnaryInfC{$\Gamma\vdash G$}
\RightLabel{$\vee R_1$}
\UnaryInfC{$\Gamma\vdash G\vee H$}
\end{prooftree}
By IH, there exist terms $\mathbf{u}$, such that
$\mathbf{t} \ \realize \ \Gamma \vdash_{I\POR^\lambda}
\mathbf{tu} \ \realize \ G$.
Since $x,y \ \realize \ G\vee H$
is defined as $(x=\zeroT \wedge
y \ \realize \ G) \vee
(x\neq \zeroT \wedge y \ \realize \ H)$,
we can take $\mathbf{t}=\zeroT,\mathbf{u}$.
\end{proof}

\begin{cor}
Let $\forall x.\exists y.F(x,y)$ be a closed
term of $I\POR^\lambda$,
where $F$ is a $\Sigma^b_1$-formula.
Then, there is a closed term $\termO:s \arrowT s$
of $\POR^\lambda$ such that
$\vdash_{I\POR^\lambda}\forall x.F(x,\termO x)$.
\end{cor}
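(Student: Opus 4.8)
The plan is to obtain $\termO$ directly from the realizability interpretation just developed. Suppose $\vdash_{I\POR^{\lambda}} \forall x.\exists y.F(x,y)$, with $F$ a $\Sigma^b_1$-formula. Applying the Completeness theorem (Theorem~\ref{thm:Rcomp}) yields a tuple of closed $\POR^{\lambda}$-terms $\mathbf{t}=(\termO_{0},\mathbf{u})$ such that $\vdash_{I\POR^{\lambda}} \mathbf{t} \; \realize \; \forall x.\exists y.F(x,y)$. Unfolding the clause $\mathbf{x} \; \realize \; \forall y.G := \forall y.(\mathbf{x}(y) \; \realize \; G)$ and then the clause $z,\mathbf{x} \; \realize \; \exists y.G := \mathbf{x} \; \realize \; G\{z/y\}$, this judgment says, for a fresh variable $x$, that $\termO_{0}\,x$ is the witness for $\exists y$ and that $\vdash_{I\POR^{\lambda}} \mathbf{u}(x) \; \realize \; F(x, \termO_{0}\,x)$. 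Since the witness $\termO_{0}\,x$ has type $s$ and $x:s$, the typing built into the realizability clauses forces $\termO_{0}:s \arrowT s$; all the remaining (possibly higher-type) data needed to realize the $\Sigma^b_1$-matrix of $F$ — its bounded existentials together with its $\Sigma^b_0$-kernel — is carried by the components $\mathbf{u}$.

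It then remains only to discharge the auxiliary realizers $\mathbf{u}$. Reassembling the unfolding above, the judgment $\vdash_{I\POR^{\lambda}} \mathbf{t} \; \realize \; \forall x.\exists y.F(x,y)$ is, via the $\forall$- and $\exists$-clauses of $\realize$, literally the judgment $\vdash_{I\POR^{\lambda}} \mathbf{u} \; \realize \; \forall x.F(x, \termO_{0}\,x)$. The Soundness theorem (Theorem~\ref{thm:Rsound}) then gives $\vdash_{I\POR^{\lambda}} \forall x.F(x, \termO_{0}\,x)$ directly. Setting $\termO := \lambda x.\termO_{0}\,x$ (equivalently $\termO_{0}$ itself, by the $(\nu)$-axiom) produces a closed term of type $s \arrowT s$ with $\vdash_{I\POR^{\lambda}} \forall x.F(x, \termO\,x)$, as required.

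The only genuinely delicate point is the bookkeeping in the first paragraph: one must verify that the realizer of the outer $\exists y$ is extracted cleanly as a single term of type $s \arrowT s$ and that, because $F$ is $\Sigma^b_1$ rather than an equation, the extra realizing terms attached to the bounded existentials and to the $\Sigma^b_0$-kernel of $F$ can all be absorbed into $\mathbf{u}$ without perturbing $\termO_{0}$. Here one may optionally invoke Proposition~\ref{prop:IPORprop}: the $\Sigma^b_0$-kernel of $F$ is provably equivalent in $I\POR^{\lambda}$ to an equation $\termO_{F}=\zeroT$ and hence decidable, so those components are in fact canonically determined and play no role in the extraction of $\termO$. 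With this in place, the Soundness step and the final generalization on $x$ are routine.
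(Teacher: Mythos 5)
Your proof is correct and follows exactly the route the paper takes: apply the realizability Completeness theorem to obtain a tuple of realizers, unfold the $\forall$- and $\exists$-clauses of $\realize$ to isolate the witness term $\termO$ of type $s\arrowT s$, observe that what remains is a realizability judgment for $\forall x.F(x,\termO x)$, and conclude by Soundness. The extra bookkeeping remarks and the optional appeal to Proposition~\ref{prop:IPORprop} are harmless elaboration but not needed; the paper's own proof is the same three-step argument stated more tersely.
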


\begin{proof}
By Theorem~\ref{thm:Rcomp},
there exist $\mathbf{t}=\termO,w$
such that $\vdash_{I\POR^\lambda} \mathbf{t} \ \realize
\ \forall x.\exists y.F(x,y)$.
So, $\mathbf{t} \ \realize \ \forall x.\exists y.F(x,y)
\equiv \forall x.(\mathbf{t}(x) \ \realize \ \exists y.
F(x,y)) \equiv \forall x.(w(x) \ \realize \ F(x,\termO x))$.
From this, by Theorem~\ref{thm:Rsound},
we deduce $\vdash_{I\POR^\lambda}\forall x.F(x,\termO x)$.
\end{proof}

Now, we have all the ingredients to prove that if a function
is $\Sigma^b_1$-representable in $I\RS$,
then it is in $\POR$.

\begin{cor}\label{cor:IPOR}
For any function $f:\Os \times \Ss \to \Ss$,
if there is a closed $\Sigma^b_1$-formula
in $\Lpw$ $F(x,y)$, such taht:
\begin{enumerate}
\itemsep0em

\item $I\RS \vdash \forall x.\exists !y. F(x,y)$
\item $\model{F(\ooverline{\sigma_1}, \ooverline{\sigma_2})}
= \{\omega \ | \ f(\sigma_1,\omega) = \sigma_2\}$,
\end{enumerate}
then $f\in\POR$.
\end{cor}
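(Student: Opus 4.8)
The plan is to chain together the three pillars built for $I\POR^\lambda$: the embedding of $I\RS$ (Theorem~\ref{theorem:IRStoIPOR}), the realizer‑extraction corollary obtained from realizability completeness (the corollary following Theorem~\ref{thm:Rcomp}), and the fact (Theorem~\ref{thm:provRepr}.2) that every closed first‑order term of $\POR^\lambda$ provably represents a function of $\POR$. The only genuinely new work is to match the extracted $\POR$‑function with $f$ itself.

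First I would weaken Condition~1: from $I\RS\vdash\forall x.\exists!y.F(x,y)$ it follows that $I\RS\vdash\forall x.\exists y.F(x,y)$, hence by Theorem~\ref{theorem:IRStoIPOR} that $\vdash_{I\POR^\lambda}\forall x.\exists y.F(x,y)$. Applying the corollary following Theorem~\ref{thm:Rcomp} I obtain a closed term $\termO:s\arrowT s$ of $\POR^\lambda$ with $\vdash_{I\POR^\lambda}\forall x.F(x,\termO x)$. By Theorem~\ref{thm:provRepr}.2, $\termO$ provably represents some $g\in\POR$, i.e.~for all $\sigma,\tau\in\Ss$ and $\omega\in\Os$, $g(\sigma,\omega)=\tau$ iff $T_\omega\vdash_{\POR^\lambda}\termO\,\ooverline{\sigma}=\ooverline{\tau}$.

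It then remains to prove $f=g$, which gives $f\in\POR$. Fix $\sigma\in\Ss$, $\omega\in\Os$ and put $\tau:=g(\sigma,\omega)$. By provable representability and Proposition~\ref{prop:PORtoIPOR} (relativised to the hypotheses $T_\omega$), $T_\omega\vdash_{I\POR^\lambda}\termO\,\ooverline{\sigma}=\ooverline{\tau}$; instantiating $\forall x.F(x,\termO x)$ at $\ooverline{\sigma}$ and rewriting with this equation yields $T_\omega\vdash_{I\POR^\lambda}F(\ooverline{\sigma},\ooverline{\tau})$. If I can upgrade this derivability into the \emph{semantic} statement $\omega\in\model{F(\ooverline{\sigma},\ooverline{\tau})}$, then Condition~2 (an ``iff'') immediately gives $f(\sigma,\omega)=\tau=g(\sigma,\omega)$; since $\sigma,\omega$ were arbitrary, $f=g$. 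Note that the uniqueness part of Condition~1 is not otherwise needed.

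The main obstacle is precisely this soundness step, since $F$ is only $\Sigma^b_1$ whereas Lemma~\ref{lemma:IPOR} was stated for $\Sigma^b_0$-formulas. I see two routes. The quick one is to observe that the $(\Rightarrow)$ half of (the argument behind) Lemma~\ref{lemma:IPOR} — derivability from $T_\omega$ in $I\POR^\lambda$ implies truth in the Borel model with $\Flip$ interpreted as $\omega$ — is a direct induction on derivations that never uses the $\Sigma^b_0$ restriction, hence applies to $F(\ooverline{\sigma},\ooverline{\tau})$ directly. The route using only the stated lemmas is to peel off the bounded existentials of $F=\exists\vec z\preceq\vec t.G$ via realizability completeness (Theorem~\ref{thm:Rcomp}), extracting closed $\POR^\lambda$-terms witnessing $\vec z$; evaluating these and $\termO\,\ooverline{\sigma}$ to numerals modulo $T_\omega$ (again by Theorem~\ref{thm:provRepr}.2), one reduces to a closed $\Sigma^b_0$-instance $G(\ooverline{\sigma},\ooverline{\tau},\ooverline{\vec\rho})$ to which Lemma~\ref{lemma:IPOR} applies, and reassembles $\omega\in\model{F(\ooverline{\sigma},\ooverline{\tau})}$ from the semantics of the (bounded) existentials. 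The routine bookkeeping is checking that all extracted terms normalise to numerals modulo $T_\omega$; the conceptual crux is the single passage from $\vdash_{I\POR^\lambda}$ to $\model{\cdot}$.
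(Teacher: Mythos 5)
Your proposal is correct and follows essentially the same route as the paper's own proof: transfer Condition~1 to $I\POR^\lambda$ via Theorem~\ref{theorem:IRStoIPOR}, extract a closed term $\termO$ with $\vdash_{I\POR^\lambda}\forall x.F(x,\termO x)$ from the realizability corollary, obtain $g\in\POR$ from Theorem~\ref{thm:provRepr}.2, derive $T_\omega\vdash_{I\POR^\lambda}F(\ooverline{\sigma},\ooverline{\tau})$, and pass to $\omega\in\model{F(\ooverline{\sigma},\ooverline{\tau})}$ to conclude $f=g$. Your only divergence is that you explicitly flag and repair the mismatch between the $\Sigma^b_0$ hypothesis of Lemma~\ref{lemma:IPOR} and the $\Sigma^b_1$ formula $F$ — a gap the paper's proof silently glosses over — so your version is, if anything, more careful.
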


\begin{proof}
Since $\vdash_{I\RS} \forall x.\exists !y.F(x,y)$,
by Theorem~\ref{theorem:IRStoIPOR}
$\vdash_{I\POR^\lambda} \forall x.\exists !y.F(x,y)$.
Then, from $\vdash_{I\POR^\lambda} \forall x.\exists y.
F(x,y)$, we deduce $\vdash_{I\POR^\lambda}
\forall x.F(x,\textsf{g}x)$ for some closed term
$\textsf{g}\in\POR^\lambda$,
by Corollary~\ref{cor:IPOR}.
Furthermore, by Theorem~\ref{thm:provRepr}.2,
there is a $g\in\POR$ such that for any
$\sigma_1,\sigma_2\in\Ss$ and $\omega\in\Os$,
$g(\sigma_1,\omega)=\sigma_2$, when 
$T_\omega \vdash_{I\POR^\lambda} \textsf{g} 
\ooverline{\sigma_1} =\ooverline{\sigma_2}$.
So, by Proposition~\ref{prop:PORtoIPOR},
for any $\sigma_1,\sigma_2\in\Ss$ and $\omega\in\Os$
if $g(\sigma_1,\omega)=\sigma_2$,
then $T_\omega \vdash_{I\POR^\lambda} \textsf{g}\ooverline{\sigma_1}=\ooverline{\sigma_2}$
and so $T_\omega \vdash_{I\POR^\lambda} 
F(\ooverline{\sigma_1},\ooverline{\sigma_2})$.
By Lemma~\ref{lemma:IPOR},
$T_\omega \vdash_{I\POR^\lambda} F(\ooverline{\sigma_1},
\ooverline{\sigma_2})$,
when $\omega \in \model{F(\ooverline{\sigma_1},\ooverline{\sigma_2})}$,
that is $f(\sigma_1,\omega)=\sigma_2$.
But then $f=g$, so since $g\in\POR$ also $f\in \POR$.
\end{proof}

\emph{$\forall \NP$-Conservativity of $I\POR^\lambda+EM$
over $I\POR^\lambda$.}
Corollary~\ref{cor:IPOR} is already close to the result
we are looking for.
The remaining step to conclude our proof is its extension
from intuitionistic $I\RS$ to classical $\RS$,
showing that any function which is $\Sigma^b_1$-representable
in $\RS$ is also in $\POR$.
The proof adapts method by~\cite{CookUrquhart}.
We start by considering an extension of $I\POR^\lambda$
via EM and show that the realizability interpretation
extends to it so that for any of its closed theorems
$\forall x.\exists y\preceq \termO. F(x,y)$,
being $F$ a $\Sigma^b_1$-formula,
there is a closed term $\termO : s \arrowT s$
of $\POR^\lambda$ such that 
$\vdash_{I\POR^\lambda} \forall x.F(x,\termO x)$.

Let EM be the excluded-middle schema $F\vee \neg F$,
and Markov's principle %(Markov, for short) 
be defined as follows $\neg \neg(\exists x.F\rightarrow (exists x)F)$
where $F$ is a $\Sigma^b_1$-formula.

\begin{prop}\label{prop:12}
For any $\Sigma^b_1$-formula $F$,
if $\vdash_{I\POR^\lambda+ EM}F$,
then $\vdash_{I\POR^\lambda +(Markov)}F$.
\end{prop}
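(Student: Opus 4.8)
The plan is to prove this through a negative (double-negation) translation, following the strategy of~\cite{CookUrquhart}. Fix the G\"odel--Gentzen negative translation $F \mapsto F^{N}$ on formulas of $I\POR^\lambda$ (a double negation is inserted in front of every atom, every disjunction and every existential quantifier, with $\exists x.G$ read as $\lnot\forall x.\lnot G^{N}$, while $\wedge$, $\rightarrow$ and $\forall$ are left transparent). The generic soundness of this translation for intuitionistic first-order logic with equality gives: if $\Gamma \vdash_{\mathrm{cl}} F$ then $\Gamma^{N}\vdash_{\mathrm{int}} F^{N}$, the proof being a routine induction on derivations that also covers the equality rules of $I\POR^\lambda$. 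Hence it suffices to establish two facts: (i) $I\POR^\lambda+(Markov)$ proves $B^{N}$ for every non-logical axiom $B$ of $I\POR^\lambda+EM$; and (ii) for every $\Sigma^b_1$-formula $F$ one has $I\POR^\lambda+(Markov)\vdash F^{N}\leftrightarrow F$. Together these immediately yield the statement, since $\vdash_{I\POR^\lambda+EM}F$ gives $\vdash_{I\POR^\lambda+(Markov)}F^{N}$ by (i) and soundness, whence $\vdash_{I\POR^\lambda+(Markov)}F$ by (ii).

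The key elementary observation is that $\Sigma^b_0$-formulas of $I\POR^\lambda$ are \emph{decidable in the theory}: by Proposition~\ref{prop:IPORprop} and its corollary, for every $\Sigma^b_0$-formula $G$ there is a $\POR^\lambda$-term $\termO_G$ with $\vdash_{I\POR^\lambda}G\leftrightarrow\termO_G=\zeroT$ and $\vdash_{I\POR^\lambda}\termO_G=\zeroT\lor\termO_G=\oneT$, so $\vdash_{I\POR^\lambda}G\lor\lnot G$; since equality of strings is provably stable in $I\POR^\lambda$, one gets $\vdash_{I\POR^\lambda}G^{N}\leftrightarrow G$ by an induction on the (subword-quantified, atom-built) structure of $G$. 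Step (ii) is then short. Writing $F = \exists x_1\preceq t_1\dots\exists x_n\preceq t_n.G$ with $G$ a $\Sigma^b_0$-matrix, and expanding each $\exists x_i\preceq t_i$ as $\exists x_i.(\one^{x_i}\subseteq\one^{t_i}\wedge G')$, the bound conjuncts are again decidable, so using $G^{N}\leftrightarrow G$ together with the intuitionistically valid equivalences $\lnot\forall x.\lnot A\leftrightarrow\lnot\lnot\exists x.A$ and $\lnot\lnot\lnot A\leftrightarrow\lnot A$, one shows by induction on $n$ that $\vdash_{I\POR^\lambda}F^{N}\leftrightarrow\lnot\lnot F$. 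Markov's principle for $\Sigma^b_1$-formulas then supplies $\lnot\lnot F\rightarrow F$, so $F^{N}\leftrightarrow F$ over $I\POR^\lambda+(Markov)$.

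For step (i): the equational axioms of $\POR^\lambda$ and the defining axioms~2.--6.\ of $I\POR^\lambda$ are essentially quantifier-free $\Sigma^b_0$-formulas, so their negative translations are provably equivalent to themselves by the observation above. Each instance of $EM$, say $A\lor\lnot A$, translates to $\lnot\lnot(A^{N}\lor\lnot A^{N})$, an intuitionistic theorem. The only axiom that needs real attention is the $\NP$-induction schema~7., applied to an $\NP$-predicate $F(y)\equiv\exists z\preceq r.\,t=u$. Its negative translation is, modulo the intuitionistic manipulations already used, the implication obtained from the original instance by replacing every occurrence of $F$ by $\lnot\lnot F$; since $F$ is $\Sigma^b_1$, in $I\POR^\lambda+(Markov)$ we have $\lnot\lnot F(y)\leftrightarrow F(y)$, so this is provably equivalent to the original induction instance, which is an axiom of $I\POR^\lambda$. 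This is exactly the point where Markov's principle is indispensable: the bound $z\preceq r$ restricts the \emph{length} of $z$, so $F$ ranges over exponentially many potential witnesses and is not itself decidable, unlike its matrix.

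I expect the main obstacle to be precisely step (i) for $\NP$-induction: one has to notice that the negative translation of an induction instance is \emph{not} literally an induction instance ($\lnot\lnot F$ is no longer an $\NP$-predicate), so the argument must be routed through the decidability/stability of $\Sigma^b_0$-formulas and through Markov's principle to re-internalise $\lnot\lnot F$ as $F$. A minor further care point is committing to one specific negative translation and verifying the generic soundness lemma against the actual deductive apparatus of $I\POR^\lambda$ (equality rules included); and one should double-check that the atomic predicates $\Flip(t)$, $t=s$ and $t\subseteq s$ are all provably decidable in $I\POR^\lambda$ (via axiom~6.\ and the terms $\mathsf{Eq}$, $\mathsf{Sub}$), which is what makes the $\Sigma^b_0$-decidability induction go through. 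If a version that also extracts explicit bounds were wanted, one could instead use Friedman's $A$-translation, but for mere conservativity the plain negative translation suffices.
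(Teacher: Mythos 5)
Your proposal is correct and follows essentially the same route as the paper's proof sketch: a Gödel--Gentzen double-negation translation, the stability (hence $G^{N}\leftrightarrow G$) of $\Sigma^b_0$-formulas via their term-decidability, and Markov's principle to collapse $\lnot\lnot F$ back to $F$ both for the translated $\NP$-induction instances and for the final $\Sigma^b_1$ conclusion. The extra care you take in distinguishing the decidable $\Sigma^b_0$ matrix from the genuinely $\Sigma^b_1$ $\NP$-predicate, and in noting that this is exactly where Markov is indispensable, is precisely what the paper's terse "two remarks" are gesturing at.
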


\begin{proof}[Proof Sketch]
The proof is by double negation translation with
the following two remarks:
1. for any $\Sigma^b_0$-formula $F$,
$\vdash_{I\POR^\lambda} \neg \neg F \rightarrow F$;
2. using (Markov), the double negation of an
instance of the $\NP$-induction can  be shown equivalent
the $\NP$-induction schema. 
\end{proof}
\noindent
We conclude by  that the realizability interpretation
defined above extends to $I\POR^\lambda$+(Markov),
that is for any closed theorem
$\forall x. \exists y\preceq \termO. F(x,y)$
with $F$ $\Sigma^b_1$-formula
of $I\POR^\lambda$+(Markov) there is a closed term
of $\POR^\lambda$ $\termO:s\arrowT s$ such that
$\vdash_{I\POR^\lambda}\forall x.F(x,\termO x)$.

Let assume given  an encoding $\sharp:(s\arrowT s)
\arrowT s$ in $I\POR^\lambda$ of first-order
unary functions as strings, together with a
``decoding'' function $\textsf{app}:s\arrowT s\arrowT s$
satisfying
$\vdash_{I\POR^\lambda} \textsf{app}(\sharp \textsf{f},x)
=\textsf{f}x$.
Moreover, let $x * y := \sharp(\lambda z.\textsf{BAnd}(\textsf{app}(x,z), \textsf{app}(y,z)))$
and $T(x):= \exists y.(\textsf{B}(\textsf{app}(x,y))=\zeroT)$.
There is a \emph{meet semi-lattice} structure on 
the set of terms of type $s$ defined by $\termO \sqsubseteq
\termT$ when $\vdash_{I\POR^\lambda} T (\termT) 
\rightarrow T(\termO)$
with top element $\underline{\oone} := \sharp (\lambda x.\oneT)$
and meet given by $x*y$.
Indeed, from $T(x*\oneT) \leftrightarrow T(x)$,
$x\sqsubseteq \underline{\oone}$ follows,
Moreover, from $\textsf{B}(\textsf{app}(x,\termT)=\zeroT$,
we obtain $\textsf{B}(\textsf{app}(x*y,\termT))=\textsf{BAnd}(\textsf{app}(x,\termT),\textsf{app}(y,\termT))=\zeroT$,
whence $T(x) \rightarrow T(x*y)$,
i.e.~$x*y\sqsubseteq x$.
One can similarly prove $x*y\sqsubseteq y$.
Finally, from $T(x) \rightarrow T(v)$
and $T(y) \rightarrow T(v)$,
we deduce $T(x*y) \rightarrow T(v)$,
by observing that $\vdash_{I\POR^\lambda}T(x*y)
\rightarrow T(y)$.
Notice that the formula $T(x)$ is \emph{not}
a $\Sigma^b$-one,
as its existential quantifier is not bounded.

\begin{defn}
For any $I\POR^\lambda$-formula $F$
and fresh variable $x$,
we define formulas $x\Vdash F$:
\begin{align*}
x \Vdash F &:= F \vee T(x) \ \ \ 
\ \ \ 
\ \ \ 
\ \ \ 
\ \ \ 
\ \ \ 
\ \ \ \text{(}F \text{ atomic)} \\
x \Vdash G \wedge H &:= 
x \Vdash G \wedge x \Vdash H \\
x \Vdash G \vee H &:=
x \Vdash G \vee
x \Vdash H \\
x \Vdash G \rightarrow H &:=
\forall y.(y\Vdash G \rightarrow x*y \Vdash H) \\
x \Vdash \exists y.G &:=
\exists y.x \Vdash G \\
x \Vdash \forall y.G &:=
\forall y.x \Vdash G.
\end{align*}
\end{defn}

\begin{lemma}\label{lemma:13}
If $F$ is provable in $I\POR^\lambda$
without using $\NP$-induction,
then $x\Vdash F$ is provable in $I\POR^\lambda$.
\end{lemma}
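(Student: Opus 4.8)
I would prove Lemma~\ref{lemma:13} by induction on the height of a derivation of $F$ in $I\POR^\lambda$ \emph{without} the $\NP$-induction schema, showing that each rule and axiom is ``compatible'' with the $x\Vdash(-)$ translation. The key observations to set up first are two monotonicity and closure facts about the modified realizability relation: (i) $\vdash_{I\POR^\lambda} T(x)\to (x\Vdash F)$ for every formula $F$ (provable by a straightforward induction on $F$, using that $T(x)\to T(x*y)$ and $T(x)\to T(y)$ were established above, together with the fact that $y\Vdash G\to x*y\Vdash H$ follows vacuously from $T(x)$); and (ii) a ``Kripke-style'' monotonicity: if $x\sqsubseteq x'$ (i.e. $\vdash_{I\POR^\lambda} T(x')\to T(x)$) then $\vdash_{I\POR^\lambda}(x\Vdash F)\to(x'\Vdash F)$, again by induction on $F$. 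Fact~(i) handles the atomic clause and the ex falso situations; fact~(ii) is what makes the implication and quantifier clauses behave well under weakening of the ``forcing condition''.

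\textbf{Main body of the induction.} With these lemmas in hand, I would go through the derivation rules. For the structural and propositional rules (conjunction, disjunction, the intuitionistic implication rules) the clauses of $x\Vdash(-)$ are defined so as to mirror exactly the corresponding connective, so one reuses the ordinary completeness/soundness bookkeeping of realizability; the only subtlety is that in the $\to$-elimination step one must insert the meet $x*y$ and appeal to $x*y\sqsubseteq x$, $x*y\sqsubseteq y$ plus monotonicity~(ii). The quantifier rules are immediate since $x\Vdash$ commutes with $\forall$ and $\exists$. For the equality rules and the defining equations of the $\POR^\lambda$-constants (axioms 1.--6., plus the $(\beta)$- and $(\nu)$-axioms), these are atomic (equations), so $x\Vdash F$ is just $F\vee T(x)$, and we simply take the left disjunct: the axiom itself is already a theorem of $I\POR^\lambda$. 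The one rule we are explicitly \emph{allowed} to skip is the $\NP$-induction schema (axiom 7.), which is precisely the hypothesis ``without using $\NP$-induction''; this is the clause that the \emph{next} lemma (using Markov's principle) will have to handle separately, and the whole point of isolating the present statement is that everything \emph{else} survives the translation unconditionally.

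\textbf{The expected obstacle.} The delicate point is not any single rule but the treatment of the logical axioms/rules in the presence of the \emph{non-$\Sigma^b$} formula $T(x)$: because $T(x)$ carries an unbounded existential quantifier, one must be careful that the translated formulas $x\Vdash F$ stay within a fragment where the ambient intuitionistic predicate calculus of $I\POR^\lambda$ can still reason about them — in particular, the proof of monotonicity~(ii) for the $\to$-clause requires commuting a universally quantified hypothesis past $T(x)$, and one needs the earlier-established semilattice inequalities ($x*y\sqsubseteq y$, and $T(x*y)\to T(y)$) to be available as genuine $I\POR^\lambda$-theorems rather than merely semantic facts. I expect the bulk of the real work to be in verifying fact~(i), $\vdash_{I\POR^\lambda}T(x)\to(x\Vdash F)$, cleanly by induction on $F$, since every other step then reduces to ``take the obvious disjunct / apply monotonicity / reuse the standard realizability computation''. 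Once Lemma~\ref{lemma:13} is established, combining it with Proposition~\ref{prop:12} (classical provability of a $\Sigma^b_1$-formula reduces to provability with Markov) and with the decomposition of $\NP$-induction under Markov will yield the desired $\forall\NP$-conservativity of $I\POR^\lambda+EM$ over $I\POR^\lambda$, and hence the extension of Corollary~\ref{cor:IPOR} from $I\RS$ to classical $\RS$.
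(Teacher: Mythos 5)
Your proposal takes the same route as the paper, which merely cites the Coquand--Hofmann technique and says ``by induction'' — so the detailed plan you give (induction on the derivation, with two auxiliary lemmas proved by induction on the formula, plus inspection of each axiom and rule) is exactly what the paper leaves implicit. Your fact~(i), $\vdash_{I\POR^\lambda}T(x)\to(x\Vdash F)$, is correct and is indeed the workhorse, and your observation that the atomic/equational axioms are handled by simply taking the left disjunct $F$ in $F\vee T(x)$ is right.

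There is one concrete error in the setup: you state monotonicity as ``if $x\sqsubseteq x'$ (i.e.\ $\vdash T(x')\to T(x)$) then $\vdash(x\Vdash F)\to(x'\Vdash F)$.'' This is backwards with respect to the paper's convention $\termO\sqsubseteq\termT\Leftrightarrow\vdash T(\termT)\to T(\termO)$. Already at the atomic clause, $x\Vdash F$ is $F\vee T(x)$ and $x'\Vdash F$ is $F\vee T(x')$; from $T(x')\to T(x)$ you can only conclude $(x'\Vdash F)\to(x\Vdash F)$, i.e.\ passing to the \emph{smaller} (more collapsed) condition preserves forcing, not the other way around. The correct statement is therefore: $x\sqsubseteq x'$ implies $\vdash(x'\Vdash F)\to(x\Vdash F)$; with $x*y\sqsubseteq x$ this gives $(x\Vdash F)\to(x*y\Vdash F)$, which is the direction actually used when combining hypotheses realized at different conditions. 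The rest of your argument (that $\to$-elimination requires no monotonicity because the meet $x*y$ is built into the clause for implication, and that the semilattice inequalities must be genuine $I\POR^\lambda$-theorems) is sound once the direction is fixed.
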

\begin{proof}[Proof Sketch]
By induction on the structure of formulas of  
$I\POR^\lambda$ as in~\cite{CoquandHofmann}.
\end{proof}

\begin{lemma}\label{lemma:14}
Let $F=\exists x\preceq \termO.G$,
where $F$ is a $\Sigma^b_0$-formula.
Then, there is a term $\termT_F : s$
with $FV(\termT_F)=FV(G)$ such that
$\vdash_{I\POR^\lambda} F \leftrightarrow T(\termT_F)$.
\end{lemma}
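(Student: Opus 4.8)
The statement to prove is: for a $\Sigma^b_0$-formula $F = \exists x \preceq \termO.\,G$ with $G$ quantifier-free (an equation, or built from equations, subword-bounded pieces — but since $F$ is $\Sigma^b_0$ the body $G$ contains only subword quantifiers, which by the earlier machinery we may also treat via $\POR^\lambda$-terms), there is a term $\termT_F : s$ with $FV(\termT_F) = FV(G)$ and $\vdash_{I\POR^\lambda} F \leftrightarrow T(\termT_F)$, where $T(x) := \exists y.(\textsf{B}(\textsf{app}(x,y)) = \zeroT)$.

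The plan is to proceed by exhibiting $\termT_F$ explicitly as an encoding of a partial characteristic-like function of $F$. First I would invoke Proposition~\ref{prop:IPORprop}: since $G$ is (equivalent in $I\POR^\lambda$ to) a $\Sigma^b_0$-formula, there is a $\POR^\lambda$-term $\termO_G(\vec z)$ with $\vdash_{I\POR^\lambda} G \leftrightarrow \termO_G = \zeroT$ and $\vdash_{I\POR^\lambda} \termO_G = \zeroT \vee \termO_G = \oneT$. Then the bounded existential $\exists x \preceq \termO.\,G$ is, provably in $I\POR^\lambda$, equivalent to $\exists x.(\textsf{Sub}(\one^x,\one^{\termO}) = \oneT \wedge \termO_G(x,\vec z) = \zeroT)$, i.e.\ to the existence of a witness $x$ below the length bound making a $\POR^\lambda$-term vanish. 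Now I would define $\termT_F := \sharp(\lambda x.\,\textsf{BOr}(\textsf{BNeg}(\textsf{B}(\textsf{Sub}(\one^x,\one^{\termO}))),\,\textsf{B}(\termO_G(x,\vec z))))$ — that is, encode the function $x \mapsto 0$ exactly when $x$ is a length-$\leq|\termO|$ witness of $G$, and $1$ otherwise. (Strictly, the inner bound on $x$ should be handled so the $\lambda$-term is a genuine $\POR^\lambda$-term; since $\one^{\termO}$ is available and $\textsf{Sub}$ detects the prefix relation, this is fine, and one may additionally truncate $x$ at $\termO$ to stay within a polynomial space — an explicit bounded-recursion definition suffices.)

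With $\termT_F$ so defined, the equivalence $F \leftrightarrow T(\termT_F)$ is proved by unwinding definitions inside $I\POR^\lambda$. Using the axiom $\vdash_{I\POR^\lambda}\textsf{app}(\sharp\textsf{f},x) = \textsf{f}x$, we get $\textsf{B}(\textsf{app}(\termT_F,x)) = \textsf{B}(\textsf{BOr}(\dots))$, which evaluates to $\zeroT$ precisely when both $\textsf{B}(\textsf{Sub}(\one^x,\one^{\termO})) = \zeroT$ — i.e.\ $\textsf{Sub}(\one^x,\one^{\termO}) = \oneT$, i.e.\ $x \preceq \termO$ — and $\termO_G(x,\vec z) = \zeroT$, i.e.\ $G(x,\vec z)$. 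Hence $T(\termT_F) \equiv \exists x.(\textsf{B}(\textsf{app}(\termT_F,x)) = \zeroT)$ is provably equivalent to $\exists x.(x \preceq \termO \wedge G)$, which is $F$. Each direction is a short derivation using the defining equations of $\textsf{BOr},\textsf{BNeg},\textsf{B},\textsf{Sub}$ from the theory $\POR^\lambda$ (available in $I\POR^\lambda$ by Proposition~\ref{prop:PORtoIPOR}) together with the case-distinction axiom $x = \epsilon \vee x = \textsf{Tail}(x)\zeroT \vee x = \textsf{Tail}(x)\oneT$ to reason about Boolean values. The side condition $FV(\termT_F) = FV(G)$ holds by inspection, since the only free variables introduced are those of $\termO$ and $\termO_G$, and $FV(\termO) \subseteq FV(F) = FV(G)$ (the bound $\termO$ mentions only parameters of $F$), while $FV(\termO_G) = FV(G)$ by Proposition~\ref{prop:IPORprop}; the bound variable $x$ is lambda-abstracted.

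The main obstacle I anticipate is purely bookkeeping rather than conceptual: making sure the $\lambda$-term encoding $\termT_F$ is a \emph{bona fide} term of $\POR^\lambda$ — in particular that the implicit dependence on the length bound $\termO$ is realized by the available constants ($\textsf{Sub}$, $\textsf{Trunc}$, concatenation, $\textsf{Times}$) and does not secretly require unbounded search — and that the internal encoding $\sharp/\textsf{app}$ interacts correctly with substitution of the free parameters $\vec z$. Both are handled by the same techniques already used in the construction of $\termO_G$ in Proposition~\ref{prop:IPORprop} and in the preceding paragraphs' use of $\sharp$, so no genuinely new difficulty arises; the proof is essentially a careful composition of constructions already in hand. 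This lemma then feeds directly into the realizability argument for $I\POR^\lambda + (\text{Markov})$, where $\termT_F$ plays the role of the "truth value as a semilattice element" needed to interpret $\Sigma^b_0$-formulas as instances of $T(\cdot)$.
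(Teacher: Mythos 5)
Your proposal is correct and follows essentially the same route as the paper: both reduce the matrix $x\preceq\termO\wedge G$ to a characteristic $\POR^\lambda$-term via Proposition~\ref{prop:IPORprop}, set $\termT_F=\sharp(\lambda x.\cdots)$, and match the definition of $T$ through the axiom $\textsf{app}(\sharp\textsf{f},x)=\textsf{f}x$. The only (immaterial) difference is that you assemble the conjunction with the bound explicitly via $\textsf{Sub}$ and Boolean combinators, whereas the paper applies the proposition to the whole formula $x\preceq\termO\wedge G$ at once.
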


\begin{proof}
Since $G(x)$ is a $\Sigma^b_0$-formula, 
for all terms $\termT:s$,
$\vdash_{I\POR^\lambda} G(x) \leftrightarrow
\termT_{x\preceq \termO \wedge G}(x)=\zeroT$,
where $\termO_{x\preceq \termO\wedge G}$
has the free variables of $t$ and $G$.
Let $H(x)$ be a $\Sigma^b_0$-formula, it
is shown by induction on its structure that for any term
$\termF: s$,
$\termO_{H(\termF)} = \termO_{H}(\termF)$.
Then, $\vdash_{I\POR^\lambda}
\vdash F \leftrightarrow \exists x. \termO_{x\preceq \termT
\wedge G}(x) = \zeroT \leftrightarrow
\exists x.T(\sharp(\lambda x.
\termO_{x\preceq \termO\wedge G}(x)))$.
So, we let $\termT_F = \sharp(\lambda x.
\termO_{x\preceq \termT \wedge G}
(x))$.
\end{proof}
From which we deduce the following three properties:
i. $\vdash_{I\POR^\lambda} (x\Vdash F) \leftrightarrow
(F\vee T(x))$;
ii. $\vdash_{I\POR^\lambda} (x\Vdash F)
\leftrightarrow (F\rightarrow T(x))$;
iii. $\vdash_{I\POR^\lambda}(x\Vdash \neg\neg F)
\leftrightarrow (F\vee T(x))$,
where $F$ is a $\Sigma^b_1$-formula.

\begin{cor}[Markov's Principle]
If $F$ is a $\Sigma^b_1$-formula, then
$\vdash_{I\POR^\lambda} x \Vdash \neg \neg F \rightarrow F$.
\end{cor}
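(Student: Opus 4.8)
I read the statement as $\vdash_{I\POR^{\lambda}} x \Vdash (\neg\neg F \to F)$, i.e.~that the $\Vdash$-interpretation validates Markov's principle for $\Sigma^b_1$-formulas. The plan is to unfold this one step and reduce it to a propositional fact about the meet semi-lattice on terms of type $s$. By the clause defining $\Vdash$ on implications,
\[
x \Vdash (\neg\neg F \to F) \ \equiv\ \forall y.\big((y \Vdash \neg\neg F) \to (x * y \Vdash F)\big).
\]
Since $F$ is $\Sigma^b_1$, property (iii) above gives $\vdash_{I\POR^{\lambda}}(y\Vdash \neg\neg F)\leftrightarrow (F \vee T(y))$, and property (i) — which, in turn, comes from Lemma~\ref{lemma:14} together with the distribution of $\exists$ over $\vee$, computing $z \Vdash T(s) \leftrightarrow T(s)\vee T(z)$ — gives $\vdash_{I\POR^{\lambda}}(z\Vdash F)\leftrightarrow(F\vee T(z))$ for every term $z$, in particular at $z = x * y$. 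Hence the goal reduces to deriving in $I\POR^{\lambda}$ the formula $\forall y.\big((F\vee T(y))\to(F\vee T(x * y))\big)$.

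This last formula I would prove by fixing $y$ and case-splitting on the hypothesis $F \vee T(y)$: in the first case $F$ already yields $F\vee T(x * y)$ by $\vee$-introduction; in the second case $T(y)$ yields $T(x * y)$ by the semi-lattice inequality $x * y\sqsubseteq y$ — unfolded, this is exactly $\vdash_{I\POR^{\lambda}} T(y)\to T(x * y)$, already established when the semi-lattice structure was set up — and then $F\vee T(x * y)$ follows. Re-tracing the equivalences above then yields $\vdash_{I\POR^{\lambda}} x \Vdash(\neg\neg F \to F)$.

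The corollary is therefore short once properties (i)--(iii) are in hand, and I do not expect a genuine obstacle inside it: the substance — that after the $\Vdash$-translation both $\neg\neg F$ and $F$ collapse to $F\vee T(-)$, so that $\neg\neg F\to F$ becomes monotone along the order of the semi-lattice — is the familiar feature of Friedman/Dragalin-style $A$-translations, here with the varying ``$A$'' realized by the predicate $T(-)$, and it is all absorbed into Lemma~\ref{lemma:13}, Lemma~\ref{lemma:14} and properties (i)--(iii). The one point worth care is the world bookkeeping: the translated antecedent lives at $y$ while the translated succedent lives at the meet $x * y$, so the lattice inequality doing the work is $x * y\sqsubseteq y$ and \emph{not} $x * y\sqsubseteq x$. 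If the intended reading kept $F$ in the explicit form $\exists z.F'$ with the outer quantifier unbounded, the same computation goes through verbatim, provided the instance of property (iii) invoked is the one for that shape.
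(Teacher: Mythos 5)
Your reading of the statement — that $x \Vdash (\neg\neg F \to F)$ is provable, i.e.~that the $\Vdash$-translation validates Markov's principle — is the intended one, and your derivation is correct. The paper presents this corollary with no explicit argument, as an immediate consequence of the displayed properties (i)–(iii), and you supply precisely that derivation: unfold the implication clause of $\Vdash$, rewrite the antecedent via (iii) to $F \vee T(y)$ and the consequent via (i) to $F \vee T(x*y)$, and close by case analysis using the semi-lattice fact $T(y)\to T(x*y)$, i.e.~$x*y\sqsubseteq y$. Your parenthetical observation that it is $x*y\sqsubseteq y$ rather than $x*y\sqsubseteq x$ that does the work is exactly the one bookkeeping point one must not get wrong, and you get it right.
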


To define the extension $(I\POR^\lambda)^*$
of $I\POR^\lambda$, we introduce PIND($F$) as:
$$
(F(\epsilon) \wedge (\forall x.
(F(x) \rightarrow F(x\zeroT))
\wedge 
\forall x.(F(x) \rightarrow F(x\oneT))))
\rightarrow \forall x.F(x).
$$
Observe that if $F(x)$ is a formula of the form
$\exists y\preceq \termO.\termT=\termF$,
then $z\Vdash$ PIND$(F)$ is of the form PIND$(F(x)
\vee T(z))$,
which is \emph{not} an instance of the $\NP$-induction
schema.

\begin{defn}
Let $(I\POR^\lambda)^*$ be the theory extending 
$\POR^\lambda$ with all instances of the induction schema
PIND$(F(x) \vee G)$,
where $F(x)$ is of the form
$\exists y\preceq \termO.\termT=\termF$,
and $G$ is an arbitrary formula with $x\not\in FV(G)$.
\end{defn}
\noindent
The following Proposition relates derivability in $I\POR^\lambda$
and in $(I\POR^\lambda)^*$.

\begin{prop}
For any $\Sigma^b_1$-formula $F$,
if $\vdash_{I\POR^\lambda} F$,
then $\vdash_{(I\POR^\lambda)^*}x\Vdash F$.
\end{prop}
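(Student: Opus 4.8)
The plan is to prove the statement by induction on the derivation of $F$ in $I\POR^\lambda$, following the argument for Lemma~\ref{lemma:13} and isolating the single case responsible for the passage from $I\POR^\lambda$ to $(I\POR^\lambda)^*$, namely a leaf that is an instance of the $\NP$-induction schema. As a preliminary, I would record that $(I\POR^\lambda)^*$ proves every theorem of $I\POR^\lambda$: each $\NP$-induction instance is, up to the provable equivalence $F_0(x)\vee\bot\leftrightarrow F_0(x)$, the instance $\mathrm{PIND}(F_0(x)\vee\bot)$ of the schema of $(I\POR^\lambda)^*$, and $\mathrm{PIND}$ respects provable equivalence. Hence, for any subderivation not using $\NP$-induction, Lemma~\ref{lemma:13} already gives $\vdash_{I\POR^\lambda}x\Vdash F$, and \emph{a fortiori} $\vdash_{(I\POR^\lambda)^*}x\Vdash F$; and every inference rule of the intuitionistic predicate calculus, as well as every non-induction axiom of $I\POR^\lambda$, is handled by precisely the inductive step of the proof of Lemma~\ref{lemma:13}, since that step only uses the defining clauses of $\Vdash$, the meet-semilattice laws for $*$ and $\sqsubseteq$ (commutativity, associativity, idempotence, $a*b\sqsubseteq a$, $a*b\sqsubseteq b$, top element $\underline{\oone}$, monotonicity $T(a)\to T(a*b)$), and the three properties displayed just above the definition of $(I\POR^\lambda)^*$ — all of which are available in $(I\POR^\lambda)^*$. (The restriction to $\Sigma^b_1$-formulas in the statement is not needed for this argument; it is retained only for the later applications.)

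It remains to treat a leaf that is an $\NP$-induction instance, say $F=\big(F_0(\epsilon)\wedge\forall x.(F_0(x)\to F_0(x\zeroT))\wedge\forall x.(F_0(x)\to F_0(x\oneT))\big)\to\forall x.F_0(x)$ with $F_0(x)$ an $\NP$-predicate. Choose the forcing variable $z$ fresh, so that $x\notin FV(T(z))$. Unfolding the clauses for $\Vdash$ through the outermost $\to$, $\wedge$ and $\forall$, and using property~i to flatten $w\Vdash F_0(s)$ to $F_0(s)\vee T(w)$ on the $\NP$-subformulas $F_0(\epsilon),F_0(x),F_0(x\zeroT),F_0(x\oneT)$, one reduces $z\Vdash F$ — provably in $I\POR^\lambda$, using the semilattice identities (evaluation of the inner universal quantifiers at the top element $\underline{\oone}$, $T(a)\to T(a*b)$, idempotence) — to the formula $\forall w.\big(\pi(w)\to\forall x.(F_0(x)\vee T(z*w))\big)$, where $\pi(w)$ denotes the premise of $\mathrm{PIND}(F_0(x)\vee T(w))$; this is exactly the shape anticipated by the observation recorded just before the statement. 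Now $T(w)\to T(z*w)$, so $\pi(w)$ implies the premise of $\mathrm{PIND}(F_0(x)\vee T(z*w))$; the latter is a legitimate instance of the induction schema of $(I\POR^\lambda)^*$, since $F_0(x)$ is an $\NP$-predicate and $x\notin FV(T(z*w))$, and it yields $\forall x.(F_0(x)\vee T(z*w))$. Generalising on $w$ gives $z\Vdash F$, provably in $(I\POR^\lambda)^*$, which closes the induction.

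I expect the main obstacle to be precisely this last case. The forcing translation of an induction instance is \emph{a priori} a nest of clauses of the form $\forall w.(w\Vdash{-}\to z*w\Vdash{-})$ with the meet operation interleaved, and one must check — by an actual derivation in $(I\POR^\lambda)^*$, not merely an informal inspection — that it collapses to something derivable from the $\mathrm{PIND}$ schema. Two features make the collapse work: property~i, which converts $w\Vdash H$ into $H\vee T(w)$ on the relevant ($\Sigma^b_1$) fragment, and the genericity of the side formula $G$ in $\mathrm{PIND}(F_0(x)\vee G)$, which permits the instantiation $G:=T(z*w)$ with $w$ free and thereby absorbs all the residual ``$\sqsubseteq$''-bookkeeping into one parameter. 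Everything else reduces to the routine structural induction already carried out for Lemma~\ref{lemma:13}.
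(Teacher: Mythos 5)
The paper states this proposition without a proof, giving only the informal remark that $z \Vdash \mathrm{PIND}(F_0)$ ``is of the form'' $\mathrm{PIND}(F_0(x) \vee T(z))$; your argument is a correct fleshing-out of exactly the strategy the paper leaves implicit — induction on derivations, the preceding lemma for all rules and non-induction axioms, and the $\mathrm{PIND}(F_0\vee G)$ schema for an $\NP$-induction leaf. You are right that the paper's remark is a shorthand: unfolding $z\Vdash\mathrm{PIND}(F_0)$ actually produces an outer $\forall w$ with side formula $T(z*w)$, not a literal instance $\mathrm{PIND}(F_0(x)\vee T(z))$, and the collapse to $\forall w.\bigl(\pi(w)\to\forall x.(F_0(x)\vee T(z*w))\bigr)$ genuinely needs property~(i) for the (existentially bounded, non-atomic) $\NP$-predicate $F_0$ together with the semilattice laws $T(a)\to T(a*b)$, idempotence, and the top element. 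Your derivation that $\pi(w)$ entails the premise of $\mathrm{PIND}(F_0(x)\vee T(z*w))$ (passing $T(w)\to T(z*w)$ through the base clause and a case split $F_0(x)$ versus $T(z*w)$ in the step clauses) is also sound, so the argument closes as you describe.
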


Finally, we extend realizability to $(I\POR^\lambda)^*$ by
constructing a realizer fo PIND($F(x) \vee G$).

\begin{lemma}\label{lemma:PIND}
Let $F(x):\exists y\preceq \termO.\termT=\zeroT$
and $G$ be any formula not containing free occurrences
of $x$.
Then, there exist terms $\mathbf{t}$ such that
$\vdash_{I\POR^\lambda} \mathbf{t} \ \realize$ 
PIND$(F(x) \vee G)$.
\end{lemma}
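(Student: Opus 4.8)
Lemma~\ref{lemma:PIND} asks for a realizer of the schema $\mathrm{PIND}(F(x)\vee G)$, where $F(x)$ is an $\NP$-predicate $\exists y\preceq \termO.\termT=\zeroT$ and $G$ is an arbitrary formula not containing $x$ free. The instance in question is
$$
\big((F(\epsilon)\vee G) \wedge \forall x.((F(x)\vee G)\to (F(x\zeroT)\vee G)) \wedge \forall x.((F(x)\vee G)\to(F(x\oneT)\vee G))\big) \to \forall x.(F(x)\vee G).
$$

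\textbf{Plan.} The plan is to build the realizer by a $\POR$-style bounded recursion on notation, exploiting the fact that the ``payload'' $F(x)$ is an $\NP$-predicate whose witness is bounded by $\termO$. First I would unwind what it means to realize a disjunction and an implication under the definition of $\realize$: a realizer of $(F(x)\vee G)$ is a tuple $z,\mathbf a,\mathbf b$ where either $z=\zeroT$ and $\mathbf a\realize F(x)$, or $z\neq\zeroT$ and $\mathbf b\realize G$; and a realizer of $F(x)=\exists y\preceq\termO.\termT=\zeroT$ is just (a bound for) the witness $y$, since $\termT=\zeroT$ is an equation and hence self-realizing by Proposition~\ref{prop:IPORprop}. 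So a realizer of $F(x)\vee G$ essentially packages a flag, a candidate witness string, and a realizer of $G$. Given realizers $\mathbf t_\epsilon$ of the base $(F(\epsilon)\vee G)$ and $\mathbf t_{h_0},\mathbf t_{h_1}$ of the two step implications $(F(x)\vee G)\to(F(x\boolT)\vee G)$, I would define, by bounded recursion on notation on $x$, a function $r(x)$ that starts from $\mathbf t_\epsilon$ and iteratively applies $\mathbf t_{h_0}$ or $\mathbf t_{h_1}$ along the binary digits of $x$; the truncation side-condition in the $\POR$ recursion schema is satisfied because the witness component is provably $\preceq\termO(x)$, which is itself given by explicit composition of $\conc$ and $\ttimes$. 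This $r$ is then assembled (together with the fixed realizer of $G$, propagated through the step realizers in the ``$z\neq\zeroT$'' branch) into the sequence of terms $\mathbf t$ realizing the whole conditional $\mathrm{PIND}(F(x)\vee G)$.

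\textbf{Key steps, in order.} (1) Spell out the shape of a realizer of $(F(x)\vee G)$ and observe, via Proposition~\ref{prop:IPORprop}, that the $F$-component reduces to a bounded string (the $\NP$-witness), so realizers here are first-order data manipulable in $\POR^\lambda$. (2) From the hypotheses of $\mathrm{PIND}$, extract by Theorem~\ref{thm:Rcomp} the realizing terms $\mathbf t_\epsilon,\mathbf t_{h_0},\mathbf t_{h_1}$ for base and steps. (3) Define $r:s\arrowT s$ (or a tuple of such) by $\textsf{Rec}$ with base $\mathbf t_\epsilon$ and step functions given by $\mathbf t_{h_0},\mathbf t_{h_1}$, using as truncation bound $k$ a term built from $\termO$ by explicit definition; this is legal because the recursion schema of $I\POR^\lambda$ admits exactly such bounded recursion and because the witness component stays $\preceq\termO$ by construction. (4) Verify in $I\POR^\lambda$, by $\NP$-induction on $x$ (which is available), that $r(x)$ realizes $(F(x)\vee G)$ for all $x$, using at each step the fact that $\mathbf t_{h_\boolT}$ realizes the step implication and that truncation at $\termO$ does not destroy the witness property. (5) Abstract over $x$ to get a realizer of $\forall x.(F(x)\vee G)$, then package the map ``(realizers of the three conjuncts) $\mapsto$ (realizer of $\forall x.(F(x)\vee G)$)'' as the realizer $\mathbf t$ of the implication $\mathrm{PIND}(F(x)\vee G)$, following the clause for $\realize$ on $\to$.

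\textbf{Main obstacle.} The delicate point is step (3)–(4): making the bounded-recursion definition of $r$ literally fit the $\POR$/$I\POR^\lambda$ recursion schema, i.e.\ checking that the recurrence with the truncation operator $|_{\termO}$ genuinely preserves the realizer invariant, and in particular that truncating the witness component at length $\termO(x\boolT)$ still yields a valid witness of $F(x\boolT)=\exists y\preceq\termO.\termT=\zeroT$. This is precisely the kind of bookkeeping that mirrors the bounded-recursion case in the proof of Theorem~\ref{thm:RStoPOR}$(\Leftarrow)$ and in Ferreira's treatment; I expect no conceptual difficulty but some care is needed so that the realizer stays a genuine $\POR^\lambda$-term (no higher-type variables sneak in) and so that the induction in step (4) is an $\NP$-induction, which forces the invariant itself to be expressible as an $\NP$-predicate (up to the harmless $T(\cdot)$-disjunct, exactly as in the $\Vdash$-translation machinery already set up above). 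The rest is routine assembly using Theorems~\ref{thm:Rsound} and~\ref{thm:Rcomp}.
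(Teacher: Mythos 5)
The paper states this lemma without giving a proof (it defers to the Cook--Urquhart-style construction), so what follows measures your plan against the construction the paper is implicitly invoking. Your treatment of the string-valued part of the realizer is the right one: since $F(x)$ is an $\NP$-predicate, its realizer collapses to a bounded witness string, so the flag and the $F$-witness of a realizer of $F(x)\vee G$ can be computed by $\textsf{Rec}$ along the digits of $x$, with the truncation bound harmless because the witness is provably $\preceq\termO$; and the verification of the string-level invariant is indeed an $\NP$-induction. One correction to step (2): the base and step realizers $\mathbf t_\epsilon,\mathbf t_{h_0},\mathbf t_{h_1}$ are not extracted via Theorem~\ref{thm:Rcomp} (there are no proofs of the hypotheses of PIND to extract from); by the clause of $\realize$ for $\to$ they are simply the free realizer variables of the antecedent, over which the final term $\mathbf t$ must abstract.

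The genuine gap is the $G$-component of the final realizer. You describe it as ``the fixed realizer of $G$, propagated through the step realizers in the $z\neq\zeroT$ branch'', but no realizer of $G$ is given in advance: one only materializes when the base realizer, or some application of a step realizer, lands in the right disjunct. Propagating it stage by stage would mean iterating the higher-type functions $\mathbf t_{h_\boolT}(u)$ along the $|x|$ digits of $x$, i.e.\ a recursion whose value type is the (arbitrary, possibly higher) type of realizers of $G$ --- exactly what $\textsf{Rec}$, which returns type $s$ only, cannot do, and this is not repaired by the bookkeeping you flag as the ``main obstacle''. The missing idea is that no propagation is needed: if the final flag is $\neq\zeroT$, locate (by a further string recursion on the trace) the last $u\subseteq x$ at which the flag was $\zeroT$; then a \emph{single} application $\mathbf t_{h_\boolT}(u)(\zeroT,\mathbf a_u,\text{dummy})$ already yields a tuple whose flag is $\neq\zeroT$ and whose $G$-component therefore realizes $G$, and that one term can serve as the $G$-component of the realizer of $F(x)\vee G$ for the final $x$ (with the degenerate case, flag never $\zeroT$, handled by the $G$-component of $\mathbf t_\epsilon$). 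Relatedly, the induction in your step (4) cannot be run on the statement ``$r(x)\realize F(x)\vee G$'' itself, since that formula mentions $\realize G$ for arbitrary $G$ and is not an $\NP$-predicate; the argument must separate the $\NP$-inductive string invariant from the (induction-free) justification of the $G$-component sketched above. Your appeal to the $T(\cdot)$-disjunct belongs to the $\Vdash$-translation used for Markov's principle, not to $\realize$, and does not address this point.
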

So, by Theorem~\ref{thm:Rsound},
for any $\Sigma^b_1$-formula $F$
and formula $G$, with $x\not\in FV(F)$,
$\vdash_{I\POR^\lambda}$ PIND$(F(x) \vee G)$.

\begin{cor}[$\forall \NP$-Conservativity]
Let $F$ be a $\Sigma^b_1$-fromula.
If $\vdash_{I\POR^\lambda + EM} \forall x.\exists y\preceq \termO.
F(x,y)$,
then $\vdash_{I\POR^\lambda}\forall x.\exists y\preceq \termO.
F(x,y)$.
\end{cor}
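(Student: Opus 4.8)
The plan is to eliminate the excluded middle in two stages. First, I would apply Proposition~\ref{prop:12}: since $F$ is $\Sigma^b_1$, so is the matrix $\exists y\preceq\termO.F(x,y)$, and the negative‑translation argument underlying Proposition~\ref{prop:12} goes through verbatim for the $\forall\Sigma^b_1$ sentence $\forall x.\exists y\preceq\termO.F(x,y)$ — the leading $\forall x$ commutes with the double‑negation translation, and Markov's principle is invoked only to strip the double negation from the inner $\Sigma^b_1$ kernel and from the $\NP$‑induction instances, exactly as in the sketch of Proposition~\ref{prop:12}. This yields $\vdash_{I\POR^\lambda+(\mathrm{Markov})}\forall x.\exists y\preceq\termO.F(x,y)$, so it remains to remove Markov's principle.

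For the second stage I would push this derivation through the interpretation $z\Vdash(-)$. The key claim is: \emph{for every formula $G$, if $\vdash_{I\POR^\lambda+(\mathrm{Markov})}G$ then $\vdash_{I\POR^\lambda}z\Vdash G$ for a fresh variable $z$}. This is an induction on the derivation, extending the proposition that precedes Lemma~\ref{lemma:PIND}. The rules and axioms of $I\POR^\lambda$ not using $\NP$‑induction are handled by Lemma~\ref{lemma:13}; each $\NP$‑induction instance $\mathrm{PIND}(F_0(x))$ is sent by $\Vdash$ to $\mathrm{PIND}(F_0(x)\vee T(z))$, an instance of the schema of $(I\POR^\lambda)^*$, which by the realizer of Lemma~\ref{lemma:PIND} together with Theorem~\ref{thm:Rsound} is already a theorem of $I\POR^\lambda$, so the whole $(I\POR^\lambda)^*$‑derivation can be replayed inside $I\POR^\lambda$; and the Markov axiom is sent to a formula provable from the Markov's Principle corollary $\vdash_{I\POR^\lambda}z\Vdash\neg\neg F\to F$ (for $\Sigma^b_1$ $F$) and properties (i)–(iii) of $\Vdash$. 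Applying the claim to $G:=\forall x.\exists y\preceq\termO.F(x,y)$ gives $\vdash_{I\POR^\lambda}z\Vdash\forall x.\exists y\preceq\termO.F(x,y)$.

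Finally I would unfold the forcing clauses. Since $y\preceq\termO$ is atomic and $F$ is $\Sigma^b_1$, property (i) gives, provably in $I\POR^\lambda$,
\[
z\Vdash\forall x.\exists y\preceq\termO.F(x,y)\ \leftrightarrow\ \forall x.\exists y.\big((y\preceq\termO\wedge F(x,y))\vee T(z)\big).
\]
As $z$ is fresh it may be instantiated; taking $z:=\underline{\oone}=\sharp(\lambda x.\oneT)$ one computes $\textsf{app}(\underline{\oone},y)=\oneT$ and $\textsf{B}(\oneT)=\oneT$, whence $T(\underline{\oone})\leftrightarrow\exists y.(\oneT=\zeroT)$, i.e.\ $T(\underline{\oone})$ is provably $\bot$. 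Therefore $\vdash_{I\POR^\lambda}\forall x.\exists y.(y\preceq\termO\wedge F(x,y))$, which is exactly $\vdash_{I\POR^\lambda}\forall x.\exists y\preceq\termO.F(x,y)$.

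The main obstacle is Lemma~\ref{lemma:PIND}: building inside $\POR^\lambda$ a realizer for $\mathrm{PIND}(F_0(x)\vee G)$. This is the one step that genuinely exploits the bounded‑recursion combinator $\textsf{Rec}$ and a course‑of‑values argument in the style of Cook–Urquhart; once it is in place, the rest — the induction on derivations for the $\Vdash$‑translation and the unfolding above — is bookkeeping. A secondary delicate point is checking that the $\Vdash$‑translation sends the Markov axiom to an $I\POR^\lambda$‑provable formula, which is precisely where the calibration of the Markov's Principle corollary to the $\Sigma^b_1$ fragment is used.
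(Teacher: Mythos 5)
Your proof is correct and follows the paper's overall strategy — trade $\mathrm{EM}$ for Markov via Proposition~\ref{prop:12}, then eliminate Markov by pushing the derivation through the translation $z\Vdash(-)$ — but it finishes by a genuinely different route, and it is in fact a little cleaner. The paper's (implicit) argument routes through Proposition~\ref{prop:Markov}: there one first lands in $(I\POR^\lambda)^*$ via the unnumbered proposition, then takes $z:=\termT_G$ with $\termT_G$ supplied by Lemma~\ref{lemma:14} for the $\Sigma^b_1$-matrix $G=\exists y\preceq\termO.F(x,y)$, uses property (i) to collapse $\termT_G\Vdash G$ to $G\vee T(\termT_G)\leftrightarrow G\vee G\leftrightarrow G$, and finally switches to the other realizability $\realize$ (Theorem~\ref{thm:Rcomp} plus Lemma~\ref{lemma:PIND}) to \emph{extract a Herbrand term} and conclude $\vdash_{I\POR^\lambda}\forall x.F(x,\termO x)$, which is stronger than what the corollary asks for. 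You instead fold Lemma~\ref{lemma:PIND}$+$Theorem~\ref{thm:Rsound} directly into the induction on derivations, so that the $\Vdash$-translated $\NP$-induction instances are already $I\POR^\lambda$-theorems and the intermediate claim lands in $I\POR^\lambda$ without the detour through $(I\POR^\lambda)^*$; and at the end you instantiate $z:=\underline{\oone}$, whose $T$-value is provably $\bot$ by the very computation $\textsf{app}(\sharp(\lambda x.\oneT),y)=\oneT$, $\textsf{B}(\oneT)=\oneT$. This dispenses with Lemma~\ref{lemma:14} and with the second realizability interpretation entirely, at the (acceptable) cost of losing the explicit witnessing term that Proposition~\ref{prop:Markov} provides. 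Your remark that Proposition~\ref{prop:12}'s negative-translation argument lifts to the $\forall\Sigma^b_1$ sentence, with Markov used only to peel double negations off the $\Sigma^b_1$ kernel, is exactly the point the paper also uses silently when it applies Proposition~\ref{prop:12} to $\forall x.\exists y\preceq\termO.F(x,y)$.
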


We conclude the proof establishing the following 
Proposition~\ref{prop:Markov}.

\begin{prop}\label{prop:Markov}
Let $\forall x.\exists y\preceq \termO.F(x,y)$
be a closed term of $I\POR^\lambda$+(Markov),
where $F$ is a $\Sigma^b_1$-formula.
Then, there is a closed term of $\POR^\lambda$
$\termO:s \arrowT s$ such that
$\vdash_{I\POR^\lambda} \forall x.F(x,\termO x)$.
\end{prop}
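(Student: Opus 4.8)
The plan is to deduce Proposition~\ref{prop:Markov} from the $\forall\NP$-Conservativity corollary established above together with the witness-extraction corollary following Theorem~\ref{thm:Rcomp}, using only the trivial remark that every instance of $(\mathrm{Markov})$ is classically valid. First I would note that each $(\mathrm{Markov})$-axiom $\neg\neg(\exists z. G)\rightarrow\exists z. G$ is an instance of the classical law $\neg\neg A\rightarrow A$ and is hence a theorem of $I\POR^\lambda+\mathrm{EM}$; consequently any derivation in $I\POR^\lambda+(\mathrm{Markov})$ can be replayed in $I\POR^\lambda+\mathrm{EM}$, so from the hypothesis $\vdash_{I\POR^\lambda+(\mathrm{Markov})}\forall x.\exists y\preceq\termO.F(x,y)$ we obtain $\vdash_{I\POR^\lambda+\mathrm{EM}}\forall x.\exists y\preceq\termO.F(x,y)$.

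Next, since this statement has exactly the shape to which the $\forall\NP$-Conservativity corollary applies (a universally quantified, bounded-existential formula with $\Sigma^{b}_{1}$-matrix), I would invoke it directly to get $\vdash_{I\POR^\lambda}\forall x.\exists y\preceq\termO.F(x,y)$. Now $\exists y\preceq\termO.F(x,y)$ is by definition $\exists y.(\one^{y}\subseteq\one^{\termO}\wedge F(x,y))$, and the matrix $F'(x,y):=\one^{y}\subseteq\one^{\termO}\wedge F(x,y)$ is again $\Sigma^{b}_{1}$ — the fresh atom $\one^{y}\subseteq\one^{\termO}$ carries no quantifier and commutes past the bounded existential prefix of $F$ — so $\vdash_{I\POR^\lambda}\forall x.\exists y. F'(x,y)$ with $F'$ a $\Sigma^{b}_{1}$-formula.

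Finally I would feed this $I\POR^\lambda$-theorem into the extraction corollary following Theorem~\ref{thm:Rcomp}, obtaining a closed $\POR^\lambda$-term $\termT:s\arrowT s$ such that $\vdash_{I\POR^\lambda}\forall x. F'(x,\termT x)$, i.e. $\vdash_{I\POR^\lambda}\forall x.(\one^{\termT x}\subseteq\one^{\termO}\wedge F(x,\termT x))$, and in particular $\vdash_{I\POR^\lambda}\forall x.F(x,\termT x)$; this $\termT$ is the term asserted by the proposition. Composing $\termT$ with Theorem~\ref{thm:provRepr}.2, exactly as in Corollary~\ref{cor:IPOR}, then converts it into an honest $\POR$-function, which is what the $(\Rightarrow)$ direction of Theorem~\ref{thm:RStoPOR} ultimately requires.

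At this stage there is essentially no obstacle left: the genuinely hard step — extracting a witnessing term from a derivation that uses classical reasoning — has already been absorbed into the $\forall\NP$-Conservativity corollary, which rests on Proposition~\ref{prop:12} (reducing $\mathrm{EM}$ to $(\mathrm{Markov})$) and on the $\Vdash$-interpretation of Lemmas~\ref{lemma:13}, \ref{lemma:14} and~\ref{lemma:PIND}, a Friedman-style $A$-translation under which Markov's principle becomes realizable and $\NP$-induction is realized by the term of Lemma~\ref{lemma:PIND}. The only things I would double-check are the two bookkeeping points above: that $(\mathrm{Markov})$-instances are genuinely derivable from $\mathrm{EM}$ (so the reduction to $I\POR^\lambda+\mathrm{EM}$ is legitimate and one need not reprove the whole $\Vdash$-machinery with $(\mathrm{Markov})$ directly), and that internalizing the bound $\preceq\termO$ keeps the matrix within $\Sigma^{b}_{1}$ so that both corollaries apply verbatim. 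A more self-contained alternative, which I would keep in reserve, is to run the $\Vdash$-interpretation directly on the given $I\POR^\lambda+(\mathrm{Markov})$-derivation — realizing the $(\mathrm{Markov})$-axioms via the Markov's-principle corollary $\vdash_{I\POR^\lambda} x\Vdash\neg\neg F\rightarrow F$ and $\NP$-induction via Lemma~\ref{lemma:PIND} — and then composing the resulting $\Vdash$-realizer with the $\realize$-realizer extracted through Theorem~\ref{thm:Rcomp}; this yields the same closed $\POR^\lambda$-term.
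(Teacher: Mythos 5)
Your proposal is correct, but it takes a genuinely different route from the paper's own argument. The paper proves Proposition~\ref{prop:Markov} by running the $\Vdash$-interpretation \emph{directly} on the $I\POR^\lambda+(\mathrm{Markov})$-derivation: after Parikh it lands in $(I\POR^\lambda)^*$ via Lemma~\ref{lemma:14}, and then it extracts the witnessing $\realize$-realizer $\mathbf{t}$ from the $(I\POR^\lambda)^*$-derivation in one go, using Lemma~\ref{lemma:13}, Lemma~\ref{lemma:PIND}, and the $\realize$-completeness machinery. You instead factor this into two independent steps: first a pure conservativity step (reduce $(\mathrm{Markov})$ to $\mathrm{EM}$-consequences, then invoke the $\forall\NP$-Conservativity corollary to land in plain $I\POR^\lambda$), and only then a pure witness-extraction step (the corollary following Theorem~\ref{thm:Rcomp}). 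This is a cleaner modular decomposition, and your bookkeeping observations — that $(\mathrm{Markov})$-instances are indeed $\mathrm{EM}$-consequences so the replay is legitimate, and that internalizing the bound $\preceq\termO$ keeps the matrix $\Sigma^{b}_{1}$ so that both corollaries apply as stated — are the right things to check and both check out. The one point worth being explicit about is non-circularity: your proof is valid precisely because the $\forall\NP$-Conservativity corollary is itself establishable from what precedes Proposition~\ref{prop:Markov} (Proposition~\ref{prop:12}, the $\Vdash$-machinery of Lemmas~\ref{lemma:13}, \ref{lemma:14}, the observation after Lemma~\ref{lemma:PIND} that $\vdash_{I\POR^\lambda}\mathrm{PIND}(F\vee G)$, and Theorem~\ref{thm:Rsound}) and does \emph{not} secretly presuppose the witness-extraction result you are proving. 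The paper's exposition is ambiguous on this — the corollary is stated without proof and the phrase ``We conclude the proof establishing the following Proposition'' could be misread as saying the corollary depends on the proposition — so a careful write-up following your route should first spell out the corollary's self-contained proof before citing it, which amounts to the same five ingredients arranged in a slightly different order. In exchange for this care you get a proof in which the hard classical-to-intuitionistic step and the intuitionistic witness-extraction step are neatly separated, rather than interleaved as in the paper.
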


\begin{proof}
If $\vdash_{I\POR^\lambda + (Markov)} \forall x.\exists
y.F(x,y)$,
then by Proposition~\ref{prop:Parikh},
also $\vdash_{I\POR^\lambda+(Markov)} \exists y\preceq
\termO.F(x,y)$.
Moreover, $\vdash_{(I\POR^\lambda)^*}z\Vdash \exists y\preceq
\termO.F(x,y)$.
Then, let us consider $G=\exists y\preceq \termO.F(x,y)$.
Taking $\termF=\termT_G$, by Lemma~\ref{lemma:14},
we deduce $\vdash_{(I\POR^\lambda)^*}G$
and, thus, by Lemma~\ref{lemma:13} and~\ref{lemma:PIND},
we conclude that there exist $\mathbf{t},\mathbf{u}$
such that $\vdash_{I\POR^\lambda} \mathbf{t}, \mathbf{u} 
\ \realize \ G$,
which implies $\vdash_{I\POR^\lambda}F(x,\mathbf{t}x)$
and so $\vdash_{I\POR^\lambda} \forall x.(F(x),\mathbf{t}x)$.
\end{proof}

So, by Proposition~\ref{prop:12},
if $\vdash_{I\POR^\lambda +EM} \forall x.\exists y\preceq
\termO. F(x,y)$,
being $F$ a closed $\Sigma^b_1$-formula,
then there is a closed term of $\POR$
$\termO:s\arrowT s$ such that
$\vdash_{I\POR^\lambda} \forall x.F(x,\termO x)$.
Finally, we conclude the desired Corollary~\ref{cor:fine}.

\begin{cor}\label{cor:fine}
Let $\RS \vdash \forall x. \exists y\preceq t.F(x,y)$,
where $F$ is a $\Sigma^b_1$-formula with only $x,y$ free.
For any $f:\Ss \times \Os \to \Ss$,
if $\forall x.\exists y\preceq t.F(x,y)$ represents $f$ so that:
\begin{enumerate}
\itemsep0em
\item $\RS \vdash \forall x.\exists !y. F(x,y)$
\item $\model{F(\ooverline{\sigma_1},\ooverline{\sigma_2})} =
\{\omega \ | \ f(\sigma_1,\omega)=\sigma_2\}$,
\end{enumerate}
then $f\in\POR$.
\end{cor}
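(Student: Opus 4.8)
The plan is to assemble the machinery developed in this section into a single chain: classical provability in $\RS$ $\leadsto$ intuitionistic provability in $I\POR^\lambda + EM$ $\leadsto$ a realizer term of $\POR^\lambda$ $\leadsto$ a function of $\POR$ that provably computes $f$, and then to match that function with $f$ pointwise over all oracles. First I would reduce from $\RS$ to $I\POR^\lambda$: since classical logic is intuitionistic logic plus excluded middle, hypothesis~1 gives $I\RS + EM \vdash \forall x.\exists y.F(x,y)$, and by Proposition~\ref{prop:Parikh} (Parikh for $\RS$, hence for $I\RS$) the existential may be taken bounded, matching the hypothesis $\RS\vdash\forall x.\exists y\preceq t.F(x,y)$; by Theorem~\ref{theorem:IRStoIPOR} this is then provable in $I\POR^\lambda + EM$. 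Since $F$ is $\Sigma^b_1$, Proposition~\ref{prop:12} converts this into a proof in $I\POR^\lambda + (\mathrm{Markov})$.

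Next I would apply Proposition~\ref{prop:Markov} to the resulting $I\POR^\lambda + (\mathrm{Markov})$-proof of $\forall x.\exists y\preceq t.F(x,y)$, obtaining a closed term $\termO : s\arrowT s$ of $\POR^\lambda$ with $\vdash_{I\POR^\lambda} \forall x.F(x,\termO x)$. By Theorem~\ref{thm:provRepr}.2 this term provably represents some $g\in\POR$: for all $\sigma_1,\tau\in\Ss$ and $\omega\in\Os$ we have $g(\sigma_1,\omega)=\tau$ iff $T_\omega\vdash_{I\POR^\lambda}\termO\ooverline{\sigma_1}=\ooverline{\tau}$. It then suffices to show $f=g$, since $g\in\POR$.

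For the identification, fix $\sigma_1$ and $\omega$ and set $\tau=g(\sigma_1,\omega)$, which is well defined as functions of $\POR$ are total. Then $T_\omega\vdash_{I\POR^\lambda}\termO\ooverline{\sigma_1}=\ooverline{\tau}$, so from $\vdash_{I\POR^\lambda}F(\ooverline{\sigma_1},\termO\ooverline{\sigma_1})$ and substitution of equals (rule $(\mathsf{R3})$) we obtain $T_\omega\vdash_{I\POR^\lambda}F(\ooverline{\sigma_1},\ooverline{\tau})$. By soundness of $I\POR^\lambda$ with respect to the Borel semantics in which $\Flip$ is read as $\omega$ — so that the equations of $T_\omega$ are validated — this gives $\omega\in\model{F(\ooverline{\sigma_1},\ooverline{\tau})}$, and hypothesis~2 then yields $f(\sigma_1,\omega)=\tau=g(\sigma_1,\omega)$. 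As $\sigma_1,\omega$ were arbitrary, $f=g\in\POR$.

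The main obstacle is this last step: it needs a soundness statement for $I\POR^\lambda$ against the Borel semantics that applies to the $\Sigma^b_1$-formula $F$, whereas Lemma~\ref{lemma:IPOR} is stated only for closed $\Sigma^b_0$-formulas. I would discharge this either by extending the $(\Rightarrow)$-direction of Lemma~\ref{lemma:IPOR} by a direct induction on $I\POR^\lambda$-derivations — the intuitionistic predicate calculus, the equality rules, and the $\NP$-induction schema are all validated by the model — or by first invoking Proposition~\ref{prop:IPORprop} and the remark in the proof of Theorem~\ref{theorem:IRStoIPOR} to replace $F$ by a provably equivalent formula of the shape $\exists \vec z\preceq \vec\termO.\,\termO_G=\zeroT$ and then applying Lemma~\ref{lemma:IPOR} to its $\Sigma^b_0$-matrix. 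A minor bookkeeping point is that Proposition~\ref{prop:Markov} is phrased for closed formulas, so the free variable $x$ must be absorbed by the universal quantifier already present in $\forall x.\exists y\preceq t.F(x,y)$; this is routine.
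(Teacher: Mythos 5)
Your chain of reductions ($\RS = I\RS + \mathrm{EM} \leadsto I\POR^\lambda + \mathrm{EM}$ via Theorem~\ref{theorem:IRStoIPOR}, then $\leadsto I\POR^\lambda + (\mathrm{Markov})$ via Proposition~\ref{prop:12}, then to a $\POR^\lambda$-realizer via Proposition~\ref{prop:Markov}, then to $g\in\POR$ via Theorem~\ref{thm:provRepr}.2, then $f=g$ by soundness over the Borel semantics) is exactly the route the paper takes, with the final $f=g$ matching being the argument already made in Corollary~\ref{cor:IPOR}. The one place you are \emph{more} careful than the paper is worth flagging: you correctly observe that Lemma~\ref{lemma:IPOR} is only stated for closed $\Sigma^b_0$-formulas while the invocation in Corollary~\ref{cor:IPOR}/\ref{cor:fine} needs the $(\Rightarrow)$-direction for a $\Sigma^b_1$-formula; the paper applies the lemma without comment, whereas either of your two repairs — reproving soundness by induction on $I\POR^\lambda$-derivations for arbitrary formulas, or normalizing $F$ into $\exists\vec z\preceq\vec\termO.\,\termO_G=\zeroT$ form via Proposition~\ref{prop:IPORprop} and applying the lemma to the matrix — legitimately closes the gap.
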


%% SEC
%\section{Proofs from Section~\ref{sec:CPRF}}\label{app:2}
\section{Proofs from Section~\ref{sec:FAtoRC}}\label{app:2}

%!TeX spellcheck = en-US
%\section{Proofs from Section~\ref{sec:FAtoRC}}\label{app:2}
\subsection{From $\RFP$ to $\POR$}

The goal of this section is to establish a correspondence between
$\RFP$ and $\POR$. This passes through Proposition \ref{prop:RFPtoSFP},
which assesses the equivalence between $\RFP$ and the
intermediate class $\SFP$ and Proposition \ref{prop:SFPtoPOR},
which concludes the proof showing the equivalence between $\POR$
and $\SFP$.
To establish rigorously the results mentioned above,
we fix some definitions. We start with those of STMs and their
configurations:

\begin{defn}[Stream Turing Machine]\label{df:streamMachine}
A \emph{stream Turing machine} is a quadruple
$M:= \langle {\Qs}, {q_0}, {\Sigma}, {\delta} \rangle$, where:
\begin{itemize}
\itemsep0em
\item ${\Qs}$ is a finite set of states ranged over by
${q_i}$ and similar meta-variables;
\item ${q_0} \in {\Qs}$ is an initial state;
\item ${\Sigma}$ is a finite set of characters
ranged over by ${c_i}$ \emph{et simila};
\item ${\delta}: {\hat{\Sigma}}
\times {\Qs} \times {{\hat{\Sigma}}} \times
\Bool
\longrightarrow {\hat{\Sigma}} \times {\Qs} \times {\hat{\Sigma}} \times \{{L},{R}\}$
is a transition function describing the new configuration
reached by the machine.
\end{itemize}
{$L$} and {$R$} are two fixed and distinct symbols,
e.g.~{$\zzero$} and {$\oone$},
${\hat{\Sigma}}={\Sigma} \cup \{{\circledast}\}$
and {$\circledast$} represents the
\emph{blank character}, such that ${\circledast} \not \in
{\Sigma}$.
Without loss of generality, in the following, we will use STMs with $\Sigma=\Bool$.
\end{defn}

\begin{defn}[Configuration of STM]\label{df:STMConfiguration}
The \emph{configuration of an STM}
is a quadruple $\langle {\sigma},
{q}, {\tau}, {\eta}\rangle$,
where:
\begin{itemize}
\itemsep0em
\item ${\sigma} \in \{\zzero, \oone, \circledast\}^*$
is the portion of the work tape on the left of the head;
\item ${q}\in \Qs$ is the current state of the machine;
\item ${\tau} \in \{\zzero, \oone, \circledast\}^*$ is the portion of the
work tape on the right of the head;
\item ${\eta} \in \Bool^\Nat$ is the portion
of the oracle tape that has not been read yet.
\end{itemize}
\end{defn}

% \begin{defn}[Configuration of a Probabilistic Turing Machine]
% The \emph{configuration of a PTM} $M$ is a triple $\{\sigma, q, \tau\}$ where:
% \begin{itemize}
% \item $\sigma \in \{\zero, \one, \circledast\}^*$ is the portion of the first tape on the left of the head.
% \item $q \in \Qs$ is the current state of $M$.
% \item $\tau \in \{\zero, \one, \circledast\}^*$ is the portion of the first tape on the right of the head.
% \end{itemize}
% \end{defn}
%
\noindent
Thus, we give the definition of the family of reachability
relations for STM machines.

\begin{defn}[Stream Machine Reachability Functions]
  \label{def:smreachfuns}
Given an STM $\STM$ with transition function $\delta$,
we denote with  $\vdash_\delta$ its standard step function and
we call $\{\reaches n {\STM}\}_n$
the smallest family of relations
for which:
\footnotesize
\begin{align*}
\langle \sigma, q, \tau, \eta \rangle
&\reaches 0 M
\langle \sigma, q, \tau, \eta\rangle \\
\Big (
\langle \sigma, q, \tau, \eta\rangle
\reaches n M \langle \sigma',
q', \tau',\eta'\rangle
\Big )
\wedge
\Big (
\langle \sigma', q',
\tau', \eta'\rangle
&\vdash_{\delta}
\langle \sigma'', q', \tau'',
\eta'' \rangle
\Big )
\rightarrow
\Big (
\langle \sigma,q,\tau, \eta\rangle
\reaches {n+1} M\langle
\sigma'', q'\tau'',\eta''\rangle
\Big )
\end{align*}
\normalsize
\end{defn}

\begin{defn}[STM Computation]\label{df:STMcomputation}
Given an STM,
$\STM = \langle {\Qs}, {q_0}, {\Sigma}, {\delta}\rangle$,
$\eta:\Nat \longrightarrow \Bool$ and a function
$g : \Nat \longrightarrow \Bool$,
we say that
\emph{$\STM$ computes $g$},
written $f_\STM = g$ iff for every
string $\sigma\in \Ss$,
and oracle tape ${\eta} \in \Bool^\Nat$,
there are $n\in \Nat$, ${\tau}\in \Ss, {q'} \in \Qs$,
and a function
${\psi} : \Nat \longrightarrow \Bool$ such that:
$$
\langle {\eepsilon}, {q_0}, {\sigma}, {\eta}\rangle
\ \reaches n \STM \
\langle {\gamma}, {q'}, {\tau}, {\psi}\rangle,
$$
%for some ${\tau}, {q'$} and $\m\psi$ and
and $\langle {\gamma}, {q'}, {\tau}, {\psi}\rangle$
is a final configuration for $\STM$
with $f_\STM(\sigma, \eta)$ being
the longest suffix of ${\gamma}$ not including
${\circledast}$.
\end{defn}

Similar notations are employed for all the
families of Turing-like machines we define in this paper.
However, PTMs are an exception since they compute
distributions over $\Ss$ instead of functions $\Ss\times\Os \longrightarrow \Ss$. 

\begin{defn}[Probabilistic Turing Machines]
  \label{def:ptmX}
  A Probabilistic Turing Machine (PTM) is a TM with two transition functions $\delta_\zzero$ and $\delta_\oone$ at each step of the computation, $\delta_\zzero$ is applied with probability $\frac 1 2$, otherwise $\delta_\oone$ is applied. Given a PTM  $\PTM$, a configuration $\langle\sigma, q, \tau\rangle$, we define its semantics on configurations $\langle \sigma, q, \tau\rangle$ as the following
\emph{sequence of random variables}:
\footnotesize
\begin{align*}
\forall \eta \in \Bool^\Nat. X_{M, 0}^{\langle\sigma, q, \tau\rangle} &:= \eta \mapsto \langle\sigma, q, \tau\rangle\\
\forall \eta \in \Bool^\Nat. X_{M, n+1}^{\langle\sigma, q, \tau\rangle} & := \eta \mapsto \begin{cases}
\delta_\bbool(X_{M, n}^{\langle\sigma, q, \tau\rangle}(\eta)) & \text{ if } \eta(n)=\bbool \land \exists \langle \sigma', q' \tau'\rangle. \delta_\bbool(X_{M, n}^{\langle\sigma, q, \tau\rangle}(\eta))=\langle \sigma', q', \tau'\rangle\\
X_{M, n}^{\langle\sigma, q, \tau\rangle}(\eta) & \text{ if } \eta(n)=\bbool \land \lnot\exists \langle \sigma', q' \tau'\rangle. \delta_\bbool(X_{M, n}^{\langle\sigma, q, \tau\rangle}(\eta))=\langle \sigma', q', \tau'\rangle\\
% \delta_1(X_{M, n}^{\langle\sigma, q, \tau\rangle}(\eta)) & \text{ if } \eta(n)=1 \land \exists \langle \sigma', q' \tau'\rangle. \delta_1(X_{M, n}^{\langle\sigma, q, \tau\rangle}(\eta))=\langle \sigma', q', \tau'\rangle\\
% X_{M, n}^{\langle\sigma, q, \tau\rangle}(\eta) & \text{ if } \eta(n)=1 \land \lnot\exists \langle \sigma', q' \tau'\rangle. \delta_1(X_{M, n}^{\langle\sigma, q, \tau\rangle}(\eta))=\langle \sigma', q', \tau'\rangle.
\end{cases}
\end{align*}
\normalsize

\noindent
Intuitively, the variable $X_{\PTM, n}^{\langle\sigma, q, \tau\rangle}$ maps each possible cylinder $\eta: \Nat \longrightarrow \Bool$ to
the configuration reached by the machine after exactly $n$ transitions where the first transition step has employed $\delta_{\eta(0)}$, the second has employed $\delta_{\eta(1)}$ and so on.
We say that a PTM $\PTM$ computes $Y_{\PTM,\sigma}$ iff
$\exists t \in \Nat. \forall \sigma.X_{\PTM, t}^{\langle \sigma, q_0, \tau\rangle}$ is final.
In such case, for every $\eta$, $Y_{\PTM,\sigma}(\eta)$ is the longest suffix of the leftmost element in
$X_{\PTM, t}^{\langle \sigma, q_0, \epsilon\rangle}(\eta)$, which does not contain $\circledast$.
\end{defn}

We start with the proof of the equivalence between the class of the PTM's polytime functions and that of the STMs' polytime ones.

\begin{prop}
  \label{prop:RFPtoSFP}
  For any poly-time STM $\STM$ there is a polytime PTM $\PTM$ such that for every $\sigma \tau \in \Ss$:
  $$
  \mu(\{\omega \in \Os\mid N(\sigma \omega) = \tau\})=\text{Pr}[M(\sigma)=\tau]
  $$
  and viceversa.
\end{prop}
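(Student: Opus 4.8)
The plan is to establish the two directions of the equivalence separately, in each case building the target machine so that the induced probability measure on outputs is preserved.

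\textbf{From STM to PTM.} Given a polytime STM $\STM$, I would construct a PTM $\PTM$ that simulates $\STM$ step by step, except that whenever $\STM$ reads a fresh bit from its read-only oracle tape, $\PTM$ instead flips a coin (i.e.\ it branches, applying $\delta_\zzero$ or $\delta_\oone$). The key bookkeeping point is that $\STM$ reads exactly one new oracle bit per computation step, while a PTM as defined in Definition~\ref{def:ptmX} also consumes one coin flip per step; so the correspondence between the finite prefix of $\eta$ that $\STM$ actually inspects and the finite prefix of the coin-flip stream that $\PTM$ uses is essentially the identity (up to the possibility that $\STM$'s transition does not depend on the bit it reads, which is harmless — $\PTM$ simply ignores the flipped bit in that case). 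Since $\STM$ runs in time bounded by a polynomial $p(|\sigma|)$, so does $\PTM$, and the event $\{\omega \mid \STM(\sigma\omega)=\tau\}$ is a finite union of cylinders determined by the first $p(|\sigma|)$ bits; its $\mu$-measure equals the sum over the matching coin-flip prefixes of $2^{-p(|\sigma|)}$, which is exactly $\mathrm{Pr}[\PTM(\sigma)=\tau]$. A short induction on the number of steps, relating $X_{\PTM,n}^{\langle\epsilon,q_0,\sigma\rangle}(\eta)$ to the configuration of $\STM$ on $(\sigma,\eta)$ after $n$ steps, formalizes this.

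\textbf{From PTM to STM.} Conversely, given a polytime PTM $\PTM$ with transition functions $\delta_\zzero,\delta_\oone$, I would build an STM $\STM$ that, at each step, reads the next bit $b$ from its oracle tape and applies $\delta_b$. Again the step counts line up one-to-one, and a PTM running in time $p(|\sigma|)$ yields an STM running in the same time bound, which accesses only the first $p(|\sigma|)$ oracle bits. The random-variable semantics of $\PTM$ (Definition~\ref{def:ptmX}) is literally defined by reading $\eta(n)$ at step $n$ and applying $\delta_{\eta(n)}$, so $X_{\PTM,t}^{\langle\epsilon,q_0,\sigma\rangle}(\eta)$ and the configuration of $\STM$ on $(\sigma,\eta)$ after $t$ steps coincide; hence $Y_{\PTM,\sigma}(\eta)=f_\STM(\sigma,\eta)$ for every $\eta$, and therefore $\{\omega\mid \STM(\sigma\omega)=\tau\}$ and the event that $\PTM(\sigma)$ outputs $\tau$ are the same measurable set, giving equality of the two quantities. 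The one subtlety is the handling of ``stuck'' configurations (when $\delta_b$ is undefined): the PTM semantics stays put, and the STM must do likewise, so I would assume w.l.o.g.\ that both machines are made total by adding self-looping sink states, which does not affect outputs or running time.

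\textbf{Main obstacle.} The conceptual content is light — both models consume randomness one bit per step — so the real work is purely bureaucratic: setting up a clean step-indexed bisimulation between STM configurations $\langle\sigma,q,\tau,\eta\rangle$ and PTM random-variable values $X_{\PTM,n}^{\langle\sigma,q,\tau\rangle}(\eta)$, and checking that the measurability and the cylinder-measure computation go through. The point requiring the most care is making sure the \emph{prefix length actually consulted} is bounded by the same polynomial on both sides (so that the output event is a finite union of cylinders and the measures literally match), and that the treatment of non-dependence on the read bit and of undefined transitions is consistent between the two machine models; these are where a sloppy argument could hide an off-by-one or a non-measurability issue.
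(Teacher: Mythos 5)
Your proposal is correct and follows essentially the same route as the paper: the paper defines a bijection $I$ between STMs and PTMs by splitting the STM's transition function $\delta$ into $\delta_\zzero$ and $\delta_\oone$ according to the oracle bit matched (with the inverse being the disjoint union, i.e.\ your PTM-to-STM direction), and then proves by induction on the number of steps that the corresponding sets of streams $\eta$ reaching a given configuration coincide exactly, which is your step-indexed bisimulation. The only cosmetic difference is that the paper establishes equality of the underlying sets directly (so the measure computation via cylinders is not needed), whereas you compute the cylinder measures explicitly in one direction; both yield the same conclusion.
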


\begin{proof}
The claim can be restated as follows:
\begin{align*}
\forall \sigma, \tau.\mu(\{\eta \in \Bool^\Nat\ |\  N(\sigma, \eta)= \tau\})&=\mu(M(\sigma)^{-1}(\tau))\\
\forall \sigma, \tau.\mu(\{\eta \in \Bool^\Nat\ |\  N(\sigma, \eta)= \tau\})&=\mu(\{\eta \in \Bool^\Nat\ |\  Y_{M,\sigma} (\eta) = \tau\}).
\end{align*}
\noindent
Actually, we will show a stronger result: there is bijection $I: \text{STMs} \longrightarrow \text{PTM}$ such that:
\begin{equation}
\label{eq:measure}
\forall n \in \Nat.\{\eta \in \Bool^\Nat\ |\  \langle \sigma, q_0, \tau, \eta\rangle  \reaches n \delta \langle \tau, q, \psi, n\rangle \} = \{\eta \in \Bool^\Nat\ |\  X_{I(N), n}^{\langle \epsilon, q_0, \sigma\rangle} (\eta)= {\langle \tau, q, \psi\rangle}\}
\end{equation}
\noindent
which entails:
\begin{equation}
\label{eq:measurecons}
\{\eta \in \Bool^\Nat\ |\  N(\sigma, \eta)= \tau\} = \{\eta \in \Bool^\Nat\ |\  Y_{I(N),\sigma} (\eta) = \tau\}.
\end{equation}
\noindent
For this reason, it suffices to construct $I$ and prove that  \eqref{eq:measure} holds.
$I$ splits the function $\delta$ of $N$ in
such a way that transition is assigned to $\delta_0$
if it matches the character $\zzero$ on the oracle-tape, otherwise it
is assigned to $\delta_1$. Observe that $I$ is bijective, indeed, its inverse is a function as well, because it consists in a disjoint union.
Claim \eqref{eq:measure} can be shown by induction on the number of steps required by $N$ to compute its output value, the proof is standard, so we omit it.

\end{proof}

\begin{prop}
  \label{prop:SFPtoPOR}
  For every $f: \Ss \times \Bool^\Nat \longrightarrow Ss$ in $\SFP$, there is a function $g: \Ss \times \Os \longrightarrow \Ss$ in $\POR$ such that:
  $$
  \forall x, y \in \Ss. \mu(\{\omega \in \Os | g(x, \omega)= y\}) = \mu(\{\eta \in \Bool^\Nat | f(x, \eta)= y\}).
  $$
\end{prop}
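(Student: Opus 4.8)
The plan is to simulate a polytime STM $M$ by a $\POR$-function, dealing with the two key mismatches between the models: the STM reads its randomness tape $\eta \in \Bool^\Nat$ bit-by-bit in a linear fashion (each bit consumed exactly once, a fresh bit per step), whereas $\POR$-functions query the oracle $\omega \in \Os$ on demand, non-linearly, indexed by arbitrary strings. Following the structure already announced in the proof sketch of Lemma~\ref{lemma:Fundamental}, I would factor the simulation through a chain of intermediate imperative languages (à la Winskel's IMP), but for a self-contained argument the essential point is the following. Since $M$ runs in time bounded by some polynomial $p$, on input $x$ it reads at most the first $p(|x|)$ bits of $\eta$; hence its behaviour depends only on the prefix $\eta_{p(|x|)} \in \Bool^{p(|x|)}$. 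So it suffices to (i) build a $\POR$-function $h(x,z)$ that deterministically simulates $M$ on input $x$ using the bits recorded in the auxiliary string $z$ as the ``already-sampled'' randomness prefix, and (ii) build a $\POR$-function $e(x,\omega)$ — the \emph{prefix extractor} — that reads bits $\omega(\overline{0}), \omega(\overline{1}), \dots, \omega(\overline{p(|x|)-1})$ (say, using the binary representations of $0,\dots,p(|x|)-1$ as query strings) and packs them into a string $z$ of length $p(|x|)$. Then $g(x,\omega) := h(x, e(x,\omega))$ is in $\POR$, and one checks that as $\omega$ ranges over $\Os$ with measure $\mu$, the string $e(x,\omega)$ is uniformly distributed over $\Bool^{p(|x|)}$, which is exactly the distribution of $\eta_{p(|x|)}$ when $\eta$ ranges over $\Bool^\Nat$.

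Concretely, the steps in order are: first, fix a reasonable string-encoding of STM configurations $\langle\sigma,q,\tau,\eta\rangle$ (work tape contents, state, head position) and a polynomial time bound $t_M(x)$ for $M$; note that the work-tape contents and head position are bounded by $t_M(x)$, so a whole configuration is encoded by a string of length $\mathcal{O}(|t_M(x)|)$ bounded by an $\Lpw$-term. Second, define the single-step function of $M$ as a $\POR$-function by composition and case analysis on the transition table $\delta$, where the bit consumed at step $n$ is read from position $n$ of the randomness string $z$ (retrievable since $|z|=t_M(x)$ is polynomial). Third, iterate this step function $t_M(x)$ times via bounded recursion on notation — the bound $t$ in the recursion schema is exactly the configuration-length bound — to obtain $h(x,z)$, and extract the output as the longest blank-free suffix of the final tape, again a $\POR$-definable postprocessing. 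Fourth, define the extractor $e(x,\omega)$ by bounded recursion that, for $i = 0,1,\dots,t_M(x)-1$, appends $\query(\overline{i},\omega) = \omega(\overline{i})$ to an accumulator; here one must be slightly careful that ``$\overline{i}$'' means the string that is the binary notation of $i$, and that distinct $i$ give distinct query strings, so the bits read are mutually independent under $\mu$. Fifth, set $g(x,\omega) := h(x,e(x,\omega))$ and verify the measure identity: for fixed $x,y$,
\[
\{\omega \mid g(x,\omega)=y\} = \bigcup_{\substack{z \in \Bool^{t_M(x)} \\ h(x,z)=y}} \{\omega \mid e(x,\omega)=z\},
\]
a disjoint union, and each $\{\omega \mid e(x,\omega)=z\}$ is the cylinder fixing the $t_M(x)$ relevant bits of $\omega$, hence has measure $2^{-t_M(x)}$; summing gives $\mu(\{\omega \mid g(x,\omega)=y\}) = |\{z \mid h(x,z)=y\}| \cdot 2^{-t_M(x)}$. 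The analogous computation for $M$ over $\Bool^\Nat$ gives $\mu(\{\eta \mid f(x,\eta)=y\}) = |\{z \in \Bool^{t_M(x)} \mid f_M(x,z\eta')=y \text{ for all } \eta'\}| \cdot 2^{-t_M(x)}$, and since $h$ faithfully simulates $M$ on the prefix, the two counting sets coincide, yielding the claim.

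The main obstacle is the faithful encoding of STM computation inside the rigid format of $\POR$: bounded recursion on notation only allows recursion on the \emph{notation} of a string argument with an explicit polynomial length bound supplied in advance, and the recursive call returns a value that is truncated to that bound. So I must (a) phase the ``number of steps'' as recursion on the notation of a string of the right length built from the polynomial bound, (b) make sure every intermediate configuration fits under the declared truncation bound so that truncation never destroys information, and (c) encode the transition function and tape operations (shift head, overwrite a cell) purely from the $\POR$ basic functions $E, P^n_i, \Sf_b, \Cf, \query$ and composition — this is where the ``chain of intermediate IMP-like languages'' of Lemma~\ref{lemma:Fundamental} earns its keep, since writing the simulation directly in the Cobham algebra is notationally heavy even though conceptually routine. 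Once the simulation is in place, the measure-theoretic bookkeeping is straightforward: it is just the observation that the extractor's output is a uniform sample over a finite Boolean cube, matching the distribution of a finite prefix of a uniformly random infinite Boolean stream.
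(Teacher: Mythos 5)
Your plan is correct and follows essentially the same route as the paper: reduce $f \in \SFP$ to a finite-prefix deterministic machine (Lemma~\ref{lemma:SFPtopolyF}), simulate that machine by a $\POR$-function of two string arguments via configuration encoding and bounded recursion on notation (Lemma~\ref{lemma:polyFtoCob}, Corollary~\ref{cor:sfptopor-}), and precompose with a prefix extractor $e(x,\omega)$ built from the query primitive. The one real divergence is the closing measure step: you argue directly by cylinder counting, whereas the paper introduces an explicit measure-preserving bijection $\sim_{\mathit{dy}}$ between $\Bool^\Nat$ and $\Os$ built from the dyadic map $\mathit{dyad}$ and transports sets across it. Your version is shorter and needs only injectivity of the index-to-string map, so using ordinary binary notation rather than the dyadic representation is harmless. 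One small mix-up: the chain of IMP-like languages you mention is what the paper uses for the \emph{converse} direction ($\POR \Rightarrow \SFP$); for $\SFP \Rightarrow \POR$ the paper takes the direct configuration-encoding route, exactly as you do when you drop the digression and build $h$ and $e$ by hand.
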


To prove the correspondence
between the class of polytime STM computable function
and $\POR$, we pass through the class of \emph{finite-stream} TMs.
These machines are defined analogously to STMs,
but instead of an infinite stream of bits $\eta$,
they employ a finite sequence of random bits as additional argument. 

\begin{lemma}\label{lemma:SFPtopolyF}
For each $f\in \SFP$ with time-bound $p\in \POLY$,
there is a polytime \emph{finite-stream} TM
computable function $h$ such that for any
$\eta\in \Bool^\Nat$ and $x, y\in \Ss$,
$$
f(x,\eta) = h(x, \eta_{p(|x|)}).
$$
\end{lemma}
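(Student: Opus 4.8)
The plan is to exploit the fact that a stream Turing machine consumes exactly one bit of its oracle tape per computation step, so that a machine running in time $p(|x|)$ on input $x$ can only ever inspect the prefix $\eta_{p(|x|)}$ of its stream. First I would make this precise: by Definition~\ref{def:smreachfuns}, each application of the step relation $\vdash_\delta$ replaces the residual stream by its tail, so after $n$ steps of a computation started from $\langle \epsilon, q_0, \sigma, \eta\rangle$ only the bits $\eta(0),\dots,\eta(n-1)$ have had any influence on the configuration reached. Since $f\in \SFP$ has time bound $p\in\POLY$, the computation of the corresponding STM $M$ on any input $x$ and \emph{any} stream $\eta$ halts within $p(|x|)$ steps; hence the final configuration — and therefore the output value $f(x,\eta)$ — depends only on $\eta(0),\dots,\eta(p(|x|)-1)$, that is, only on $\eta_{p(|x|)}$.

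Next I would define the finite-stream machine $h$. Recall that a finite-stream TM is like an STM except that its extra tape is initialized with a finite word $w$ supplied as an explicit second input, successive bits of $w$ being consumed as the machine reads, with an attempt to read past the end of $w$ treated by convention as immediate rejection. Given $M$ computing $f$, let $h$ be the finite-stream machine that, on input $(x,w)$, simulates $M$ on $x$ step by step, feeding $M$ the successive bits of $w$ whenever $M$ reads its stream, and returning whatever $M$ returns. When $w=\eta_{p(|x|)}$, the preceding paragraph guarantees that the simulation of $M$ reads at most $p(|x|)=|w|$ bits, so it never exhausts $w$; moreover the $i$-th bit it reads equals $w_i=\eta(i)$, so the simulated run coincides exactly with the run of $M$ on $(x,\eta)$. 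Consequently $h(x,\eta_{p(|x|)})=f(x,\eta)$ for all $x\in\Ss$ and $\eta\in\Bool^\Nat$, which is the claimed identity.

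Finally I would verify that $h$ runs in polynomial time: the simulation carries out $O(p(|x|))$ steps of $M$, each reproduced by a bounded number of steps of the simulating machine, while maintaining a pointer into the finite word $w$ adds at most $O(\mathrm{poly}(|x|))$ overhead; since $p$ is a polynomial, the total running time is polynomial in $|x|$ (indeed in $|x|+|w|$). I do not expect any genuine obstacle here: the only mildly delicate point is bookkeeping — fixing the conventions for how a finite-stream TM behaves when $w$ is too short and checking that this case is vacuous in our construction — and this is exactly what the time-bound argument of the first paragraph rules out. The lemma is essentially the formal statement that bounded-time stream access is finite access.
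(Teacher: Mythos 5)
Your proof is correct and takes essentially the same route as the paper's: both observe that a polytime STM reads exactly one stream bit per step and therefore only ever inspects the prefix $\eta_{p(|x|)}$, and both then define $h$ as the finite-stream machine with the identical transition behavior fed that prefix. The one small addition you make — noting that the "stream exhausted" case is vacuous by the time bound — is harmless bookkeeping that the paper leaves implicit.
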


%%% Proof of Lemma
\begin{proof}
Assume that $f \in \SFP$.  For this reason
there is a polytime STM, $\STM =\langle {\Qs},
{q_0}, {\Sigma}, {\delta}\rangle$,
such that $f=f_\STM$.
Take the \emph{Finite Stream Turing Machine} (FSTM) $\STM'$
which is defined identically to $\STM$.
It holds that
for any $k\in \Nat$ and some $\sigma,\tau , y' \in \Ss$,
$$
\langle {\eepsilon}, {q_0'}, {x}, {y} \rangle
\triangleright^k_{\STM} \langle {\sigma},
{q}, {\tau}, {y'}\rangle
\ \ \ \Leftrightarrow \ \ \
\langle \eepsilon, q_0', x, y\eta\rangle
\triangleright^k_{\STM'}
\langle {\sigma}, {q}, {\tau}, {y'\eta}\rangle.
$$
Moreover,
$\STM'$ requires a number of steps which is exactly equal to
the number of steps required by $\STM$, and thus
the complexity is preserved.
We conclude the proof defining $h=f_{\STM'}$.
\end{proof}

\begin{figure}
  \centering
  \begin{tikzpicture}[node distance = 8 cm]
    \node (c) {$c := \langle \sigma, q, \tau, y\rangle$};
    \node[below = 1 cm of c] (sc) {$s_c\in \Ss$};
    \node[right of = c] (d) {$d := \langle \sigma, q, \tau, y\rangle$};
    \node[below = 1 cm of d] (sd) {$s_d\in \Ss$};

    \draw[->] (c) edge node[fill = white] {$\vdash_{\delta}$} (d);
    \draw[->] (sc) edge node[fill = white] {$\forall \omega. \apply(s_c, e_t(\delta), \omega)=s_d$} (sd);
    \draw[->] (c) edge node[fill = white] {$e_c$} (sc);
    \draw[->] (d) edge node[fill = white] {$e_c$} (sd);
  \end{tikzpicture}
    \caption{Behavior of the function $\apply$.}
    \label{fig:apply}
  \end{figure}

The next step is to show that each polytime FSTM computable function $f$
corresponds to a function $g: \Ss \times \Ss \times \Os \longrightarrow \Ss$ of $\POR$ which can be
defined without recurring to $Q$.

\begin{restatable}{lemma}{polyFtoCob}
\label{lemma:polyFtoCob}
  For any polytime FSTM computable function $f$ and $x\in \Ss$,
  there is $g\in \POR$
  such that
  $
  \forall x, y, \omega. f(x, y)=g(x, y,\omega).
  $
  Moreover, if $f$ is defined without recurring to $Q$, $g$ can be defined without $Q$ as well.

\end{restatable}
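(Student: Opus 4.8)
The plan is to run the classical Cobham-style argument that simulates a polynomial-time machine inside a function algebra: encode configurations as strings, define a one-step transition function by finite case analysis, iterate it polynomially many times by bounded recursion on notation with a ``clock'' string as the recursion argument, and decode the output at the end. Nothing in the simulation ever needs to inspect an oracle, so the resulting $\POR$-function will be $\query$-free. Concretely, I would first fix a polytime FSTM $M=\langle \Qs, q_0, \Sigma, \delta\rangle$ with $f=f_M$, running in time bounded by a polynomial in $|x|+|y|$; since polynomial time implies polynomial space, every configuration reachable from input $x$ with finite stream $y$ has size at most $r(|x|+|y|)$ for some polynomial $r$. I then fix a reasonable encoding $e_c(\langle\sigma,q,\tau,y\rangle)\in\Ss$ of configurations --- for instance keeping the head between $\sigma$ reversed and $\tau$, with the state and the yet-unread stream bits $y$ tagged on --- chosen so that reading the symbol under the head, reading the next stream bit, reading the state, and rewriting/moving are all local manipulations at the ends of substrings. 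As $\delta$ is a fixed finite object, it is encoded by a fixed constant string $e_t(\delta)$.

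The core step is to define a transition function $\apply: \Ss\times\Ss\times\Os\to\Ss$ such that $\apply(e_c(c),e_t(\delta),\omega)=e_c(d)$ whenever $c\vdash_\delta d$; this is the behaviour depicted in Figure~\ref{fig:apply}. It is built from the basic functions $E$, $P^n_i$, $\Sf_b$, $\Cf$ of $\POR$ using composition and bounded recursion on notation only: auxiliary operations such as ``last digit'', ``delete last digit'' and ``concatenation'' are definable by bounded recursion (with trivial length bounds), and the transition proper is a \emph{finite} nested $\Cf$-case analysis over the possible triples (head symbol, current state, next stream bit), each branch performing a fixed rearrangement of the relevant endpoint substrings. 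The oracle argument $\omega$ appears in $\apply$ only as a dummy parameter: $\query$ is never invoked, which already secures the $\query$-free part of the statement at the level of the step function.

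I would then pick an $\Lpw$-term $T(x,y)$ with $|T(x,y)|$ at least the running time of $M$ on $(x,y)$, obtainable from $\one$, $x$, $y$, $\conc$ and $\times$ by explicit definition (the product $\times$ yields strings of polynomial length), and define by bounded recursion on notation a function $\mathsf{run}(x,y,z,\omega)$ with base case $\mathsf{run}(x,y,\epsilon,\omega)=e_c(\langle\epsilon,q_0,x,y\rangle)$ (a $\POR$-term in $x,y$ built by composition from the basics) and step case $\mathsf{run}(x,y,zb,\omega)=\apply(\mathsf{run}(x,y,z,\omega),e_t(\delta),\omega)|_{t(x,y)}$, where $t(x,y)$ is a polynomial bound on configuration size (explicitly definable from $\one,\conc,\times$), so that the truncation is vacuous along the actual computation. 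Since the two recursion cases coincide, only $|z|$ matters and not its digits. Finally I would set $g(x,y,\omega):=\mathsf{out}(\mathsf{run}(x,y,T(x,y),\omega))$, where $\mathsf{out}\in\POR$ is a bounded-recursion-defined extractor returning the output suffix of the work tape of an encoded configuration. All of $\mathsf{run}$, $\mathsf{out}$, $g$ are manifestly $\query$-free, so $g\in\POR$ and uses no $Q$.

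It remains to check $f(x,y)=g(x,y,\omega)$ for all $x,y,\omega$. This follows from a routine induction on the number of steps: using the commuting square of Figure~\ref{fig:apply} and the fact that configuration sizes stay below $|t(x,y)|$, one shows $\mathsf{run}(x,y,w,\omega)=e_c(c_n)$ whenever $|w|=n$ and $\langle\epsilon,q_0,x,y\rangle\reaches{n}{M}c_n$; since $M$ halts within $|T(x,y)|$ steps and $\apply$ fixes (encodings of) halted configurations, $\mathsf{run}(x,y,T(x,y),\omega)$ is the encoded final configuration, from which $\mathsf{out}$ reads off $f_M(x,y)$. The ``moreover'' clause is then immediate: an FSTM carries no oracle, so $f$ is automatically $\query$-free, and, as just observed, the construction of $g$ never invoked $\query$. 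I expect the only genuine work to be the bookkeeping in the first two steps: committing to an encoding $e_c$ under which a single machine step is honestly definable by composition and bounded recursion, and verifying that every truncation bound occurring in the bounded-recursion schemata is met --- which is precisely where the polynomial time bound of $M$ is used.
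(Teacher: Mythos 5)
Your proposal is correct and follows essentially the same route as the paper's sketch: both encode configurations as strings, define a one-step transition function $\apply$ by finite $\Cf$-case analysis, iterate it via bounded recursion on notation against a polynomial-length clock (the paper factors this out as its Lemma~\ref{lemma:saPOR}), and decode the output, observing throughout that $\query$ is never invoked. Your version is somewhat more explicit about the clock term $T(x,y)$ and the correctness induction, but the underlying construction and the treatment of the ``moreover'' clause coincide with the paper's.
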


% \begin{proof}[Proof Sketch]
%   Any configuration of an FSTM
%   can be encoded in a string, thus we
%   can pass the encoding of the initial machine's configuration
%   to a function which emulates the execution of the machine which computes $f$
%   (called $M_f$) for
%   a polynomial number of steps.
%   When $M_f$ reaches a final configuration, it is sufficient to extract
%   from the final configuration the longest portion of the primary tape
%   which is on the left and free form $\circledast$ characters.
% \end{proof}

A formal proof of Lemma \ref{lemma:polyFtoCob} requires too much
effort to be done extensively. In this work, we will simply mention
the high-level structure of the proof. It relies on the following observations:
\begin{enumerate}
  \item It is possible to encode FSTMs, together with configurations
  and their transition functions using strings, call these encodings
  $e_c\in \POR$ and $e_t$. Moreover,
  there is a function $\apply \in \POR$
  which satisfies the simulation schema of Figure \ref{fig:apply}. The
  proof of this result is done by explicit definition of the functions
  $e_c$, $e_t$ and $\apply$, proving
  the correctness of these entities with respect to the simulation schema above.
  \item For each $f \in \POR$ and $x, y \in \Ss$, if there is a
  term $t(x)$ in $\Lpw$ which bounds the size of $f(x, \omega)$ for each possible input, then the function $m(z, x, \omega) = f^{|z|}(x, \omega)$ is in $\POR$ as well, moreover, if $f$ is defined without recurring to $Q$, also $m$ can be defined without recurring to $Q$. This is shown in Lemma \ref{lemma:saPOR}.
  \item Fixed a machine $M$, if $\sigma \in \Ss$ is a correct encoding of a configuration of $M$, for every $\omega$, it holds that $|\apply(\sigma, \omega)|\le |\sigma| + c$, for $c \in \Nat$ fixed once and forall.
  % \item There is a function $\dectape$ which extracts
  % the machine's output from any encoded final configuration of an FSTM machine.
  \item If $c := e_c(\sigma, q, \tau, y, \omega)$ for some omega, then there is a
  function $\dectape$  such that $\forall \omega \in \Os.
  \dectape(x, \omega)$ is the longest suffix without
  occurrences of $\circledast$ of $\sigma$.
\end{enumerate}

\begin{lemma}
  \label{lemma:saPOR}
  For each $f : \Ss^{k+1} \times \Os \longrightarrow \Ss \in \POR$,
  if there is a term $t \in \Lpw$ such that
  $\forall x, \vec z, \omega. f(x, \vec z, \omega)|_t = f(x, \vec z, \omega)$
  then there is also a function $\sa_{f, t} : \Ss^{k+2} \times \Os \longrightarrow \Ss$
  such that:
  $$
  \forall x, n \in \Ss,  \omega \in \Os.
  sa_{f, t}(x, n , \vec z, \omega) =
  \underbrace{f(f(f(x, \vec z,  \omega), \vec z, \omega), \ldots)}_{|n|\text{ times}}.
  $$
Moreover, if $f$ is defined without recurring to $Q$, $sa_{f, t}$ can be defined without $Q$ as well.
\end{lemma}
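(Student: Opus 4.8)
The plan is to obtain $\sa_{f,t}$ by a single \emph{bounded recursion on notation}, using $n$ as the recursion variable. Since the $\POR$ schema requires the recursion variable to sit in last position (just before $\omega$), I first define an auxiliary function $\sa'_{f,t}\colon\Ss^{k+2}\times\Os\to\Ss$ with arguments reordered to $(x,\vec z,n,\omega)$, and then set $\sa_{f,t}(x,n,\vec z,\omega):=\sa'_{f,t}(x,\vec z,n,\omega)$; this final reshuffling is a composition with projections and so introduces no occurrence of $\query$.

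Concretely, $\sa'_{f,t}$ is defined by
\begin{align*}
\sa'_{f,t}(x,\vec z,\eepsilon,\omega) &= x,\\
\sa'_{f,t}(x,\vec z,n\zzero,\omega) &= f\big(\sa'_{f,t}(x,\vec z,n,\omega),\vec z,\omega\big)\big|_{s(x,\vec z,n)},\\
\sa'_{f,t}(x,\vec z,n\oone,\omega) &= f\big(\sa'_{f,t}(x,\vec z,n,\omega),\vec z,\omega\big)\big|_{s(x,\vec z,n)},
\end{align*}
where the base clause is the projection onto $x$ and the step function, identical for digit $\zzero$ and digit $\oone$, is $f$ pre-composed with projections; both are in $\POR$ and $\query$-free whenever $f$ is. The only thing left to supply is the bounding term $s$, which the $\POR$ schema forces to be built from $\eepsilon,\zzero,\oone,\cconc,\ttimes$ and the variables $x,\vec z,n$, and which must satisfy $\big|f(\sa'_{f,t}(x,\vec z,m,\omega),\vec z,\omega)\big|\le|s(x,\vec z,m)|$ for every prefix $m$ of $n$, so that all the truncations are vacuous. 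Such an $s$ exists by the size hypothesis on $f$: every $\Lpw$-term is already an explicit-definition term, so when the free variables of $t$ lie among $\vec z$ one may simply take $s:=t$; in the situation where this lemma is actually invoked — the machine-simulation functions of Lemma~\ref{lemma:polyFtoCob}, where $|f(\sigma,\vec z,\omega)|\le|\sigma|+c$ — the growth per step is additive and a term of the form $x\,\cconc\,(v(\vec z)\ttimes n)\,\cconc\,v(\vec z)$ does the job.

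Writing $f^{(m)}$ for the $|m|$-fold iteration of $f$ in its first argument, a structural induction on $n$ then yields $\sa'_{f,t}(x,\vec z,n,\omega)=f^{(n)}(x,\vec z,\omega)$: the case $n=\eepsilon$ is the base clause, and for $n=n'b$ one has $\sa'_{f,t}(x,\vec z,n'b,\omega)=f\big(\sa'_{f,t}(x,\vec z,n',\omega),\vec z,\omega\big)=f\big(f^{(n')}(x,\vec z,\omega),\vec z,\omega\big)=f^{(n'b)}(x,\vec z,\omega)$, using the induction hypothesis and the vacuity of the truncation. Hence $\sa_{f,t}$ has exactly the behaviour stated in the lemma, it belongs to $\POR$ because it is assembled from $f$, projections, composition, and one bounded recursion on notation, and it contains $\query$ only if $f$ does.

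The only genuine obstacle is the choice of $s$: since $\POR$ admits only explicit-definition bounds, one has to argue that iterating $f$ up to $|n|$ times never escapes such a term. This is immediate when $t$ does not mention the iterated argument, and reduces to a short bookkeeping computation in the additive-growth case; these are precisely the shapes of $t$ that occur in the uses of the lemma, so the argument goes through.
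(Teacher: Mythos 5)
Your proof takes the same route as the paper's: define the iterate by a single instance of bounded recursion on notation and argue the truncations are vacuous. The paper's own proof is a two-line display (base clause returns $x$, step clause applies $f$ and truncates at $t$) followed by ``correctness comes as a direct consequence\dots by induction on $n$''; it does not discuss the argument order, nor does it say what the free variables of the bounding term are.

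Where you genuinely add something is the discussion of the bounding term. The paper reuses $t$ itself as the recursion bound, which is only correct if $t$ does not mention the first (iterated) argument of $f$: the hypothesis gives $|f(\sigma,\vec z,\omega)|\le|t(\sigma,\vec z)|$, whereas vacuity of the truncation at step $m+1$ needs $|f(f^{(m)}(x,\vec z,\omega),\vec z,\omega)|\le|t(x,\vec z)|$, and these coincide only when $t$ is independent of its first slot. If $t$ did depend on $x$ (e.g.\ a length-doubling $f$ with $t=x\ttimes\one\one$), the $|n|$-fold iterate would be exponential and no $\POR$ bound could exist, so the lemma would simply fail. You correctly isolate this as ``the only genuine obstacle,'' observe that $s:=t$ works when $t$'s free variables lie among $\vec z$, and supply an explicit $s$ for the additive-growth case $|f(\sigma,\vec z,\omega)|\le|\sigma|+c$ that is the one actually used for $\apply$ in Lemma~\ref{lemma:polyFtoCob}. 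The argument reshuffling via projections to put $n$ in the $\POR$-mandated position is also handled explicitly where the paper glosses over it. In short: same construction, but your version makes visible an implicit hypothesis on $t$ that the paper's statement and proof omit, and this is a worthwhile refinement rather than a detour.
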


\begin{proof}
  \label{proof:saPOR}
Given $f\in \POR$ and $t\in \mathcal{L}_{\mathbb{PW}}$,
let $sa_{f,t}$ be defined as follows:
\begin{align*}
sa_{f,t} (x,\eepsilon, \vec{z}, \omega) &:= x \\
sa_{f,t} (x,y\bbool, \vec{z}, \omega) &:= f\big(sa_{f,t}(x,y,\omega),
\vec{z},\omega\big)|_{t}
\end{align*}
The correctness of $\sa$ comes as a direct consequence of its definition by induction on $n$.
% \begin{itemize}
%   \item If $n=\eepsilon$, $sa_{f,t}$ reduces to $x=f^0(x, \vec z, \omega)$.
% \item If $n=s\bbool$, the result can be shown applying IH and the definition of $sa_{f,t}'$ to the claim.
% \end{itemize}
\end{proof}

Combining these results, we are able to prove Lemma \ref{lemma:polyFtoCob}

\begin{proof}[Proof of Lemma \ref{lemma:polyFtoCob} (Sketch)]
  As a consequence of points (2) and (3), we obtain that: $k(x, n, y, \omega) = \apply^{|n|}(x, y, \omega)$ belongs to $\POR$ and can be defined without recurring to $Q$.
  As a consequence of (1) we have that:
  $$
  k'(x, y, \omega):= k(e_c( x, q_0, \eepsilon, y,
  \omega), y, \omega)
  $$
  belongs to $\POR$ as well and can be
  defined without recurring to $Q$.
  Finally, as a consequence of (4) and $\POR$'s closure under
  composition, there is a function $g$ which returns the
  longest prefix of the leftmost projection
  of the output of $k'$. This function is exactly:
  $$
  g(x, y, \omega):= dectape(k'(x, y, \omega),\omega).
  $$
\end{proof}
As another consequence of Lemma \ref{lemma:polyFtoCob},
we show the result we were aiming to:
each function $f \in \SFP$ can be simulated by a function in $g\in \POR$,
using as an additional input a polynomial prefix of $f$'s oracle.

%%% COROLLARY
\begin{cor}
  \label{cor:sfptopor-}
For each $f\in \SFP$ and polynomial
time-bound $p\in \POLY$,
there is a function $g\in \POR$ such that
for any $\eta : \Nat \longrightarrow \Bool$, $\omega : \Nat \longrightarrow \Bool$ and $x\in\Ss$,
$$
f(x,\eta) = g\big(x,\eta_{p(|x|)}, \omega\big).
$$
\end{cor}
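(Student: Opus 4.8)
The plan is to chain the two lemmas established immediately above. Fix $f\in\SFP$ together with a polynomial time-bound $p\in\POLY$ for $f$. First I would invoke Lemma~\ref{lemma:SFPtopolyF}, which yields a polytime \emph{finite-stream} TM computable function $h$ such that $f(x,\eta)=h(x,\eta_{p(|x|)})$ for every $x\in\Ss$ and $\eta\in\Bool^{\Nat}$; intuitively $h$ is the very machine realizing $f$, but reading its randomness from a finite initial segment handed over as an explicit input argument rather than from an infinite stream. Next I would feed $h$ into Lemma~\ref{lemma:polyFtoCob}, obtaining a function $g\in\POR$ with $h(x,y)=g(x,y,\omega)$ for all $x,y\in\Ss$ and all $\omega\in\Os$. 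Note that the value of $g$ does not actually depend on $\omega$: since $h$ is FSTM computable, by the last clause of Lemma~\ref{lemma:polyFtoCob} the function $g$ can be taken not to invoke the query function $\query$ at all.

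Composing the two equalities then gives, for all $x\in\Ss$, $\eta\in\Bool^{\Nat}$ and $\omega\in\Os$,
\[
f(x,\eta)=h(x,\eta_{p(|x|)})=g\big(x,\eta_{p(|x|)},\omega\big),
\]
which is exactly the statement. The one place that deserves a line of care is the bookkeeping around $p$: Lemma~\ref{lemma:SFPtopolyF} is phrased for a polynomial that bounds the running time of the STM realizing $f$, so I would first remark that any polynomial dominating that running time works equally well, and that if a strictly larger $p$ is desired one simply discards the superfluous bits of $\eta_{p(|x|)}$ inside $g$ by an explicit truncation — still a $\POR$-definable operation built from $\cconc$, $\ttimes$ and $\,\cdot|_{\cdot}\,$.

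Since the substantive work has already been carried out — the simulation chain underlying Lemma~\ref{lemma:polyFtoCob}, namely the string encodings $e_c,e_t$ of configurations and transition functions, the single-step simulator $\apply$, its bounded iteration via $\sa$ (Lemma~\ref{lemma:saPOR}), and the output extractor $\dectape$ — there is no genuine obstacle here: the corollary is a pure composition of previously proved facts. The mildest annoyance is making sure the input conventions line up, i.e.\ that the ``finite stream'' argument expected by $h$ is literally the string $\eta_{p(|x|)}$ obtained by truncating $\eta$, and that this same string is passed unchanged as the second argument of $g$; but this is immediate from the way Lemma~\ref{lemma:SFPtopolyF} and Lemma~\ref{lemma:polyFtoCob} are stated.
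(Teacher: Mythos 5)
Your proposal is correct and takes exactly the route the paper intends: the corollary is a direct composition of Lemma~\ref{lemma:SFPtopolyF} (peel off the polynomial-length prefix of $\eta$ and hand it to an FSTM-computable $h$) with Lemma~\ref{lemma:polyFtoCob} (simulate $h$ by a $Q$-free $\POR$ function $g$ whose value is independent of $\omega$), yielding $f(x,\eta)=h(x,\eta_{p(|x|)})=g(x,\eta_{p(|x|)},\omega)$. The remark about choosing a larger time bound is a harmless addition, but since the corollary fixes $p$ to be a time bound for $f$, the chaining already goes through without it.
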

%\begin{proof}
%Assume $f\in \SFP$ and $y = \eta_{p(|x|)}$.
%By Lemma~\ref{lemma:SFPtopolyF},
%there is a function $h\in \polyF$ such that,
%for any $\eta :\Nat \longrightarrow \Bool$ and
%$x\in \Ss$,
%$$
%f(x,\eta) = h\big(x,\eta_{p(|x|)}\big).
%$$
%
%
%Moreover,
%due to Lemma~\ref{lemma:polyFtoCob},
%there is also a $g\in \POR$ such that for every $x, y \in \Ss, \omega \in \Os$,
%
%$$
%g(x,y, \omega) = h\big( x, y \big).
%$$
%
%
%Then, the desired function is $g$.
%\end{proof}

Now, we need to establish that there is a
function $e \in \POR$ which produces strings with
the same distribution of the prefixes of the functions in $\Ss^\Nat$.
Intuitively, this function is very simple:
it extracts $|x|+1$ bits from $\omega$
and concatenates them in its output.
The definition of the function $e$ passes through a bijection $\mathit{dyad}: \Nat \longrightarrow \Ss$, called \emph{dyadic
representation} of a natural number.
Thus, the function $e$ can
simply create the strings $\oone^0, \oone^1 ,\ldots, \oone^k$, and
sample the function $\omega$ on the coordinates $\mathit{dy}(\oone^0), \mathit{dy}(\oone^1) ,\ldots, \mathit{dy}(\oone^k)$, concatenating the result in a string.

%
% Intuitively, for each natural number $n\in \Nat$,
% if $n_2$ is the encoding of $n$ in base 2,
% the dyadic representation of $m$ is $(m+1)_2$ without is leftmost bit.

\begin{defn}
  The function $\mathit{dyad}: \Nat \longrightarrow \Ss$
  associates each $n \in \Nat$ to the string obtained
  stripping the left-most bit from the binary representation of
  $n+1$.
\end{defn}

\begin{remark}
  \label{rem:dycorr}
  There is a $\POR$ function $\mathit{dy}: \Ss \times \Os \longrightarrow \Ss$ such that $\forall \sigma \in \Ss.\forall \omega\in\Os.  \mathit{dy}(\sigma, \omega)= \mathit{dyad}(\oone^{|\sigma|})$.
\end{remark}

The construction of this function is not much interesting to the aim of our proof, so we omit it.

\begin{defn}
Let $e :\Ss\times \Os \longrightarrow \Ss$ be defined as follows:
\begin{align*}
e(\eepsilon,\omega) &= \eepsilon; \\
e(x\bbool,\omega) &= e(x, \omega)Q\big(\mathit{dy}(x,\omega),\omega\big)|_{x\bbool}.
\end{align*}
\end{defn}
\noindent

\begin{lemma}[Correctness of $e$]
  \label{lemma:corrofe}
  For any $\sigma\in \Ss$ and $i\in\Nat$,
  if $|\sigma|=i+1$, for any $j\leq i\in \Nat$ and
  $\omega \in \Os$, $(i)$
  $e(\sigma, \omega)(j) = \omega(\mathit{dy}(\oone^{j}, \omega))$
  and $(ii)$ the length of $e(\sigma, \omega)$ is exactly $i+1$.
\end{lemma}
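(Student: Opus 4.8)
*For any $\sigma\in \Ss$ and $i\in\Nat$, if $|\sigma|=i+1$, for any $j\leq i\in \Nat$ and $\omega \in \Os$, $(i)$ $e(\sigma, \omega)(j) = \omega(\mathit{dy}(\oone^{j}, \omega))$ and $(ii)$ the length of $e(\sigma, \omega)$ is exactly $i+1$.*

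The plan is to prove both assertions simultaneously by structural induction on $\sigma$ (following the recursion pattern $\eepsilon$ / $x\bbool$ used to define $e$). It is cleanest to first establish the uniform statement: for every $\sigma \in \Ss$ and $\omega \in \Os$, one has $|e(\sigma,\omega)| = |\sigma|$ and, for every $j < |\sigma|$, the $j$-th bit of $e(\sigma,\omega)$ equals $\omega(\mathit{dy}(\oone^{j},\omega))$. The lemma as stated then follows by taking $|\sigma| = i+1$: part $(ii)$ is just the length equality, and for part $(i)$ every $j \le i$ satisfies $j < |\sigma|$.

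In the base case $\sigma = \eepsilon$, the defining equation gives $e(\eepsilon,\omega) = \eepsilon$, so the length claim holds trivially and the bit claim is vacuous. For the inductive step I would write $\sigma = x\bbool$ and assume both claims for $x$. By definition $e(x\bbool,\omega) = e(x,\omega)\,Q(\mathit{dy}(x,\omega),\omega)\,|_{x\bbool}$, and the key observation is that $Q(\mathit{dy}(x,\omega),\omega) = \omega(\mathit{dy}(x,\omega))$ is a single bit; hence the string $e(x,\omega)\,Q(\mathit{dy}(x,\omega),\omega)$ has length $|e(x,\omega)|+1 = |x|+1 = |x\bbool|$ by the inductive hypothesis, so the truncation $|_{x\bbool}$ (truncation at the length of $x\bbool$) leaves it unchanged. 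This yields $|e(x\bbool,\omega)| = |x\bbool|$, proving the length claim, and it identifies $e(x\bbool,\omega)$ with $e(x,\omega)$ followed by the appended bit $\omega(\mathit{dy}(x,\omega))$.

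For the bit claim I would fix $j < |x\bbool| = |x|+1$ and argue by cases. If $j < |x|$, then $j$ indexes a position inside the prefix $e(x,\omega)$, so the $j$-th bit of $e(x\bbool,\omega)$ equals the $j$-th bit of $e(x,\omega)$, which is $\omega(\mathit{dy}(\oone^{j},\omega))$ by the inductive hypothesis. If $j = |x|$, the $j$-th bit of $e(x\bbool,\omega)$ is exactly the appended bit $\omega(\mathit{dy}(x,\omega))$; invoking Remark~\ref{rem:dycorr} together with $|\oone^{|x|}| = |x|$ gives $\mathit{dy}(x,\omega) = \mathit{dyad}(\oone^{|x|}) = \mathit{dy}(\oone^{|x|},\omega)$, so this bit is $\omega(\mathit{dy}(\oone^{j},\omega))$, as required.

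The argument is essentially routine; the only subtlety I anticipate is making sure that the truncation operator $|_{x\bbool}$ in the recursive clause really is a no-op, which is precisely why it is convenient to carry the length invariant $|e(\sigma,\omega)| = |\sigma|$ through the induction rather than proving $(i)$ in isolation. One also has to fix a consistent (zero-based) bit-indexing convention matching the notation $e(\sigma,\omega)(j)$ with $j < |\sigma|$, and to note that $\mathit{dy}$ depends on its string argument only through its length, so that Remark~\ref{rem:dycorr} can be applied with $\oone^{|x|}$ in place of $x$.
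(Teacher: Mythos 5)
Your proof is correct and follows essentially the same route as the paper's: induction on the structure of $\sigma$, with the length claim carried along as an invariant, the base case handled by vacuity/triviality, and the inductive step split between the prefix positions (settled by the induction hypothesis) and the newly appended bit (settled by the fact, from Remark~\ref{rem:dycorr}, that $\mathit{dy}$ depends on its string argument only through its length). Your write-up is in fact slightly more careful than the paper's, since you explicitly verify that the truncation $|_{x\bbool}$ is a no-op and you fix the indexing convention cleanly.
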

\begin{proof}
  Both claims are proved by induction on $\sigma$. The latter is trivial,
  whereas the former requires some more effort:
  \begin{itemize}
    \item[$\eepsilon$] The claim comes from vacuity of the premise $|\sigma| = i+1$.
    \item[$\tau \bbool$] It holds that:
    $
    e(\tau\bbool, \omega)(j)=
    e(\tau, \omega)Q\big(\mathit{dy}(\tau,\omega),\omega\big)=
    e(\tau, \omega)\omega\big(\mathit{dy}(\tau,\omega)\big).
    $
    By $(ii)$, for $j=i+1$, the
    $j$-th element of $e(\tau\bbool, \omega)$ is exactly
    $Q\big(\mathit{dy}(\tau,\omega), \omega \big)$, which is equal to
    $\omega(\mathit{dy}(\tau,\omega))$, in turn
    equal to $\omega(\mathit{dy}(\oone^{j}, \omega))$ (by Remark \ref{rem:dycorr}). For smaller values of $j$,
    the first claim is a consequence of the definition of $e$ together with IH.
  \end{itemize}
\end{proof}

\begin{defn}
We define $\sim_{\mathit{dy}}$ as the smallest relation in
 $\Os \times \Bool^\Nat$ such that:
 $$
 \eta \sim_\mathit{dy} \omega \leftrightarrow \forall n \in \Nat.
  \eta(n)= \omega (\mathit{dy}(\oone^{n+1}, \omega)).
 $$
\end{defn}

\begin{lemma}
  \label{lemma:funbij}
  It holds that:
  \begin{itemize}
    \item $\forall \eta \in \Bool^\Nat. \exists ! \omega \in \Os. \eta \sim_{\mathit{dy}} \omega$;
    \item $\forall \omega \in \Os. \exists ! \eta \in \Bool^\Nat. \eta \sim_{\mathit{dy}} \omega$.
  \end{itemize}
\end{lemma}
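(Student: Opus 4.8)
The plan is to reduce both clauses to a single combinatorial fact. First I would record, using the correctness statement of Remark~\ref{rem:dycorr}, that $\mathit{dy}(\sigma,\omega)$ does not in fact depend on $\omega$: it equals $\mathit{dyad}(\oone^{|\sigma|})$, an expression in which $\omega$ does not occur. Writing $\tau_n := \mathit{dy}(\oone^{n+1},\omega)$ for $n\in\Nat$ — now unambiguous — the relation unfolds to: $\eta\sim_{\mathit{dy}}\omega$ iff $\eta(n)=\omega(\tau_n)$ for all $n\in\Nat$. The key claim, which I would state as a preliminary, is that $n\mapsto\tau_n$ is a \emph{bijection} from $\Nat$ onto $\Ss$; this follows from the bijectivity of $\mathit{dyad}$ together with the fact that $\mathit{dy}(\sigma,\omega)$ sees $\sigma$ only through its length.

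Granting the key claim, both clauses are immediate. For the second, given $\omega\in\Os$ I would take $\eta\in\Bool^{\Nat}$ defined by $\eta(n):=\omega(\tau_n)$: this witnesses $\eta\sim_{\mathit{dy}}\omega$, and any $\eta'$ with $\eta'\sim_{\mathit{dy}}\omega$ satisfies $\eta'(n)=\omega(\tau_n)=\eta(n)$ for every $n$, hence $\eta'=\eta$. For the first, given $\eta\in\Bool^{\Nat}$ I would define $\omega\in\Bool^{\Ss}=\Os$ by $\omega(\tau):=\eta(n)$, where $n$ is the unique index with $\tau_n=\tau$ (well defined because $n\mapsto\tau_n$ is a bijection onto $\Ss$); then $\omega(\tau_n)=\eta(n)$ for all $n$, so $\eta\sim_{\mathit{dy}}\omega$, and if also $\eta\sim_{\mathit{dy}}\omega'$ then $\omega'(\tau)=\omega'(\tau_n)=\eta(n)=\omega(\tau)$ for every $\tau=\tau_n\in\Ss$, so $\omega'=\omega$.

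The step I expect to be the main obstacle — although it is a bookkeeping matter rather than a conceptual one — is verifying the key claim, i.e.\ that $n\mapsto\mathit{dy}(\oone^{n+1},\omega)$ enumerates $\Ss$ exactly once. Here I would unwind the definition of $\mathit{dyad}$ (``delete the leading bit of the binary numeral of $n+1$''), which is the standard length-lexicographic bijection $\Nat\to\Ss$, and invoke Remark~\ref{rem:dycorr} to pass from $\mathit{dy}$ to $\mathit{dyad}$; any index shift carried by the ``$+1$'' is absorbed at this point. No measure theory is needed for the lemma itself: measurability of the two maps, and the fact that they transport $\mu$ on $\Os$ to the uniform product measure on $\Bool^{\Nat}$ and back, would be dealt with where the lemma is later applied, using that $\{\omega(\tau_n)\}_n$ are pairwise distinct coordinates of $\omega$ and hence i.i.d.\ uniform under $\mu$.
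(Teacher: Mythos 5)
Your proof is correct and follows essentially the same route as the paper's: both arguments reduce the lemma to the facts that $\mathit{dy}(\sigma,\omega)$ is independent of $\omega$ (Remark~\ref{rem:dycorr}) and that the induced index map into $\Ss$ is a bijection inherited from $\mathit{dyad}$, with existence given by the explicit witness and uniqueness by comparing values coordinatewise. You are merely more explicit than the paper about the one point it glosses over — that surjectivity of $n\mapsto\tau_n$ (modulo the ``$+1$'' index shift in the definition of $\sim_{\mathit{dy}}$) is exactly what is needed for uniqueness in the first clause — which is a welcome clarification rather than a divergence.
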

\begin{proof}
  The proofs of the two claims are very similar.
  Since \emph{dyad} is a bijection, applying Remark \ref{rem:dycorr},
  we obtain the existence of an $\omega$ which is in relation with $\eta$.
  Now suppose that there are $\omega_1, \omega_2$ both in relation with $\eta$
  but they are different.
  Then, there is a $\sigma \in\Ss$, such that
  $\omega_1(\sigma)\neq\omega_2(\sigma)$
  and, by Remark~\ref{rem:dycorr},
  $\mathit{dy}(\sigma, \omega_1) = \mathit{dy}(\sigma, \omega_2)$
  which entails that $\eta(k) \neq \eta(k)$ for some $k$, that is a contradiction.
\end{proof}
\begin{cor}
  The relation $\sim_{\mathit{dy}}$ is a bijection.
\end{cor}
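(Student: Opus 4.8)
The plan is to read this corollary as the statement that $\sim_{\mathit{dy}}$, which by definition is a relation between $\Bool^\Nat$ and $\Os$, is (the graph of) a bijection between these two sets; under this reading it is an essentially immediate consequence of Lemma~\ref{lemma:funbij}, and the argument consists mostly in unwinding the definitions. Recall that a relation $R$ between sets $A$ and $B$ is the graph of a bijection precisely when both $R$ and its converse $R^{-1}$ are total and single-valued, i.e.\ are functions; the two items of Lemma~\ref{lemma:funbij} supply exactly these two facts.

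Concretely, the first step will be to use the first item of Lemma~\ref{lemma:funbij} to define a function $F\colon \Bool^\Nat \to \Os$ by letting $F(\eta)$ be the unique $\omega$ with $\eta \sim_{\mathit{dy}} \omega$: totality and single-valuedness of $F$ are exactly the existence and uniqueness asserted there, and by construction the graph of $F$ is $\sim_{\mathit{dy}}$. The second step is to read off injectivity and surjectivity of $F$ from the second item: if $F(\eta_1) = F(\eta_2) = \omega$, then $\eta_1$ and $\eta_2$ both witness the existential clause of that item, so by its uniqueness clause $\eta_1 = \eta_2$; and surjectivity is just the existential clause itself, every $\omega$ admitting some $\eta$ with $\eta \sim_{\mathit{dy}} \omega$, i.e.\ with $F(\eta) = \omega$. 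Hence $F$ is a bijection, its inverse $F^{-1}$ being the map sending each $\omega$ to the unique associated $\eta$, and $\sim_{\mathit{dy}}$, being the graph of $F$, is a bijection. As a sanity check, $F$ is precisely the map implicit in Lemma~\ref{lemma:corrofe} and in the definition of $e$, relating each oracle $\omega \in \Os$ to the stream $\eta \in \Bool^\Nat$ whose $n$-th bit is $\omega(\mathit{dy}(\oone^{n+1},\omega))$.

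There is essentially no obstacle here: all of the content lives in Lemma~\ref{lemma:funbij}, which itself rests on Remark~\ref{rem:dycorr} (the $\POR$-function $\mathit{dy}$ computing $\mathit{dyad}(\oone^{|\sigma|})$) together with the bijectivity of $\mathit{dyad}\colon \Nat \to \Ss$. The only points deserving a line of care are bookkeeping ones: keeping straight the direction of the relation $\sim_{\mathit{dy}}$, which lives in $\Bool^\Nat \times \Os$ even though it is written with $\eta$ on the left; and noting that ``bijection'' is meant here at the level of the underlying sets only — the further fact that $F$ transports the uniform measure on $\Bool^\Nat$ to $\mu$ on $\Os$, which is what actually makes this correspondence useful in the proof of Lemma~\ref{lemma:Fundamental}, is a separate (though closely related) statement and is not part of what this corollary asserts.
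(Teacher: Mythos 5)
Your proposal is correct and takes exactly the same route as the paper, whose proof is simply the one-line remark that the corollary is a "consequence of Lemma~\ref{lemma:funbij}." You merely spell out the routine unwinding — the first item of that lemma gives totality and single-valuedness of the induced map $F\colon \Bool^\Nat\to\Os$, the second gives surjectivity and injectivity — which is precisely the content the paper leaves implicit.
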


\begin{proof}
  Consequence of Lemma \ref{lemma:funbij}.
\end{proof}

\begin{lemma}
  \label{lemma:auxsimdy}
  $$
  \eta \sim_{\mathit{dy}} \omega \to \forall n \in \Nat. \eta_n = e(1^{n+1}, \omega).
  $$
\end{lemma}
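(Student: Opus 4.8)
The plan is to fix $\eta\in\Bool^{\Nat}$ and $\omega\in\Os$ with $\eta\sim_{\mathit{dy}}\omega$ and to argue by induction on $n$, unwinding the recursive definition of $e$ and invoking Lemma~\ref{lemma:corrofe} at each step. First I would settle the base case: unfold $e(\oone,\omega)=e(\eepsilon,\omega)\,Q(\mathit{dy}(\eepsilon,\omega),\omega)|_{\oone}$, note that $e(\eepsilon,\omega)=\eepsilon$ and that the truncation is vacuous, so that $e(\oone,\omega)=Q(\mathit{dy}(\eepsilon,\omega),\omega)=\omega(\mathit{dy}(\eepsilon,\omega))$; by Remark~\ref{rem:dycorr} and the defining clause of $\sim_{\mathit{dy}}$ this single bit is exactly the first bit of $\eta$, hence equals the relevant prefix of $\eta$.

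For the inductive step I would use the recursive clause $e(\oone^{n+2},\omega)=e(\oone^{n+1},\omega)\,Q(\mathit{dy}(\oone^{n+1},\omega),\omega)|_{\oone^{n+2}}$. By Lemma~\ref{lemma:corrofe}(ii) the prefix $e(\oone^{n+1},\omega)$ already has the correct length, so the truncation does nothing, and by the induction hypothesis it equals $\eta_{n}$; the freshly appended bit $Q(\mathit{dy}(\oone^{n+1},\omega),\omega)=\omega(\mathit{dy}(\oone^{n+1},\omega))$ is, by $\eta\sim_{\mathit{dy}}\omega$, precisely the next bit of $\eta$. Concatenating gives the next prefix of $\eta$, which closes the induction. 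A slicker variant avoids the induction and compares the two strings componentwise: Lemma~\ref{lemma:corrofe}(ii) equates their lengths, and for every index $j$ in range Lemma~\ref{lemma:corrofe}(i) identifies the $j$-th bit of $e(\oone^{n+1},\omega)$ with the value of $\omega$ at the appropriate dyadic coordinate, which is exactly the $j$-th bit of $\eta$ according to the definition of $\sim_{\mathit{dy}}$.

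I expect the only delicate point to be the index bookkeeping, namely tracking which natural-number coordinate the string $\mathit{dy}(\oone^{k},\omega)$ names, so that the dyadic enumeration hard-wired into $e$ lines up bit-for-bit with the plain enumeration $n\mapsto\eta(n)$ used in the definition of $\sim_{\mathit{dy}}$. This is precisely the content already packaged in Remark~\ref{rem:dycorr} (correctness of $\mathit{dy}$) and Lemma~\ref{lemma:corrofe} (correctness of $e$), so once those are cited no genuinely new idea is needed: the argument reduces to a short and routine bookkeeping computation.
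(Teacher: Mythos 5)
Your ``slicker variant'' is essentially the paper's proof: the paper states it as a one-line contraposition, using Lemma~\ref{lemma:corrofe} to pass from a disagreement between $\eta_n$ and $e(\oone^{n+1},\omega)$ to a disagreement $\eta(i)\neq\omega(\mathit{dy}(\cdot,\omega))$ at a single coordinate, which contradicts the definition of $\sim_{\mathit{dy}}$; your componentwise argument is the same step run forwards rather than backwards. The inductive version in the first half is also correct, but it re-derives inside the argument the bit-by-bit description of $e$ that Lemma~\ref{lemma:corrofe} already packages, so it is extra work for the same content; and as you anticipate, the only delicate point is the index bookkeeping between $\mathit{dy}$, $e$ and $\sim_{\mathit{dy}}$, which the paper likewise leaves implicit.
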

\begin{proof}
  By contraposition: suppose
  $
  \eta_n \neq e(\ovverline n \Nat, \omega).
  $
  As a consequence of the correctness of $e$ (Lemma \ref{lemma:corrofe}),
  there is an $i \in \Nat$ such that
  $\eta(i)\neq\omega(\mathit{dy}(\ovverline i\Nat, \omega))$,
  which is a contradiction.
\end{proof}

\noindent
We can finally conclude the proof of Proposition~\ref{prop:SFPtoPOR}.

\begin{proof}[Proof of Proposition~\ref{prop:SFPtoPOR}]
  From Corollary \ref{cor:sfptopor-}, we know that there is a function $f'\in \POR$,
  and a $p \in \POLY$ such that:
  \begin{equation}
  \forall x, y \in \Ss.\forall \eta.\forall \omega. y = \eta_{p(x)} \to  f(x, \eta) = f'(x, y, \omega).\tag{$*$}
\end{equation}
\noindent
  Fixed an $\overline \eta \in \{\eta \in \Bool^\Nat \ | \ f(x, \eta)=y \}$,
  its image with respect to $\sim_{\mathit{dy}}$
  is in
  $
  \{\omega \in \Os \ \mid \ f'(x, e(p'(\mathit{s}(x, \omega), \omega), \omega), \omega) =y\},
  $
  where $\mathit{s}$ is the $\POR$-function computing $\oone^{|x|}$.
  Indeed, by Lemma~\ref{lemma:auxsimdy}, it holds that
  $\overline \eta_{p(x)} = e(p(\mathit{size}(x, \omega), \omega)$, where $p'$ is the $\POR$-function computing the polynomial $p$, defined without recurring
  to $Q$.
   %$\mathit{s}$ is the $\POR$ function computing $\oone^{|x|+1}$.
  By $(*)$, we prove the claim.
  It also holds that, given a fixed $\overline \omega \in \{\omega \in \Os \ | \ f'(x, e(p'(\mathit{size}(x, \omega), \omega), \omega), \omega) =y\}$,
  its pre-image with respect to $\sim_{\mathit{dy}}$ is in $\{\eta \in \Bool^\Nat \ | \ f(x, \eta)=y \}$.
  The proof is analogous to the one we showed above.
  Now, since $\sim_{\mathit{dy}}$ is a bijection between the two sets:
  $
  \mu\big(\{\eta \in \Bool^\Nat \ | \ f(x, \eta)=y \}\big)
  = \mu\big(\{\omega \in \Os \ | \ f'(x, e(p(\mathit{size}(x, \omega), \omega), \omega), \omega) =y\}\big),
  $
  which concludes the proof.
\end{proof}

\subsection{From $\POR$ to $\SFP$}

We start defining the imperative language $\SIFPRA$ and proving its
polytime programs equivalent to $\POR$. To do so, we first introduce
the definition of $\SIFPRA$ and its \emph{big-step} semantics.

\begin{defn}[Correct programs of $\SIFPRA$]
 \label{def:sifp}
The language of $\SIFPRA$ programs is $\lang{\stm_\RA}$, i.e.
the set of strings produced by the non-terminal symbol $\stm_\RA$ defined by:
\begin{align*}
\id &::= X_i\ |\ Y_i\ |\ S_i\ |\ R\ |\ Q\ |\ Z\ |\ T\qquad i \in \Nat\\
\xp &::= \epsilon\ |\ \xp.\zero\ |\ \xp.\one\ |\ \id\ |\ \xp \sqsubseteq \id\ |\ \xp \land \id\ |\ \lnot \xp\\
\stm_\RA & ::= \id \takes \xp\ |\ \stm_\RA;\stm_\RA\ |\ \while \xp \stm_\RA\ |\ \fl \xp
\end{align*}
\end{defn}

The \emph{big-step} semantics associated to the language of the $\SIFPRA$ programs relies on the notion of \emph{Store}, which for us is a function $\store: \id \rightharpoonup \{\zero, \one\}^*$.

We define the updating of a store $\store$ with a mapping from $y \in \id$ to $\tau \in \{\zero, \one\}^*$ as:
\[
\store[y\leftarrow \tau](x) := \begin{cases} \tau & \text{if } x = y\\ \store(x) & \text{otherwise.}\end{cases}
\]

% \begin{defn}[Store]
% A store is a function $\store: \id \rightharpoonup \{\zero,
% \one\}^*$, an \emph{empty} store is a store which is total and
% constant on $\epsilon$. We represent such object as $[]$.

% We define the updating of a store $\store$ with a mapping from $y \in \id$ to $\tau \in \{\zero, \one\}^*$ as:
% \[
% \store[y\leftarrow \tau](x) := \begin{cases} \tau & \text{if } x = y\\ \store(x) & \text{otherwise.}\end{cases}
% \]
% \end{defn}

\begin{defn}[Semantics of $\SIFP$ expressions]
  \label{def:expsemantics}
The semantics of an expression $E \in \lang{\xp}$ is the smallest relation
$\sred: \lang{\xp} \times (\id \longrightarrow \{\zero, \one\}^*)\times \Os \times \{\zero, \one\}^*$ closed under the following rules:
\begin{center}
%\vspace{12pt}
\AxiomC{\phantom{$\langle \epsilon, \store\rangle \sred \epsilon$}}
\UnaryInfC{$\langle \epsilon, \store\rangle \sred \epsilon$}
\DisplayProof
\hspace{18pt}
\AxiomC{$\langle e, \store \rangle \sred \sigma$}
\UnaryInfC{$\langle e.\zero, \store\rangle \sred \sigma \conc \zero$}
\DisplayProof
\hspace{18pt}
\AxiomC{$\langle e, \store \rangle \sred \sigma$}
\UnaryInfC{$\langle e.\one, \store\rangle \sred \sigma \conc \one$}
\DisplayProof

\vspace{12pt}
\AxiomC{$\langle e, \store \rangle \sred \sigma$}
\AxiomC{$\store(\id) = \tau$}
\AxiomC{$\sigma \subseteq \tau$}
\TrinaryInfC{$\langle e \sqsubseteq \id, \store\rangle \sred \one$}
\DisplayProof
\hspace{18pt}
\AxiomC{$\langle e, \store \rangle \sred \sigma$}
\AxiomC{$\store(\id) = \tau$}
\AxiomC{$\sigma \not\subseteq \tau$}
\TrinaryInfC{$\langle e \sqsubseteq \id, \store\rangle \sred \zero$}
\DisplayProof

\vspace{12pt}
\AxiomC{$\store(\id)=\sigma$}
\UnaryInfC{$\langle \id, \store\rangle \sred \sigma$}
\DisplayProof
\hspace{18pt}
\vspace{12pt}
\AxiomC{$\id \not \in \mathit{dom}(\store)$}
\UnaryInfC{$\langle \id, \store\rangle \sred \epsilon$}
\DisplayProof
%
% \vspace{12pt}
\hspace{18pt}
\AxiomC{$\langle e, \store \rangle \sred \zero$}
\UnaryInfC{$\langle \lnot e, \store\rangle \sred \one$}
\DisplayProof
\hspace{18pt}
\AxiomC{$\langle e, \store \rangle \sred \sigma$}
\AxiomC{$\sigma \neq \zero$}
\BinaryInfC{$\langle \lnot e, \store\rangle \sred \zero$}
\DisplayProof

\vspace{12pt}
\AxiomC{$\langle e, \store \rangle \sred \one$}
\AxiomC{$\store(\id) = \one$}
\BinaryInfC{$\langle e \land \id, \store\rangle \sred \one$}
\DisplayProof
\hspace{18pt}
\AxiomC{$\langle e, \store \rangle \sred \sigma$}
\AxiomC{$\store(\id) = \tau$}
\AxiomC{$\sigma \neq \one \land \tau \neq \one$}
\TrinaryInfC{$\langle e \land \id, \store \rangle \sred \zero$}
\DisplayProof
\hspace{18pt}

\end{center}
\end{defn}

\begin{defn}[big-step Operational Semantics of $\SIFPRA$]
  \label{def:sifpraos}
The semantics of a program $P \in \lang{\stm_\RA}$ is the smallest relation
$\ssos\subseteq \lang{\stm_\RA} \times (\id \longrightarrow \{\zero, \one\}^*)\times \Os\times (\id \longrightarrow \{\zero, \one\}^*)$ closed under the following rules:
\begin{center}
% \vspace{12pt}
% \AxiomC{$\phantom{\langle \sk, \store\rangle \ssos \store}$}
% \UnaryInfC{${\langle \sk, \store, \omega\rangle \ssos \store}$}
% \DisplayProof
% \hspace{18pt}
\AxiomC{$\langle e, \store\rangle\sred \sigma$}
\UnaryInfC{$\langle \id\takes e, \store, \omega\rangle \ssos \store\as {\id}{\sigma}$}
\DisplayProof
\hspace{18pt}
\AxiomC{$\langle s, \store, \omega\rangle\ssos \store'$}
\AxiomC{$\langle t, \store', \omega\rangle\ssos \store''$}
\BinaryInfC{$\langle s;t, \store, \omega\rangle \ssos \store''$}
\DisplayProof

\vspace{12pt}
\AxiomC{$\langle e, \store\rangle\sred \one$}
\AxiomC{$\langle s, \store, \omega\rangle\ssos \store'$}
\AxiomC{$\langle \while e s, \store', \omega\rangle\ssos \store''$}
\TrinaryInfC{$\langle \while e s, \store, \omega\rangle \ssos \store''$}
\DisplayProof
\hspace{18pt}
\AxiomC{$\langle e, \store\rangle\sred \sigma$}
\AxiomC{$\sigma \neq \one$}
\BinaryInfC{$\langle \while e s, \store, \omega\rangle \ssos \store$}
\DisplayProof

\vspace{12pt}
\AxiomC{$\langle e, \store\rangle \sred \sigma$}
\AxiomC{$\omega(\sigma)=b$}
\BinaryInfC{$\langle \fl e, \store, \omega\rangle \ssos \store[R \leftarrow b]$}
\DisplayProof

\end{center}
\end{defn}

This semantics allows us to associate each $\SIFPRA$ program to the
function it evaluates:

\begin{defn}[Function evaluated by a $\SIFPRA$ program]
  \label{def:simprafuneval}
We say that the function evaluated by a correct $\SIFPRA$ program $P$ is $\mathcal \llbracket \cdot\rrbracket: \lang{\Stm_{\RA}} \longrightarrow (\Ss^n \times \Os \longrightarrow \Ss)$, defined as below\footnote{Instead of the infix notation for $\ssos$, we will use its prefixed notation. So, the notation express the store associated to the $P$, $\Sigma$ and $\omega$ by $\ssos$. Moreover, notice that we employed the same function symbol $\ssos$ to denote two distinct functions: the \emph{big-step} operational semantics of $\SIFPRA$ programs and the \emph{big-step} operational semantics of $\SIFPLA$ programs}:
\[
\llbracket P\rrbracket:= \lambda x_1, \ldots, x_n, \omega.\ssos(\langle P, []\as {X_1} {x_1}, \ldots, \as {X_n} {x_n}, \omega\rangle)(R).
\]
\end{defn}

Observe that, among all the different registers, the register
$R$ is meant to contain the value computed by the program at the end
of its execution, similarly the $\{X_i\}_{i \in \Nat}$ registers are
used to store the function's inputs. The correspondence between
$\POR$ and $\SIFPRA$
can be stated as follows:

\begin{lemma}[Implementation of $\POR$ in $\SIFPRA$]
\label{lemma:portosifp}
For every function $f \in \POR$, there is a polytime $\SIFPRA$ program $P$
such that:
$\forall x_1, \ldots x_n.
\llbracket P\rrbracket(x_1, \ldots, x_n, \omega)=f(x_1, \ldots, x_n, \omega)$.
Moreover, if $f$ can be defined without recurring to $Q$, then $P$ does not
contain any $\fl e$ statement.
\end{lemma}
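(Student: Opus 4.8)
The plan is to prove the statement by induction on the structure of functions in $\POR$, building for each $f$ a $\SIFPRA$ program $P_f$ together with a polynomial bounding the size of its $\ssos$-derivations on inputs of a given total length, and keeping track throughout that a $\fl{e}$ statement is emitted only when the clause for $\query$ is used. First I would fix a simulation discipline: the arguments $\sigma_1,\dots,\sigma_n$ are placed in the registers $X_1,\dots,X_n$, the output is left in $R$, and each $P_f$ is written so as to read only $X_1,\dots,X_n$ and write only $R$ and finitely many fresh auxiliary registers; when combining subprograms (for composition and recursion) a mechanical renaming makes their auxiliary registers pairwise disjoint. Since the grammar of $\SIFPRA$ has no conditional constructor, I would then record two derived idioms: a one-shot conditional on the value of an expression, simulated by a $\while{}{}$ loop whose guard is a fresh flag register that the body clears after a single pass; and the decomposition of the content of a register $X$ as $X = X'\mathtt b$, computed by a short auxiliary loop that rebuilds $X'$ bit by bit, using the expressions $e.\zero$, $e.\one$ and $e\sqsubseteq\id$ to decide at each step which bit to append (and detecting $X=\epsilon$ when that loop does not run). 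None of these idioms mentions $\mathtt{Flip}$.

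With these idioms in place the base cases are straightforward: the empty function is realised by $R\takes\epsilon$; the projection $P^n_i$ by $R\takes X_i$; the word-successor $\Sf_b$ by $R\takes X_1.b$; the conditional $\Cf$ by first decomposing $X_1$ and then dispatching, via the derived one-shot conditional, to $R\takes X_2$, $R\takes X_3$ or $R\takes X_4$ according to whether $X_1$ is $\epsilon$, ends in $\zero$, or ends in $\one$; and the query $\query(x,\omega)=\omega(x)$ simply by the single statement $\fl{X_1}$, which by the operational semantics sets $R$ to $\omega(\sigma)$, where $\sigma$ is the content of $X_1$. Only this last program contains a $\fl{e}$ statement. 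For composition, $f(\vec x,\omega)=g(h_1(\vec x,\omega),\dots,h_k(\vec x,\omega),\omega)$, I would run the renamed $P_{h_1},\dots,P_{h_k}$ in sequence, after each one saving $R$ into a fresh register $V_j$, then load $V_1,\dots,V_k$ into the input registers expected by $P_g$ and run $P_g$; the step count is at most the sum of those of the subprograms plus $O(k)$ copy statements, hence polynomial, and no new $\fl{e}$ statement appears beyond those already inside the subprograms.

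The real work is bounded recursion on notation, $f(\vec x,\epsilon,\omega)=g(\vec x,\omega)$ and $f(\vec x,y\mathtt b,\omega)=h_b(\vec x,y,f(\vec x,y,\omega),\omega)|_{t(\vec x,y)}$. I would implement this bottom-up with a single $\while{}{}$ loop maintaining three registers: $A$, the running value of $f$ on the prefix of $y$ seen so far (initialised by running $P_g$); $P$, that prefix (initialised to $\epsilon$); and $S$, the suffix of $y$ still to be consumed. Because the recursion peels the \emph{last} bit of its argument, the bits of $y$ must be processed left to right, so the program first reverses $y$ into $S$ by an auxiliary $|y|$-iteration loop; then, while $S\neq\epsilon$, each iteration moves the leading bit $\mathtt b$ of $S$ onto the tail of $P$, runs $P_{h_b}$ on inputs $\vec x$, the \emph{old} $P$ and $A$, stores its result in $A$, and finally overwrites $A$ with its truncation at the value of $t(\vec x,P)$ (truncation being a short loop, and $t$ being evaluated by a fixed straight-line block since it is an explicit definition from $\epsilon,\zero,\one,\conc,\times$). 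The loop runs exactly $|y|$ times, and a side induction on $|y|$ using the invariant ``after $i$ iterations, $P$ is the length-$i$ prefix $y_i$ of $y$ and $A=f(\vec x,y_i,\omega)$'' shows that $P_f$ computes $f$.

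I expect the main obstacle to be the \emph{complexity} estimate for this last case. One must first observe that, $t$ being built only from $\epsilon,\zero,\one,\conc,\times$, the length of $t(\vec x,y)$ is bounded by a fixed polynomial in $|\vec x|$ and $|y|$, whence every value stored in $A$ has length within that polynomial; consequently each invocation of $P_{h_b}$ runs inside the polynomial time bound furnished by the induction hypothesis (which depends on the sizes of its inputs, now all polynomially bounded), and the total step count is the cost of the reversal, plus the cost of $P_g$, plus $|y|$ times the (polynomial) cost of one iteration, hence again polynomial. Since neither the recursion scaffolding, nor the reversal, nor the evaluation of $t$ mentions $\query$, a $\query$-free $f$ yields a $\fl{e}$-free $P_f$; together with the base and composition cases this establishes the ``moreover'' clause and completes the induction.
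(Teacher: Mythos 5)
Your proposal is correct and follows essentially the same route as the paper's own (sketched) proof: a structural induction on $\POR$ with the result held in $R$, inputs in the $X$ registers, base functions translated to single statements (in particular $\query$ to $\fl{X_1}$, the sole source of $\fl{e}$), and composition and bounded recursion handled by sequencing and a $\while{}{}$ loop. The paper explicitly omits the low-level details of the composition and recursion cases, which you supply, including the Cobham-style argument that the truncation by $t$ keeps intermediate values, and hence the overall running time, polynomially bounded.
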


The proof of this result is quite simple: it
relies on the fact that it is possible to associate to each $\POR$
function an equivalent polytime program, and on the observation
that it is possible to compose them and to
implement bounded recursion on notation in $\SIFPRA$ with a polytime overhead.

\begin{proof}[Proof Sketch of Lemma \ref{lemma:portosifp}]
For each function $f \in \POR$ we define a program $\LL{f}$ such that
$\llbracket \LL{f}\rrbracket(x_1, \ldots, x_n)=f(x_1, \ldots, x_n)$
The correctness of $\LL{f}$
is given by the following invariant properties:
\begin{itemize}
\item The result of the computation is stored in $R$.
\item The inputs are stored in the registers of the group $X$.
\item The function $\LL\cdot$ does not change the values it accesses as input.
\end{itemize}
\noindent
We define the function $\LL f$ as follows:
$\LL{E}:= R \takes \epsilon$;  $\LL{S_0}:= R \takes X_0.\zero$;
$\LL{S_1}:= R \takes X_0.\one$;
$\LL{{P}^n_i}:= R \takes X_i$;
$\LL{C}:= R \takes X_1 \sqsubseteq X_2$;
$\LL{Q}:= \fl {X_1}$.
\noindent
The correctness of these base cases is trivial.
Moreover, it is simple to see that
the only translation containing $\fl e$ for some $e \in \lang\xp$ is
the translation of $Q$.
The encoding of the composition and of the bounded recursion are
a little more convoluted: the proof of their correctness requires
a conspicuous amount of low-level definitions and technical results,
whose presentation is not the aim of this work.
\end{proof}
The next step is to sow that every $\SIFPRA$ program is equivalent to a $\SIFPLA$ program in the sense of Lemma \ref{lemma:sifpratosifpla}.

\begin{lemma}
  \label{lemma:sifpratosifpla}
  For each total program $P \in \SIFPRA$ there is a $Q \in \SIFPLA$ such that:
  $$
  \forall x, y. \mu\left(\{\omega \in \Bool^\Ss| \llbracket P\rrbracket (x, \omega)= y\}\right)=
                \mu\left(\{\eta \in \Bool^\Nat| \llbracket Q \rrbracket(x, \eta)= y\}\right).
  $$
  Moreover, if $P$ is polytime $Q$ is polytime, too.
\end{lemma}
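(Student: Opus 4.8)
The plan is to simulate the random-access oracle of $P$ by a linear stream via \emph{memoization}. Since $P$ is total, on every input $x$ and oracle $\omega$ it halts, hence performs only finitely many $\fl{\cdot}$ calls; if moreover $P$ is polytime, at most $p(|x|)$ of them. I would build $Q\in\SIFPLA$ from $P$ by keeping, in one of its registers (say $T$, encoding a finite list of pairs $(\sigma,b)\in\Ss\times\Bool$), a table recording the oracle positions already queried together with the bit returned for each. Every statement of $P$ other than $\fl{e}$ is copied verbatim, while $\fl{e}$ is replaced by: evaluate $e$ to a string $\sigma$; if $(\sigma,b)\in T$ for some $b$, set $R\takes b$; otherwise execute the linear $\fl$ of $\SIFPLA$ to obtain the next unused stream bit $b$, append $(\sigma,b)$ to $T$, and set $R\takes b$. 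Look-up and update are polytime in $|T|$, so $Q$ is polytime whenever $P$ is, and the ``new bit'' branch is the only place where $Q$ touches its stream.

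Next I would set up the coupling. Fix $x$. For $\omega\in\Os$ let $q_1,q_2,\dots$ be the \emph{pairwise distinct} strings queried by $P$ under $\omega$, listed in order of first query; if $P$ makes only $k<\infty$ distinct queries, continue the list past $q_k$ by $\mathit{dyad}(0),\mathit{dyad}(1),\dots$, skipping any value already among $q_1,\dots,q_k$, so $(q_i)_{i\ge1}$ is an infinite injective enumeration. Define $\Phi_x:\Os\to\Bool^\Nat$ by $\Phi_x(\omega)(i):=\omega(q_{i+1})$. The crucial point is that $q_{i+1}$ depends only on $x$ and on the earlier returned bits $\omega(q_1),\dots,\omega(q_i)$, since the run of $P$ between consecutive distinct queries, and the $\mathit{dyad}$-continuation, consult no other part of $\omega$; thus the $q_i$'s form a deterministic decision tree in the returned bits. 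Hence, for every word $b_0\cdots b_{n-1}\in\Bool^n$, the positions $q_1,\dots,q_n$ selected along the branch determined by $b_0\cdots b_{n-1}$ are fixed and distinct, so
\[
\Phi_x^{-1}\bigl(\{\eta\mid \eta_0=b_0,\dots,\eta_{n-1}=b_{n-1}\}\bigr)=\{\omega\mid \omega(q_1)=b_0,\dots,\omega(q_n)=b_{n-1}\},
\]
which by definition of $\mu$ (the $q_i$ being distinct) has measure $2^{-n}$, matching the cylinder's measure. Since cylinders generate the Borel $\sigma$-algebra of $\Bool^\Nat$ and pin down a measure uniquely, $\Phi_x$ is measurable and measure-preserving, i.e.\ $\mu(\Phi_x^{-1}(B))=\mu(B)$ for every Borel $B\subseteq\Bool^\Nat$.

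To close, a routine induction on computation length shows that $Q$ run on the stream $\Phi_x(\omega)$ faithfully tracks $P$ run on the oracle $\omega$: after $j$ distinct queries the table $T$ equals $\{(q_1,\omega(q_1)),\dots,(q_j,\omega(q_j))\}$, so a repeated query returns the stored (correct) bit, while the $(j{+}1)$-th distinct query triggers a fresh read yielding $\Phi_x(\omega)(j)=\omega(q_{j+1})$, exactly as in $P$; the tail of $\Phi_x(\omega)$ coming from the continuation is never read. Therefore $\llbracket Q\rrbracket(x,\Phi_x(\omega))=\llbracket P\rrbracket(x,\omega)$ for all $\omega$, whence $\{\omega\mid\llbracket P\rrbracket(x,\omega)=y\}=\Phi_x^{-1}\bigl(\{\eta\mid\llbracket Q\rrbracket(x,\eta)=y\}\bigr)$, and applying $\mu$ together with measure-preservation of $\Phi_x$ yields the claimed equality of the two measures.

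The main obstacle is the middle step: one must arrange the bit-extraction map $\Phi_x$ so that the sampled positions are genuinely distinct and arise from a deterministic decision tree in the returned bits, as this is precisely what makes cylinder measures pull back to the product measure on $\Os$; dealing cleanly with the case where $P$ halts before making enough queries (the $\mathit{dyad}$-enumerated continuation) and verifying measurability are the fiddly parts. The memoized construction of $Q$ and the simulation induction are conceptually simple but notation-heavy, which is why the full argument belongs in the appendix.
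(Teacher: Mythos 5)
Your construction is exactly the one the paper uses: replace each $\fl{e}$ by a look-up in an associative table of already-queried coordinates, falling back to a linear $\rb$ read (and recording the new pair) on a miss, with everything else copied verbatim. The measure-preservation argument via the decision-tree coupling $\Phi_x$ is correct and in fact supplies the detail that the paper's own proof leaves at the level of ``it is not too much of a problem to see, at least intuitively, that this simulation preserves the distributions.''
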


Before delving into the details of the proof of this Lemma, we must define
the language $\SIFPLA$ together with its standard semantics:

\begin{defn}[$\SIFPLA$]
  The language of the $\SIFPLA$ programs is $\lang{\stm_\LA}$,
  i.e. the set of strings produced by the non-terminal symbol $\stm_\LA$
  described as follows:
  $$
  \stm_\LA  ::= \id \takes \xp\ |\ \stm_\LA;\stm_\LA\ |\ \while \xp \stm_\LA\ |\ \rb\\
  $$
\end{defn}

\begin{defn}[Big Step Operational Semantics of $\SIFPLA$]
  \label{def:sifplaos}
The semantics of a program $P \in \lang{\stm_\LA}$ is the smallest relation $
\ssos\subseteq \left(\lang{\stm_\LA} \times (\id \longrightarrow \{\zero, \one\}^*)\times \Bool^\Nat\right)
\times
\left((\id \longrightarrow \{\zero, \one\}^*)\times \Bool^\Nat\right)$
closed under the following rules:
\begin{center}
\AxiomC{$\langle e, \store\rangle\sred \sigma$}
\UnaryInfC{$\langle \id\takes e, \store, \eta\rangle \ssos \langle \store\as {\id}{\sigma}, \eta\rangle$}
\DisplayProof
\hspace{18pt}
\AxiomC{$\langle s, \store, \eta\rangle\ssos \langle \store', \eta'\rangle$}
\AxiomC{$\langle t, \store', \eta\rangle\ssos \langle\store'', \eta''\rangle$}
\BinaryInfC{$\langle s;t, \store, \eta\rangle \ssos \langle\store'', \eta''\rangle$}
\DisplayProof

\vspace{12pt}
\AxiomC{$\langle e, \store\rangle\sred \one$}
\AxiomC{$\langle s, \store, \eta\rangle\ssos \langle\store', \eta'\rangle$}
\AxiomC{$\langle \while e s, \store', \eta\rangle\ssos \langle\store'', \eta''\rangle$}
\TrinaryInfC{$\langle \while e s, \store, \eta\rangle \ssos \langle\store'', \eta''\rangle$}
\DisplayProof

\vspace{12pt}
\hspace{18pt}
\AxiomC{$\langle e, \store\rangle\sred \sigma$}
\AxiomC{$\sigma \neq \one$}
\BinaryInfC{$\langle \while e s, \store, \eta\rangle \ssos \langle\store, \eta\rangle$}
\DisplayProof
\hspace{18pt}
\AxiomC{\phantom{$\langle \rb \store, \bool\eta\rangle \ssos \langle\store[R \leftarrow \bool], \eta\rangle$}}
\UnaryInfC{$\langle \rb, \store, \bool\eta\rangle \ssos \langle\store \as R \bool, \eta\rangle$}
\DisplayProof

\end{center}
\end{defn}

In particular, we prove Lemma \ref{lemma:sifpratosifpla} showing that
$\SIFPRA$ can be simulated in $\SIFPLA$ with respect to two novel
\emph{small-step} semantic relations ($\leadstola, \leadstora$)
derived splitting the \emph{big-step} semantics into small
transitions, one per each $ \cdot ; \cdot$ instruction.
Intuitively, the idea behind this novel semantics is to enrich the \emph{big-step} operational
semantic with some
pieces of information necessary to build a proof that
it is possible to each $\SIFPLA$ instruction by meas of a sequence of $\SIFPRA$ instructions
preserving the distribution of the values in the register $R$, i.e. that storing the result.
In particular we enrich the configurations of $\SIFPLA$'s and $\SIFPRA$'s \emph{small-step}
operational semantics with a list $\Psi$ containing pairs $(x, b)$.
In the case of $\SIFPLA$, this list is
meant to keep track of the calls to the primitive $\fl x$ and their result $b$.
While, on the side of $\SIFPRA$, the $x$-th call of
the primitive $\rb$ causes the pair $(x, b)$ to be added on top of the list.

This is done to keep track of the accesses to the random tapes
done by the simulated and the simulator programs.
On the side of $\SIFPRA$ it is possible to show that
this table is stored in a specific register.
% To achieve this, we defined the translation to the target program in such a way that the former stores, in a specific and fresh register, an associative table
% recording all the queries $\sigma$
% within a $\fl \sigma$ instruction and the result $b$ which was
% picked from $\eta$ and returned. The addition of the map $\Psi$
% in the semantics of the souce program allows to determine
% the content of the associative table $\Psi$
% explicitly the in the program's semantics.
This register plays an
important role in the simulation of the $\fl e $ instructions.
In particular, this is done as follows:
\begin{itemize}
  \item At each simulated query $\fl e$,
  the destination program looks up the associative table;
  \item If it finds the queried coordinate $e$ within a pair $(e, b)$, it returns $b$,
  otherwise:
  \begin{itemize}
    \item It reduces $\fl{e}$ to a call of $\rb$ which outputs either $\bool=\zzero$ or
    $\bool=\oone$.
    \item It records the pair $\langle e,\bool\rangle$ in
   the associative table and stores $\bool$ in $R$.
  \end{itemize}
\end{itemize}
Even in this case the construction of the proof is convoluted, but we believe
that it is not too much of a problem to see, at least intuitively,
that this kind of simulation preserves the distributions
of strings computed by the original program.
This concludes Lemma \ref{lemma:sifpratosifpla}.

\noindent
The next step is to show that $\SIFPLA$ can be reduced to
$\SFPOD$: the class corresponding to $\SFP$ defined on
a variation of the Stream Machines
which are capable to read characters from the oracle tape
\emph{on-demand}, i.e. only on those states $q \in \Qs_{\natural}\subseteq \Qs$.
The transition function of the $\SFPOD$ is a function $\delta: {\hat{\Bool}}
\times {\Qs} \times {\left({\hat{\Bool}\cup\{\natural\}}\right)} \times
\Bool\longrightarrow {\hat{\Bool}} \times {\Qs} \times {\hat{\Bool}} \times \{{L},{R}\}$
which for all the states in $\Qs_{\natural}$ is labeled with the symbol $\natural$ instead of
a Boolean value. This peculiarity will be employed in Definition \ref{def:SFPODtoSFPmap} to distinguish those configurations
causing the oracle tape to shift to the others.
% This is achieved throughout the employment
% of a non-standard transition function $\vdash_\delta$
% defined as follows:
%
% \begin{defn}[On-Demand Stream Machine Transition Function]
%   \label{def:smodtransfun}
% Given an \emph{on-demand} stream machine $M=\langle \Qs,
% \Sigma, \delta, q\rangle$,
% we define the \emph{partial transition function}
% $\vdash_{\delta} \hat{\Sigma}^* \times \Qs
% \times \hat{\Sigma}^* \times \{\zzero,
% \oone\}^{\Nat} \longrightarrow \hat{\Sigma}^*
% \times \Qs \times \hat{\Sigma}^* \times \{\zzero,
% \oone\}^{\Nat}$
% between two configurations of $M_S$ as:
% %
% \begin{align*}
%   \langle \sigma, q, c\tau, \eta\rangle
%   \vdash_{\delta} \langle\sigma c', q', \tau, \eta\rangle
%   \ \ \ \ \ \ \ \ \ \ \ \ \ &\text{ if }
%   \delta (q,c,\natural)
%   = \langle q',c',R\rangle
%   \\
%   %
%   \langle \sigma c_0, q, c_1\tau,\eta\rangle
%   \vdash_{\delta} \langle \sigma, q', c_0c_1'\tau,
%   \eta\rangle
%    \ \ \ \ \ \ \ \ \ \ \ \ \ &\text{ if }
%   \delta (q, c_1,\natural)
%   = \langle q', c_1', L\rangle  \\
%   %
%   \langle \sigma, q, c\tau, \bool \eta\rangle
%   \vdash_{\delta} \langle\sigma c', q', \tau, \eta\rangle
%   \ \ \ \ \ \ \ \ \ \ \ \ \ &\text{ if }
%   \delta (q,c,\bool)
%   = \langle q',c',R\rangle
%   \\
%   %
%   \langle \sigma c_0, q, c_1\tau, \bool
%   \eta\rangle
%   \vdash_{\delta} \langle \sigma, q', c_0c_1'\tau,
%   \eta\rangle
%    \ \ \ \ \ \ \ \ \ \ \ \ \ &\text{ if }
%   \delta (q, c_1,\bool)
%   = \langle q', c_1', L\rangle .
% \end{align*}
% \end{defn}
% \noindent

We do not show the reduction from $\SIFPLA$ to $\SFPOD$
extensively because this kind of reductions are cumbersome
and, in literature, it is common to avoid their formal definition
on behalf of a more readable presentation.
For this reason, we only describe informally but exhaustively
how to build the \emph{on-demand} stream machine which corresponds to
a generic program $P \in \lang{\stm_\LA}$. The correspondence between
$\SFPOD$ is expressed by the following Proposition:

\begin{prop}
  \label{prop:SFPODimplSIFPLA}
  For every $P \in \lang{\stm_\LA}$ there is a $M_P \in \SFP$ such that
  for every $x \in \Ss$ and $\eta \in \Bool^\Ss$, $P(x, \eta)=P(x, \eta)$.
  Moreover, if $P$ is polytime, then $M_P$ is polytime.
\end{prop}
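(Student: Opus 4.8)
The plan is to construct, for a given $\SIFPLA$ program $P$, an \emph{on-demand} stream machine $M_P$ step by step, by a standard but careful compilation argument. First I would fix a finite alphabet $\hat{\Bool}$ rich enough to encode all the register names occurring in $P$ (there are finitely many, since $P$ is a finite string), together with separator symbols; the work tape of $M_P$ will hold the current store $\Sigma$ encoded as a list of $(\id,\sigma)$-pairs. The control states of $M_P$ are obtained from the syntax tree of $P$: each subprogram (each position in the abstract syntax tree, roughly one per ``;'') gives rise to a cluster of states implementing that construct, exactly mirroring the inductive clauses of the big-step semantics of Definition~\ref{def:sifplaos}. Assignments $\id\takes \xp$ become a bounded sub-routine that evaluates $\xp$ on the encoded store (using the evaluation rules of Definition~\ref{def:expsemantics}) and rewrites the relevant segment of the tape; sequential composition is just state-threading; and $\while \xp \STM$ loops back to the test once the body sub-routine has returned control. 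The only construct using the oracle is $\rb$: here $M_P$ enters a distinguished state of $\Qs_{\natural}$, which by the labelling convention of $\SFPOD$ causes exactly one new bit to be read from the oracle tape, which is then written into the (encoding of the) register $R$.

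The core of the argument is then a simulation invariant, proved by induction on the derivation of $\langle P,\Sigma,\eta\rangle\ssos\langle\Sigma',\eta'\rangle$: whenever this holds, $M_P$ started on the tape encoding $\Sigma$ with oracle tape $\eta$ reaches, in finitely many steps, a configuration whose work tape encodes $\Sigma'$ and whose remaining oracle tape is exactly $\eta'$. The key point forcing the ``on-demand'' machinery is that $M_P$ must consume a bit of $\eta$ \emph{precisely} when $P$ does, i.e.\ only in the $\rb$-clause; in every other clause the oracle tape is untouched, which is why an ordinary STM (which reads a fresh bit at \emph{every} step) would not do and why we pass through $\SFPOD$. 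Reading off the final register $R$ from the encoded store at a halting configuration gives $\llbracket P\rrbracket(x,\eta)=f_{M_P}(x,\eta)$ for all $x$ and $\eta$, which is the stated equality (modulo the harmless typo $P(x,\eta)=P(x,\eta)$ in the statement, which should read $\llbracket P\rrbracket(x,\eta)=f_{M_P}(x,\eta)$).

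For the complexity bound, I would observe that the store stays of size polynomial in the input throughout any terminating run of a polytime $P$ (this is inherent to the fact that $P$ is polytime), so each assignment/expression sub-routine runs in polynomially many steps of $M_P$, each simulated instruction of $P$ thus costs a polynomial overhead, and the total number of simulated instructions is itself polynomial; composing these polynomials shows $M_P$ runs in polynomial time. Finally, since $\SFPOD$ was already shown (or is routinely seen) to coincide with $\SFP$ at the polytime level via a map turning on-demand reads into ordinary per-step reads plus buffering, $M_P$ yields a genuine $\SFP$ machine, completing Proposition~\ref{prop:SFPODimplSIFPLA}.

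The main obstacle is bookkeeping rather than conceptual: getting the oracle-tape synchronisation exactly right, so that the number of oracle bits consumed by $M_P$ equals the number consumed by $P$ on \emph{every} branch, and hence the measure of the cylinder sets is preserved. This is delicate because expression evaluation and $\while$-tests must be carried out without touching the oracle, so one has to be scrupulous that no auxiliary ``coin'' is ever flipped inside those sub-routines; any slip there would shift the distribution. As is customary for compilation results of this kind, I would present this synchronisation invariant precisely and then describe the state clusters informally, since a fully formal transition table is long and unilluminating — exactly the stance the paper already takes for the neighbouring simulations.
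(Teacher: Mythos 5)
Your proposal is correct and follows essentially the same route as the paper: compile the syntax tree of $P$ into state clusters, evaluate expressions and assignments without touching the oracle, consume exactly one oracle bit at each $\rb$ via a $\Qs_{\natural}$-state, prove a store-encoding simulation invariant, and defer the on-demand-to-ordinary conversion to the separate $\SFPOD\Rightarrow\SFP$ step. The only (cosmetic) difference is that you encode the whole store on a single work tape, whereas the paper uses one tape per register plus an expression tape and then invokes the standard multi-tape-to-single-tape reduction at the end.
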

\begin{proof}
The construction relies on the fact that it is possible to implement a $\SIFPLA$ program
by means of a multi-tape \emph{on-demand stream machine}
which uses a tape to store the values
of each register, plus an additional tape containing the partial results
obtained during the evaluation of the expressions
and another tape containing $\eta$.
We denote the tape used for storing the result coming from
the evaluation of the expressions with $e$.

The machine works thanks to some invariant properties:
\begin{itemize}
\item On each tape, the values are stored to the immediate right of the head.
\item The result of the last expression evaluated is stored on the $e$ tape to the immediate right of the head.
%\item After any assignment operation, the value on the $e$ tape is not meaningful anymore.
\end{itemize}

The value of a $\SIFP$ expression can be easily computed using the $e$ tape.
We show it by induction on the syntax of the expression:
\begin{itemize}
  \item Each access to the value stored in a register basically consist in a copy of the
  content of the corresponding tape to the $e$ tape, which is a simple operation,
  due to the invariant properties properties mentioned above.
  \item Concatenations ($f.\zero$ and $f.\one$)
  are easily implemented by the addition of a character at the
  end of the $e$ tape which contains the value of $f$,
  as stated by the induction hypothesis on the invariant properties.
  \item The binary expressions are non-trivial, but since one of the
  two operands is a register identifier, the machine can directly compare $e$
  with the tape which corresponding to the identifier, and to replace the content of $e$
  with the result of the comparison, which in all cases $\zzero$ or $\oone$.
\end{itemize}
All these operations can be implemented without consuming any character
on the oracle tape and with linear time with respect to the size of the
expression's value.
%Moreover, when we defined the $\SIFPRA$ we have used a restricted set of expressions:
%
To each statement $s_i$, we assign a sequence of machine states,
$q_{s_i}^I, q_{s_i}^1, q_{s_i}^2, \ldots, q_{s_i}^F$.
\begin{itemize}
  \item Assignments consist in a copy of the value in $e$ to the tape corresponding to
  the destination register and a deletion of the value on $e$ by replacing its symbols
  with $\circledast$ characters. This can be implemented without consuming any character
  on the oracle tape.
  \item The sequencing operation $s;t$ can be implemented inserting in $\delta$
  a composed transition from $q_s^F$ to $q_t^I$, which does not consume the oracle tape.
  \item A $\while$ statement $s:= \while f t$ requires the evaluation of $f$ and
  then passing to the evaluation of $t$, if $f\sred \one$, or stepping
  to the next transition if it exists and $f\not\sred \one$.
  After the evaluation of the body, the machine returns to the initial state of
  this statement, namely: $q_s^I$.
  \item A $\rb$ statement is implemented consuming a character on the tape and copying
  its value on the tape which corresponds to the register $R$.
\end{itemize}

The following invariant properties hold at the beginning of the execution and are kept true
throughout $M_P$'s execution. In particular, if we assume $P$ to be polytime,
after the simulation of each statement, it holds that:

\begin{itemize}
  \item The length of the non blank portion of the first tapes corresponding to
  the register is polynomially bounded because their contents are precisely
  the contents of $P$'s registers, which are polynomially bounded
  as a consequence of the hypotheses on their polynomial time complexity.
  \item The head of all the tapes corresponding to the registers point to the
  leftmost symbol of the string thereby contained.
\end{itemize}

It is well-known that the reduction of the number of tapes on a polytime Turing Machine
comes with a polynomial overhead in time; for this reason, we can conclude that
the polytime \emph{multi-tape} on-demand stream machine we introduced above can be
\emph{shrinked}
to a polytime \emph{canonical} on-demand stream machine. This concludes the proof.
\end{proof}

It remains to show that each
on-demand stream machine can be reduced to an equivalent STM.

\begin{lemma}
  \label{lemma:SFPODtoSFP}
  For every $\STM = \langle \Qs,\Qs_{\natural}, \Sigma, \delta, q_0\rangle
  \in \SFPOD$, the machine $\STM' = \langle \Qs, \Sigma, H(\delta), q_0\rangle
  \in \SFP$ is such that for every $n \in \Nat$,
  for every configuration of $\STM$ $\langle \sigma, q, \tau, \eta\rangle$ and for
  every $\sigma', \tau' \in \Ss, q \in \Qs$:
  %, if
  %$\xi =c_1c_2\ldots c_k$ with $c_i \in \Bool$ for $1 \le i \le k$
  %and $\chi = c_{k+1} c_{k+2}\ldots c_{n}$, then
  $$
  \mu \left(\{\eta \in \Bool^\Nat| \exists \eta'. \langle \sigma, q, \tau, \eta\rangle\reaches n \delta \langle \sigma', q', \tau', \eta'\rangle\}\right)
  =
  \mu \left(\{\chi \in \Bool^\Nat|  \exists \chi'. \langle \sigma, q, \tau, \xi\rangle\reaches n {H(\delta)} \langle \sigma', q', \tau', \chi'\rangle\}\right).
  $$
\end{lemma}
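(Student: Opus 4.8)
The plan is to describe, for a given on-demand stream machine $\STM = \langle \Qs,\Qs_{\natural}, \Sigma, \delta, q_0\rangle$, how the transformation $H(\delta)$ produces an ordinary STM that reads exactly one oracle bit per step. The idea is that $\STM$ reads a fresh bit from its oracle tape only when it is in a state of $\Qs_\natural$, whereas an STM must consume a new bit at \emph{every} step. So $H$ will interleave the computation of $\STM$ with a buffering mechanism: the new machine $\STM'$ maintains, on an auxiliary portion of its work tape (which can be folded into the single work tape by a standard track/encoding argument), a finite queue of bits that it has been forced to read from the stream but has not yet ``delivered'' to the simulated computation. At a non-$\natural$ step, $\STM'$ performs the corresponding step of $\delta$ while simultaneously reading one stream bit and pushing it onto the queue; at a $\natural$-step, instead of reading from the stream, $\STM'$ pops the oldest bit from the queue and feeds it to $\delta$ as the oracle answer, while still (as required) reading one more stream bit and enqueuing it. I would first make this construction precise, noting that the queue never needs to hold more than a bounded number of elements at once in the sense relevant here, and that if $\STM$ runs in polynomial time then so does $\STM'$, since each simulated step costs only a constant (or at most linear, in the size of the queue) number of real steps.

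\textbf{The measure-preservation argument.} The heart of the proof is to exhibit, for each $n$, a measure-preserving correspondence between the two event sets in the statement. The key observation is that the bits of $\eta$ that actually influence the run of $\STM$ for $n$ steps are exactly those read at the $\natural$-steps, say there are $j \le n$ of them, occurring at a fixed (computable-from-the-run) subsequence of step-indices; all other bits of $\eta$ are irrelevant. On the $\STM'$ side, after $n$ simulated steps of $\delta$ (which corresponds to some number $n' \ge n$ of real steps of $\STM'$, with $n'$ a function only of the run and not of the irrelevant bits), the bits of $\xi$ consumed are: the $n'$ bits read so far, of which a designated subset of size $j$ are the ones dequeued and handed to $\delta$. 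The map I would use sends $\xi \mapsto \eta$ by copying the $j$ relevant oracle answers into the $j$ relevant coordinates of $\eta$ (leaving the rest of $\eta$ free), and conversely. Because the Borel measure $\mu$ is the product measure that assigns $\tfrac12$ to each coordinate independently (as recalled in the paragraph on the Borel semantics), projecting onto any fixed finite set of coordinates pushes $\mu$ forward to the uniform measure on $\{0,1\}^j$; hence both event sets have measure equal to (number of accepting length-$j$ oracle-answer patterns)$/2^j$, and these numbers coincide because the two machines produce the same configuration transition given the same oracle answers. Concretely, I would prove by induction on $n$ the schema-level claim
\begin{equation*}
\langle \sigma, q, \tau, \eta\rangle \reaches n \delta \langle \sigma', q', \tau', \eta'\rangle
\iff
\langle \sigma, q, \tau, \xi\rangle \reaches {h(n)} {H(\delta)} \langle \sigma', q', \tau', \xi'\rangle
\end{equation*}
whenever $\eta$ and $\xi$ agree on the finitely many relevant coordinates (in the sense just described), where $h(n)$ is the step-count blow-up; this is a routine unwinding of the two step relations, one case for $q \in \Qs_\natural$ and one for $q \notin \Qs_\natural$. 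Taking measures of both sides, using the product structure of $\mu$, yields the equality.

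\textbf{Main obstacle and how I would handle it.} The genuine difficulty is bookkeeping: making the ``relevant coordinates'' correspondence canonical and independent of the irrelevant bits, so that it genuinely induces a measurable bijection between the two events rather than just a bijection of accepting \emph{patterns}. The cleanest route is to avoid constructing an explicit bijection $\Os \to \Os$ and instead argue purely via finite projections: fix $n$, observe that ``$\exists \eta'.\ \langle\sigma,q,\tau,\eta\rangle \reaches n \delta \langle\sigma',q',\tau',\eta'\rangle$'' depends only on $\eta$ restricted to the first $n$ of the dyadic coordinates that could be queried (a cylinder event), and similarly for $\STM'$; then both events are finite unions of cylinders, their measures are computed by counting, and the counts match by the inductive simulation claim. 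This sidesteps the need to track which physical tape cell holds which bit. I would also remark that the ``$\exists \eta'$'' / ``$\exists \chi'$'' existential quantifiers are harmless: given the input prefix of the stream, the residual tail $\eta'$ is determined, so the existential is really just ``the run of length $n$ exists'', i.e.\ does not get stuck—and since we are dealing with (polytime, hence total) machines this is automatic. Finally I would note that $H$ preserves the polynomial time bound, which is what makes Lemma~\ref{lemma:SFPODtoSFP} usable in the chain $\POR \Rightarrow \SIFPRA \Rightarrow \SIFPLA \Rightarrow \SFPOD \Rightarrow \SFP$ feeding into Proposition~\ref{prop:SFPtoPOR} and thence Lemma~\ref{lemma:Fundamental}.
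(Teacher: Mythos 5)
Your proposal redefines the transformation $H$ rather than using the one the paper actually gives, and this leads you to prove a different statement than the lemma. The paper's $H$ is defined in Definition~\ref{def:SFPODtoSFPmap} (which immediately precedes the proof): each $\natural$-labelled transition $\langle p, c_r, \natural, q, c_w, d\rangle$ is simply replaced by two copies $\langle p, c_r, \zzero, q, c_w, d\rangle$ and $\langle p, c_r, \oone, q, c_w, d\rangle$ with identical effect, while the other transitions are kept unchanged. Crucially, this transformation preserves the state set $\Qs$ and the step count exactly: after $n$ steps, $\STM$ has read some $k\le n$ oracle bits, whereas $\STM'$ has read $n$ bits, but the $n-k$ formerly-$\natural$ reads are discarded (both branches lead to the same configuration). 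Your queue/buffering machine does something semantically plausible but it requires an enlarged state set and incurs a step-count blow-up $h(n)\ge n$; the lemma explicitly compares $\reaches{n}{\delta}$ with $\reaches{n}{H(\delta)}$ for the \emph{same} $n$ and the \emph{same} state set $\Qs$, so your simulation does not produce the object $\STM' = \langle \Qs, \Sigma, H(\delta), q_0\rangle$ that the lemma is about. (As a side point, your claim that the queue is bounded is not right as stated: if there are $j$ oracle reads among $n$ steps, the queue grows to size $\Theta(n-j)$, polynomial for a polytime machine but not constant.)

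Your instinct in the last paragraph — avoid an explicit measurable bijection, fix $n$, observe that both events are cylinders/finite unions of cylinders, and compare measures by counting — is exactly what the paper's proof does. The paper decomposes the $n$-step run into maximal $\natural$-segments of lengths $n_1,\dots,n_{k+1}$ separated by the $k$ genuine oracle reads $c_1,\dots,c_k$; the $\STM$ event is then the cylinder $\{\eta \mid \eta(i)=c_i,\ 0\le i<k\}$, while the $\STM'$ event is a cylinder on $k$ coordinates sitting at shifted positions (the other $n-k$ coordinates are free because the doubled transitions ignore their bit). Both cylinders have measure $2^{-k}$ per accepting pattern, giving the equality. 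If you swap in the paper's $H$, your finite-projection argument goes through essentially verbatim and more cleanly, with no need for the queue machinery or the step-count $h(n)$.
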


\noindent
Even in this case, the proof relies on a reduction. In particular,
we show that
given an on-demand stream machine $\STM$ it is possible to build a
stream machine $\STM'$ which is equivalent to $\STM$.
Intuitively, the encoding from an \emph{on-demand} stream machine $\STM$
to an ordinary stream
machine takes the transition function $\delta$ of $\STM$ and substitutes
each transition not causing the oracle tape to shift --- i.e. tagged with $\natural$ ---
with two distinct
transitions, so to match both $\zero$ and $\one$ on the tape storing $\omega$.
This causes the resulting machine to reach the same target state with the same behavior n the work tape, and to shift to the right the head on the oracle tape.

\begin{defn}[Encoding from On-Demand to Canonical Stream Machines]
  \label{def:SFPODtoSFPmap}
  We define the encoding from an On-Demand Stream Machine to a Canonical Stream Machine
  as below:
  $$
    H := \langle \mathbb Q, \Sigma, \delta, q_0\rangle \mapsto \left\langle \mathbb Q, \Sigma, \bigcup\Delta_H(\delta), q_0\right\rangle.
  $$
  where $\Delta_H$ is defined as follows:
  \begin{align*}
    \Delta_H(\langle p, c_r, \zzero, q, c_w, d\rangle) &:= \{\langle p, c_r, \zzero, q, c_w, d\rangle\}\\
    \Delta_H(\langle p, c_r, \oone, q, c_w, d\rangle) &:= \{\langle p, c_r, \oone, q, c_w, d\rangle\}\\
    \Delta_H(\langle p, c_r, \natural, q, c_w, d\rangle) &:= \{\langle p, c_r, \zzero, q, c_w, d\rangle, \langle p, c_r, \oone, q, c_w, d\rangle\}.\\
  \end{align*}
\end{defn}

\begin{proof}[Proof of Lemma \ref{lemma:SFPODtoSFP}]
  The definition of $\reaches n \delta$ allows us to rewrite the statement
  $
  \exists \eta'. \langle \sigma, q, \tau, \eta\rangle\reaches n \delta \langle \sigma', q', \tau', \eta'\rangle
  $
  as:
  \begin{align*}
    &\exists \eta', \eta''\in \Bool^\Nat.\exists c_1, \ldots, c_k.\\
    &\langle \sigma, q, \tau, c_1c_2\ldots c_k\eta'\rangle\reaches {n_1} \delta \langle \sigma_1, q_{i_1}, \tau_1, c_1c_2\ldots c_k\eta'\rangle\reaches 1 \delta \langle \sigma_1', q_{i_1}', \tau_1, c_2\ldots c_k\eta'\rangle \land \\
     &\langle \sigma_1', q_{i_1}', \tau_1, c_2\ldots c_k\eta'\rangle  \reaches {n_2} \delta \langle \sigma_2, q_{i_2}, \tau_2, c_2\ldots c_k\eta'\rangle\reaches 1 \delta \langle \sigma_2', q_{i_2}', \tau_2', c_3\ldots c_k\eta'\rangle \land \\
     &\langle \sigma_2', q_{i_2}', \tau_2', c_3\ldots c_k\eta'\rangle \reaches {n_3} \delta \ldots \reaches {n_{k+1}} \delta
    \langle \sigma', q', \tau', \eta'\rangle.
  \end{align*}
  The claim can be rewritten as follows:\\[2ex]
  \resizebox{\textwidth}{!}{%
  $\begin{aligned}            %
    &\exists \eta'', c_1, \ldots, c_k \in\Bool. \exists n_1, \ldots, n_{k+1}\in \Nat.  \forall \xi_1, \ldots, \xi_{k+1}\in \Ss. |\xi_1|=n_1\land \ldots |\xi_{k+1}|=n_{k+1}\land\\ %
    &\langle \sigma, q, \tau, c_1c_2\ldots c_k\eta' %
    \rangle\reaches {n_1} \delta \langle \sigma_1, q_{i_1}, \tau_1, c_1c_2\ldots c_k\eta'\rangle\reaches 1 \delta \langle \sigma_1', q_{i_1}', \tau_1, c_2\ldots c_k\eta'\rangle \land \\ %
     &\langle \sigma_1', q_{i_1}', \tau_1, c_2\ldots c_k\eta'\rangle  \reaches {n_2} \delta \langle \sigma_2, q_{i_2}, \tau_2, c_2\ldots c_k\eta'\rangle\reaches 1 \delta \langle \sigma_2', q_{i_2}', \tau_2', c_3\ldots c_k\eta'\rangle \land \\ %
     &\langle \sigma_2', q_{i_2}', \tau_2', c_3\ldots c_k\eta'\rangle \reaches {n_3} \delta \ldots \reaches {n_{k+1}} \delta %
    \langle \sigma', q', \tau', \eta'\rangle\\ %
    &\Longleftrightarrow\\                     %
    &\langle \sigma, q, \tau, \xi_1c_1\xi_2c_2\ldots c_k\xi_{k+1}\eta''\rangle\reaches {n_1} {H(\delta)} \langle \sigma_1, q_{i_1}, \tau_1, c_1\xi_2c_2\ldots c_k\xi_{k+1}\eta''\rangle\reaches 1 {H(\delta)} \langle \sigma_1', q_{i_1}', \tau_1, \xi_2c_2\ldots c_k\xi_{k+1}\eta''\rangle \land \\ %
     &\langle \sigma_1', q_{i_1}', \tau_1, \xi_2c_2\ldots c_k\xi_{k+1}\eta''\rangle  \reaches {n_2} {H(\delta)} \langle \sigma_2, q_{i_2}, \tau_2, c_2\ldots c_k\xi_{k+1}\eta''\rangle\reaches 1 {H(\delta)} \langle \sigma_2', q_{i_2}', \tau_2', \xi_3c_3\ldots c_k\xi_{k+1}\eta''\rangle \land \\ %
     &\langle \sigma_2', q_{i_2}', \tau_2', \xi_3c_3\ldots c_k\eta''\rangle \reaches {n_3} {H(\delta)} \ldots \reaches {n_{k+1}} {H(\delta)} %
    \langle \sigma', q', \tau', \eta''\rangle. %
  \end{aligned}$%
  }~\\[2ex]
  \normalsize
  Intuitively, this holds because it suffices to take the
  $n_i$s as the length of longest sequence of non-shifting
  transitions of the on-demand stream machine and the
  correspondence can be proven by induction on the number of steps of each formula in the conjunction.
  Thus, we can express the sets of the claim as follows:
  \small
  \begin{align*}
    \{\eta \in \Bool^\Nat| \exists \eta'. \langle \sigma, q, \tau, \eta\rangle\reaches n \delta \langle \sigma', q', \tau', \eta'\rangle\} &= \{\eta \in \Bool^\Nat| \forall 0 \le i < k. \eta(i) =c_i\rangle\}\\
    \{\chi \in \Bool^\Nat|  \exists \chi'. \langle \sigma, q, \tau, \xi\rangle\reaches n {H(\delta)} \langle \sigma', q', \tau', \chi'\rangle\} &= \{\chi \in \Bool^\Nat| \forall 1 \le i \le k. \chi(n_i+i)=c_i\land \chi(0)=c_1 \rangle\}.
  \end{align*}
  \normalsize
  The conclusion comes because both these sets are cylinders with the same measure.
\end{proof}

\begin{proposition}[From $\POR$ to $\SFP$] \label{prop:PORtoSFP}
For any $f:\Ss^{k}\times \Bool^{\Ss}\to \Ss $ in $\POR$ there exists a function $f^{\sharp}:\Ss^{k}\times \Bool^{\Nat} \to \Ss$ such that for all $n_{1},\dots, n_{k},m\in \Ss$,
\[
\mu \big(\{ \eta\in \Bool^{\Nat}\mid f(n_{1},\dots, n_{k},\eta)=m\}\big)=
\mu \big(\{ \omega\in \Os\mid f^{\sharp}(n_{1},\dots, n_{k},\omega)=m\}\big).
\]
\end{proposition}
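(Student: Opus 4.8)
The plan is to compose the chain of simulations developed in this subsection. Given $f\in\POR$, writing $\vec x$ for the input tuple $n_1,\dots,n_k$ (packed, where needed by the various machine models, into a single string via the standard $\sharp$-separator encoding), I would first apply Lemma~\ref{lemma:portosifp} to obtain a polytime $\SIFPRA$ program $P$ with $\llbracket P\rrbracket(\vec x,\omega)=f(\vec x,\omega)$ for all $\vec x$ and all $\omega\in\Os$, so that the events $\{\omega\mid f(\vec x,\omega)=m\}$ and $\{\omega\mid\llbracket P\rrbracket(\vec x,\omega)=m\}$ coincide set-wise. Since $P$ is total and polytime, Lemma~\ref{lemma:sifpratosifpla} then yields a polytime $\SIFPLA$ program $Q$ with
\[
\mu\big(\{\omega\in\Os\mid\llbracket P\rrbracket(\vec x,\omega)=m\}\big)=\mu\big(\{\eta\in\Bool^\Nat\mid\llbracket Q\rrbracket(\vec x,\eta)=m\}\big)
\]
for all $\vec x,m$. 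Next, Proposition~\ref{prop:SFPODimplSIFPLA} produces a polytime on-demand stream machine $M_Q\in\SFPOD$ computing the same function as $Q$ (so the $m$-preimages again coincide set-wise), and finally Lemma~\ref{lemma:SFPODtoSFP} supplies the canonical stream machine $M_Q':=H(M_Q)\in\SFP$; I would set $f^\sharp:=f_{M_Q'}$.

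The one point that still needs a short argument is lifting the per-step, per-configuration measure equality provided by Lemma~\ref{lemma:SFPODtoSFP} to an equality of output distributions. For this I would first fix, using that $M_Q$ is polytime, a bound $p\in\POLY$ such that on input $\vec x$ both $M_Q$ and $M_Q'$ --- which makes exactly one move per move of $M_Q$, each $\natural$-transition being replaced by its two oracle-reading instances, and hence runs for the same number of steps --- halt within $p(|\vec x|)$ steps on every oracle, the output being read off the configuration reached after exactly $p(|\vec x|)$ steps. Since the computation on a fixed oracle is deterministic, $\{\eta\mid f_{M_Q}(\vec x,\eta)=m\}$ is then a finite \emph{disjoint} union of reachability events of the shape occurring in Lemma~\ref{lemma:SFPODtoSFP}, indexed by the boundedly many halting configurations of size at most $p(|\vec x|)$ whose work-tape output is $m$, and the same holds for $M_Q'$; applying that lemma termwise with $n=p(|\vec x|)$ and summing the finitely many measures gives $\mu(\{\eta\mid f_{M_Q}(\vec x,\eta)=m\})=\mu(\{\chi\mid f_{M_Q'}(\vec x,\chi)=m\})$. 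Chaining this with the displayed equality and the two set-wise coincidences yields the identity $\mu(\{\omega\in\Os\mid f(\vec x,\omega)=m\})=\mu(\{\eta\in\Bool^\Nat\mid f^\sharp(\vec x,\eta)=m\})$ for all $\vec x,m$, and $f^\sharp\in\SFP$ because $M_Q'$ is a polytime stream Turing machine.

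I do not expect a genuine conceptual obstacle in the proposition itself. The substantial work --- turning the non-linear, on-demand oracle access available to $\POR$-functions into the one-bit-per-step discipline of a stream Turing machine while leaving the induced probability measure on oracle tapes unchanged --- has already been absorbed into Lemmas~\ref{lemma:sifpratosifpla} and~\ref{lemma:SFPODtoSFP}. The only things needing care are propagating the polynomial time bound along each of the four links (each cited result explicitly records this) and the elementary passage, sketched above, from per-configuration reachability measures to output-distribution measures once a uniform halting bound $p$ has been fixed; so the proposition is essentially a bookkeeping corollary of the preceding results.
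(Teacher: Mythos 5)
Your proposal is correct and follows exactly the paper's route: the paper's proof of Proposition~\ref{prop:PORtoSFP} is precisely the composition of Lemma~\ref{lemma:portosifp}, Lemma~\ref{lemma:sifpratosifpla}, Proposition~\ref{prop:SFPODimplSIFPLA} and Lemma~\ref{lemma:SFPODtoSFP}, which is the chain you assemble. Your additional paragraph lifting the per-configuration reachability measures of Lemma~\ref{lemma:SFPODtoSFP} to an equality of output distributions is a detail the paper leaves implicit, and you handle it correctly.
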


\begin{proof}
  This result is a consequence of Lemma \ref{lemma:portosifp}, Lemma \ref{lemma:sifpratosifpla}, Proposition \ref{prop:SFPODimplSIFPLA} and Lemma \ref{lemma:SFPODtoSFP}.
\end{proof}

%%% Local Variables:
%%% mode: latex
%%% TeX-master: "main.tex"
%%% End:

\section{Proofs from Section \ref{sec:SSBPP}}
\label{sec:appPIT}

\subsection{The Randomized Algorithm}
\label{sec:appABalgo}

Let us first describe the randomized algorithm $\mathtt{PZT}$ in more detail:
\begin{enumerate}
\item If the input $x$ is not the output of a circuit, reject it. Otherwise, 
  let $n$ be its arity, $d$ its degree and $m$ its size.
  % Let $z$ be a $\left[s \cdot (n(m + 3) + 2m)\right]$-long random string.
  Set $i$ to $1$.
%\item Set $i=1$ and repeat the following steps $s$ times. 
\item Check whether $m$ is 
smaller than some constant value $\varrho \in \Nat$.
 \begin{itemize}
  \item If so, walk the table $T$ 
  %storing all the pairs $(x_i, \chi_{\text{PZT}}(x_i))$ 
  looking for a pair $(x_j, y_j)$ where $x_j=x$; set $o_i=0$ if $y_j=1$, set 
  $o_i=1$, otherwise.
  \item Otherwise, proceed as follows. Choose $r_1, \ldots, r_n$ uniformly and 
  independently in $\{0, \ldots, 2^{m+3}-1\}\subseteq\Nat$. Let $k$ be a random 
  value in  $\{1, \ldots, 2^{2m}\}\subseteq \Nat$. Finally, evaluate the result 
  of $x$, seen as a circuit, on $r_1, \ldots, r_n\mod k$, with result $o_i$.
  \end{itemize}
%\item Choose $r_1, \ldots, r_n$ uniformly and independently in $\{0, \ldots, 
%2^{m+3}-1\}\subseteq\Nat$ consuming a prefix of $z$.
%\item Let $k$ be a random value in  $\{1, \ldots, 2^{2m}\}\subseteq \Nat$, 
%obtained consuming a prefix of $z$.
%\item Evaluate the result of $x$, seen as a circuit, on $r_1, \ldots, r_n\mod 
%k$, with result $o_i$.
\item If $i<s$, then increase $i$ by $1$ and go back to $2$.
\item If for all $i$, $1\le i \le s$, 
%t(m). 
$o_i=0$ output $\epsilon$; otherwise, output $\zzero$.
\end{enumerate}

%
%In order to show that the formula $G$ characterizes a randomized algorithm with \emph{low-entropy}, namely, point (2) of Definition \ref{defn:BPPT}, we need more details on the internal structure of $G$. 
Let us now discuss the formula $G$ that represents $\mathtt{PZT}$.
To do so, we introduce a set of basic predicates, which will be employed for the definition of $G$:
\begin{align*}
  \mathit{Circ}(x) &:= \text{$x$ is the encoding of a circuit with a single output}\\
  \mathit{Eval}(x, k, y, t) &:= \text{When fed with inputs encoded by $y$, $x$ produces in output $t$ modulo $k$}\\
  \mathit{NumVar}(x, n) &:= \text{$x$ is the encoding of an arithmetic circuit with $n$ variables}\\
  \mathit{Degree}(x, d) &:= \text{The degree of the arithmetic circuit encoded by $x$ is $d$}\\
  T(x, y) &:= \bigvee_{\overline x \in \bigcup_{i=0}^\varrho \{0, 1\}^i} \left(x=\overline x \land y = y_{\overline x}\right)\\
  K(r, z) &:= \text{$z$ is a uniformly chosen random string in $\{0, 1\}^{|r|}$}.
\end{align*}
All these predicates characterize polytime random functions; for this reason, we can assume without lack of generality that they are $\Sigma^b_1$-formul\ae{} of $\Lpw$. Using some of these predicates, it is possible to define a formula $G_1$ which executes one evaluation of the polynomial $x$:\\[2ex]
\resizebox{\textwidth}{!}{$
\begin{aligned}
  &G_1(x, m, n, d, z, y) :=  \left[ m \le \varrho \land T(x, 1)\to y=0 \land T(x, 0)\to y=1 \right] \lor \Big[(m > \varrho)\ \land\\
                           &\big(\left( y=0 \land \exists z_0, z_1. |z_0| = 2m \land |z_1| = n \cdot (m+3) \land z_0\cdot z_1 = z \land \exists t \preceq z. \left(\mathit{Eval}(x, z_0, z_1, t) \land t \neq 0\right)\right)\lor\\
                           &\left( y=1 \land \exists z_0, z_1. |z_0| = 2m \land |z_1| = n \cdot (m+3) \land z_0\cdot z_1 = z \land \mathit{Eval}(x, z_0, z_1, 0)\right)\big)\Big].
\end{aligned}
$}\\[2ex]
\normalsize

\begin{remark}
  The formula $G_1$ is a $\Sigma^b_1$ predicate of $\Lpw$ which characterizes one iteration of the algorithm \texttt{PZT}.  
\end{remark}

Differently from the original algorithm, the formula $G_1$ employs an additional parameter $z$ as a \emph{source of randomness} to determine the values of $r_i, \ldots, r_n$ and $k$. This way, we are able to isolate the randomization part in a small portion of the formula we are building, i.e. that one where we determine the value of $z$ by means of the predicate $\Flip$. 

Thanks to this construction, $G_1$ is a $\Sigma^b_1$ formula realizing some $\Flip$-free $\POR$ function $g_1$. We can leverage this fact to define another $\Flip$-free function $\iota_{g_1}(x, n, d, y, z, i)$ which iterates the function $g_1$ $i$ times with the $i$-th $(n(m+3)+2m)$-long sub-string of $z$ as source of randomness --- i.e. as argument for $g$'s $z$ --- and returns $\epsilon$ if and only if all these executions of $g$ returned $1$, otherwise it returns $0$.

This proves that there is a $\Flip$-free $\Sigma^b_1$ formula $G^v$ realizing that function provable under $\RS$ --- and even $S^0_1$. A picture of the proof of the existence of $G^v$ is given in Figure \ref{fig:igv}. Intuitively, the formula $G_v$ characterizes steps 2-7 of the \texttt{PZT} algorithm.

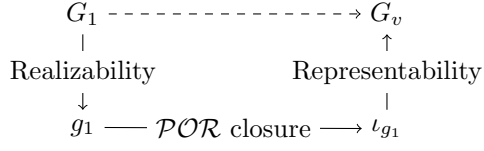
\begin{figure}
  \begin{center}
    \begin{tikzpicture}[node distance = 4cm]
      \node (gv) {$G_1$};
      \node[right of = gv] (Igv) {${G_v}$};
      \node[below = 1 cm of gv] (g) {$g_1$};
      \node[below = 1 cm of Igv] (iota) {$\iota_{g_1}$};
      
      \draw[->] (gv) edge node[fill=white] {Realizability} (g);
      \draw[->] (g) edge node[fill=white] {$\POR$ closure} (iota);
      \draw[->] (iota) edge node[fill=white] {Representability} (Igv);
      \draw[->] (gv) edge[dashed] (Igv);
    \end{tikzpicture}
  \end{center}
  \caption{Summary of the proof of the existence of ${G^v}$}
  \label{fig:igv}
\end{figure}

In order to define $G$, we will only need to compose $G_v$ with another sub-formula which characterizes the first step of the \texttt{PZT} algorithm. This is quite simple because we only need to encode the generation of the values of $d, n, m, z$ and to fix a number of iterations, as we will show in section \ref{sec:apperrorbound}, $\nruns$ is an appropriate choice. Thus, the $\Sigma^b_1$ formula $G$ can be defined as follows:\\[2ex]
\resizebox{\textwidth}{!}{$
\begin{aligned}
  &G(x, y):= \exists m \preceq x. |x| = m\land
              \Big[\big((\mathit{Circ}(x)\land \exists n \preceq 2^m.\exists d \preceq 2^m. \exists z. |z| = \nruns\cdot(n(m+3)+2m) \land\\
            &\mathit{NumVar}(x, n) \land \mathit{Degree}(x, d) \land K(1^{\nruns \cdot (n(m+3)+2m)}, z) \land {G^v} (x, m, n, d, y, z, \nruns)\big)
              \lor \big(\lnot \mathit{Circ}(x)\land y=0\big)\Big].
\end{aligned}$%
}\\[2ex]
\normalsize

\subsection{Proving the Error Bound}
\label{sec:apperrorbound}

Within this section, we argument that:
$$
\IDE\vdash \forall x.\forall y.\MEAS[ G(x,y)\leftrightarrow H(x,y)],
$$
\noindent
which is equivalent to showing that:
$$
\IDE\vdash \forall x.\forall y.\Thresh[\Flipless[G](x,y,z)\leftrightarrow H(x,y)].
$$
In turn, it is possible to obtain a formula equivalent to $\Flipless[G](x,y,z)$ by removing the quantification over $z$ and the randomization its value by means of $K$. Doing so, we are able to obtain an equivalent claim:   
\small
\begin{align*}
\IDE\vdash &\forall x.\forall y.\exists m, d, n\preceq 1^{|x|}. |x|=m\land \mathit{NumVar}(x, n) \land \mathit{Degree}(x, d)\\
&\exists^{\frac 2 3 2^{\nruns \cdot (n(m+3)+2m)}}z.|z| = \nruns \cdot (n(m+3)+2m) \land (G^v(x, n, d, z, y)\leftrightarrow H(x,y)).
\end{align*}
\normalsize
It also is easy to observe $\exists^{\ge h}z. |z| = k\land P(z) \Leftrightarrow |\{z \in \Bool^k\mid P(z)\}|  \ge h$. 
This allows to replace the threshold quantification with a formula of $\IDE$ measuring the cardinality of some finite set. Therefore, we can reduce out goal to showing Lemma \ref{lemma:errbound} using only the instruments provided by $\IDE$:
\begin{lemma}
  \label{lemma:errbound}
  For every encoding of a polynomial circuit $x$, for every $y\in \Bool$, whereas $x$ has $n$ variables, size $m$ and degree $d$, it holds that:
  $$
  |\{z \in \Bool^{\nruns \cdot (n(m+3)+2m)} \mid (G^v(x, n, d, z, y)\leftrightarrow H(x,y))\}|\ge \frac 2 3\cdot 2^{\nruns \cdot (n(m+3)+2m)}.
  $$  
\end{lemma}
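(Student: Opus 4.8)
The plan is to bound the number of \emph{bad} random strings $z$, i.e.~those for which $G^v(x,n,d,z,y)\not\leftrightarrow H(x,y)$, by a quantity strictly below $\frac{1}{3}\cdot 2^{\nruns\cdot(n(m+3)+2m)}$, working entirely within $\IDE$. First I would split the string $z$ into $\nruns$ independent blocks, each of length $n(m+3)+2m$, and each such block further into an $n(m+3)$-bit part encoding $r_1,\dots,r_n$ (each $r_i$ in $\{0,\dots,2^{m+3}-1\}$) and a $2m$-bit part encoding $k\in\{1,\dots,2^{2m}\}$. By the definition of $H$ via the normal form $\overline p$, we have $H(x,\epsilon)$ iff the polynomial is identically zero and $H(x,\zzero)$ otherwise, so it suffices to analyse two cases. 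When $p\equiv 0$: claim~\eqref{eq:claim5.1}-style reasoning (compatibility of $\bmod\ k$ with $+,\times$, provable in $\IDE$) shows every evaluation returns $0$, hence $G^v$ outputs $\epsilon$ and agrees with $H$ for \emph{every} $z$ — zero bad strings. The real work is the case $p\not\equiv 0$.

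In the case $p\not\equiv 0$ a string $z$ is bad precisely when $G^v$ outputs $\epsilon$, i.e.~all $\nruns$ iterations evaluate $p$ to $0$ modulo their respective $k$. I would bound, for a single iteration, the probability (counted as a fraction of the $2^{n(m+3)+2m}$ block-strings) that the iteration returns $0$ despite $p\not\equiv 0$. This happens if either (a) $p(r_1,\dots,r_n)=0$ over $\mathbb Z$, or (b) $p(\vec r)\neq 0$ but $k\mid p(\vec r)$. For (a) I would invoke the Schwartz–Zippel Lemma (available in $\IDE$ by assumption): since the $r_i$ range over a set of size $2^{m+3}\ge 8d$ and $\deg p\le d$, the fraction of bad $\vec r$ is at most $d/2^{m+3}\le 1/8$. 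For (b), conditioning on a fixed $\vec r$ with $p(\vec r)\neq 0$, the absolute value $|p(\vec r)|$ is bounded (no overflow, as the evaluation is done carefully mod something $\le 2^{2m}$, and the true value is at most $2^{O(m\cdot d)}$), so it has at most $O(m d)$ distinct prime divisors; by the Prime Number Theorem (provable in $\IDE$ by \cite{CornarosDimitracopoulos}) there are at least $\Omega(2^{2m}/m)$ primes in $\{1,\dots,2^{2m}\}$, so the fraction of bad $k$ is $O(m^2 d/2^{2m})$, which is negligible for $|x|=m$ larger than the constant $\varrho$. Combining, one iteration fails to witness $p\not\equiv 0$ with fractional count at most, say, $\tfrac14$.

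Next I would amplify: since the $\nruns$ blocks are disjoint substrings of $z$, the number of $z$ for which \emph{all} iterations fail is the product of the per-block bad-counts, i.e.~at most $(\tfrac14)^{\nruns}\cdot 2^{\nruns\cdot(n(m+3)+2m)}$, which for $\nruns=37m\ge 1$ is far below $\tfrac13\cdot 2^{\nruns\cdot(n(m+3)+2m)}$. Formally, inside $\IDE$ this is a statement about cardinalities of explicitly-defined finite sets of bit-strings of length $\le 2^{O(m)}$, so the exponential function (totality axiom $\mathrm{Exp}$) lets us name these sets, and the counting/product manipulations are $\Delta^0_0$-inductions on the number of blocks, hence legitimate in $\IDE$. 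The constant $\nruns=37m$ should be chosen so that even the crude bound $(\tfrac14)^{\nruns}$ beats $\tfrac13$ with room to spare, and one should double-check that the small-input table $T$ (for $|x|<\varrho$) makes the statement vacuously/trivially true there, so the induction only needs the $m>\varrho$ regime.

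The main obstacle I anticipate is \emph{formalizing the counting argument in $\IDE$} rather than the mathematics itself: Schwartz–Zippel and the Prime Number Theorem are assumed available, but translating ``fraction of bad $\vec r$ is $\le d/2^{m+3}$'' into a provable inequality between cardinalities of $\IDE$-definable sets — and then multiplying these bounds across the $\nruns$ blocks by a bounded induction — requires care, since $\IDE$ only has induction for bounded formulas and the sets live at exponential size. Concretely, one must (i) define in $\Lpw+\mathrm{Exp}$ the set of bad blocks and prove its size bound using the Schwartz–Zippel count plued the prime-divisor count, (ii) prove a product lemma $|A_1\times\cdots\times A_{\nruns}|=\prod|A_i|$ for these explicitly given sets by induction on $\nruns$, and (iii) discharge the arithmetic $(\tfrac14)^{\nruns}\le\tfrac13$ (with the specific $\nruns=37m$). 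Each of these is routine in principle but needs the bounded-induction discipline of $\IDE$ to be respected throughout; I expect step (ii), the cardinality-of-product lemma at exponential scale, to be the most delicate bookkeeping.
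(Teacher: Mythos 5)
Your overall strategy mirrors the paper's: split the biconditional into the ``no false rejection'' direction (your case $p\equiv 0$, the paper's Claim \eqref{eq:claim5.1}, handled via compatibility of $\bmod\ k$ with addition and multiplication) and the ``few false acceptances'' direction (the paper's Claim \eqref{eq:claim5.2}), bound the bad strings block by block via Schwartz--Zippel and the Prime Number Theorem, and multiply across the $\nruns$ independent blocks. However, there is a genuine quantitative error in your per-iteration analysis of case (b). A block is bad when $k$ divides $y=p(\vec r)\neq 0$, and $k$ ranges over \emph{all} of $\{1,\dots,2^{2m}\}$, not over primes; the bad $k$ are therefore all divisors of $y$ below $2^{2m}$, not just the prime divisors. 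Since $|y|$ can be as large as $2^{(m+3)2^{m}}$, there is no useful bound on its number of (possibly composite) divisors in that range, and your claimed bad-$k$ fraction of $O(m^{2}d/2^{2m})$ does not follow from ``$y$ has $O(md)$ distinct prime divisors''. What the argument actually yields --- and what the paper proves as Proposition \ref{prop:ARargument} --- is only a \emph{lower} bound on the provably good $k$: at least $2^{2m}/(16m)$ of them are primes not dividing $y$. Hence the provable per-iteration success probability is $\Theta(1/m)$, not at least $3/4$, and the per-iteration failure bound is $1-\tfrac{7}{128m}$ rather than your $\tfrac14$.

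This error propagates to the amplification step: with failure $1-\Theta(1/m)$ per block, a constant number of repetitions does not suffice, and the choice $\nruns$ (linear in $m$) is exactly what is needed so that
$\bigl(1-\tfrac{7}{128 m}\bigr)^{\nruns}\le\tfrac14<\tfrac13$,
as computed at the end of the paper's proof of Lemma \ref{lemma:claim5.2}. Your formalization concerns (defining the bad sets in $\IDE$, the product-of-cardinalities lemma over the $\nruns$ blocks, and the final numerical inequality) are well placed and correspond to what the paper actually carries out, but the numerical core of your argument must be repaired as above before those steps can be discharged.
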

Since $G$ and $H$ characterize two decisional functions, the claim of Lemma \ref{lemma:errbound} can be restated in the following way:\\[2ex]
\resizebox{\textwidth}{!}{$\begin{aligned}
|\{z \in \Bool^{\nruns \cdot (n(m+3)+2m)} \mid (G^v(x, n, d, z, 0)\land H(x,0))\lor(G^v(x, n, d, z, \epsilon)\land H(x,\epsilon))\}|\ge \frac 2 3\cdot 2^{\nruns \cdot (n(m+3)+2m)}.
\end{aligned}$}\\[2ex]
Thus, it is possible Lemma \ref{lemma:errbound} as a consequence of \eqref{eq:claim5.1} and \eqref{eq:claim5.2} which, respectively, can be stated as Lemmas \ref{lemma:claim5.1} and \ref{lemma:claim5.2}.

\begin{lemma}[Claim \eqref{eq:claim5.1}]
  \label{lemma:claim5.1}
  For every $z \in \Bool^{\nruns \cdot (n(m+3)+2m)}$, every encoding of a polynomial circuit $x$, every $y\in \Bool$, whereas $x$ has $n$ variables, size $m$ and degree $d$, it holds that $G^v(x, n, d, z, 0)\to H(x,0)$.
\end{lemma}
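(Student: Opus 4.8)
The plan is to prove Lemma~\ref{lemma:claim5.1} by analyzing the structure of the formula $G^{v}$ and tracking what its accepting behavior for output $\zzero$ entails about the circuit $x$, then matching this against the deterministic predicate $H$. Recall that $G^{v}(x,n,d,z,\zzero)$ holds exactly when, among the $\nruns$ iterations of the one-shot evaluation formula $G_{1}$ driven by successive $(n(m+3)+2m)$-bit blocks of $z$, \emph{at least one} iteration returns $\zzero$; and by the definition of $G_{1}$ (for the interesting case $m>\varrho$), that iteration returns $\zzero$ precisely when it found values $z_{1}$ (encoding $r_{1},\dots,r_{n}$) and $z_{0}$ (encoding $k$) such that $\mathit{Eval}(x,z_{0},z_{1},t)$ holds for some $t\neq 0$, i.e.~$p(r_{1},\dots,r_{n})\bmod k \neq 0$ where $p$ is the polynomial computed by the circuit $x$. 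The first step is therefore to make this unfolding explicit and to reduce the claim to: if there exist $r_{1},\dots,r_{n}$ and $k$ with $p(\vec r)\bmod k\neq 0$, then not all monomials of the normal form $\overline p$ of $p$ have zero coefficient, i.e.~$H(x,\zzero)$ holds.

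The key arithmetical fact I would use is the compatibility of $\bmod\ k$ with the ring operations, as mentioned in the paper: $\IDE$ proves that $\mathit{Eval}(x,k,y,t)$ with $t\neq 0$ implies that the integer value $p(\vec r)$ (computed without reduction) is itself nonzero --- because if $p(\vec r)=0$ in $\mathbb Z$ then $p(\vec r)\bmod k = 0$ for every $k\geq 1$, and this congruence reasoning is formalizable in $\IDE$ by induction on the circuit structure (the $\bmod$ operation distributes over $+$ and $\times$ up to reduction, which is a bounded-induction fact). From $p(\vec r)\neq 0$ in $\mathbb Z$ one concludes that $p$ is not the identically-zero polynomial: contrapositively, if every monomial in the normal form $\overline p$ had coefficient $0$, then $\overline p$ evaluates to $0$ on every tuple, and since $\IDE$ proves that evaluation respects the normalization procedure computing $\overline p$ from $p$ (this is the provable-totality-and-correctness content of the predicate $H$ and the normalization algorithm $h$), we would get $p(\vec r)=0$, a contradiction. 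Hence $H(x,\zzero)$ holds, which is exactly the conclusion. One must also dispatch the easy sub-case $m\leq\varrho$: here $G_{1}$ returning $\zzero$ forces $T(x,\oone)$, i.e.~$\chi_{\mathrm{PZT}}(x)=1$, meaning $x$ is not in PZT, so by $h=\chi_{\mathrm{PZT}}$ we again get $H(x,\zzero)$; and the case where $x$ is not a circuit, where both $G^{v}$ (via the outer formula $G$) and $H$ output $\zzero$ by construction.

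I would organize the proof as: (i) unfold $G^{v}(x,n,d,z,\zzero)$ to obtain a witnessing iteration and, from $G_{1}$, a witnessing pair $(k,\vec r)$ with $p(\vec r)\bmod k\neq 0$ (or the table/non-circuit sub-cases); (ii) invoke the $\IDE$-provable compatibility of $\bmod\ k$ with $+,\times$ to conclude $p(\vec r)\neq 0$ over $\mathbb Z$; (iii) invoke the $\IDE$-provable correctness of the normalization algorithm underlying $H$ to conclude that some monomial of $\overline p$ has nonzero coefficient, hence $H(x,\zzero)$; (iv) handle the degenerate cases. All four steps are formalizable with bounded induction plus $\mathrm{Exp}$ — nothing here needs the heavier machinery (Schwartz–Zippel, the Prime Number Theorem) that Claim~\eqref{eq:claim5.2} requires.

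The main obstacle I anticipate is not conceptual but bookkeeping: making precise, within $\IDE$ and for the chosen encoding of circuits as lists of nodes, that the predicate $\mathit{Eval}(x,k,\vec r,t)$ really does compute $p(\vec r)\bmod k$ and that this agrees — modulo $k$ — with the unreduced value $p(\vec r)$, and that the latter agrees with the value obtained by first normalizing to $\overline p$ and then substituting. This is a chain of three ``evaluation equals evaluation'' lemmas, each proved by induction on the DAG/term structure, and each requiring that the intermediate values stay polynomially bounded so that the induction is over a $\Sigma^{b}_{1}$ (indeed $\Delta^{0}_{0}$, after de-randomization) formula — which is exactly why the mod-$k$ reduction was built into the algorithm in the first place. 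I would state these as auxiliary lemmas and prove them by bounded induction, then assemble Lemma~\ref{lemma:claim5.1} in a few lines from them.
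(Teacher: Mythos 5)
Your proposal is correct and takes essentially the same route as the paper: both proofs hinge on the $\IDE$-provable compatibility of $\bmod\ k$ with $+$, $\cdot$ and negation (so that $\mathit{Eval}$ computes $p(\vec r)\bmod k$), followed by the elementary fact $n\bmod k\neq 0\to n\neq 0$ to conclude $p(\vec r)\neq 0$ over $\mathbb Z$, hence that the normal form $\overline p$ has a nonzero coefficient and $H(x,\zzero)$ holds. Your explicit treatment of the table-lookup case $m\leq\varrho$ and the non-circuit case, which the paper leaves implicit, is a sound addition rather than a deviation.
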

\begin{proof}
  This result is a consequence of the compatibility of $\bmod k$ with respect to addition, multiplication ad inverse in $\mathbb Z$. Precisely: for every $n, m, k \in \mathbb Z$ with $k \ge 2$, it holds that:
  \begin{enumerate}
  \item $(n\odot m)\mod k=\left((n\mod k) \odot (m \mod k)\right) \mod k$ for $\odot \in \{+, \cdot\}$.
  \item $(-n) \mod k= - (n \mod k)$.
  \item $n \mod k \neq 0 \to n \neq 0$.
  \end{enumerate}
  These claims are shown as follows:
  \begin{enumerate}
    \item In this case, we will only show the case for $+$, that of $\cdot$ is analogous. The proof goes by induction on the recursion parameter, e.g. $x$. The case $x=0$ is trivial. For the inductive case, let $x=ak+b$ and $y=ck+d$ for $b, d <k$, the existence and uniqueness of this decomposition can be shown by induction on $x$. The IH tells that $(x\mod k +y \mod k)\mod k= b+d\mod k$.
 \begin{align*}
    (x+1\mod k +y \mod k)\mod k&= (x+1\mod k +d)\mod k\\
                               &= (b+1\mod k +d)\mod k    
  \end{align*}
  Now, if $b<k-1$, then $(b+1\mod k +d)\mod k= b+1+d \mod k=ak+b+1+ck+d \mod k= x+1+y\mod k$. If $b=k-1$ 
  \begin{align*}
    (x+1\mod k +y \mod k)\mod k&= (d)\mod k\\
                               &= (b+1 + ak+ck +d)\mod k\\
                               &= x+1+y\mod k
  \end{align*}
\item This proof goes by cases on $n$. The case where $n=0$ is trivial. The case $n+1$ relies on the uniqueness of the decomposition of $n=ak+b$, which allows us to show that $-((ak+b+1)\mod k) = -((b+1) \mod k)$. If $0\le b < k-1$, then it is equal to $-b-1$, otherwise it is equal to $0$. On the other hand,   $(-ak-b-1)\mod k = (-b-1)\mod k$ and still, if $0\le b < k-1$, then this is equal to $-b-1$, otherwise it is equal to $0$.
  \item The counter-nominal is trivial: $n=0 \rightarrow n \mod k =0$.
  \end{enumerate}
  Leveraging points 1 and 2, we show that the evaluation of a polynomial $x$ as performed by the predicate $\mathit{Eval}$ on input $\vec r$ is equal to $x(\vec r)\mod k$, the assumption $G^v(x, n, d, \vec rk, 0)$ allows us to conclude the premise of point 3, thus an application of point 3, allows us to conclude that $\mathbb Z \models p(\vec x) \neq 0$, by the definition of $H$, we conclude that $H(x, 0)$ holds.
\end{proof}
\begin{lemma}[Claim \eqref{eq:claim5.2}]
  \label{lemma:claim5.2}
  For every encoding of a polynomial circuit $x$, every $y\in \Bool$, whereas $x$ has $n$ variables, size $m$ and degree $d$, it holds that:
  $$
  |\{z \in \Bool^{\nruns \cdot (n(m+3)+2m)} \mid G^v(x, n, d, z, \epsilon)\to H(x,\epsilon)\}|\ge \frac 2 3\cdot 2^{\nruns \cdot (n(m+3)+2m)}.
  $$
\end{lemma}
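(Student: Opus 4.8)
The plan is to prove Lemma~\ref{lemma:claim5.2} by translating the probabilistic argument behind the correctness of \texttt{PZT} into a counting statement provable in $\IDE$. First I would fix an encoding of a polynomial circuit $x$ with $n$ variables, size $m$, degree $d$, and recall that a ``bad'' sample $z$ — i.e.\ one of the strings counted in the complement — splits into $\nruns$ consecutive blocks, each of length $n(m+3)+2m$, encoding a tuple $(\vec r, k)$ with $\vec r \in \{0,\dots,2^{m+3}-1\}^n$ and $k\in\{1,\dots,2^{2m}\}$. The run $G^v(x,n,d,z,\epsilon)$ accepts (while $H(x,\epsilon)$ fails, meaning $p\not\equiv 0$) exactly when \emph{every} block yields $p(\vec r)\bmod k = 0$. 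So the complement set is a product over the $\nruns$ blocks, and it suffices to bound, \emph{for a single block}, the number of pairs $(\vec r,k)$ with $p(\vec r)\bmod k = 0$ by some constant fraction $c<1$ of $2^{n(m+3)+2m}$, and then observe $c^{\nruns}\le \tfrac13$, so that the good set has size $\ge (1-c^{\nruns})2^{\nruns(n(m+3)+2m)}\ge \tfrac23\,2^{\nruns(n(m+3)+2m)}$. This last elementary inequality, together with the factorization of the set as an iterated product, is routine bounded-induction arithmetic and is provable in $\IDE$.

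Next I would bound the single-block bad count. Here there are two sources of error to control, both needing to be formalized in $\IDE$. The first is the probability that $p(\vec r)=0$ over $\mathbb{Z}$ while $p\not\equiv 0$: by the Schwartz--Zippel lemma, the number of $\vec r\in S^n$ with $p(\vec r)=0$ is at most $\tfrac{d}{|S|}|S|^n$, and with $|S| = 2^{m+3}\ge 8d$ (using $d\le m \le |x|$ and the size bound) this is $\le \tfrac18|S|^n$. I would invoke the version of Schwartz--Zippel formalizable in bounded arithmetic — it follows by induction on $n$ from the degree-one fact that a nonzero univariate polynomial of degree $d$ has at most $d$ roots, which is standard and provable in $\IDE$. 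The second source is the event that $p(\vec r)\neq 0$ but $k\mid p(\vec r)$: since $|p(\vec r)|< k_{\max}$ by the overflow-avoidance bound built into $\mathit{Eval}$ (the choice $2^{m+3}$ and $2^{2m}$ ensures $|p(\vec r)| < 2^{2m}$ whenever $p(\vec r)\ne 0$), a nonzero value $v$ with $|v|<2^{2m}$ has at most $\omega(v)\le \log_2|v| \le 2m$ divisors $k\le 2^{2m}$; but more sharply, to get a constant fraction I would use the Prime Number Theorem to count \emph{primes} $k$ in the range and argue that the fraction of $k$ dividing a fixed $v\ne 0$ is bounded by $O(m)/\pi(2^{2m})$, which is $o(1)$ and in particular below any fixed small constant for $m$ larger than the threshold $\varrho$. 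The provability of PNT in $\IDE$ is the result of \cite{CornarosDimitracopoulos} cited in the paper, so I may assume it. Combining: the bad fraction in one block is at most $\tfrac18 + o(1) < c$ for a suitable constant $c<1$ with $c^{\nruns}\le \tfrac13$, which fixes why $\nruns$ was chosen as it was.

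The remaining work is purely bookkeeping that must nonetheless be carried out inside $\IDE$: (i) that the map $z\mapsto(\text{its }\nruns\text{ blocks})$ is a bijection onto tuples, so cardinalities multiply — provable by a straightforward bounded induction on $\nruns$; (ii) that $\mathit{Eval}(x,k,\vec r,t)$ indeed computes $p(\vec r)\bmod k$ and that no overflow occurs, which is Lemma~\ref{lemma:claim5.1}'s ingredient (compatibility of $\bmod k$ with $+,\cdot$) already established; (iii) that the small-input table case $m\le\varrho$ is handled separately and trivially, since there $G^v$ and $H$ agree identically by construction of the table $T$, so those $x$ contribute the full count. Assembling (i)--(iii) with the two error bounds gives the displayed inequality.

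The main obstacle I expect is \emph{not} the high-level combinatorics but the faithful formalization of the two quantitative lemmas — Schwartz--Zippel and the divisor/prime count — within $\IDE$, together with checking that all the polynomially-many arithmetic manipulations ($p(\vec r)$, reductions $\bmod k$, the products of $\nruns$ block-counts, the exponentials $2^{m+3}$, $2^{2m}$, $2^{\nruns(\cdots)}$) stay within the reach of bounded induction plus $\mathrm{Exp}$. In particular one must be careful that the counting is done over sets whose sizes are \emph{exponential} in $|x|$, which is exactly why $\mathrm{Exp}$ is needed and why this argument lives in $\IDE$ rather than in a purely bounded theory (cf.\ Remark~\ref{remark:beyond}); the delicate point is organizing the induction so that at each stage we only ever compare two numbers both provably bounded by a fixed tower, which $\IDE$ can handle but a weaker theory could not.
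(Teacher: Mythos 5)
Your high-level plan --- decompose $z$ into $\nruns$ independent blocks, bound the per-block failure probability via Schwartz--Zippel and a prime-counting argument, then multiply --- is exactly the decomposition the paper follows, and the citations to Lemma~\ref{lemma:sz}, Proposition~\ref{prop:ARargument} and the $\IDE$-provability of PNT from \cite{CornarosDimitracopoulos} are the right ingredients. However, your quantitative analysis of the single-block error has a genuine gap that would make the argument fail.

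First, the claim that the overflow-avoidance bound implies $|p(\vec r)| < 2^{2m}$ is incorrect: what $\mathit{Eval}$ keeps small is the intermediate values \emph{modulo $k$}, not the actual integer $p(\vec r)$, which can be as large as $2^{(m+3)\cdot 2^{m}}$. Consequently, the number of distinct prime divisors of $p(\vec r)$ is not $\le 2m$ but can be as large as $\Theta(m\cdot 2^{m})$ (this is exactly estimate (\emph{b}) in the paper's proof of Proposition~\ref{prop:ARargument}). Second --- and this is the crux --- a random $k\in\{1,\dots,2^{2m}\}$ is prime with probability only $\Theta(1/m)$ by PNT, so even after discarding the $\Theta(m\cdot 2^m)$ prime divisors of $p(\vec r)$, the probability that $k$ is ``good'' (a prime not dividing $p(\vec r)$) is only $\ge \tfrac{1}{16m}$, \emph{not} $1-o(1)$. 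Hence the per-block \emph{bad} fraction is $\approx 1 - \Theta(1/m)$, tending to $1$, and your assertion that it is bounded by a fixed constant $c<1$ with $c^{\nruns}\le\tfrac13$ (``which fixes why $\nruns$ was chosen as it was'') is wrong. It is precisely because the per-block good probability shrinks like $1/m$ that $\nruns=37m$ must scale linearly in $m$; the final estimate in the paper has the form $\bigl(1-\tfrac{1}{16m}(\tfrac78+h)\bigr)^{37m}\le\tfrac14$, which requires a $(1-1/n)^{n}$-type inequality (shown in the appendix to be $\IDE$-provable), not a trivial $c^{\mathrm{const}}$ bound. Fixing your proposal requires replacing the per-block estimate by the paper's and redoing the repetition analysis accordingly; the rest of your outline (block factorization, small-$|x|$ table case, $\IDE$ bookkeeping) is sound.
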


\subsection{Proof of Lemma \ref{lemma:claim5.2}}
\label{sec:proofofclaim5.2}

Lemma \ref{lemma:claim5.2} is shown finding an upper bound to the probability of error of the algorithm, i.e. to the number of values for $z$ causing $G^v(x, n, d, z, \epsilon)\to H(x,\epsilon)$. Intuitively, there are two possible causes of error within the algorithm underlying the formula $G$.
\begin{enumerate}
\item if the outcome of the evaluation of the polynomial on $\vec r$ is some value $y\neq 0$ and $k$ divides $y$, then the algorithm will reject its input even though $x$ belongs to the language. 
\item if the values $\vec r$ are a solution of the \emph{non-identically zero} polynomial, then the algorithm will reject $x$, even though it belongs to the language.
\end{enumerate}
The bound to the first error is found in section \ref{subsub:arorabarak}, while a bound to the probability of the second error is found in section \ref{subsub:schwartzzippel}. Lemma \ref{lemma:claim5.2} can be shown combining these results. 

\subsubsection{Estimation of the Error, Case 1}
\label{subsub:arorabarak}

\begin{prop}[Argument in \cite{AroraBarak}]
  \label{prop:ARargument}
  There is some $\varrho \in \Nat$ such that for every $m \in \Nat$ greater than $\varrho$, every $0\le y < 2^{({m+3})\cdot2^m}$:
  \small
  $$
  |\{k \in \{1, \ldots, 2^{2m}\}\mid \}\mid k \text{ is not \emph{a prime not dividing} }y\}| \le 2^{2m} - \frac {2^{2m}} {16m}
  $$
\end{prop}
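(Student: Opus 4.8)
The plan is to restate the claim as a \emph{lower} bound on the complementary set and then estimate it with standard prime-counting tools. Write $\pi(N)$ for the number of primes $\le N$ and $\omega(y)$ for the number of distinct prime divisors of $y$. The set of $k\in\{1,\dots,2^{2m}\}$ that \emph{are} a prime not dividing $y$ is exactly the set of primes $\le 2^{2m}$ that do not divide $y$; removing from $\pi(2^{2m})$ the prime divisors of $y$ (of which there are at most $\omega(y)$), its size is at least $\pi(2^{2m})-\omega(y)$. Hence it suffices to prove
\begin{equation*}
\pi(2^{2m})-\omega(y)\;\ge\;\frac{2^{2m}}{16m},
\end{equation*}
since then the set in the statement, being the complement inside $\{1,\dots,2^{2m}\}$, has size at most $2^{2m}-\frac{2^{2m}}{16m}$. (I will assume $y\ge 1$, which is the only case relevant to the application, cf.\ Case~1 of the error analysis.)

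For the first term I would invoke the lower bound on the prime-counting function: Chebyshev-type estimates, and a fortiori the Prime Number Theorem whose provability in $\IDE$ is established in \cite{CornarosDimitracopoulos}, yield a constant $c>0$ with $\pi(N)\ge c\,N/\ln N$ for all sufficiently large $N$; taking $N=2^{2m}$ gives $\pi(2^{2m})\ge\frac{c}{2\ln 2}\cdot\frac{2^{2m}}{m}$ once $m$ is past some threshold. For the second term, since the product of the distinct primes dividing $y$ divides $y$ and is at least $2^{\omega(y)}$, we have $\omega(y)\le\log_2 y<(m+3)\,2^m$ for every $y$ with $1\le y<2^{(m+3)2^m}$. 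It remains to note that $2^m$ outgrows every polynomial in $m$, so one can fix $\varrho$ large enough that $4m(m+3)\ln 2\le c\cdot 2^m$ for all $m>\varrho$; for such $m$ this forces $\omega(y)\le\tfrac12\pi(2^{2m})$, hence $\pi(2^{2m})-\omega(y)\ge\tfrac12\pi(2^{2m})\ge\frac{c}{4\ln 2}\cdot\frac{2^{2m}}{m}$, and since even the crude bound $c=\tfrac{\ln 2}{2}$ satisfies $\frac{c}{4\ln 2}=\tfrac18>\tfrac1{16}$, the displayed inequality follows (after possibly enlarging $\varrho$ so the Chebyshev estimate is in force). The constant $\tfrac1{16}$ is far from tight, so there is ample slack and no delicate estimate is needed.

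The hard part is not the arithmetic but making the argument go through \emph{inside} $\IDE$: one must express "$k$ is prime", "$k\mid y$", and the counting of $\{k\le 2^{2m}:\dots\}$ by bounded formulas, manipulate the doubly exponential bound $2^{(m+3)2^m}$ through the exponential function (provably total in $\IDE$), and import the lower bound on $\pi$ in exactly the form supplied by \cite{CornarosDimitracopoulos}. Once the prime-counting lemma is available in the theory, the remaining ingredients --- $y\ge 2^{\omega(y)}$, the polynomial-versus-exponential comparison that fixes $\varrho$, and the final subtraction --- are routine bounded reasoning with $\mathrm{Exp}$.
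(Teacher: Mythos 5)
Your proof is correct and follows essentially the same route as the paper's: restate as a lower bound on the primes $\le 2^{2m}$ that do not divide $y$, bound $\pi(2^{2m})$ from below via the Cornaros--Dimitracopoulos formalization of the PNT, bound the number of distinct prime divisors $\omega(y)$ from above via $\log_2 y < (m+3)\,2^m$, and then use the exponential-vs-polynomial gap to force $\omega(y)\le\frac{1}{2}\pi(2^{2m})$ for $m$ past a threshold; the paper uses the same three ingredients (its points (a), (b), (d)) with slightly different constants ($K=2^{2m}/8m$, $L=3m\cdot 2^m$, threshold $m\ge 100$). Your explicit caveat that $y\ge 1$ is needed is a genuine (if minor) sharpening: as stated, the proposition fails at $y=0$ since every $k$ divides $0$, but as you note, only $y\neq 0$ arises in the error analysis.
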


% \begin{prop}[Argument in \cite{AroraBarak}]
%   Let $C$ be a polynomial circuit of size $m$ and $y_j$ the result of the evaluation of $C$ for some $x_{1,j}, \ldots, x_{n, j}$ chosen uniformly and independently at random from $\{0, \ldots, 2^{m+3}-1\}$ and call $\overline y_j$ the outcome of the evaluation of $C$ modulo some random $k_j$ chosen uniformly in the range $\{1, 2^{2m}\}$.
%   $$
%   \forall m\ge \varrho.Pr[\forall 0 \le j\le \nruns.\overline y_j = 0 \land y_j \neq 0]\le \frac 1 8
%   $$
% \end{prop}
%
This probability depends on the range bounding $k$ and on the number of evaluations taken in exam. The proof of this result relies on the following observations:
%
% \begin{lemma}
%   Exists $m'$ such that for every  $m \ge m'$, and every  $0\le y < 2^{(m+s)2^m}$:
%   $$
%   | \left\{ k \in \{1, \ldots, 2^{2m}\} \mid y \mod k =0 \land y \neq 0\right\} |\le \frac {1}{4m} 2^{2m}
%   $$
% \end{lemma}

\begin{enumerate}
\item[(\emph{a})] Thanks to the Prime Number Theorem, there is some $m'$ such that, for every $m \ge m'$, the number of primes in $\{1, \ldots, 2^{2m}\}$ is at least $K=\frac {2^{2m}}{4 \cdot 2m}$.
\item[(\emph{b})] For sufficiently big $m$, the number of primes in $\mathbb Z$ dividing $0\le y < 2^{({m+3})\cdot2^m}$ is smaller than $L= 3m\cdot 2^m$.
\item[(\emph{c})] For every $n \in \Nat$, it holds that $3(2n+201)\leq 2^{n+96}$.
\item[(\emph{d})] For every $m\ge 100$, said $L$ the number of primes dividing $y$, it holds that $L\leq \frac K 2= \frac {2^{2m}}{16 m}$. This is shown by induction using (\emph{c}).
% \item[(\emph{e})] The number of values $k \in \{1, \ldots, 2^{2m}\}$, which are not \emph{primes not dividing $y$} --- and thus \emph{bad candidates} --- is smaller than $\frac{2^{2m}}{16m}$. Then, it is possible to define the function $\Gamma(m)$ which computes the ratio between this number of bad candidates for $k$ and all the possible values and show that:
%   $$
%   \Gamma(m)\le 1-\frac 1 {16m}.
%   $$
% \item[(\emph{f})] $\left(\Gamma(m)\right)^\nruns\le \frac 1 8$.
\end{enumerate}

As we anticipated, points $(a)$, $(b)$, $(d)$ hold only for sufficiently big values of $m$. For this reason, we define $\varrho$ as the greatest of these values.

% Showing $(e)$ point is equivalent to showing the validity of Proposition \ref{prop:ARargument}. We will not focus on points $c, d, e$, which can be shown by induction, and thus they holds in $\PA$.

\paragraph{Point (\emph{a})}

We exploit here the fact, proved in \cite{CornarosDimitracopoulos}, that the PNT is provable in $\IDE$; the formulation of the PNT can be paraphrased as follows:
$$
\forall x\in \Nat. \forall q \in \mathbb Q^+. \exists a_x, z_x. z_q \le x \Rightarrow \left | \frac {\pi(x)\log_{a_x}^*(x)} x - 1 \right| \le q
$$
Where $\pi(x)$ is the number of primes in $\{1, \ldots, x\}$ and $\log_{a}^*(x)$ is a \emph{good approximation} of $\log_2(x)$, i.e. $\forall x.\log_2(x) \le 2 \log_{a_x}^*(x)\le 4 \log_2(x)$. This can be used to deduce that:
$$
z_q \le x \Rightarrow \frac x {\log_{a_x}^*(x)}\left(1-q\right) \le \pi(x) \le \frac x {\log_{a_x}^*(x)}\left(1+q\right),
$$
Thus, fixing $q=\frac 1 2$ we obtain the following claim:
$$
z_q \le x \Rightarrow \frac x {4\cdot \log_2(x)}\le \frac {x} {2\log_{a_x}^*(x)}\le  \pi(x) 
$$

We have shown that:

$$
\forall x. \pi(x) = |\{ p \in \{1,\ldots, x\} \mid p \text{ is prime}\}| \ge \frac x {4 \cdot \log_2(x)}.
$$

For sake of readability, we name $\Pi(x)$ the set $\{ p \in \{1,\ldots, x\} \mid p \text{ is prime}\}$.

\paragraph{Point(\emph{b})}
To this aim, it suffices to observe that $y$ is the product of $q_1, \ldots, q_L$ primes where $L \le |y| = \log(2^{(m+3)2^m})$. This is a consequence of the following version of the Fundamental Theorem of Arithmetic.
\begin{theorem}%[The Fundamental Theorem of Arithmetic $\in \IDE$]
  $\IDE \vdash \forall n\ge 2. \mathit{prime}(n) \lor \exists S, s. \mathit{card}(S, s)\land s\le |n|\land \forall q \in S. \mathit{prime}(q) \land q|n$. 
\end{theorem}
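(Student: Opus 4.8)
The plan is to formalise, inside $\IDE$, the usual elementary development of divisibility and prime factorisation, which is well known to go through already in $\mathsf{I\Delta}_0$ (cf.~\cite{Pudlak1998}): Euclidean division, $\gcd$ together with bounded B\'ezout coefficients, primality, and Euclid's lemma $\mathit{prime}(p)\land p\mid ab\to p\mid a\lor p\mid b$. A finite set $S$ of numbers below $n$ will be coded by a single number, either by its characteristic string (of length $n$, hence a number below $2^{n}$, available since $\IDE\vdash\mathrm{Exp}$) or, more economically when $S$ consists of primes dividing $n$, by the product of its elements, which is then $\le n$. With either coding the formulas ``$q\in S$'', ``$S$ is the set of prime divisors of $n$'' and ``$\mathit{card}(S,s)$'' are bounded, so every induction used below is an instance of $\Delta_0$-induction (equivalently, of the $\Delta_0$ least-number principle) and is therefore legitimate in $\IDE$.

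First I would prove, by course-of-values induction on $n$, that every $n\ge 2$ is a product of primes: if $n$ is prime there is nothing to do; otherwise $n=ab$ with $2\le a,b<n$, and applying the induction hypothesis to $a$ and $b$ yields factorisations whose occurring primes form a set $S$ of primes, each dividing $a$ or $b$ and hence $n$. Next I would bound $\mathit{card}(S)$. The crucial step is a lemma, proved by bounded induction on $s$, stating that if $p_1,\dots,p_s$ are pairwise distinct primes all dividing $n$ then $p_1\cdots p_s\mid n$: in the inductive step, a further prime $q\mid n$ with $q\neq p_i$ for all $i$ satisfies $q\nmid p_1\cdots p_s$ by Euclid's lemma, hence $\gcd(q,p_1\cdots p_s)=1$, and coprimality together with $q\mid n$ and $p_1\cdots p_s\mid n$ yields $q\,p_1\cdots p_s\mid n$ by a B\'ezout argument; note that all partial products divide $n$ and are thus $\le n$, so the induction formula stays bounded. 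Taking $S=\{q_1,\dots,q_s\}$ to be the set of \emph{all} prime divisors of $n$, we get $q_1\cdots q_s\le n$, and since every $q_i\ge 2$ this gives $2^{s}\le q_1\cdots q_s\le n$, whence $s\le|n|$ using the $\IDE$-provable fact $2^{s}\le n\to s\le|n|$.

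Finally I would assemble the two disjuncts: for $n\ge 2$ that is not prime, the set $S$ of its prime divisors (nonempty, since $n$ is a product of primes) together with $s=\mathit{card}(S)$ witnesses the second disjunct, because $\mathit{card}(S,s)$ holds, $s\le|n|$, and every $q\in S$ is prime and divides $n$; one moreover gets for free that $S$ contains \emph{all} prime divisors of $n$, which is the form in which the theorem is applied just above. I expect the main obstacle to be purely a matter of bookkeeping: ensuring that every auxiliary object introduced in the inductive arguments is \emph{provably bounded} (the partial products by $n$, the set codes by $\mathrm{Exp}$), so that all inductions genuinely lie within the $\Delta_0$-induction available in $\IDE$; the underlying number theory is the classical $\mathsf{I\Delta}_0$-development of divisibility and $\gcd$ and raises no real difficulty.
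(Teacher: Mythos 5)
Your proposal is correct, but it bounds the cardinality of $S$ by a genuinely different route than the paper. The paper's proof is the bare strong induction on $n$: if $n+1=ab$ is composite, take $S=S_a\cup S_b$ from the induction hypotheses and read off the bound from $\mathit{card}(S_a)\le|a|$ and $\mathit{card}(S_b)\le|b|$. As literally sketched there, the bookkeeping is loose — the union only gives $s\le|a|+|b|$, which can exceed $|ab|$ (e.g.\ $a=b=2$) — so one really needs the sharper invariant $2^{s}\le n$, which composes multiplicatively ($2^{s_a}\le a$, $2^{s_b}\le b$ give $2^{s_a+s_b}\le ab$). You instead derive $2^{s}\le n$ globally, from the lemma that pairwise distinct primes dividing $n$ have product dividing $n$, proved via Euclid's lemma and a B\'ezout argument. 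This buys a clean, non-inductive justification of $s\le|n|$ (and the useful extra that $S$ can be taken to be the set of \emph{all} prime divisors, with a code bounded by $n$ itself rather than $2^{n}$), at the price of importing the $\mathsf{I\Delta}_0$ development of $\gcd$, B\'ezout and Euclid's lemma, none of which the paper's argument needs. Both inductions are on bounded formulas, so both are legitimate in $\IDE$; your version is the more robust of the two, the paper's the more economical.
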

\begin{proof}
  We observe that the predicates $\in, |, \mathit{prime}, \mathit{card}$ can be easily modeled in the language of arithmetic.  Then we go by induction on $n$ to show the claim, observing that the formula $A(n)=\mathit{prime}(n) \lor \exists S, s. \mathit{card}(S, s)\land s\le |n|\land \forall q \in S. \mathit{prime}(q) \land q|n$ is $\Delta^{0}_{1}$.
 The base case is trivial since $2$ is prime and can be divided only by $2$, so the cardinality of $S$ is smaller than $|2|$. For the inductive case suppose that $n+1$ is prime, in this case the claim is trivial, otherwise, if $n+1$ is not prime, there are $a, b \in \Nat$, $ab=n+1$, thus we can apply the IH on $a, b$, building $S$ as the union of $S_a, S_b$.
\end{proof}
Thus, $L \le |y| = {(3+m)\cdot 2^m}={3\cdot 2^m+m\cdot 2^m}\le {3m\cdot 2^m}$ --- the last step is for $m\ge 1$, and is shown by induction. To sum up:
$$
\forall 0 \le y \le 2^{(m+3)2^m}. |\{ p \in \mathbb Z \mid q \text{ divides } y \}| \le {3m\cdot 2^m}.
$$

For simplicity's sake, we omit the proof of $(c)$ and $(d)$, which are standard by induction.

\begin{proof}[Proof of Proposition \ref{prop:ARargument}]
  From $(d)$ we can deduce that $|\{p \in \Pi(2^{2m})\mid p \text{ does not divide }y\}| \\>\frac {2^{2m}}{4 \cdot 2m}$. Thus, the claim is a trivial consequence of the following observation, which concludes the proof:
  \begin{multline*}
    {|\{1, \ldots, 2^{2m}-1\} \setminus\{ p \in \Pi(2^{2m}) \mid p \text{ does not divide } y\}|}\le\\
    2^{2m}-{|\{ p \in \Pi(2^{2m}) \mid p \text{ does not divide } y\}|}
  \end{multline*}
\end{proof}

\subsubsection{Estimation of Error, case 2}
\label{subsub:schwartzzippel}

The goal of this section is to show that for every non-zero $n$-variate polynomial $p$ of degree $d$ and every set $S\subseteq \mathbb Z$, $S$ contains sufficiently many witnesses of the fact that $p$ is a non-zero polynomial.

\begin{remark}\label{rem:polynomials}
We here implicitly assume an encoding of polynomials as finite strings of coefficients, so that a unary polynomial $p(x)=\sum_{i=0}^{i<k}a_{i}x^{i}$ translates into (a suitable encoding of) the string $(a_{0},a_{1},\dots, a_{k-1})$. Observe that the property ``$\forall x.p(x)=q(x)$'' is decidable: once $p$ and $q$ are translated into strings of coefficients it is enough to check equality of these strings. For this reason we can suppose that all statements of the form  ``$\mathbb Z\models\forall x.p(x)=q(x)$'' or ``$\mathbb Z\models\exists x. p(x) \neq q(x)$'' are encoded via $\Delta^{0}_{1}$-formulas. 
\end{remark}

\begin{lemma}[Schwartz-Zippel]
  \label{lemma:sz}
  For every \emph{$n$-variate} polynomial $p$, for every $S \subseteq \mathbb Z$, said $d$ the degree of $p$, it holds that:
  $$
  {|\{(x_1,\ldots, x_n) \in S^n\ |\ \mathbb Z \models p(x_1, \ldots, x_n) = 0\}|}\le d |S|^{n-1}
  $$
\end{lemma}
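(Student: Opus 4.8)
The plan is to prove the Schwartz--Zippel Lemma by induction on the number of variables $n$, working entirely within $\IDE$, and being careful that all the counting arguments involved stay within bounded (or $\Delta^0_1$) formulas so that the bounded induction available in $\IDE$ suffices. As noted in Remark~\ref{rem:polynomials}, polynomials are encoded as strings of coefficients, so statements like ``$\mathbb{Z}\models p(\vec x)=0$'' are $\Delta^0_1$, and since $S$ will in our application be a finite interval $\{0,\dots,2^{m+3}-1\}$, quantifiers over $S^n$ are bounded; the cardinality assertion is thus expressible by a bounded formula and we may apply $\Delta_0$-induction.

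The base case $n=1$ is the classical fact that a nonzero univariate polynomial of degree $d$ has at most $d$ roots. I would prove this inside $\IDE$ by a secondary induction on $d$: if $p$ has degree $d\geq 1$ and some $a\in S$ is a root, then polynomial division (which is an elementary recursive operation on the coefficient strings, hence available and provably correct in $\IDE$) gives $p(x)=(x-a)q(x)$ with $\deg q = d-1$; any further root $b\neq a$ in $S$ must be a root of $q$ since $\mathbb{Z}$ is an integral domain (here one needs the $\IDE$-provable fact that a product of nonzero integers is nonzero), and the inductive hypothesis bounds those by $d-1$. If $p$ has no root in $S$ the bound $0\leq d$ is trivial.

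For the inductive step, suppose the claim holds for $n-1$ variables. Write $p(x_1,\dots,x_n)=\sum_{i=0}^{k} x_n^{\,i}\, p_i(x_1,\dots,x_{n-1})$ where $k\leq d$ is the largest power of $x_n$ occurring with a nonzero coefficient polynomial, so $p_k$ is a nonzero $(n-1)$-variate polynomial of degree at most $d-k$. Split the tuples $(x_1,\dots,x_n)\in S^n$ with $p(\vec x)=0$ into two groups: those where $p_k(x_1,\dots,x_{n-1})=0$, and those where $p_k(x_1,\dots,x_{n-1})\neq 0$. By the inductive hypothesis the number of $(n-1)$-tuples making $p_k$ vanish is at most $(d-k)|S|^{n-2}$, and each extends in at most $|S|$ ways in the $x_n$-coordinate, contributing at most $(d-k)|S|^{n-1}$. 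For the second group, for each fixed $(x_1,\dots,x_{n-1})$ with $p_k\neq 0$, the univariate polynomial $y\mapsto p(x_1,\dots,x_{n-1},y)$ has leading coefficient $p_k(\vec x)\neq 0$ hence degree exactly $k$, so by the base case it has at most $k$ roots in $S$; since there are at most $|S|^{n-1}$ such $(n-1)$-tuples, this group contributes at most $k|S|^{n-1}$. Adding, the total is at most $(d-k)|S|^{n-1}+k|S|^{n-1}=d|S|^{n-1}$, as required. The additivity of cardinality over this disjoint partition is again a bounded statement provable in $\IDE$.

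The main obstacle I anticipate is not the mathematical content — which is completely standard — but the bookkeeping needed to make every auxiliary notion (polynomial division, degree, the decomposition of $p$ as a polynomial in $x_n$, the cardinality function restricted to definable subsets of $S^n$) provably correct in $\IDE$ using only $\Delta_0$-formulas with the exponential function, and to verify that the induction on $n$ (and the secondary inductions on $d$ and on the coefficient strings) are genuinely bounded inductions. In particular one must check that extracting the coefficient polynomials $p_i$ and computing their degrees are polynomial-time (indeed elementary) operations on the encodings, so that the predicates asserting ``$p_i$ is the $i$-th coefficient of $p$ viewed in $x_n$'' and ``$p_k$ is nonzero of degree $\leq d-k$'' are $\Delta^0_1$; given the discussion preceding this lemma, where similar encoding issues for circuits and polynomials were already handled, this is routine but must be stated carefully.
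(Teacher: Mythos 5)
Your proposal is correct and follows essentially the same route as the paper: induction on the number of variables, with the univariate base case obtained by factoring out roots via polynomial division inside $\IDE$, and the inductive step obtained by writing $p$ as a polynomial in one distinguished variable with leading coefficient polynomial $p_k$ and splitting the zero set according to whether $p_k$ vanishes, yielding the bound $(d-k)|S|^{n-1}+k|S|^{n-1}=d|S|^{n-1}$. The only differences (decomposing on $x_n$ rather than $x_1$, and phrasing the base case via the integral-domain property rather than the explicit factorization lemma) are immaterial.
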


The proof of this result is by induction on the number of variables and the degree of the polynomial $p$. It relies on a weak statement of the Fundamental Theorem of Algebra, proved below (Corollary \ref{cor:szbasecase}), stating that each univariate polynomial has at most $d$ roots in $\mathbb Z$, where $d$ is the degree of the polynomial.  We start by defining the notion of non-zero univariate polynomial:

\begin{defn}
  We say that $p \in \POLY$ is a univariate \emph{irreducible} polynomial if and only if it is univariate, and $\mathbb Z \models \forall x. p(x) \neq 0$.
  Moreover, we say that $p \in \POLY$ is a \emph{non-zero} polynomial if and only if $\mathbb Z \models \exists x. p(x) \neq 0$.
\end{defn}

This notion extends naturally to multivariate polynomials. The first step of the proof is to show that each univariate polynomial in $\mathbb Z$ can be expressed as the product of $\prod_{i=0}^{i<k}(x- \overline x_i) q(x)$ for $k$ smaller than the degree of $d$. This result relies itself on the proof of the correctness of the polynomial division algorithm, in particular on the following properties:

\begin{remark}~
  \begin{enumerate}
    \item
      Let $p$ be an univariate polynomial in normal form with coefficients in $\mathbb Z$ and let $\overline x$ be a solution of $p$, The division algorithm applied on $p, (x-\overline x)$ outputs a polynomial $q$ with no remainder.
    \item
      Let $p$ be an univariate polynomial in normal form with coefficients in $\mathbb Z$ and let $r$ be a non-zero polynomial in normal form with degree smaller than that of $p$. If $q$ is obtained with remainder $0$ applying the division algorithm to $p$ and $q$, it holds that $rq=p$.
    \item
        The degree of $r=p/q$ plus the degree of $q$ is equal to the degree of $p$.
    \end{enumerate}
\end{remark}

Although these results may seem very simple, for our purpose it is important to ensure that they can be established within $\IDE$. This is true, because the algorithm performing polynomial division is polytime and thus its totality, as well as its correctness, can be proved in $\IDE$.

%
% recursive and can be defined by induction on the number of monomials in the normal form of the dividend, thus its correctness can be proved by bounded induction.  
%
Now, we want to prove that each uni-variate non-zero polynomial of degree $d$ has at most $d$ roots in $\mathbb Z$. We start by showing that it is always possible to express it as a product of simpler polynomials.
\begin{lemma}
  For every univariate polynomial $p$, if $p$ has $k$ distinct roots $\overline x_{0},\dots, \overline x_{k-1}$, then there is a polynomial $q$ such that $\mathit{deg}(q)=\mathit{deg}(p)-k$ and 
  $$
  \mathbb Z \models \forall x. p(x) = \prod_{i=0}^{i<k}(x- \overline x_i) q(x).
  $$
%
%with $q$ an irreducible polynomial of degree equal to $\mathit{deg}(p)-k$, such that:
%$$
%\mathbb Z \models \forall x. p(x) = t(x).
%$$
\end{lemma}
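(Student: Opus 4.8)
The statement is a factorization lemma for univariate polynomials over $\mathbb{Z}$: if $p$ has $k$ distinct roots $\overline{x}_0,\dots,\overline{x}_{k-1}$, then $p(x)=\left(\prod_{i<k}(x-\overline{x}_i)\right)q(x)$ for some $q$ of degree $\deg(p)-k$. The plan is to proceed by induction on $k$, using at each step the remark on polynomial division recalled just above: that dividing a polynomial by $(x-\overline{x})$ when $\overline{x}$ is a root yields zero remainder (item~1), that this makes $p$ literally equal to $(x-\overline{x})q$ (item~2), and that degrees add under this factorization (item~3). Throughout I would keep track of the fact that all the intermediate claims are provable in $\IDE$, since the division algorithm on polynomials (encoded as strings of coefficients, cf.\ Remark~\ref{rem:polynomials}) is polytime, hence its totality and correctness are available in $\IDE$.

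For the base case $k=0$ the empty product is $1$ and one takes $q=p$, with $\deg(q)=\deg(p)-0$; this is immediate. For the inductive step, suppose the claim holds for $k$ and that $p$ has $k+1$ distinct roots $\overline{x}_0,\dots,\overline{x}_k$. Apply the induction hypothesis to the first $k$ roots: there is $q$ with $\deg(q)=\deg(p)-k$ and $\mathbb{Z}\models\forall x.\,p(x)=\left(\prod_{i<k}(x-\overline{x}_i)\right)q(x)$. Now evaluate at $x=\overline{x}_k$: the left-hand side is $0$, and since $\overline{x}_k$ is distinct from each $\overline{x}_i$ with $i<k$, the factor $\prod_{i<k}(\overline{x}_k-\overline{x}_i)$ is a nonzero integer; because $\mathbb{Z}$ is an integral domain, $q(\overline{x}_k)=0$. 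So $\overline{x}_k$ is a root of $q$, and by item~1 of the division remark, dividing $q$ by $(x-\overline{x}_k)$ leaves no remainder, yielding $q'$ with (by item~2) $\mathbb{Z}\models\forall x.\,q(x)=(x-\overline{x}_k)q'(x)$ and (by item~3) $\deg(q')=\deg(q)-1=\deg(p)-(k+1)$. Substituting back gives $\mathbb{Z}\models\forall x.\,p(x)=\left(\prod_{i\le k}(x-\overline{x}_i)\right)q'(x)$, which is the desired conclusion.

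The one genuinely delicate point — the step I expect to be the main obstacle — is the use of integrality of $\mathbb{Z}$ to pass from $\left(\prod_{i<k}(\overline{x}_k-\overline{x}_i)\right)q(\overline{x}_k)=0$ to $q(\overline{x}_k)=0$, formalized inside $\IDE$. This requires that $\IDE$ proves the no-zero-divisors property for the integers (suitably encoded), together with the fact that a product of nonzero integers is nonzero; both are elementary but must be spelled out for the encoding in use, and one must be careful that the distinctness hypothesis on the roots is actually used here. A secondary bookkeeping obstacle is maintaining the invariant that every equality of polynomials asserted is the $\Delta^0_1$-statement ``$\mathbb{Z}\models\forall x.\,(\cdot)=(\cdot)$'' in the sense of Remark~\ref{rem:polynomials}, so that the induction is carried out over a formula of the right complexity and the appeals to the correctness of polynomial division are legitimate within $\IDE$.
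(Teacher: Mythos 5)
Your proof is correct and follows essentially the same approach as the paper: both go by induction, peel off one linear factor at a time via the polynomial-division remark, and use distinctness of the roots together with the integral-domain property of $\mathbb{Z}$ to transfer the remaining roots to the quotient polynomial. The only differences are cosmetic — the paper inducts on the degree $d$ rather than on $k$, and factors out $(x-\overline{x}_k)$ \emph{before} applying the induction hypothesis rather than after — and you are in fact more explicit than the paper about the integrality step, which the paper leaves implicit when it asserts without comment that $\overline{x}_1,\dots,\overline{x}_{k-1}$ are roots of the quotient $p'$.
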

\begin{proof}
  We want to show that for all polynomial $p$, degree $d$, natural number $k<d$ and integers $x_{1},\dots, x_{d}$, if $p$ is unary, has degree $d$ and $\overline x_{1},\dots,\overline  x_{k}$ are distinct roots of $p$, then there is a polynomial $q$ of degree at most $\mathit{deg}(p)-k$ such that $p(x) = \prod_{i=0}^{i<k}(x-\overline x_{i})q(x)$. 
Using Remark \ref{rem:polynomials}, this can be encoded as a formula of the form $\forall p.\forall d .\forall s.C(p,d)$, where $C(p,d)$ only contains bounded quantifications, and $s$ is a bound on the values of the roots $\overline x_{i}$. 
 We will prove $\forall p.\forall d.\forall s.C(p,d)$ by induction on $d$: 
%  
%  The formula we want to show is:
%  \begin{multline*}
%  \forall p. \forall d.\forall x_{1}.\dots. x_{d}. \mathit{NumVar}(p, 1) \land \mathit{Degree}(p, d) \land  \to \exists k<d.\exists q, x_0, \ldots, x_{k-1}.\\ \forall x. \prod_{i=0}^{i<k}(x- \overline x_i) q(x)=p(x)\land \forall x. q(x) \neq 0.
%  
%  
%    \forall p.\forall d \le t(p). \mathit{NumVar}(p, 1) \land \mathit{Degree}(p, d) \to \exists k<d.\exists q, x_0, \ldots, x_{k-1}.\\ \forall x. \prod_{i=0}^{i<k}(x- \overline x_i) q(x)=p(x)\land \forall x. q(x) \neq 0.
%  \end{multline*}
%  By induction on $d$:
  \begin{itemize}
  \item If $d=0$, then the claim also holds trivially.
  % normal form of $p$ is a constant, so it suffices to take $p$ as a candidate for $q$.
  \item If $d>0$ and $k=0$, then again the claim holds trivially. If $k>0$, then 
  polynomial division yields $p(x)\simeq (x-\overline x_{k})p'(x)$, with $\mathit{deg}(p'(x))=d-1$ and $\overline x_{1},\dots, \overline x_{k-1}$ roots of $p'(x)$. Then, by IH, we deduce $C(p',d-1)$, that is, $p'(x)=\prod_{i=0}^{i<k-1}(x-\overline x_{i})q(x)$, and thus
  $p(x)=  \prod_{i=0}^{i<k}(x-\overline x_{i})q(x)$ as desired.
%  
%  we have two cases: 
%  
%  
%  if $p$ is non-zero, then we can take $p$ as a candidate for $q$, otherwise, we know that there is an $\overline x \in \mathbb Z$ such that $\mathbb Z \models p(\overline x) =0$. Let $p'$ be the polynomial obtained dividing $p$ for $(x - \overline x)$, then it holds that the degree of $p'$ is equal to $\mathit{deg}(p)-1$, so we can apply the induction hypothesis on $p'$ and obtain that $\mathbb Z \models \forall x. p'(x) = \prod_{i=0}^{i<k-1}(x- \overline x_i) q(x)$, so $\mathbb Z \models \forall x. p(x) = (x-\overline x)\prod_{i=0}^{i<k-1}(x- \overline x_i) q(x)$.
  \end{itemize}
\end{proof}

%\begin{corollary}
%  \label{lemma:fta}
%  For each non-zero univariate polynomial it holds that:
%  $$
%  |\{x \in\mathbb Z\ |\ \mathbb Z \models p(x) = 0\}|\le \mathit{deg}(p)
%  $$
%\end{corollary}
%%\begin{proof}
%%  This result is shown applying the previous lemma:
%%  \begin{align*}
%%    |\{x \in\mathbb Z\ |\ \mathbb Z \models p(x) = 0\}| &= |\{x \in\mathbb Z\ |\ \mathbb Z \models \prod_{i=0}^{i<k}(x- \overline x_i) q(x) = 0\}| =\\
%%    &= |\{x \in\mathbb Z\ |\ \mathbb Z \models \prod_{i=0}^{i<k}(x- \overline x_i) = 0\}| = k \le \mathit{deg}(p)
%%  \end{align*}
%%\end{proof}

\begin{cor}
  \label{cor:szbasecase}
  For every $S \subseteq \mathbb Z$ and for every non-zero univariate polynomial, it holds that:
  $$
  |\{ x \in S\ |\ \mathbb Z \models p(x) = 0\}| \le \mathit{deg} (p).
  $$
\end{cor}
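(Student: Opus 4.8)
The statement to prove is Corollary~\ref{cor:szbasecase}: for every $S \subseteq \mathbb Z$ and every non-zero univariate polynomial $p$, we have $|\{x \in S \mid \mathbb Z \models p(x) = 0\}| \le \mathit{deg}(p)$.

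The plan is to derive this directly from the factorization lemma that immediately precedes it, namely that a univariate polynomial $p$ with $k$ distinct roots $\overline x_0, \dots, \overline x_{k-1}$ can be written as $p(x) = \prod_{i=0}^{i<k}(x - \overline x_i) q(x)$ with $\mathit{deg}(q) = \mathit{deg}(p) - k$. First I would observe that the set $R := \{x \in S \mid \mathbb Z \models p(x) = 0\}$ is finite (it is a subset of the integers bounded by the coefficients of $p$ in the usual way, and in any case, since $p$ is non-zero, it cannot have arbitrarily many roots — this is exactly what we are proving, so the clean route is to argue by contradiction). Suppose for contradiction that $|R| = k > \mathit{deg}(p) =: d$. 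Pick $k$ distinct elements $\overline x_0, \dots, \overline x_{k-1}$ of $R$. Applying the factorization lemma, we get $p(x) = \prod_{i=0}^{i<k}(x - \overline x_i)\, q(x)$ with $\mathit{deg}(q) = d - k < 0$. A polynomial of negative degree is the zero polynomial (equivalently, $\mathit{deg}$ is only defined for non-zero polynomials, so $q$ must be the zero polynomial), hence $\mathbb Z \models \forall x.\, p(x) = 0$, contradicting the assumption that $p$ is non-zero, i.e.~$\mathbb Z \models \exists x.\, p(x) \neq 0$. Therefore $|R| \le d = \mathit{deg}(p)$.

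A small subtlety to handle carefully: the factorization lemma as stated requires $k < d$ in its internal induction ($\forall p.\forall d.\forall s.\, C(p,d)$ with $k$ ranging below $d$), so in the contradiction argument I should instead take exactly $d+1$ of the roots if $|R| \ge d+1$; then the lemma would force a factor $q$ of degree $-1$, which is impossible. Alternatively, and more cleanly within $\IDE$, I would phrase the whole corollary as a bounded statement: the formula ``$|\{x \in S \mid p(x) = 0\}| \le \mathit{deg}(p)$'' is $\Delta^0_1$ once $S$ is given as a finite set and $p$ as a string of coefficients (using Remark~\ref{rem:polynomials}), so it suffices to note that it is a logical consequence of the (already $\IDE$-provable) factorization lemma together with the $\IDE$-provable fact that a product of polynomials has degree equal to the sum of the degrees and that the zero polynomial is the unique one of ``degree $<0$''. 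The main obstacle — really the only one — is making sure every auxiliary fact (finiteness of the root set inside a bounded range, the degree-additivity of polynomial multiplication, and the characterization of the zero polynomial) is available as a bounded, $\IDE$-provable statement; each of these follows from the correctness of the elementary polynomial-arithmetic algorithms, whose totality and correctness are provable in $\IDE$ as already noted in the preceding remarks, so no genuinely new ingredient is needed.

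Summarizing the steps in order: (1) reduce to a bounded/decidable statement using the encoding of polynomials as coefficient strings (Remark~\ref{rem:polynomials}); (2) assume toward a contradiction that some non-zero $p$ of degree $d$ has at least $d+1$ roots in $S$ and select $d+1$ distinct witnesses; (3) apply the factorization lemma to peel off $d+1$ linear factors, obtaining a residual factor $q$ with $\mathit{deg}(q) = d - (d+1) = -1$; (4) invoke degree-additivity and the fact that only the zero polynomial can have ``degree $-1$'' to conclude $p$ is identically zero on $\mathbb Z$; (5) contradict non-zeroness of $p$, hence the root count is at most $d$. All of this is routine once the factorization lemma is in hand, so I expect the write-up to be short.
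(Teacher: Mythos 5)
Your overall approach — derive the corollary from the preceding factorization lemma via contradiction — is exactly what the paper intends (it states the corollary without proof, immediately after the lemma), so the structure is right. But there is a genuine gap in the step where you apply the lemma. You correctly observe that the lemma's proof is an induction restricted to $k < d$, and then you propose to fix the issue by "taking exactly $d+1$ of the roots" — but $k = d+1$ violates $k < d$ even more severely, so the lemma as proved simply does not apply to that choice. Your contradiction at step (3) therefore never gets off the ground: you cannot obtain a factor of degree $-1$ because the hypothesis needed to invoke the lemma is not satisfied.

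Two repairs are available, both staying inside $\IDE$. The first is to notice that the lemma's induction goes through unchanged with $k \le d$ in place of $k < d$ (the base case $d=0, k=0$ is trivial, and peeling one root takes $(d,k)$ to $(d-1,k-1)$, preserving $\le$); then apply the lemma with $k = d$ to obtain $p(x) = q\prod_{i<d}(x-\overline x_i)$ with $q$ a nonzero constant, and evaluate at a hypothetical $(d+1)$-th root: the result is a product of nonzero integers, hence nonzero, contradiction. The second repair keeps the lemma exactly as proved ($k < d$): take $k = d-1$ roots (handle $d=0$ separately, where a nonzero constant has no roots at all), obtaining a residual $q$ of degree $1$; any two further roots of $p$ must both be roots of $q$, but a degree-$1$ polynomial $q(x) = ax+b$ with $a\neq 0$ cannot have two distinct integer roots $x_1 \neq x_2$, since $q(x_1)-q(x_2) = a(x_1 - x_2) = 0$ forces $a = 0$ in the integral domain $\mathbb Z$. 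Either route closes the gap; your write-up as it stands does not. The remainder of your plan — encoding via Remark on polynomials, keeping everything $\Delta^0_1$, relying on $\IDE$-provable correctness of polynomial division — is fine and matches the paper's framing.
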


\begin{proof}[Proof of Lemma \ref{lemma:sz}]
The proof is by induction on $n$:
  \begin{itemize}
  \item The claim for $n=1$ coincides with Corollary \ref{cor:szbasecase}.
  \item In the general case, we take the normal form of $p$, call that polynomial $p'$. It holds that
    $$
    \mathbb Z \models \forall \vec x. p(\vec x) = p'(\vec x),
    $$
    then we go by induction on $d$, we can factorize $x_1$ from each monomial of $p$ and show that
    $$
    \exists k. \mathbb Z \models \forall x_1, \ldots, x_n. p'(x_1, \ldots, x_n) = \sum_{i=0}^{k} x_1^i\cdot p_i(x_2, \ldots, x_{n})
    $$
    Where $p_k$ is non-zero. Applying the IH on $d$ to $p_k$, we obtain that:
    $$
    \{(x_2,\ldots, x_n) \in S^{n-1}\ |\ \mathbb Z \models p_k(x_2, \ldots, x_n) = 0\}\le (d-k)|S|^{n-2}
    $$
    Fix some  $(y_2, \ldots, y_n)\in S^{n-1}$ and assume that $p_k(y_2, \ldots, y_n)\neq 0$. In this case, the polynomial $\overline p_{y_2, \ldots, y_n}(x):=  \sum_{i=0}^{k} x^i\cdot p_i(y_2, \ldots, y_{n})$
    is not identically zero because it is a normal form and a normal form is identically zero if and only if all its coefficients are different from zero, but the coefficient of $x^k$ is different from $0$ for construction. Thus, we can apply the IH on $n$, showing that:
    \begin{equation}
      \forall (y_2, \ldots, y_n)\in S^{n-1}. |\{x \in S \mid \mathbb Z \models\overline p_{y_2, \ldots, y_n}(x) = 0\}|\le k,
      \tag{$*$}
    \end{equation}
    We continue by observing that 
    $$
    \forall (y_2, \ldots, y_n)\in S^{n-1}. \{x \in S \mid \mathbb Z \models \overline p_{y_2, \ldots, y_n}(x) = 0\} =
    \{x \in S \mid \mathbb Z \models p(x, y_2, \ldots, y_n) = 0\},
    $$
    for the definition of $\overline p$. We can conclude that\\[2ex]
  \resizebox{0.96\textwidth}{!}{%
  $\begin{gathered}
      \{(x_1,\ldots, x_n) \in S^n\ |\ \mathbb Z \models p(x_1, \ldots, x_n) = 0\} = \\[1.3ex]
      \{(x_1,\ldots, x_n) \in S^n\ |\ \mathbb Z \models p(x_1, \ldots, x_n) = 0 \land p_k(x_2, \ldots, x_n)=0\} \cup\\
      \{(x_1,\ldots, x_n) \in S^n\ |\ \mathbb Z \models p(x_1, \ldots, x_n) = 0 \land p_k(x_2, \ldots, x_n)\neq0\}=\\[1.3ex]
      \{(x_1,\ldots, x_n) \in S^n\ |\ \mathbb Z \models p(x_1, \ldots, x_n) = 0 \land p_k(x_2, \ldots, x_n)=0\} \cup\\
      \{(x_1,\ldots, x_n) \in S^n\ |\ \mathbb Z \models p(x_1, \ldots, x_n) = 0 \land p_k(x_2, \ldots, x_n)\neq0 \land \overline p_{x_2, \ldots, x_n}(x_1) = 0\}\subseteq\\[1.3ex]
      \{(x_1,\ldots, x_n) \in S^n\ |\ \mathbb Z \models p_k(x_2, \ldots, x_n)=0\} \cup
      \{(x_1,\ldots, x_n) \in S^n\ |\ \mathbb Z \models \overline p_{x_2, \ldots, x_n}(x_1) = 0 \} = \\[1.3ex]
      S \times\{(x_2,\ldots, x_n) \in S^{n-1}\ |\ \mathbb Z \models \overline p_{x_2, \ldots, x_n}(x_1) = 0 \} \cup
      \{(x_1,\ldots, x_n) \in S^n\ |\ \mathbb Z \models \overline p_{x_2, \ldots, x_n}(x_1) = 0 \}
    \end{gathered}$ %
    }\\[2ex]
    \normalsize
    This result can be lifted to sizes, obtaining the claim.
  \end{itemize}
\end{proof}

\begin{proof}[Proof of Lemma \ref{lemma:claim5.2}]
  We omit the case for $|x|< \varrho$, since the conclusion is trivial.
  The claim is equivalent to the following one:\\[2ex]
  \resizebox{\textwidth}{!}{%
  $\left|\left\{ z= \left(
      \begin{matrix}
        x_{1, 1} & \ldots & x_{n, 1 } & k_1\\
        x_{1, 2} & \ldots & x_{n, 2 } & k_2\\
        \vdots & \vdots & \ddots & \vdots\\
        x_{1, \nruns} & \ldots & x_{n, \nruns } & k_r
      \end{matrix} \right)\in \left(\{0, 2^{m+3}-1\}^n\times\{1,2^{2m}\}\right)^\nruns \mid    G^v (p, m, n, d, \epsilon, z, \nruns) \land H(p, 0)
   \right\}\right|\le  \frac 1 3 2^{\nruns\cdot(n(m+3)+2m)}$ %
}\\[2ex]
  The set on the left is:\\[2ex]
  \resizebox{\textwidth}{!}{%
  $\left\{      \left (\begin{matrix}
                     x_{1, 1} & \ldots & x_{n, 1 } & k_1\\
                     x_{1, 2} & \ldots & x_{n, 2 } & k_2\\
                     \vdots & \vdots & \ddots & \vdots\\
                     x_{1, \nruns} & \ldots & x_{n, \nruns } & k_r
                   \end{matrix}\right)
                 \in  \left(\{0, 2^{m+3}-1\}^n\times \{1, 2^{2m}\}\right)^\nruns \mid\\
                 \forall  1\le j\le 48m. \mathbb Z \models p(\vec x_j) =0 \lor (\mathbb Z \models p(\vec x_j) \neq 0 \land  k_j\text{ does divide }p(\vec x_j))\right\}$ %
}\\[2ex]
  Thus, it is equal to:
  \begin{multline*}
    \{(x_{1}, \ldots, x_{n}, k) \in  \left(\{0, 2^{m+3}-1\}^n\times \{1, 2^{2m}\}\right) \mid \\
    \mathbb Z \models p(\vec x) =0 \lor (\mathbb Z \models p(\vec x) \neq 0 \land  k\text{ does divide }p(\vec x))\}^\nruns\subseteq
  \end{multline*}
  \begin{multline*}
    \big(\{(x_{1}, \ldots, x_{n}, k) \in \{0, 2^{m+3}-1\}^n\times \{1, 2^{2m}\} \mid \mathbb Z \models p(\vec x) =0\} \cup\\
         \{(x_{1}, \ldots, x_{n}, k) \in \{0, 2^{m+3}-1\}^n\times \{1, 2^{2m}\} \mid \mathbb Z \models p(\vec x) \neq 0 \land  k\text{ does divide }p(\vec x)\}\big)^\nruns
  \end{multline*}
  Thus, its size is smaller than:
  \begin{align*}
      \big(|&\{(x_{1}, \ldots, x_{n}, k) \in \{0, 2^{m+3}-1\}^n\times \{1, 2^{2m}\} \mid \mathbb Z \models p(\vec x) =0\}| +\\
      |&\{(x_{1}, \ldots, x_{n}, k) \in \{0, 2^{m+3}-1\}^n\times \{1, 2^{2m}\} \mid \mathbb Z \models p(\vec x) \neq 0 \land  k\text{ does divide }p(\vec x)\}|\big)^\nruns.
  \end{align*}
  Our goal is equivalent to showing the ratio between that value and $2^{\nruns \cdot (n(m+3)+2m)}$ is smaller than $\frac 1 3$. Which, in turn, resolves to:\\[2ex]
  \resizebox{\textwidth}{!}{%
  $\begin{aligned}
      \Bigg(&\frac {|\{(x_{1}, \ldots, x_{n}, k) \in \{0, 2^{m+3}-1\}^n\times \{1, 2^{2m}\} \mid \mathbb Z \models p(\vec x) =0\}|}{2^{n(m+3)+2m}} +\\
      &\frac{|\{(x_{1}, \ldots, x_{n}, k) \in \{0, 2^{m+3}-1\}^n\times \{1, 2^{2m}\} \mid \mathbb Z \models p(\vec x) \neq 0 \land  k\text{ does divide }p(\vec x)\}|}{2^{n(m+3)+2m}}\Bigg)^\nruns\le \frac 1 3.
  \end{aligned}$%
  }\\[2ex]
  From Lemma \ref{lemma:sz} and point $(e)$, it is smaller than:\\[2ex]
  \resizebox{\textwidth}{!}{%
  $\begin{aligned}
    \Bigg(&\frac {2^m \cdot 2^{(m+3)\cdot(n-1)}\cdot  2^{2m}}{2^{n(m+3)+2m}} +\\
          &\frac{|\{(x_{1}, \ldots, x_{n}, k) \in \{0, 2^{m+3}-1\}^n\times \{1, 2^{2m}\} \mid \mathbb Z \models p(\vec x) \neq 0 \land  k\text{ does divide }p(\vec x)\}|}{2^{n(m+3)+2m}}\Bigg)^\nruns=\\
    \Bigg(&\frac 1 8 - h +
            \frac{|\{(x_{1}, \ldots, x_{n}, k) \in \{0, 2^{m+3}-1\}^n\times \{1, 2^{2m}\} \mid \mathbb Z \models p(\vec x) \neq 0 \land  k\text{ does divide }p(\vec x)\}|}{2^{n(m+3)+2m}}\Bigg)^\nruns\le
  \end{aligned}$%
  }\\[2ex]
  Applying Proposition \ref{prop:ARargument}, we can find an upper bound to this value, which is:
  \small
  \begin{align*}
    \Bigg(&\frac 1 8 - h+\left (1- \frac 1 8 +h\right)\frac{2^{n(m+3)+2m} - \frac{2^{n(m+3)+2m}}{16m}}{2^{n(m+3)+2m}}d\Bigg)^\nruns=\\
    \Bigg(&\frac 1 8 - h+\left(\frac 7 8 +h\right)\frac{2^{n(m+3)+2m} - \frac{2^{n(m+3)+2m}}{16m}}{2^{n(m+3)+2m}}\Bigg)^\nruns=\\
    \Bigg(&\frac 1 8 - h+\left (\frac 7 8 +h\right)\left( 1-\frac 1 {16m}\right)\Bigg)^\nruns=
           \Bigg(\frac 1 8 - h+\frac 7 8 -\frac {7} {8\cdot16m} +h-\frac h {16m}\Bigg)^\nruns=\\
    \Bigg(&1 -\frac {7} {8\cdot16m}-\frac h {16m}\Bigg)^\nruns=\Bigg(1 -\frac 1 {16m} \left(\frac 7 8 + h\right)\Bigg)^\nruns\le\Bigg(1 -\frac 1 {\frac 8 7 16m} \Bigg)^\nruns\le\\
    \le  \Bigg(&1 -\frac 1 {\frac 8 7 16m} \Bigg)^{{\frac 8 7 16m}} \cdot \Bigg(1 -\frac 1 {\frac 8 7 16m} \Bigg)^{{\frac 8 7 16m}}\le \frac 1 4
  \end{align*}
  \normalsize
  The very last observation is a consequence that $\forall n \ge 2.\left( 1+\frac 1 n\right)^n\ge 2$. This, in turn, is done expanding the binomial $\left( 1+\frac 1 n\right)^n$ and obtaining:
  \begin{align*}
    \left(1+\frac 1 n\right)^n &=\sum_{k=0}^n\frac {n!}{(n-k)!k!}\frac 1 {n^k}=\sum_{k=0}^n\frac 1{k!}\prod_{i=1}^{k-1}\left(1-\frac i {n^k}\right)=1+1+\sum_{k=2}^n\frac 1{k!}\prod_{i=1}^{k-1}\left(1-\frac i {n^k}\right)
  \end{align*}
  Where the derivation is justified by:
  \begin{itemize}
  \item The binomial expansion, namely: $\forall a, b, n. (a+b)^n = \sum_{k=0}^n \frac{n!}{k!(n-k)!}a^{n-k} b^k$. The proof is by induction on $n$. We will omit the details, which can be found in many text-books.
  \item $\forall k. k!\ge 2^k$. The proof is by induction. The base case is trivial, the inductive one is done as follows:
    $(k+1)!= k! (k+1) \ge 2^k (k+1)$. Now we go by cases on $k$: if it is $0$, the claim is trivial, otherwise we observe that $(k+1)\ge2$ and thus $2^k(k+1)\ge 2^{k+1}$.
  \item $\forall n. \sum_{k=1}^n 2^{-k}= 1- 2^{-n}$, which is shown by induction on $n$. The base case is trivial. The inductive one is done as follows:$\sum_{k=1}^{n+1} 2^{-k}= \sum_{k=1}^{n} 2^{-k}+ 2^{-n-1}= 1- 2^{-n}+ 2^{-n-1}=1- 2^{-n-1}$.
  \end{itemize}
  These proof are completely syntactic thus, modulo an encoding for rational numbers and for their operations, they can be done in $\PA$ without much effort. The proof proceeds observing that
  $$
  \left(1+\frac 1 n\right)^n=\left(1+\frac 1 n\right)^{-1\cdot -1 \cdot n}=\left(\frac n {n+1}\right)^{-n}=\left(1-\frac 1 n\right)^{-n}.
  $$
  Therefore, $\forall n\ge 4. \left(1-\frac 1 n \right)^n\le \frac 1 2$, and so:
  $$
  \Bigg(1 -\frac 1 {\frac 8 7 16m} \Bigg)^{{\frac 8 7 16m}} \cdot \Bigg(1 -\frac 1 {\frac 8 7 16m} \Bigg)^{{\frac 8 7 16m}}\le \frac 1 4.
  $$
\end{proof}

As we have shown, even last result can be proved in $\PA$, using an encoding of finite sets and standard arithmetic on $\mathbb N$. 

\subsection{Closure of $\BPP_{\mathsf T}$ under polytime reduction}
\label{sec:appbpppaclosure}

We assume $\mathsf T$ is any theory containing $\RSE$.
The statement of Proposition \ref{prop:bpppaclosure} can be given more precisely as: 

\begin{prop}
  For every language $L \in \BPP_{\mathsf T}$ and every language $M\in \Bool^*$, if there is an polytime function $r_{M, L}: \Ss \longrightarrow \Ss$, such that for every string in $\sigma \in\Ss$,
  $
  x \in M \leftrightarrow r(x)\in L
  $,
  then $M$ is in $\BPP_{\mathsf T}$.
\end{prop}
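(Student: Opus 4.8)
The plan is to compose the $\Sigma^b_1$-formula witnessing $L\in\BPP_{\THE}$ with a formula representing the reduction $r:=r_{M,L}$, and then to check that the three conditions of Definition~\ref{defn:BPPT} are inherited by the composite. First I would fix a $\Sigma^b_1$-formula $G(x,y)$ witnessing $L\in\BPP_{\THE}$, so that $\RS\vdash\forall x.\exists!y.G(x,y)$, $\THE\vdash\mathsf{NotErratic}[G]$, and $L=\LANG{\rfp{G}}$. Since $r$ is a total polytime \emph{deterministic} function, it lies in $\POR$ without using the query function $Q$; hence by Theorem~\ref{thm:RStoPOR} it is $\Sigma^b_1$-representable in $\RS$ by a \emph{$\Flip$-free} formula $R(x,z)$ (the predicate $\Flip$ enters the construction of Theorem~\ref{thm:RStoPOR} only through $Q$), and by Parikh's theorem adapted to $\RS$ (Proposition~\ref{prop:Parikh}) there is a $\Lpw$-term $t$ with $\RS\vdash\forall x.\exists z\preceq t(x).R(x,z)$. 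I then set
\[
G'(x,y)\ :=\ \exists z\preceq t(x).\big(R(x,z)\wedge G(z,y)\big),
\]
which, after prenexing the bounded existentials of $R$ and $G$ exactly as in the composition step of the proof of Theorem~\ref{thm:RStoPOR}, is a $\Sigma^b_1$-formula.

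Next I would verify Conditions 1 and 3. For Condition 1, the composition argument already used for Theorem~\ref{thm:RStoPOR} gives $\RS\vdash\forall x.\exists!y.G'(x,y)$ from functionality of $R$ and $G$ together with the Parikh bound. For Condition 3, I would observe that, since $R$ is $\Flip$-free and functional, $\model{R(\overline\sigma,\overline\rho)}$ equals $\Os$ when $\rho=r(\sigma)$ and $\emptyset$ otherwise, and $|r(\sigma)|\le|t(\sigma)|$; hence $\model{G'(\overline\sigma,\overline\tau)}=\model{G(\overline{r(\sigma)},\overline\tau)}$, so $\rfp{G'}(\sigma)(\tau)=\rfp{G}(r(\sigma))(\tau)$ for all $\sigma,\tau$. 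Therefore $\sigma\in\LANG{\rfp{G'}}$ iff $\rfp{G}(r(\sigma))(\epsilon)>\tfrac12$ iff $r(\sigma)\in L$ iff $\sigma\in M$, i.e.\ $M=\LANG{\rfp{G'}}$.

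The heart of the argument is Condition 2, namely $\THE\vdash\mathsf{NotErratic}[G']$. Here I would carry out the de-randomization of Lemma~\ref{lemma:RLexp} for $G'$ and note that, because the subformula $R(x,z)$ is $\Flip$-free, the bits of $\omega$ inspected to decide $G'(\overline\sigma,\overline\tau)$ are precisely those inspected to decide $G(\overline{r(\sigma)},\overline\tau)$; consequently the $\Flip$-free formula $G'^{*}$ factors through $R$ and $G^{*}$, and the counting condition defining $\MEAS[G']$ reduces to the one defining $\MEAS[G]$ at the point $r(x)$. Concretely, the step to establish is
\[
\RSE\ \vdash\ \forall x.\forall z.\big(R(x,z)\rightarrow(\MEAS[G'](x,y)\leftrightarrow\MEAS[G](z,y))\big),
\]
whose proof relies on the robustness of the ``$\ge\tfrac23$'' condition under enlarging the finite set of counted random bits: each good bit-pattern for $G$ at $z$ lifts to the \emph{same} fraction of good patterns at the (possibly larger, conservative) bit-bound $t_{G'}$, since the padding bits multiply numerator and threshold by the same power of two. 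Granting this, since $\THE\supseteq\RSE$ proves both $\forall x.\exists!z.R(x,z)$ and $\mathsf{NotErratic}[G]=\forall x.\exists y\preceq 0.\MEAS[G](x,y)$, instantiating the latter at $z=r(x)$ and applying the equivalence yields $\forall x.\exists y\preceq 0.\MEAS[G'](x,y)$, which is $\mathsf{NotErratic}[G']$.

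The main obstacle is exactly the formalization of the displayed equivalence inside $\RSE$: it is conceptually elementary but bookkeeping-heavy, since it requires tracking the bit-encoding of Lemma~\ref{lemma:RLexp} through composition with a deterministic function and then formalizing the counting-robustness argument, which is where the hypothesis $\THE\supseteq\RSE$ — and hence the availability of $\mathrm{Exp}$ for exact counting — is genuinely used. Everything else (Conditions 1 and 3, and the reduction of Condition 2 to this equivalence) is a routine adaptation of the composition arguments already developed for Theorem~\ref{thm:RStoPOR}.
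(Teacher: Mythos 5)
Your proposal is correct and follows essentially the same route as the paper's own proof: represent the polytime reduction by a $\Flip$-free $\Sigma^b_1$-formula, compose it with the formula witnessing $L\in\BPP_{\THE}$ via a bounded existential, and push the non-erraticity condition through the $\Flip$-free part so that it reduces to $\mathsf{NotErratic}[G]$ instantiated at the witness $z=r(x)$. If anything, you are more explicit than the paper about the one genuinely delicate step (the invariance of the $\tfrac{2}{3}$ counting threshold under enlarging the finite set of observed random bits), which the paper's derivation chain passes over silently.
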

\begin{proof}
  Assume that $L \in \BPP_{\mathsf T}$, and let $G_L$ be the $\Sigma^b_1$ $\Lpw$ formula characterizing $L$ as required in Definition \ref{defn:BPPT}. Since $r_{M, L}$ is poly-time, there is a $\Sigma^b_1$ $\Flip$-free formula of $\Lpw$ $R$ characterizing $r_{M, L}$ as well.
  This has a consequence that the formula $C(x, y) := \exists w\le t(x). R(x, w) \land G_L(w, y)$ characterizes the composition of the reduction $r_{M, L}$ and the function $\chi_L$. The function $C(x, y)$ is still a $\Sigma^b_1$ formula, and characterizes a function deciding $M$ due to the hypothesis on the correctness of the reduction $r_{M, L}$. Form this conclusion and $\LANG{G_L}=L$, we deduce $\LANG{C}=M$, we only need to show point (2) of Definition \ref{defn:BPPT} for $C$:
  \begin{align*}
  \mathsf T &\vdash \forall x. \exists!y. \MEAS[C]\\
  \mathsf T &\vdash \forall x. \exists!y. \MEAS[\exists w\le t(x). R(x, w) \land G_L(w, y)]\\
   \mathsf T&\vdash \forall x. \exists!y. \Thresh[\Flipless[\exists w\le t(x). R(x, w) \land G_L(w, y)]]\\
   \mathsf T &\vdash \forall x. \exists w\le t(x). R(x, w) \land \exists!y. \Thresh[\Flipless[G_L(w, y)]]
  \end{align*}
  This is a consequence of a $\Sigma^b_1$ $\Flip$-free formula of $\Lpw$ and the hypothesis on $G_L$.
\end{proof}

%%% Local Variables:
%%% mode: latex
%%% TeX-master: "main.tex"
%%% End:

\end{document}